\numberwithin{equation}{section}
\theoremstyle{plain}
\newtheorem{theorem}{Theorem}[section]
\newtheorem{proposition}[theorem]{Proposition}
\newtheorem{lemma}[theorem]{Lemma}
\theoremstyle{remark}
\newtheorem*{remark*}{Remark}
\newtheorem*{remarks*}{Remarks}
\newtheorem*{stepone*}{Step 1}
\newtheorem*{steptwo*}{Step 2}
\newtheorem*{stepthree*}{Step 3}
\newtheorem*{stepfour*}{Step 4}
\newtheorem*{stepfive*}{Step 5}
\theoremstyle{definition}
\newtheorem*{definition*}{Definition}
\newtheorem*{assumptions*}{Assumptions}
\newtheorem*{notation*}{Notation}
\newtheorem*{notations*}{Notations}
\newtheorem*{properties*}{Properties}
\providecommand{\BS}[1]{\boldsymbol{#1}}
\providecommand{\C}[1]{\mathcal{#1}}
\providecommand{\D}[1]{\mathbb{#1}}
\providecommand{\F}[1]{\mathfrak{#1}} 
\providecommand{\R}[1]{\mathrm{#1}}
\providecommand{\abs}[1]{\lvert#1\rvert}
\providecommand{\accol}[1]{\lbrace#1\rbrace}
\providecommand{\av}[1]{\langle#1\rangle}
\providecommand{\croch}[1]{\lbrack#1\rbrack}
\providecommand{\norm}[1]{\lVert#1\rVert}
\providecommand{\scal}[1]{\langle#1\rangle}
\renewcommand{\Im}{\operatorname{Im}}
\renewcommand{\Re}{\operatorname{Re}}
\newcommand{\cercle}{\mathbb{T}}
\newcommand{\dd}{\mathrm{d}}
\newcommand{\eul}{\mathrm{e}}
\newcommand{\ii}{\mathrm{i}}
\newcommand{\vece}{\mathrm{e}}
\DeclareMathOperator{\card}{card}
\DeclareMathOperator{\class}{C}
\DeclareMathOperator{\const}{const}
\DeclareMathOperator{\dist}{dist}
\DeclareMathOperator{\id}{id}
\DeclareMathOperator{\ord}{O}
\DeclareMathOperator{\osmall}{o}
\DeclareMathOperator{\supp}{supp}
\DeclareMathOperator{\tr}{tr}
\newcommand{\spec}{\sigma} 
\newcommand{\uw}{\underline\omega} 
\newcommand\ut{\underline t} 
\begin{document} 
\title[Large eigenvalues of Jaynes--Cummings type models]{Asymptotic behavior of large eigenvalues\\of Jaynes--Cummings type models}
\author[A. Boutet de Monvel]{Anne Boutet de Monvel$^1$}
\address{$^1$Institut de Math\'ematiques de Jussieu\\
Universit\'e Paris Diderot\\
b\^atiment Sophie Germain, 
case 7012, 75205 Paris Cedex 13, France}
\email{anne.boutet-de-monvel@imj-prg.fr}
\author[L. Zielinski]{Lech Zielinski$^2$}
\address{$^2$LMPA, 
Universit\'e du Littoral,
50 rue Ferdinand Buisson,
CS 80699, 62228 Calais Cedex, France}
\email{Lech.Zielinski@lmpa.univ-littoral.fr}
\subjclass[2010]{Primary 47B36; Secondary 47A10, 47A75, 15A42, 47A55}
\keywords{Jacobi matrices, Jaynes--Cummings model, eigenvalue estimates}
\date{April 18, 2016} 
\begin{abstract} 
We consider a class of unbounded self-adjoint operators including the Hamiltonian of the Jaynes--Cummings model without the rotating-wave approximation (RWA). The corresponding operators are defined by infinite Jacobi matrices with discrete spectrum. Our purpose is to give the asymptotic behavior of large eigenvalues.   
\end{abstract}
\maketitle 
\section{Introduction}
\subsection{Jaynes--Cummings model}

We call ``Jaynes--Cummings model'' a self-adjoint operator $J$ defined in $l^2(\D{N}^*)$ by an infinite real Jacobi matrix  
\begin{equation} \label{11} 
J=\begin{pmatrix}  
d(1) & a(1) &0& 0 &\dots\\ 
a(1) & d(2) & a(2) &0& \dots\\ 
0 & a(2) & d(3) & a(3) &\dots\\ 
0 &0& a(3) & d(4) &\dots\\  
\vdots &\vdots&\vdots&\vdots&\ddots 
\end{pmatrix} 
\end{equation}
whose entries are of the form  
\begin{subequations} \label{12}
\begin{align} \label{12d}
d(k)&=k+(-1)^k\rho\\
\label{12a}  
a(k)&=a_1k^{1/2}    
\end{align} 
\end{subequations}
where $\rho$ and $a_1>0$ are real constants. The study of this kind of operators is motivated by the Hamiltonian of the Jaynes--Cummings model without the rotating-wave approximation (RWA) (see \`E.~A.~Tur \cite{Tur}).

The self-adjoint operator $J$ associated to the Jacobi matrix \eqref{11} acts on $l^2(\D{N}^*)$ by
\begin{equation}\label{1J}
(Jx)(k)=d(k)x(k)+a(k)x(k+1)+a(k-1)x(k-1) 
\end{equation} 
($x(0)=a(0)=0$). It is defined on $\C{D}\coloneqq\bigl\{x\in l^2(\D{N}^*):\sum_{k=1}^{\infty}d(k)^2\abs{x(k)}^2<\infty\bigr\}$. According to \eqref{12} the diagonal entries $d(k)$ are dominant and tend to $\infty$ with $k$. The self-adjoint operator $J$ is then bounded from below with compact resolvent (see \cite{CJ}), and we denote by 
\[
\lambda_1(J)\leq\dots\leq\lambda_n(J)\leq\lambda_{n+1}(J)\leq\dots 
\] 
its eigenvalues, enumerated in non-decreasing order, counting multiplicities. The aim of this paper is to describe the asymptotic behavior of $\lambda_n(J)$ when $n\to\infty$.

\begin{theorem}[Jaynes--Cummings model]  \label{thm:11} 
Let $J$ be the self-adjoint operator defined by \eqref{1J} with
\[
\begin{cases}
d(k)=k+(-1)^k\rho&\\
a(k)=a_1k^{1/2}&    
\end{cases} 
\]
where $\rho$ and $a_1>0$ are real constants. We assume $\abs{\rho}<1/2$. Then the $n$-th eigenvalue $\lambda_n(J)$ has the large $n$ asymptotics 
\begin{equation} \label{E0}
\lambda_n(J)=n-a_1^2+\ord(n^{-1/4}\ln n).
\end{equation} 
\end{theorem}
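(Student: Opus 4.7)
The plan is to view $J$ as a bounded perturbation of an exactly solvable model. Let $J_0$ denote the operator obtained from \eqref{11}--\eqref{12} by setting $\rho=0$, so that $J=J_0+V$ where $V$ is the bounded diagonal operator defined by $Ve_k=(-1)^k\rho\,e_k$. On $l^2(\D{N}^*)$ introduce the raising and lowering operators $a_+e_k=\sqrt{k}\,e_{k+1}$, $a_-e_k=\sqrt{k-1}\,e_{k-1}$, so that $[a_-,a_+]=I$ and the number operator $N=a_-a_+$ satisfies $Ne_k=ke_k$. Then $J_0=N+a_1(a_++a_-)$, and the Weyl displacement $D=\exp(a_1(a_+-a_-))$ is unitary and satisfies $Da_\pm D^{-1}=a_\pm-a_1$. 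A short computation gives
\begin{equation*}
DJ_0D^{-1}=(a_--a_1)(a_+-a_1)+a_1(a_++a_--2a_1)=N-a_1^2,
\end{equation*}
so $J_0$ has eigenvalues $\mu_n=n-a_1^2$ with orthonormal eigenvectors $\phi_n=D^{-1}e_n$; this accounts for the leading term of \eqref{E0}.

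To control the remainder $\lambda_n(J)-\mu_n$ I would apply the min-max principle to trial subspaces spanned by the $\phi_m$ with $m$ close to $n$, thereby reducing matters to estimating the matrix elements $\langle\phi_m,V\phi_n\rangle$. The following identity makes these elements explicit: writing $V=\rho P$ with $P=\eul^{\ii\pi N}$, the relations $Pa_\pm P^{-1}=-a_\pm$ give $PDP^{-1}=D^{-1}$, hence $DPD^{-1}=D^2P$, and therefore
\begin{equation*}
\langle\phi_m,V\phi_n\rangle=\rho(-1)^n\langle e_m,D(2a_1)e_n\rangle,
\end{equation*}
a Fock-basis matrix element of a \emph{fixed} Weyl displacement $D(2a_1)=\exp(2a_1(a_+-a_-))$, given in closed form by associated Laguerre polynomials. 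The diagonal case $m=n$ yields $\langle\phi_n,V\phi_n\rangle=\rho(-1)^n\eul^{-2a_1^2}L_{n-1}(4a_1^2)$, and the Fej\'er asymptotics $L_n(x)=\pi^{-1/2}\eul^{x/2}(nx)^{-1/4}\cos(2\sqrt{nx}-\pi/4)+\ord(n^{-3/4})$ immediately deliver the $\ord(n^{-1/4})$ rate of \eqref{E0}.

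The main technical obstacle is to upgrade this first-order calculation to a rigorous $\ord(n^{-1/4}\ln n)$ bound, since $\|V\|=|\rho|$ is not small and all higher-order perturbative contributions must be controlled uniformly. Uniform Laguerre asymptotics (of Fej\'er type in the bulk and Airy type near the turning region around $4a_1\sqrt{n}$) are needed for $\langle e_m,D(2a_1)e_n\rangle$ as $n\to\infty$ with $|m-n|$ ranging from $O(1)$ up to $\ord(\sqrt{n})$; notably, these off-diagonal elements remain of order $n^{-1/4}$ in a non-trivial window, rather than decaying geometrically. A careful Schur-complement argument on the near-diagonal block of $J$ expressed in the basis $(\phi_m)$, combined with these uniform asymptotics, should then yield the final bound: the logarithmic factor in \eqref{E0} arises at this stage from the accumulation of small-denominator contributions $\mu_n-\mu_m=n-m$ across the relevant scales.
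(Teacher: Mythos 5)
Your first half is correct and, for the special case $\gamma=1/2$, genuinely more elegant than the paper's treatment: the identity $DJ_0D^{-1}=N-a_1^2$ is exact (the paper's Proposition~\ref{prop:3} only gives $\lambda_k(J_{0,n}^+)=l_n(k)+\ord(n^{3\gamma-2})$ because it must cover all $0<\gamma\le 1/2$ and $\class^2$-regular $a$), the reduction $DPD^{-1}=D^2P$ is right, the Laguerre formula for $\langle e_m,D^2e_n\rangle$ is the standard one, and the Fej\'er asymptotics do pinpoint the rate $n^{-1/4}$. So the observation that the diagonal of the conjugated perturbation is $\ord(n^{-1/4})$ is a clean derivation of what the theorem ``should'' say.

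The gap is the last paragraph, and it is not a technicality that a Schur complement will fix. In the $\phi$-basis, $\widetilde J=N-a_1^2+\rho D^2P$, and $D^2P$ is a unitary involution: its off-diagonal entries are $\ord(n^{-1/4})$ but spread over a band of width $\asymp\sqrt{n}$, so the off-diagonal block has operator norm (and even Hilbert--Schmidt norm on the band) bounded below by a constant, not tending to zero. The Schur complement of the $(n,n)$ entry, $\mu_n+V_{nn}-B(C-\lambda)^{-1}B^*$, therefore only recovers the a priori bound $\abs{\lambda_n(J)-\mu_n}\le\abs{\rho}$ that the gap-plus-$\norm{V}<1/2$ argument already gives; it does not see the cancellations that make the true shift $\ord(n^{-1/4}\ln n)$. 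Those cancellations are exactly what the paper extracts through a completely different mechanism: a trace estimate $\sum_k(\chi(\lambda_k(L_n)-l(n))-\chi(l_n(k)-l(n)))=\ord(n^{-\gamma/2}\ln n)$ (Proposition~\ref{prop:5}), proved by Fourier transform, Neumann expansion of the evolution and stationary phase, followed by an arithmetic ``deconvolution'' over the period (Proposition~\ref{prop:11}) that is where the hypothesis $\rho_N<\frac{1}{2}$ (or $<\frac{1}{\pi\sqrt N}$) is actually used. To make your route rigorous you would need a replacement for this averaging-and-extraction step; neither first/second-order perturbation theory (the series does not converge, as you note) nor a single Schur complement supplies it. It would be interesting to see whether your Laguerre/displacement picture can be combined with a Poisson-summation or trace argument in the $\phi$-basis to shortcut Sections~\ref{sec:6}--\ref{sec:10} of the paper, but as written the proposal stops exactly where the real work begins.
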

   
In Section~\ref{sec:12} we compare our results with other known results. In Section~\ref{sec:13} we state Theorem~\ref{thm:12} which is a generalization of Theorem~\ref{thm:11} motivated by the paper of A.~Boutet de Monvel, S.~Naboko, L.~O.~Silva \cite{BNS}. Theorem~\ref{thm:12} gives the large $n$ asymptotics of $\lambda_n(J)$ for Jacobi matrices \eqref{11} whose entries are of the form  
\begin{subequations} \label{JC}
\begin{align} \label{JCd}
d(k)&=k+v(k)\\ 
\label{JCa} 
a(k)&=a_1k^{\gamma}+a_1'k^{\gamma-1}+\ord(k^{\gamma-2})
\end{align} 
\end{subequations} 
where $v\colon\D{N}^*\to\D{R}$ is periodic and $a_1>0$, $a_1'$, and $0<\gamma\leq 1/2$ are real constants. Section~\ref{sec:14} gives the plan of the paper. Section~\ref{sec:15} lists the main notations.

\subsection{Modified Jaynes-Cummings models}  \label{sec:12}

In this section we recall known results about the asymptotic behavior of large eigenvalues for ``modified Jaynes--Cummings models'', i.e., for Jacobi matrices \eqref{11} with entries of the form 
\begin{equation} \label{mJC}
\begin{cases}    
d(k)=k^{\alpha}+v(k)&\\  
a(k)=a_1k^{\gamma}&    
\end{cases} 
\end{equation}
where $\alpha>\gamma>0$, $a_1>0$ are real constants, and $v\colon\D{N}^*\to\D{R}$ is periodic. It turns out that the large $n$ asymptotic behavior of $\lambda_n(J)$ strongly depends on whether $\alpha-\gamma>1$ or not. 

\subsubsection*{Asymptotics of large eigenvalues with persistent periodic oscillations}
In the easy case $\alpha-\gamma>1$ it is possible to apply approximation methods based on the idea of successive diagonalizations which was first applied to the problem of eigenvalue asymptotics of Jacobi matrices in the paper of J.~Janas and S.~Naboko \cite{JN}. The name ``modified Jaynes-Cummings models'' was then introduced in the paper of A.~Boutet de Monvel, S.~Naboko, L.~O.~Silva \cite{BNS} treating the case of entries of the form \eqref{mJC} with $\alpha=2$ and $\gamma=\frac{1}{2}$. The asymptotic behavior obtained in \cite{BNS} in that case was  
\[     
\lambda_n(J)=n^2+v(n)+\ord(n^{-1}). 
\]  
More general results of M.~Malejki \cite{Ma} and A.~Boutet de Monvel, L.~Zielinski \cite{BZ2} for the case of entries of the form \eqref{mJC} give as large $n$ asymptotics 
\[   
\lambda_n(J)=n^{\alpha}+v(n)+\ord(n^{\gamma-2\kappa}+n^{2\gamma-\alpha})
\] 
where $\kappa\coloneqq\alpha-1-\gamma>0$. Moreover, under the additional conditions $\alpha \leq 2$ and $\gamma<\frac{2}{3}(\alpha-1)$ we have $\alpha-2\gamma >0$ and $2\kappa-\gamma=2(\alpha-1)-3\gamma>0$, hence we obtain the asymptotic behavior
\begin{equation} \label{100}
\lambda_n(J)-n^{\alpha}=v(n)+\osmall(1) 
\end{equation} 
reflecting the oscillations determined by the periodic nature of $v$. 
 
\subsubsection*{Asymptotics of large eigenvalues without periodic oscillations}
The case $\alpha=1$ and $0<\gamma\leq 1/2$ investigated in this paper exhibits a radical change in the asymptotic behavior of $\lambda_n(J)$. The new phenomenon is the absence of periodic oscillations of large eigenvalues. This phenomenon was already described in our earlier paper \cite{BZ4} treating the case $\alpha=1$ and $0<\gamma<\frac{1}{2}$. In this paper we follow the general framework of \cite{BZ4} but in order to address the case $\gamma=\frac{1}{2}$ we need to improve the remainder estimates. To that end, we refine our approach constructing suitable approximations by means of truncated Fourier series. After submission of this paper we learned about \cite{Yan} where \eqref{E0} is proved, but with a weaker estimate. 

\subsection{Jaynes--Cummings type models} \label{sec:13}

In this paper we consider ``Jaynes--Cummings type models'', i.e., Jacobi matrices \eqref{11} with entries of type 
\begin{align*}
d(k)&=k+v(k),\\ 
\label{tJCa} 
a(k)&\asymp k^{\gamma},
\end{align*} 
where $v\colon\D{N}^*\to\D{R}$ is periodic of period $N\geq 1$ and $0<\gamma\leq 1/2$. Let us denote by 
\[
\av{v}\coloneqq\frac{1}{N}\sum_{1\leq k\leq N}v(k)
\]
the ``mean value'' of $v$ and by
\begin{equation}  \label{110} 
\rho_N=\rho_N(v)\coloneqq\max_{1\leq k\leq N}\abs{v(k)-\av{v}}.
\end{equation}
the maximum mean absolute deviation.

\begin{assumptions*}  
\begin{enumerate}[{(H}1)]
\item
$v$ is ``weakly dispersive'', in the sense that
\begin{equation}\label{H1}
\rho_N<
\begin{cases} 
\frac{1}{2}&\text{if }N=2,\\
\frac{1}{\pi\sqrt{N}}&\text{if }N\geq 3.
\end{cases}
\end{equation}
\item
$a(k)\asymp k^{\gamma}$ with $\class^2$ regularity, i.e.,
\begin{subequations}\label{H2}
\begin{align} \label{H2ak}
&ck^{\gamma}\leq a(k)\leq Ck^{\gamma}\\
\label{H2dak}
&\abs{\delta a(k)}\leq C'k^{\gamma-1}\\
\label{H2ddak}
&\abs{\delta^2a(k)}\leq C''k^{\gamma-2}
\end{align}
\end{subequations}
for some real constants $C$, $C'$, $C''$, $c>0$. Here
\[  
\delta a(k)\coloneqq a(k+1)-a(k)\text{ and }\delta^2a(k)\coloneqq a(k+2)-2a(k+1)+a(k).
\]
\end{enumerate}
\end{assumptions*}

\begin{remark*}
In particular, (H2) is satisfied if the large $k$ behavior of $a(k)$ is given by \eqref{JCa}.
\end{remark*}

\begin{theorem}[Jaynes--Cummings type model]  \label{thm:12} 
Let $J$ be the self-adjoint operator defined in $l^2(\D{N}^*)$ by \eqref{1J} where 
\begin{enumerate}[\rm(i)]
\item
$d(k)=k+v(k)$ with $v$ real-valued, $N$-periodic, and satisfying \emph{(H1)}, \emph{i.e.}, \eqref{H1}.
\item
$a(k)\asymp k^{\gamma}$ satisfies \emph{(H2)}, \emph{i.e.}, \eqref{H2} with $0<\gamma\leq 1/2$.
\end{enumerate}
Then its $n$-th eigenvalue $\lambda_n(J)$ has the large $n$ asymptotics 
\begin{equation} \label{E}
\lambda_n(J)=n+\av{v}+a(n-1)^2-a(n)^2+\ord(n^{-\gamma/2}\ln n).
\end{equation} 
\end{theorem}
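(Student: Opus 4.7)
The plan is to reduce $J$ to an approximately diagonal operator via unitary conjugations, then invoke the min-max principle. Write $J = D + A$ where $D = \operatorname{diag}(d(k))$ and $A$ is the tridiagonal off-diagonal part carrying the $a(k)$'s. A first anti-Hermitian generator $S_1$ defined by
\[
S_1(k, k+1) = \frac{a(k)}{d(k+1) - d(k)} = \frac{a(k)}{1 + v(k+1) - v(k)}
\]
satisfies $[S_1, D] = A$; hypothesis~(H1) guarantees $\abs{v(k+1) - v(k)} < 1$, so the denominators stay bounded away from zero. Expanding $\eul^{-S_1} J \eul^{S_1}$ via Baker--Campbell--Hausdorff kills the off-diagonal $A$ at zeroth order, leaving a first non-trivial diagonal correction at site $k$ equal to
\[
\frac{a(k-1)^2}{d(k) - d(k-1)} - \frac{a(k)^2}{d(k+1) - d(k)} = a(k-1)^2 - a(k)^2 + r(k),
\]
where $r(k)$ is a sum of oscillatory ($N$-periodic) pieces of size $\ord(\rho_N k^{2\gamma})$ together with new off-diagonal remainders of size $\ord(k^{2\gamma - 1})$.

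The oscillatory diagonal remainder $r(k)$, along with the surviving $v(k) - \av{v}$ on the diagonal, must be averaged to produce the asserted constant correction $\av{v} + a(k-1)^2 - a(k)^2$. A second unitary $U_2 = \eul^{S_2}$ does this, with $S_2$ a sum of shifts $\tau^j$ weighted by slowly-varying symbols in $k$; the commutators $[S_2, D]$ produce precisely the Fourier harmonics $\eul^{2\pi\ii m k/N}$ appearing in the expansion $v - \av{v} = \sum_{0 < \abs{m} < N/2} \hat v_m \eul^{2\pi\ii m k/N}$ and in $r(k)$. To reach the target remainder at the endpoint $\gamma = 1/2$, the symbols weighting these shifts are themselves approximated by truncated Fourier series at a level $L = L(n) \sim \ln n$, chosen so that the Fourier tail error balances the higher-order commutator errors produced by the retained modes.

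After both conjugations, $U_2^* U_1^* J U_1 U_2 = D_{\mathrm{eff}} + R$ with $D_{\mathrm{eff}}(k) = k + \av{v} + a(k-1)^2 - a(k)^2 + \ord(k^{-\gamma/2}\ln k)$, strictly increasing in $k$ for large $k$ (since the second finite difference of $a^2$ is $\ord(k^{2\gamma - 2})$ by~(H2)); and $R$ is localized around index $n$ with operator norm $\ord(n^{-\gamma/2}\ln n)$. Since conjugation preserves the spectrum, min-max applied to the conjugated operator yields~\eqref{E}. The principal difficulty is the iterated commutator analysis at the critical exponent $\gamma = 1/2$: here $S_1$ has unbounded off-diagonal entries $\asymp k^{1/2}$, so the BCH expansion converges only formally. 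Its individual terms must be estimated using the $\class^2$ regularity of $a$ from~(H2) combined with the optimized truncation level $L(n) \sim \ln n$, which together compress the accumulated remainder to the claimed $n^{-\gamma/2} \ln n$ bound.
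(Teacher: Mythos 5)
Your overall frame --- conjugate to reduce the off-diagonal, then apply min-max --- matches the paper's opening move, but your specific conjugation scheme diverges at $\gamma=1/2$, and the key mechanism for handling the periodic modulation $v$ is absent.

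Dividing by $\Delta d(k)\coloneqq d(k+1)-d(k)$ in the definition of $S_1$ solves $[S_1,D]=A$ exactly, but this is fatal when $v\not\equiv 0$ and $\gamma=1/2$. The first correction $\frac12[A,S_1]$ has entries at $(k,k+2)$ equal to $-a(k)a(k+1)\,\delta^2v(k)/\bigl(\Delta d(k)\Delta d(k+1)\bigr)$, of size comparable to $\rho_Nk^{2\gamma}$; your asserted $\ord(k^{2\gamma-1})$ bound on the new off-diagonal remainder only holds when $v\equiv0$, and at $\gamma=1/2$ the true size $\rho_Nk$ \emph{dominates} the original off-diagonal $a(k)\asymp k^{1/2}$, so the successive-diagonalization iteration you begin with does not contract --- it diverges. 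The paper avoids this trap by design: $B_n$ is chosen so that $[\ii B_n,\Lambda]=-A_n$ against the \emph{bare} shift $\Lambda$, not $d(\Lambda)$; no division by $\Delta d$ occurs and no oscillations from $v$ enter the generator. The cost is that $v(\Lambda)$ survives the conjugation as $\tilde V_n=\eul^{\ii B_n}v_n(\Lambda)\eul^{-\ii B_n}$, with $\norm{\tilde V_n}\asymp\rho_N$ --- \emph{not} small --- so no further small-generator conjugation can remove it.

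This is where the second gap lies. Your picture of $U_2=\eul^{S_2}$ averaging out the surviving diagonal oscillations $v(k)-\av{v}$ is algebraically blocked: for diagonal $D$ the commutator $[S_2,D]$ has identically zero diagonal for \emph{every} $S_2$, so it cannot cancel a diagonal perturbation. The paper's route is different in kind: the stationary phase analysis (Sections~\ref{sec:7}--\ref{sec:10}) shows that the diagonal \emph{entries} $\tilde V_n(j,j)$ are already $\ord(n^{-\gamma/2})$ --- the wave packet $\eul^{-\ii B_n}\vece_j$ is spread over $\sim n^{\gamma}$ sites with an oscillatory phase of amplitude $\sim n^{\gamma}$ --- but since $\norm{\tilde V_n}$ is order one while the spectral gaps are $\asymp1$, this pointwise decay alone does not control an individual eigenvalue. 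The conversion is done through the trace estimate of Proposition~\ref{prop:5} (built on a Dyson/Neumann expansion of $\eul^{-\ii tL_{0,n}}\eul^{\ii tL_n}$ truncated at $\nu\leq n^{\varepsilon}$ terms) and, crucially, the Poisson-summation and linear-algebra comparison of Proposition~\ref{prop:11}, which upgrades the trace bound to a bound on the single shift $\lambda_n(L_n)-l_n(n)$. Neither of these ideas appears in your sketch, and a truncated Fourier series at level $L(n)\sim\ln n$ is not a substitute for them.
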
  

\begin{remark*}
Let us notice that hypotheses (H2), precisely \eqref{H2ak} and \eqref{H2dak}, imply
\[
a(n-1)^2-a(n)^2=-\bigl(a(n-1)+a(n)\bigr)\delta a(n-1)=\ord(n^{2\gamma-1})=\ord(1)\text{ as }n\to\infty.
\]
For the Jaynes--Cummings model, $a(k)=a_1k^{1/2}$, so we even have $a(n-1)^2-a(n)^2=-a_1^2=\const$.   
\end{remark*}

\begin{proof}[Proof of Theorem~\ref{thm:12}$\implies$Theorem~\ref{thm:11}]
The Jaynes--Cummings model satisfies assumption (H2) with $\gamma=1/2$. It satisfies also (H1) with $N=2$, $\av{v}=0$ and $\rho_2=\abs{\rho}$. Moreover, as noted above, $a(n-1)^2-a(n)^2=-a_1^2$, thus the asymptotic formula \eqref{E} becomes \eqref{E0}.  
\end{proof}

\subsection{Plan of the paper} \label{sec:14}

In Section \ref{sec:2} we define operators $J_n$ which are easier to investigate than $J$ and such that, by Proposition \ref{prop:2}, the $n$-th eigenvalue of $J$ is well approximated by a suitable eigenvalue of $J_n$. Thus, to get Theorem~\ref{thm:12} it remains to prove the asymptotic formula for $J_n$ stated in Theorem~\ref{thm:2}. To summarize:
\[
\begin{rcases} 
\text{Prop.~\ref{prop:2}}\\
\text{Thm.~\ref{thm:2}}
\end{rcases}\!\implies\!\text{Thm.~\ref{thm:12}}\!\implies\!\text{Thm.~\ref{thm:11}}.
\]
The proof of Theorem \ref{thm:2} is completed in Section \ref{sec:11} according to the schema 
\[
\begin{rcases} 
\text{Prop.~\ref{prop:3}}\Rightarrow\text{Prop.~\ref{prop:4}}\\
\text{Prop.~\ref{prop:5} \& Prop.~\ref{prop:11}}\\
\end{rcases}\!\implies\!\text{Thm.~\ref{thm:2}}.
\]
That corresponds to the following four steps: 

\begin{stepone*}
In Section \ref{sec:3} we prove Proposition \ref{prop:3} which is Theorem \ref{thm:2} in the case without periodic modulation, i.e., when $v\equiv 0$.  
\end{stepone*}

\begin{steptwo*}
In Section \ref{sec:4} we prove Proposition \ref{prop:4} which gives some preliminary information about the spectrum of $J_n$ obtained by the min-max principle.
\end{steptwo*}

\begin{stepthree*}
In Section \ref{sec:5} we replace the operators $J_n$ by operators $L_n$ obtained by conjugation with suitable unitary operators $\eul^{\ii B_n}$. Proposition \ref{prop:5} states a trace estimate for those operators $L_n$. Its proof is given in Sections \ref{sec:6}-\ref{sec:10}. 
\end{stepthree*}

\begin{stepfour*}
In Section \ref{sec:11} we prove Proposition \ref{prop:11} 
which is the final ingredient of the proof of Theorem \ref{thm:2}. 
\end{stepfour*}

To end this section we give some details about the proof of the trace estimate of Proposition \ref{prop:5} which is the central part of our approach. We start the proof of Proposition \ref{prop:5} in Section \ref{sec:6} by proving three lemmas that allow  us to replace Proposition \ref{prop:5} by Proposition \ref{prop:6}:
\[
\begin{rcases} 
\text{Lem.~\ref{lem:61}-\ref{lem:63}}\\
\text{Prop.~\ref{prop:6}}
\end{rcases}\!\implies\!\text{Prop.~\ref{prop:5}}.
\]
In Section \ref{sec:7} we introduce a class of operators defined by Fourier transform and used in Section \ref{sec:8} to construct an approximation of $\eul^{\ii B_n}$. This construction is used in Sections \ref{sec:9} \& \ref{sec:10} to give approximations of terms figuring in Proposition \ref{prop:6} by means of oscillatory integrals. That allow us to complete the proof of Proposition \ref{prop:6} by application of the stationary phase method. 

\subsection{Notations} \label{sec:15}

Let $\C{H}$ be a Hilbert space.
\begin{enumerate}[\textbullet]
\item
$\C{B}(\C{H})$ is the algebra of bounded operators on $\C{H}$ equipped with the operator norm $\norm{\,\cdot\,}_{\C{B}(\C{H})}$,
\item
If $Q\in\C{B}(\C{H})$ we also simply denote $\norm{Q}$. Moreover, $\Re Q\coloneqq\frac{1}{2}(Q+Q^*)$ and $\Im Q\coloneqq\frac{1}{2\ii}(Q-Q^*)$.
\item
$\C{B}_1(\C{H})\subset\C{B}(\C{H})$ is the ideal of trace class operators equipped with the norm $\norm{Q}_{\C{B}_1(\C{H})}\coloneqq\tr\sqrt{Q^*Q}$. 
\end{enumerate}
Throughout the paper, we also use the following notations:
\begin{enumerate}[\textbullet]
\item
$\D{N}=\accol{0,1,\dots}$ is the set of nonnegative integers, $\D{N}^*=\accol{1,2,\dots}$ is the set of positive integers.
\item
$l^2(\D{Z})$ is the Hilbert space of square-summable complex sequences $x\colon\D{Z}\to\D{C}$ with scalar product $\scal{x,y}\coloneqq\sum_{k\in\D{Z}}\overline{x(k)}y(k)$ and norm $\norm{x}_{l^2(\D{Z})}\coloneqq\sqrt{\scal{x,x}}$.
\item
$\accol{\vece_n}_{n\in\D{Z}}$ denotes the canonical basis of $l^2(\D{Z})$, i.e., $\vece_n(j)=\delta_{j,n}$.
\item
$H(j,k)\coloneqq\scal{\vece_j,H\vece_k}$, $j,k\in\D{Z}$ denote the matrix elements of an operator $H$ acting on $l^2(\D{Z})$ and defined on its canonical basis.
\item
$l^2(\D{N}^*)$ is the Hilbert space of square-summable sequences $x\colon\D{N}^*\to\D{C}$ equipped with the scalar product $\scal{x,y}\coloneqq\sum_{k=1}^{\infty}\overline{x(k)}y(k)$ and the norm $\norm{x}_{l^2(\D{N}^*)}\coloneqq\sqrt{\scal{x,x}}$. It is identified with the closed subspace of $l^2(\D{Z})$ generated by $\accol{\vece_n}_{n\in\D{N}^*}$, i.e.\ with $\accol{x\in l^2(\D{Z}):x(k)=0\text{ for any }k\leq 0}$. 
\end{enumerate}
We also define operators acting on $l^2(\D{Z})$ or $l^2(\D{N}^*)$:
\begin{enumerate}[\textbullet]
\item
The shift $S\in\C{B}(l^2(\D{Z}))$ is defined by $(Sx)(k)=x(k-1)$, $k\in\D{Z}$. In particular $S\vece_n=\vece_{n+1}$.
\item
$\Lambda$ acts on $l^2(\D{Z})$ by $(\Lambda x)(k)=kx(k)$, $k\in\D{Z}$ for any $x$ such that $(kx(k))_{k\in\D{Z}}\in l^2(\D{Z})$.
\item
For any $b\colon\D{Z}\to\D{C}$ we define $b(\Lambda)$ by functional calculus, i.e., $b(\Lambda)$ is closed in $l^2(\D{Z})$ and such that $b(\Lambda)\vece_k=b(k)\vece_k$, $k\in\D{Z}$. 
\item
$S^+$ and $\Lambda^+$ denote the respective restrictions of $S$ and $\Lambda$ to $l^2(\D{N}^*)$.
\item
If $L$ is a self-adjoint operator which is bounded from below with compact resolvent we denote 
\[
\lambda_1(L)\leq\dots\leq\lambda_n(L)\leq\lambda_{n+1}(L)\leq\dots
\]
its eigenvalues, enumerated in non-decreasing order, counting multiplicities.
\end{enumerate}
Throughout the paper $n\in\D{N}^*$ is the large parameter involved in the asymptotics \eqref{E0} or \eqref{E}. All error estimates are considered with respect to $n\geq 1$ and some statements will be established only for $n\geq n_0$, where $n_0$ is some large enough constant.

\section{Operators $J_n$} \label{sec:2}  
\subsection{Plan of Section~\ref{sec:2}} \label{sec:21}  

In Section \ref{sec:22} we define auxiliary operators $J_n$, $n\geq 1$. In Section \ref{sec:23} we state Theorem \ref{thm:2} which gives the asymptotic formula for the $n$th eigenvalue of $J_n$. We finally sketch a proof of Theorem \ref{thm:12} based on Theorem \ref{thm:2} and Proposition \ref{prop:2}. 

The operators $J_n$ act on $l^2(\D{Z})$ by Jacobi matrices with entries $\accol{d_n(k)}_{k\in\D{Z}}$, $\accol{a_n(k)}_{k\in\D{Z}}$ that are obtained from $\accol{d(k)}_{k=1}^{\infty}$, $\accol{a(k)}_{k=1}^{\infty}$ by cut-offs and linearizations, see \eqref{adn}.

\subsection{Definition of $\BS{J_n}$} \label{sec:22} 

It depends on the choice of a cut-off function $\theta_0\in\class^{\infty}(\D{R})$ such that 
\begin{subequations} \label{222}
\begin{equation} \label{220}
\begin{cases} 
\theta_0(t)=1&\text{if }\abs{t}\leq\frac{1}{6}\\ 
\theta_0(t)=0&\text{if }\abs{t}\geq\frac{1}{5}\\ 
0\leq\theta_0(t)\leq 1&\text{otherwise}.  
\end{cases}  
\end{equation}
From now on we fix such a cut-off function. Then, for $\tau>0$ we denote 
\begin{equation} \label{223}
\theta_{\tau,n}(s)\coloneqq\theta_0\!\left(\frac{s-n}\tau\right)
\end{equation}
\end{subequations} 
and define $d_n,\,a_n\colon\D{Z}\to\D{R}$ by 
\begin{subequations} \label{adn}
\begin{align}   \label{dn}
d_n(k)&\coloneqq k+v(k)\theta_{n,n}(k)^2,\\
\label{an}
a_n(k)&\coloneqq\left(a(n)+(k-n)\delta a(n)\right)\theta_{2n,n}(k). 
\end{align}
Let us notice that $d_n(n)=d(n)$, $a_n(n)=a(n)$, and
\begin{align}
\label{dn+}
d_n(k)&=
\begin{cases}
d(k)&\text{if }\abs{k-n}\leq\frac{n}{6}\\
k&\text{if }\abs{k-n}\geq\frac{n}{5}\,.
\end{cases}\\
\label{an+}
a_n(k)&=
\begin{cases}
a(n)+(k-n)\delta a(n)&\text{if }\abs{k-n}\leq\frac{n}{3}\\
0&\text{if }\abs{k-n}\geq\frac{2n}{5}\,,
\end{cases}
\end{align}
\end{subequations}
These modifications allow important simplifications. They ensure the large $n$ estimates \eqref{21de}, i.e.
\[
\sup_{k\in\D{Z}}\,\abs{\delta^ma_n(k)}=\ord(n^{\gamma-m}),\quad m=0,1,2
\]
which are useful to control errors with respect to the large parameter $n$. Moreover, the replacement of $a(k)$ by its linearization at $n$ for $k$ close to $n$ allows a very simple composition formula in Lemma~\ref{lem:74}, which is essential in the analysis developed in Sections~\ref{sec:8}-\ref{sec:10}.

With these $d_n(k)$'s and $a_n(k)$'s we consider the self-adjoint operator $J_n$ acting on $l^2(\D{Z})$ by 
\begin{equation} \label{Jn+}
(J_nx)(k)=d_n(k)x(k)+a_n(k)x(k+1)+a_n(k-1)x(k-1),
\end{equation} 
for $x$ such that $(kx(k))_{k\in\D{Z}}\in l^2(\D{Z})$. Its matrix in the canonical basis $(\vece_k)_{k\in\D{Z}}$ is of the form
\[
J_n=
\left(
\begin{array}{c@{}c}
\begin{array}{c@{}rrc|}
\ddots\,&\vdots&\vdots&\vdots\\
\cdots\,&-2&0&0\\
\cdots\,&0&-1&0\\
\cdots\,&0&0&0\\
\hline
\end{array}&
\begin{array}{ccc}
&&\\
&&\\
&\BS{0}&\\
&&\\[2mm]
\hline
\end{array}\\
\begin{array}{cccc}
&&&\\
&&\BS{0}&\\
&&&
\end{array}
&\begin{array}{|ccc}
&&\\
&J_n^+&\\
&&
\end{array}
\end{array}
\!\!\right)
\]
where the blocks $\BS{0}$ are identically zero and where the block 
\[
J_n^+=
\begin{pmatrix}  
d_n(1) & a_n(1) &0& 0 &\dots\\ 
a_n(1) & d_n(2) & a_n(2) &0& \dots\\ 
0 & a_n(2) & d_n(3) & a_n(3) &\dots\\ 
0 &0& a_n(3) & d_n(4) &\dots\\  
\vdots &\vdots&\vdots&\vdots&\ddots 
\end{pmatrix} 
\]   
is its restriction to $l^2(\D{N}^*)$. The spectrum of $J_n$ is clearly   
\[
\spec(J_n)=\spec(J_n^+)\cup\accol{k\in\D{Z}:k\leq 0}. 
\] 
Further on, we write $\spec(J_n)=\accol{\lambda_k(J_n)}_{k\in\D{Z}}$ with 
\[ 
\lambda_k(J_n)=
\begin{cases} 
\lambda_k(J_n^+)&\text{if }k\geq 1,\\ 
k&\text{if }k\leq 0,
\end{cases}
\] 
where $\lambda_1(J_n^+)\leq\dots\leq\lambda_k(J_n^+)\leq\lambda_{k+1}(J_n^+)\leq\dots$ denote the eigenvalues of $J_n^+$, enumerated in non-decreasing order, counting multiplicities. 

\subsection{Asymptotic behavior of $\BS{\lambda_n(J_n)}$}\label{sec:23} 
  
\begin{theorem} \label{thm:2} 
Let $(d(k))_{k\in\D{Z}}$ and $(a(k))_{k\in\D{Z}}$ be as in Theorem \ref{thm:12} with $\av{v}=0$, and $J_n$, $\accol{\lambda_k(J_n)}_{k\in\D{Z}}$ as above. Then one has the large $n$ estimate
\begin{subequations}
\begin{align} \label{En} 
\lambda_n(J_n)&=l(n)+\ord(n^{-\gamma/2}\ln n)\\
\label{lnk}
l(n)&\coloneqq n+a_n(n-1)^2-a_n(n)^2. 
\end{align}
\end{subequations}
\end{theorem}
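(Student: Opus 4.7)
My plan is to reduce the spectral analysis of $J_n$ near its $n$-th eigenvalue to that of a near-diagonal operator via a carefully chosen unitary conjugation. Observe that $l(n)-n=a_n(n-1)^2-a_n(n)^2$ is exactly the formal second order perturbation correction to the unperturbed eigenvalue $n$ of the diagonal part $d_n(\Lambda)|_{v=0}$ coming from the off-diagonal couplings $a_n$: since consecutive diagonal entries differ by $1$ when $v\equiv 0$, the two Rayleigh--Schr\"odinger contributions $a_n(n-1)^2/(d(n)-d(n-1))$ and $-a_n(n)^2/(d(n+1)-d(n))$ sum to $l(n)-n$. This is what one expects on the $n$-th diagonal entry after a Bogoliubov-type transformation removes the leading off-diagonal coupling.

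First I would dispose of the case $v\equiv 0$ (Step~1 of the outline): after the Fourier transform $l^2(\D{Z})\to L^2(\cercle)$, $J_n|_{v=0}$ becomes a first order operator on $\cercle$ whose coefficients depend smoothly on $\theta$, thanks to the linearization built into $a_n$ and the $\class^2$ assumption on $a$. A scalar gauge transformation $\eul^{\ii\phi_n(\theta)}$ conjugates it to $\Lambda+(l(n)-n)$ modulo a remainder small in trace class norm, giving $\lambda_n(J_n|_{v=0})=l(n)+\ord(n^{-\gamma/2}\ln n)$. A preliminary min-max argument (Step~2) then uses the bound $\norm{v(\Lambda)\theta_{n,n}(\Lambda)^2}\leq\rho_N$ and hypothesis~(H1) to locate $\lambda_n(J_n)$ in a narrow window around $\lambda_n(J_n|_{v=0})$, separated from neighbouring eigenvalues.

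The central step (Step~3) is the construction of the conjugation $\eul^{\ii B_n}$. Here $B_n$ is an anti-self-adjoint banded operator chosen so that $[d_n(\Lambda),\ii B_n]$ cancels $a_n(\Lambda)S^*+Sa_n(\Lambda)$ modulo a lower order error. The presence of the periodic $v$ forces one to solve a discrete cohomological equation on $\D{Z}$ whose symbol involves the quantity $1+v(k+1)-v(k)$; hypothesis~(H1) is precisely what guarantees that this symbol stays bounded below on the range of interest. The conjugated operator $L_n=\eul^{\ii B_n}J_n\eul^{-\ii B_n}$ then takes the form $\Lambda+\av{v}+(l(n)-n)+R_n$, and the main statement is a trace bound of the form $\norm{\chi(L_n)-\chi(\Lambda+l(n))}_{\C{B}_1(l^2(\D{Z}))}=\ord(n^{-\gamma/2}\ln n)$ for smooth cut-offs $\chi$ supported near $l(n)$.

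The final step (Step~4) combines the trace estimate with the preliminary localization: $\tr\chi(L_n-l(n))=1+\ord(n^{-\gamma/2}\ln n)$ for a bump $\chi$ of height $1$ on an interval of radius $\eta$ around $0$ forces exactly one eigenvalue of $L_n$ to lie within $\eta$ of $l(n)$, and the min-max bounds identify it as $\lambda_n(J_n)$. The main obstacle will be the trace bound at the endpoint $\gamma=1/2$: the naive symbolic calculus yields only $\ord(n^{2\gamma-1})$, which fails to decay when $\gamma=1/2$. To obtain $\ord(n^{-\gamma/2}\ln n)$ one must approximate $\eul^{\ii B_n}$ by truncated Fourier series and treat the resulting matrix elements as oscillatory integrals whose phases have non-degenerate critical points, extracting the decay via the stationary phase method, with the logarithm emerging from the Fourier truncation.
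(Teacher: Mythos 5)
You correctly identify the overall architecture (conjugation by $\eul^{\ii B_n}$, a trace estimate proved via oscillatory integrals and stationary phase, the logarithm from the Fourier truncation, a preliminary min-max localization, and the $v\equiv 0$ case treated first), but Steps~3--4 rest on a misconception that would derail the proof. The conjugation $\eul^{\ii B_n}$ cancels only the off-diagonal part $A_n$ of $J_n$; it does not and cannot remove the periodic potential. What one gets is $\eul^{\ii B_n}J_n\eul^{-\ii B_n}=L_n+R_n$ with $\norm{R_n}=\ord(n^{3\gamma-2})$ and $L_n=l_n(\Lambda)+\tilde V_n$, where $\tilde V_n=\eul^{\ii B_n}v_n(\Lambda)\eul^{-\ii B_n}$ has operator norm $\asymp\rho_N$, i.e.\ size $\ord(1)$. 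There is no discrete cohomological equation with symbol $1+v(k+1)-v(k)$ here, because $B_n$ is built from $a_n$ alone; (H1) enters elsewhere, namely for eigenvalue separation (via $\rho_N<1/2$) and, with its sharper constant $\rho_N<1/(\pi\sqrt N)$ for $N\geq 3$, as a contraction bound $\mu_N=\pi\sqrt N\rho_N'<1$ in the final comparison argument. Consequently the claimed estimate $\norm{\chi(L_n)-\chi(\Lambda+l(n))}_{\C{B}_1}=\ord(n^{-\gamma/2}\ln n)$ is false: the eigenvalues of $L_n$ are displaced from $l_n(k)$ by amounts of size up to $\rho_N$, so the trace-class norm of the difference cannot decay. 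What actually holds (Proposition~\ref{prop:5}) is the much weaker \emph{signed} trace estimate $\tr\bigl(\chi(L_n-l(n))-\chi(L_{0,n}-l(n))\bigr)=\ord(n^{-\gamma/2}\ln n)$, recording a cancellation of positive and negative shifts.

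This gap propagates to Step~4. A $\chi$ with compactly supported Fourier transform is entire, hence it cannot be a bump of height $1$ that vanishes outside a finite interval, so the picture ``\,$\tr\chi(L_n-l(n))=1+\ord(\cdot)$ forces exactly one eigenvalue in the window\," is not available. Moreover the preliminary localization from Proposition~\ref{prop:4} places $\lambda_n(L_n)$ only in a window of width $\asymp\rho_N$ around $l(n)$; the trace estimate alone does not shrink this to $\ord(n^{-\gamma/2}\ln n)$ because shifts of neighbouring eigenvalues of different residue classes mod~$N$ could conspire to leave the trace nearly unchanged while $\lambda_n(L_n)$ itself moves by $\ord(\rho_N)$. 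The paper closes this gap with Proposition~\ref{prop:11}: using $N$ test functions $\chi_j$ with $\hat\chi_j(2\pi m/N)=N\delta_{m,j}$, Poisson summation, and a Vandermonde system one isolates the shift $r_n(0)=\lambda_n(L_n)-l(n)$, and (H1) guarantees the invertibility estimate $\mu_N<1$. This step is not cosmetic and is not supplied by your argument.
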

  
\begin{proof} 
See Section \ref{sec:11}. 
\end{proof} 

\begin{proof}[Proof of Theorem~\ref{thm:2}$\implies$Theorem~\ref{thm:12}] 

(i) We have $\lambda_n(J-\scal{v})=\lambda_n(J)-\scal{v}$. Thus, to prove Theorem~\ref{thm:12} we can assume $\av{v}=0$. 

(ii) Proposition~\ref{prop:2} states the estimate
\[
\lambda_n(J)=\lambda_n(J_n)+\ord(n^{3\gamma-2}).
\]
In other words the left-hand sides of \eqref{E} and \eqref{En}, i.e.\ $\lambda_n(J)$ and $\lambda_n(J_n)$ are the same modulo $\ord(n^{3\gamma-2})$, a fortiori modulo $\ord(n^{-\gamma/2}\ln n)$ since $\gamma\leq 1/2$ implies $3\gamma-2<-\gamma/2$.

(iii) Lemma \ref{lem:23} for $k=n-1$ gives the large $n$ estimates $a(n-1)-a(n)=\ord(n^{\gamma-1})$ and $a(n-1)-a_n(n-1)=\ord(n^{\gamma-2})$, by \eqref{eq:216} and \eqref{eq:217}, respectively. Since $a(n)=\ord(n^{\gamma})$ we have the same estimate for $a(n-1)$ and $a_n(n-1)$. Thus, $a(n-1)^2-a_n(n-1)^2=\ord(n^{2\gamma-2})$ and 
\begin{equation}  \label{eq:121}
a_n(n-1)^2-a_n(n)^2=a(n-1)^2-a(n)^2+\ord(n^{2\gamma-2}).
\end{equation}
Since $\gamma\leq 1/2$ implies $2\gamma-2<-\gamma/2$ this relation holds a fortiori modulo $\ord(n^{-\gamma/2}\ln n)$. That proves that the right-hand sides of \eqref{E} and \eqref{En} are the same modulo $\ord(n^{-\gamma/2}\ln n)$.

(iv) By (ii) and (iii) \eqref{En}$\implies$\eqref{E}, i.e.\ Theorem~\ref{thm:2}$\implies$Theorem~\ref{thm:12} with $\av{v}=0$.
\end{proof} 

\section{Theorem~\ref{thm:2} when $v\equiv 0$} \label{sec:3}  
\subsection{Plan of Section~\ref{sec:3}}  \label{sec:31}

The aim of this section is to prove Proposition \ref{prop:3} which says that Theorem \ref{thm:2} holds when $v\equiv 0$. Since the proof is based on the min-max principle we consider operators acting on $l^2(\D{N}^*)$. In Section \ref{sec:32} we state Proposition \ref{prop:3} and we explain the idea of the proof. In Section \ref{sec:33} we show a useful property of 
\begin{equation} \label{300}
l_n(k)\coloneqq k+a_n(k-1)^2-a_n(k)^2,\quad k\geq 1.   
\end{equation} 
Note that $l_n(n)=l(n)$ where $l(n)$ is defined by \eqref{lnk}. The proof of Proposition \ref{prop:3} is completed in  Section \ref{sec:34}.   

\subsection{Main result} \label{sec:32}

We consider the operator $J_{0,n}^+\colon\C{D}\to l^2(\D{N}^*)$ defined by 
\[
(J_{0,n}^+x)(k)=kx(k)+a_n(k)x(k+1)+a_n(k-1)x(k-1),  
\] 
where $a_n(k)$ is given by \eqref{an}. Thus $J_{0,n}^+$ coincides with $J_n^+$ if $v\equiv 0$ and Theorem~\ref{thm:2} in that case follows from Proposition~\ref{prop:3}.

\begin{proposition} \label{prop:3} 
If $l_n(k)$ is given by \eqref{300}, then 
\begin{equation} \label{E3}   
\sup_{k\in\D{N}^*}\,\abs{\lambda_k(J_{0,n}^+)-l_n(k)}=\ord(n^{3\gamma-2}).
\end{equation}   
\end{proposition}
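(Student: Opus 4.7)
The plan is to find a unitary $U_n$ on $l^2(\D{N}^*)$ such that $U_n J_{0,n}^+ U_n^*$ equals $l_n(\Lambda^+)$ up to a remainder of operator norm $\ord(n^{3\gamma-2})$, and then to apply unitary invariance together with Weyl's inequality. The monotonicity of $k\mapsto l_n(k)$ for $n$ large, which will be proved in Section~\ref{sec:33}, gives $\lambda_k(l_n(\Lambda^+))=l_n(k)$ and then \eqref{E3} follows.

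To construct $U_n$, write $J_{0,n}^+ = \Lambda^+ + A_n$ with off-diagonal part $A_n \coloneqq S^+ a_n(\Lambda^+) + a_n(\Lambda^+)(S^+)^*$ and introduce the self-adjoint bounded operator
\[
B_n \coloneqq -\ii\bigl(S^+ a_n(\Lambda^+) - a_n(\Lambda^+)(S^+)^*\bigr),\qquad \norm{B_n}=\ord(n^\gamma).
\]
Using $[\Lambda^+, S^+] = S^+$ one verifies $[\ii B_n, \Lambda^+] = -A_n$; using $(S^+)^* S^+ = \id$ on $l^2(\D{N}^*)$ together with $a_n(0) = 0$, a short computation gives $[\ii B_n, A_n] = 2\bigl(a_n(\Lambda^+-1)^2 - a_n(\Lambda^+)^2\bigr) = 2(l_n(\Lambda^+) - \Lambda^+)$. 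Setting $U_n \coloneqq \eul^{\ii B_n}$ and applying the second-order Taylor expansion with integral remainder to $\eul^{\ii tB_n} J_{0,n}^+ \eul^{-\ii tB_n}$ at $t=1$, these two identities make the explicit terms telescope to
\[
U_n J_{0,n}^+ U_n^* = l_n(\Lambda^+) + \bigl[\ii B_n,\, l_n(\Lambda^+) - \Lambda^+\bigr] + R_n,
\]
where $R_n = \tfrac{1}{2}\int_0^1 (1-t)^2\,\eul^{\ii tB_n}\operatorname{ad}_{\ii B_n}^3(J_{0,n}^+)\,\eul^{-\ii tB_n}\,\dd t$.

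To estimate the two correction terms, I would use the $\class^2$ regularity $\delta^m a_n = \ord(n^{\gamma-m})$ for $m=0,1,2$, a direct consequence of hypothesis (H2) together with the linearisation \eqref{an}. It yields $\delta(l_n-\Lambda^+)(k) = -\delta^2(a_n^2)(k-1) = \ord(n^{2\gamma-2})$, and combined with the key commutator identity $[S^+, f(\Lambda^+)] = -S^+(\delta f)(\Lambda^+)$ and its adjoint, this implies
\[
\bigl\lVert [\ii B_n,\, l_n(\Lambda^+) - \Lambda^+]\bigr\rVert = \ord\bigl(n^\gamma \cdot n^{2\gamma-2}\bigr) = \ord(n^{3\gamma-2}).
\]
Applying the same commutator rules once more one obtains $\lVert\operatorname{ad}_{\ii B_n}^3(J_{0,n}^+)\rVert = \ord(n^{3\gamma-2})$, hence $\norm{R_n} \leq \tfrac{1}{6}\lVert\operatorname{ad}_{\ii B_n}^3(J_{0,n}^+)\rVert = \ord(n^{3\gamma-2})$. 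Altogether $\lVert U_n J_{0,n}^+ U_n^* - l_n(\Lambda^+)\rVert = \ord(n^{3\gamma-2})$, and Weyl's inequality closes the argument.

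The hard part will be the bookkeeping for the nested commutators: the naive bound $\norm{B_n} = \ord(n^\gamma)$ alone gives no useful control, so one must systematically exploit that each commutator with $\ii B_n$ either shifts an index (through $a_n(\Lambda^+)S^+ = S^+ a_n(\Lambda^+ + 1)$ and its adjoint) or replaces a function of $\Lambda^+$ by its finite difference (through $[S^+, f(\Lambda^+)] = -S^+(\delta f)(\Lambda^+)$). Only the second type of move gains a factor $n^{-1}$, so one must verify that enough differences occur in each of the terms appearing in $\operatorname{ad}_{\ii B_n}^3(J_{0,n}^+)$.
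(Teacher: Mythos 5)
Your proposal is correct and follows essentially the same route as the paper: the same operator $B_n^+=2\Im\bigl(S^+a_n(\Lambda^+)\bigr)$, the same two commutator identities $[\ii B_n^+,\Lambda^+]=-A_n^+$ and $[\ii B_n^+,A_n^+]=2\,a_{1,n}(\Lambda^+)$, the key bound $\norm{[\ii B_n^+,a_{1,n}(\Lambda^+)]}=\ord(n^{\gamma})\cdot\ord(n^{2\gamma-2})=\ord(n^{3\gamma-2})$, and finally the min-max principle combined with the monotonicity of $l_n$ from Lemma~\ref{lem:32}. The only difference is in the remainder bookkeeping: the paper conjugates the interpolated family $\Lambda^++tA_n^+$ by $\eul^{\ii tB_n^+}$, which reduces everything to that single first-order commutator estimate, whereas your second-order Taylor expansion additionally requires $\norm{\operatorname{ad}_{\ii B_n^+}^3(J_{0,n}^+)}=\ord(n^{3\gamma-2})$ --- this does hold, since $\operatorname{ad}_{\ii B_n^+}^3(J_{0,n}^+)=-2[\ii B_n^+,a_{1,n}(\Lambda^+)]+2\operatorname{ad}_{\ii B_n^+}^2\bigl(a_{1,n}(\Lambda^+)\bigr)$ and the second term is of the even smaller order $n^{4\gamma-3}$, but it is an extra verification the paper's interpolation trick avoids.
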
 

\begin{proof}[Sketch of proof]
A complete proof is given in Section \ref{sec:34}. The proof is similar to the first step of the successive diagonalization method \cite{BZ1}. We observe that
\begin{equation} \label{216}
J_{0,n}^+=\Lambda^++A_n^+
\end{equation}
where $A_n^+$ is the finite rank operator defined by the matrix
\begin{equation} \label{2A}
A_n^+=
\begin{pmatrix}  
0 & a_n(1) &0& 0 &\dots\\ 
a_n(1) & 0 & a_n(2) &0& \dots\\ 
0 & a_n(2) & 0 & a_n(3) &\dots\\ 
0 &0& a_n(3) & 0 &\dots\\  
\vdots &\vdots&\vdots&\vdots&\ddots 
\end{pmatrix} 
\end{equation}  
In Section \ref{sec:34} we define self-adjoint operators $B_n^+$ such that the difference  
\begin{equation} \label{327}
R_n^+\coloneqq\eul^{\ii B_n^+}J_{0,n}^+\eul^{-\ii B_n^+}-l_n(\Lambda^+) 
\end{equation}    
can be estimated by 
\begin{equation} \label{327'}
\norm{R_n^+}_{\C{B}(l^2(\D{N}^*))}=\ord(n^{3\gamma-2}). 
\end{equation} 
By the min-max principle, 
\[
\abs{\lambda_k(l_n(\Lambda^+)+R_n^+)-\lambda_k(l_n(\Lambda^+))}\leq\norm{R_n^+}_{\C{B}(l^2(\D{N}^*))},
\] 
hence the estimate \eqref{E3} follows from \eqref{327'} and from the relations 
\begin{align}
&\lambda_k(l_n(\Lambda^+)+R_n^+)=\lambda_k(J_{0,n}^+)\notag\\
\label{lakln}
&\lambda_k(l_n(\Lambda^+))=l_n(k)\text{ for }n\geq n_1,  
\end{align} 
where $n_1$ is some large enough integer and $k\geq 1$. This equality \eqref{lakln} follows from $\spec(l_n(\Lambda^+))=\accol{l_n(k)}_{k=1}^{\infty}$ and from Lemma \ref{lem:32} below. By this lemma we can indeed find $n_1$ such that 
\[ 
n\geq n_1\implies l_n(k)<l_n(k+1)\text{ for all }k\in\D{N}^*.\qedhere 
\]  
\end{proof}

\subsection{The sequence $\BS{(l_n(k))_{k=1}^{\infty}}$ is increasing for large $\BS{n}$} \label{sec:33}  

\begin{lemma} \label{lem:32}   
Let $(l_n(k))_{k=1}^{\infty}$ be defined by \eqref{300}. For any $\varepsilon>0$ there exists $n(\varepsilon)$ such that  
\begin{equation} \label{E32} 
\abs{l_n(k+1)-l_n(k)-1}<\varepsilon  
\end{equation}   
holds for any $n\geq n(\varepsilon)$ and all $k\in\D{N}^*$. 
\end{lemma}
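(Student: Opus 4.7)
The plan is to compute $l_n(k+1) - l_n(k) - 1$ explicitly and show that it is uniformly small in $k$ when $n$ is large. Writing out the definition,
\[
l_n(k+1)-l_n(k)-1=2a_n(k)^2-a_n(k+1)^2-a_n(k-1)^2=-\delta^2(a_n^2)(k-1),
\]
where $\delta^2$ is the second finite difference operator introduced in (H2). Thus the claim reduces to showing that
\[
\sup_{k\in\D{N}^*}\abs{\delta^2(a_n^2)(k-1)}\xrightarrow[n\to\infty]{}0.
\]

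The key step is to apply a discrete Leibniz-type identity to the function $a_n^2$. First, $\delta(a_n^2)(k)=\bigl(a_n(k+1)+a_n(k)\bigr)\delta a_n(k)$, and then a short computation gives
\[
\delta^2(a_n^2)(k)=\bigl(a_n(k+2)+a_n(k+1)\bigr)\delta^2a_n(k)+\delta a_n(k)\bigl(\delta a_n(k+1)+\delta a_n(k)\bigr).
\]
Now I would invoke the uniform estimates \eqref{21de} promised in Section~\ref{sec:22}, namely
\[
\sup_{k\in\D{Z}}\abs{\delta^m a_n(k)}=\ord(n^{\gamma-m}),\quad m=0,1,2.
\]
Substituting these into the identity above, the first term is $\ord(n^{\gamma})\cdot\ord(n^{\gamma-2})=\ord(n^{2\gamma-2})$ and the second term is $\ord(n^{\gamma-1})\cdot\ord(n^{\gamma-1})=\ord(n^{2\gamma-2})$, uniformly in $k$.

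Since the assumption $\gamma\leq 1/2$ gives $2\gamma-2\leq-1<0$, the supremum $\sup_{k\in\D{N}^*}|\delta^2(a_n^2)(k-1)|$ tends to $0$ as $n\to\infty$. Given $\varepsilon>0$, choosing $n(\varepsilon)$ so that this supremum is smaller than $\varepsilon$ for $n\geq n(\varepsilon)$ yields \eqref{E32}. There is no real obstacle here beyond carefully bookkeeping the discrete Leibniz identity; the uniformity in $k$ (in particular away from $k\approx n$ where $a_n$ is cut off) is automatic because the estimates \eqref{21de} are stated uniformly in $k\in\D{Z}$.
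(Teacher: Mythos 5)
Your proof is correct and follows essentially the same route as the paper's: both recognize that $l_n(k+1)-l_n(k)-1$ equals (up to sign) the second finite difference of $a_n^2$, expand it by a discrete Leibniz identity, and invoke the uniform bounds \eqref{21de} to conclude that it is $\ord(n^{2\gamma-2})$ uniformly in $k$. The only difference is a cosmetic one in how the Leibniz expansion is organized (the paper writes it through $a_{1,n}(k)$ and $\delta a_{1,n}(k)$, you write it directly on $\delta^2(a_n^2)$), but the resulting estimates are identical.
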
 

\begin{proof} 
We write 
\begin{subequations}  \label{lna1nk}
\begin{align}  \label{lna1n}
l_n(k)&=k+a_{1,n}(k)\\
\label{a1nk} 
a_{1,n}(k)&\coloneqq a_n(k-1)^2-a_n(k)^2.
\end{align}
\end{subequations} 
Thus,
\begin{align*}
a_{1,n}(k)&=-\bigl(a_n(k-1)+a_n(k)\bigr)\,\delta a_n(k-1),\\
\delta a_{1,n}(k)&=-\bigl(\delta a_n(k-1)+\delta a_n(k)\bigr)\,\delta a_n(k)-\bigl(a_n(k-1)+a_n(k)\bigr)\,\delta^2a_n(k-1).
\end{align*} 
Under (H2) Lemma~\ref{lem:21} states estimates \eqref{21de}, i.e.\ $\sup_{k\geq 1}\abs{\delta^ma_n(k)}\leq\tilde Cn^{\gamma-m}$ for $m=0,1,2$. It follows that  
\[
\sup_{k\geq 1}\,\abs{a_{1,n}(k)}\leq Cn^{2\gamma-1}\ \text{ and }\ \sup_{k\geq 1}\,\abs{\delta a_{1,n}(k)}\leq C_0n^{2\gamma-2}
\] 
for some constants $C$, $C_0>0$. Therefore,  
\begin{equation} \label{lnk1} 
\sup_{k\geq 1}\,\abs{l_n(k+1)-l_n(k)-1}\leq C_0n^{2\gamma-2}.  
\end{equation}
We complete the proof choosing $n(\varepsilon)$ such that $C_0n(\varepsilon)^{2\gamma-2}<\varepsilon$.
\end{proof} 

\subsection{Proof of Proposition \ref{prop:3}}  \label{sec:34} 

We consider the operators  
\begin{equation} \label{2B}
B_n^+\coloneqq\begin{pmatrix}  
0 & \ii a_n(1) &0& 0 &\dots\\ 
-\ii a_n(1) & 0 & \ii a_n(2) &0& \dots\\ 
0 & -\ii a_n(2) & 0 & \ii a_n(3) &\dots\\ 
0 &0& -\ii a_n(3) & 0 &\dots\\  
\vdots &\vdots&\vdots&\vdots&\ddots 
\end{pmatrix} 
\end{equation} 

\begin{stepone*}\sl
We claim that $\ii(\Lambda^+B_n^+-B_n^+\Lambda^+)=\croch{\ii\Lambda^+,\,B_n^+}=A_n^+$ 
where $A_n^+$ is given by \eqref{2A}.
\end{stepone*}

\begin{proof}
Writing $A_n^+=2\Re(S^+a_n(\Lambda^+))$ and $B_n^+=2\Im(S^+a_n(\Lambda^+))$ we get 
\[
\croch{\ii B_n^+,\Lambda^+}=2\Re\croch{S^+a_n(\Lambda^+),\Lambda^+}=2\Re\croch{S^+,\Lambda^+}a_n(\Lambda^+).  
\] 
Now it suffices to observe that $\croch{S^+,\Lambda^+}=-S^+$. 
\end{proof}

\begin{steptwo*}\sl
We claim that $\croch{\ii B_n^+,\,A_n^+}=2\,a_{1,n}(\Lambda^+)$ where $a_{1,n}$ is as in \eqref{a1nk}.
\end{steptwo*}

\begin{proof}
We observe that $\croch{\ii B_n^+,\,A_n^+}=2\Re\croch{S^+a_n(\Lambda^+),A_n^+}$ and  
\begin{align*}
\croch{S^+a_n(\Lambda^+),A_n^+}&=\croch{S^+a_n(\Lambda^+),\,S^+a_n(\Lambda^+)+a_n(\Lambda^+)(S^+)^*}\\ 
&= S^+a_n(\Lambda^+)^2(S^+)^*-a_n(\Lambda^+)(S^+)^*S^+a_n(\Lambda^+)\\
&=a_n(\Lambda^+-I)^2-a_n(\Lambda^+)^2.
\end{align*} 
\renewcommand{\qed}{} 
\end{proof}

\begin{stepthree*}\sl
As explained at the end of Section \ref{sec:32} it remains to prove \eqref{327'}. 
\end{stepthree*}

\begin{proof}
Recall \eqref{327'} is the estimate $\norm{R_n^+}_{\C{B}(l^2(\D{N}^*))}=\ord(n^{3\gamma-2})$ where, according to \eqref{327} and \eqref{216},
\[
R_n^+=\eul^{\ii B_n^+}J_{0,n}^+\eul^{-\ii B_n^+}-l_n(\Lambda^+)\text{ with }J_{0,n}^+=\Lambda^++A_n^+. 
\]
In order to prove \eqref{327'} we denote $J_{0,n}(t)\coloneqq\Lambda^++tA_n^+$ for $t\in\D{R}$ and introduce  
\[ 
G_n(t)\coloneqq\eul^{\ii tB_n^+}J_{0,n}(t)\eul^{-\ii tB_n^+}. 
\]
Then we get 
\[
\partial_tG_n(t)=\eul^{\ii tB_n^+}\bigl(\F{d}_{B_n^+}J_{0,n}(t)\,\bigr) \eul^{-\ii tB_n^+},
\]
Using Steps 1 and 2 for the last equality below we find that
\begin{align*}
\F{d}_{B_n^+}J_{0,n}(t)&\coloneqq\partial_tJ_{0,n}(t)\,+\croch{\ii B_n^+,J_{0,n}(t)}\\
&=A_n^++\croch{\ii B_n^+,\,\Lambda^+}+t\,\croch{\ii B_n^+,\,A_n^+}\\
&=2t\,a_{1,n}(\Lambda^+). 
\end{align*}
Hence we can write
\[   
\partial_tG_n(t)=2t\,\bigl(a_{1,n}(\Lambda^+)+R_n(t)\bigr)
\] 
with  
\begin{align*} 
R_n(t)&\coloneqq\eul^{\ii tB_n^+}a_{1,n}(\Lambda^+)\eul^{-\ii tB_n^+}-a_{1,n}(\Lambda^+)\\ 
&=\int_0^1\eul^{\ii stB_n^+}[\ii B_n^+,\,a_{1,n}(\Lambda^+)]\eul^{-\ii st B_n^+}\dd s.
\end{align*} 
Since $\norm{\croch{S^+,a_{1,n}(\Lambda^+)}}_{\C{B}(l^2(\D{N}^*))}=\norm{(\delta a_{1,n})(\Lambda^+)]}_{\C{B}(l^2(\D{N}^*))}=\ord(n^{2\gamma-2})$, it is clear that 
\begin{equation} \label{E22''}
\norm{R_n(t)}_{\C{B}(l^2(\D{N}^*))}\leq\norm{\croch{B_n^+,\,a_{1,n}(\Lambda^+)}}_{\C{B}(l^2(\D{N}^*))}=\ord(n^{3\gamma-2}). 
\end{equation}
Using \eqref{lna1n}, i.e.\ $l_n(k)=k+a_{1,n}(k)$, we find that $R_n^+\coloneqq\eul^{\ii B_n^+}J_{0,n}^+\eul^{-\ii tB_n^+}-l_n(\Lambda^+)$ can be written
\begin{align*}
R_n^+&=\eul^{\ii B_n^+}J_{0,n}^+\eul^{-\ii B_n^+}-\Lambda^+-a_{1,n}(\Lambda^+)\\
&=G_n(1)-G_n(0)-a_{1,n}(\Lambda^+)\\ 
&=\int_0^12t\,\bigl(a_{1,n}(\Lambda^+)+R_n(t)\bigr)\dd t-a_{1,n}(\Lambda^+)\\
&=\int_0^12tR_n(t)\dd t. 
\end{align*}
Hence \eqref{327'} follows from \eqref{E22''}. 
\renewcommand{\qed}{} 
\end{proof}

\section{Properties of the spectrum of $J_n$} \label{sec:4} 
\subsection{Plan of Section \ref{sec:4}} \label{sec:41} 

The purpose of this section is to prove two properties of the spectrum of $J_n$ given in Proposition \ref{prop:4} which is stated in Section~\ref{sec:42}. The proof of the first property is given in Section \ref{sec:43} and the proof of the second one is given in Section \ref{sec:45}.

\subsection{Main result} \label{sec:42} 

\begin{proposition}[estimates for eigenvalues of $J_n$] \label{prop:4} 
Assume that the operators $J_n$ are as in Theorem \ref{thm:2} and $\rho_N$ is given by \eqref{110} with $\av{v}=0$, \emph{i.e.}
\begin{equation}  \label{42} 
\rho_N=\max_{1\leq k\leq N}\,\abs{v(k)}.
\end{equation} 
\begin{enumerate}[\rm(a)] 
\item
If $C_0$ is large enough, then 
\begin{equation} \label{E4a}   
\sup_{k\geq 1}\abs{\lambda_k(J_n)-l_n(k)}\leq\rho_N+C_0n^{3\gamma-2}.
\end{equation}   
\item
If $n_1$ is large enough, then for $n\geq n_1$ one has
\begin{equation}  \label{E4b}
\sup_{k\geq 1}\abs{\lambda_{k+N}(J_n)-\lambda_k(J_n)-N}=\ord(n^{\gamma-1}).
\end{equation} 
\end{enumerate}
\end{proposition}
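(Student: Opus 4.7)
For part (a), the plan is to write $J_n^+=J_{0,n}^++V_n^+$, where $V_n^+$ is the bounded diagonal operator on $l^2(\D{N}^*)$ with entries $v(k)\theta_{n,n}(k)^2$. The assumption $\av{v}=0$ yields $\abs{v(k)}\leq\rho_N$, and combined with $0\leq\theta_{n,n}^2\leq 1$ it gives $\norm{V_n^+}\leq\rho_N$. The min-max principle then furnishes $\abs{\lambda_k(J_n^+)-\lambda_k(J_{0,n}^+)}\leq\rho_N$ for every $k\geq 1$, and combining with Proposition \ref{prop:3}, which provides $\abs{\lambda_k(J_{0,n}^+)-l_n(k)}\leq C_0n^{3\gamma-2}$ uniformly in $k$, together with the identity $\lambda_k(J_n)=\lambda_k(J_n^+)$ for $k\geq 1$, will produce \eqref{E4a}.

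For part (b) I would use a shift-conjugation argument. Let $U:=(S^+)^N$ be the $N$-fold iterate of the forward shift on $l^2(\D{N}^*)$, an isometry whose range has codimension $N$. The first step is to set $\hat J_n^+:=U^*J_n^+U$, whose canonical matrix entries are $\hat J_n^+(i,j)=J_n^+(i+N,j+N)$, and to establish that
\[
\hat J_n^+=J_n^++NI+E_n,\qquad \norm{E_n}=\ord(n^{\gamma-1}).
\]
The $N$-periodicity of $v$ reduces the diagonal of $E_n$ to $v(i)\bigl[\theta_{n,n}(i+N)^2-\theta_{n,n}(i)^2\bigr]$, of size $\ord(\rho_N/n)$ uniformly in $i$, since $\theta_{n,n}$ is obtained from a $\class^1$ function by dilation at scale $n$. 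Its off-diagonal $a_n(i+N)-a_n(i)=\sum_{j=0}^{N-1}\delta a_n(i+j)$ is $\ord(n^{\gamma-1})$ by the $\delta a_n$ bound recorded in Lemma \ref{lem:21}. Because $U$ is an isometry onto a codimension-$N$ subspace, the standard compression interlacing inequality gives $\lambda_k(\hat J_n^+)\leq\lambda_{k+N}(J_n^+)$, and combined with the min-max perturbation bound $\lambda_k(\hat J_n^+)=\lambda_k(J_n^+)+N+\ord(n^{\gamma-1})$, this will deliver the lower half of \eqref{E4b}, namely $\lambda_{k+N}(J_n^+)\geq\lambda_k(J_n^+)+N-Cn^{\gamma-1}$.

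The main obstacle, since interlacing is useless in the other direction, is the matching upper bound $\lambda_{k+N}(J_n^+)\leq\lambda_k(J_n^+)+N+Cn^{\gamma-1}$. My plan is to apply the min-max principle directly to the $(k+N)$-dimensional test subspace
\[
V:=UV_k\oplus\mathrm{span}(\vece_1,\dots,\vece_N),\qquad V_k:=\mathrm{span}(\psi_1,\dots,\psi_k),
\]
where $\psi_1,\dots,\psi_k$ are orthonormal eigenvectors of $J_n^+$ for $\lambda_1(J_n^+),\dots,\lambda_k(J_n^+)$; orthogonality of the summands is automatic because $UV_k\subset\mathrm{span}(\vece_{N+1},\vece_{N+2},\dots)$. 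The delicate point—the step I expect to require the most care—is the observation that for $n$ large enough, say $n\geq 5N/3$, both cutoffs $\theta_{n,n}(j)$ and $\theta_{2n,n}(j)$ vanish for $j\in\{1,\dots,N\}$, so that $J_n^+$ restricted to $\mathrm{span}(\vece_1,\dots,\vece_N)$ equals $\mathrm{diag}(1,\dots,N)$ and, crucially, $a_n(N)=0$. This double vanishing makes the quadratic form $\av{\psi,J_n^+\psi}$ decouple across the direct sum with no cross-term of size $a_n(N)=\ord(n^\gamma)$, which would otherwise swamp the desired remainder. Substituting $\hat J_n^+=J_n^++NI+E_n$ on the $UV_k$ block then bounds $\max_{\psi\in V,\norm\psi=1}\av{\psi,J_n^+\psi}$ by $\lambda_k(J_n^+)+N+Cn^{\gamma-1}$ (using $\lambda_1(J_n^+)>0$ for $n$ large), and the min-max principle completes \eqref{E4b}.
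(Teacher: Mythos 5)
Your proof of part (a) follows the paper's argument exactly: write $J_n^+=J_{0,n}^++v_n(\Lambda^+)$, bound $\norm{v_n(\Lambda^+)}\leq\rho_N$, apply min-max and Proposition~\ref{prop:3}. For part (b) you take a genuinely different route. The paper exploits the bilateral setting: on $l^2(\D{Z})$ the shift $S$ is unitary, so $S^{-N}J_nS^N$ has exactly the same spectrum as $J_n$, and an $\ord(n^{\gamma-1})$ perturbation bound on $S^{-N}J_nS^N-J_n-N$ combined with the localization of each $\lambda_k(J_n)$ inside a disjoint interval $\Delta_{k,n}^C$ (from part (a) and Lemma~\ref{lem:32}) pins down the label $k+N$ by a counting/nesting argument. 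You instead stay on $l^2(\D{N}^*)$: you use Cauchy interlacing for the compression $U^*J_n^+U$ (codimension-$N$ subspace) to get the lower half of the two-sided estimate, and for the upper half you build the $(k+N)$-dimensional test space $UV_k\oplus\mathrm{span}(\vece_1,\dots,\vece_N)$ and feed it through min-max. The one nontrivial structural fact you need is that for $n\geq 5N/3$ the cutoffs force $d_n(j)=j$ for $j\leq N$ and, crucially, $a_n(N)=0$, which kills the cross-term in the quadratic form; this is a correct reading of \eqref{dn+}–\eqref{an+} and \eqref{222}. Both approaches are valid and deliver the same $\ord(n^{\gamma-1})$ rate. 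The paper's unitary-conjugation argument is shorter once the interval localization is already available (and it reuses machinery needed elsewhere), whereas your variational argument is more self-contained — it does not invoke the disjointness of the $\Delta_{k,n}^C$ or Lemma~\ref{lem:32} — at the price of the interlacing plus test-space bookkeeping and the decoupling observation.
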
 

\begin{remarks*} 
(i) 
We will deduce \eqref{E4a} from Proposition \ref{prop:3} by application of the min-max principle. 

(ii)  
Let $k\geq 1$. For $C>0$ we define the intervals 
\begin{equation} \label{De}
\Delta_{k,n}^C\coloneqq[l_n(k)-\rho_N-Cn^{\gamma-1},\,l_n(k)+\rho_N+Cn^{\gamma-1}].
\end{equation}  
Since the hypothesis $\rho_N<\frac{1}{2}$ allows us to use \eqref{E32} from Lemma \ref{lem:32} with $0<\varepsilon<\frac{1}{2}-\rho_N$ we find   
\[
l_n(k+1)-l_n(k)\geq 2\rho_N+\varepsilon
\] 
for $n\geq n(\varepsilon)$. Therefore choosing $n_C$ large enough to ensure $2Cn_C^{\gamma-1}<\varepsilon$ we obtain  
\[
n\geq n_C\implies\Delta_{k,n}^C\cap\Delta_{k+1,n}^C=\varnothing.  
\]  
Since $3\gamma-2\leq\gamma-1$, \eqref{E4a} implies that there exists $C_0>0$ such that $\lambda_k(J_n)\in\Delta_{k,n}^{C_0}$, hence
\[
n\geq n_{C_0}\implies\spec(J_n)\cap\Delta_{k,n}^{C_0}=\accol{\lambda_k(J_n)}.
\] 
This localisation of $\lambda_k(J_n)$ is crucial for the proof of \eqref{E4b} given in Section~\ref{sec:45}.
\end{remarks*} 

\subsection{Proof of Proposition \ref{prop:4} (a)}  \label{sec:43} 

Let us note that $\lambda_k(J_n)=\lambda_k(J_n^+)$ for $k\geq 1$. Moreover, $J_n^+=J_{0,n}^++v_n(\Lambda^+)$ with
\begin{equation} \label{vn}  
v_n(k)\coloneqq v(k)\theta_{n,n}^2(k).
\end{equation}  
Then, by the min-max principle and \eqref{42} we get 
\[
\abs{\lambda_k(J_n^+)-\lambda_k(J_{0,n}^+)}\leq\norm{v_n(\Lambda^+)}_{\C{B}(l^2(\D{N}^*))}\leq\rho_N,
\]
and \eqref{E4a} follows using estimate \eqref{E3} from Proposition~\ref{prop:3}.

\subsection{Proof of Proposition \ref{prop:4} (b)}  \label{sec:45} 

\begin{stepone*}\sl
Let $C'$ be large enough. Then there is $n_{C'}$ such that
\[ 
n\geq n_{C'}\implies\lambda_k(J_n)+N\in\Delta_{k+N,n}^{C'}.
\] 
\end{stepone*}

\begin{proof}
By definition \eqref{De} of $\Delta_{j,n}^{C}$ it suffices to show the estimate
\begin{equation} \label{716} 
\abs{\lambda_k(J_n)+N-l_n(k+N)}\leq C'n^{\gamma-1}+\rho_N.
\end{equation} 
The left-hand side of \eqref{716} can be estimated by 
\begin{equation} \label{717} 
\abs{\lambda_k(J_n)-l_n(k)}+\abs{l_n(k)+N-l_n(k+N)}. 
\end{equation} 
It remains to observe that the first term of \eqref{717} can be estimated by $\rho_N+C_0n^{3\gamma-2}$ due to Proposition \ref{prop:4}~(a) and the second term of \eqref{717} can be estimated by $C_0' n^{2\gamma-2}$ due to \eqref{lnk1}. 
\renewcommand{\qed}{}
\end{proof}

\begin{steptwo*}\sl
We claim that 
\[ 
\norm{S^{-N}J_n S^N-J_n-N}\leq C''n^{\gamma-1}.
\]
\end{steptwo*}
  
\begin{proof}
Using $S^{-N}a_n(\Lambda)S^N=a_n(\Lambda+N)$ we get 
\[
\norm{S^{-N}a_n(\Lambda)S^N-a_n(\Lambda)}=\ord(n^{\gamma-1})
\]    
from $\abs{a_n(\lambda+N)-a_n(\lambda)}\leq Cn^{\gamma-1}$, and using $S^{-N}v(\Lambda)S^N=v(\Lambda)$ we get 
\[
\norm{S^{-N}v_n(\Lambda)S^N-v_n(\Lambda)}=\ord(n^{-1}).
\]  
\renewcommand{\qed}{}
\end{proof}

\begin{stepthree*}\sl
We finally check that
\begin{equation} \label{7175} 
\lambda_{k+N}(J_n)\in\Delta_{k,n}^\circ\coloneqq\bigl\lbrack\lambda_k(J_n)+N-C''n^{\gamma-1},\,\lambda_k(J_n)+N+C''n^{\gamma-1}\bigr\rbrack
\end{equation}
holds for $n\geq n_0$ if $n_0$ and $C''$ are large enough.  
\end{stepthree*}
 
\begin{proof}
Let $C''$ be as in Step 2. If $R_n\coloneqq S^{-N}J_n S^N-J_n-N$, then by the min-max principle,
\[ 
\spec(J_n)=\spec(S^{-N}J_nS^N)=\spec(J_n+N+R_n)\subset\bigcup_{j\in\D{Z}}\Delta_{j,n}^{\circ}. 
\] 
Let $C'$ be as in Step 1 and $\hat C>C'+C''$. Then
\begin{equation} \label{718} 
n\geq n_{\hat C}\implies\Delta_{k,n}^\circ\subset\Delta_{k+N,n}^{\hat C}
\end{equation}
if $n_{\hat C}$ is large enough. If moreover $\hat C\geq C_0$ with $C_0$ as in \eqref{E4a}, then  
\begin{equation} \label{719} 
n\geq n_{\hat C}\implies\left(\lambda_j(J_n)\in\Delta_{j,n}^{\hat C}\quad\&\quad\Delta_{k,n}^{\hat C}\cap\Delta_{k+1,n}^{\hat C}=\varnothing\right)
\end{equation}
for every $j,k\in\D{Z}$. Using \eqref{719} with $j=k+N$ and \eqref{718} we obtain \eqref{7175} writing
\[
\lambda_{k+N}(J_n)\in\,\spec(J_n) \cap \Delta_{k+N,n}^{\hat C}\subset\bigcup_{j\in\D{Z}}\,\Delta_{j,n}^\circ\cap \Delta_{k+N,n}^{\hat C}=\Delta_{k,n}^\circ\text{ for }n\geq n_{\hat C}. 
\]
\renewcommand{\qed}{}
\end{proof}

\section{Operators $L_n$} \label{sec:5} 
\subsection{Plan of Section~\ref{sec:5}} \label{sec:51} 

In Section~\ref{sec:3} we obtained asymptotic estimates of eigenvalues when $v\equiv 0$ by reducing the off-diagonal entries through suitable conjugations with $\eul^{\ii B_n^+}$. We follow the same method to manage the general case. 

In Section \ref{sec:52} we use $\eul^{\ii B_n^+}$ from Section \ref{sec:34} to replace $J_n$ by $L_n$. In Section~\ref{sec:53} we state properties of the spectrum of $L_n$. In Section~\ref{sec:54} we state Proposition~\ref{prop:5} which is the most important ingredient of the proof of Theorem~\ref{thm:2}. The proof of Proposition~\ref{prop:5} begins in Section~\ref{sec:6} and ends in Section~\ref{sec:10}.    

\subsection{Definition of $\BS{L_n}$} \label{sec:52}

We define the operator $L_n$ acting on $l^2(\D{Z})$ by
\[
L_n\coloneqq l_n(\Lambda)+\tilde V_n
\]
where
\begin{equation} \label{lnj}
l_n(k)=
\begin{cases} 
k+a_n(k-1)^2-a_n(k)^2&\text{if }k\geq 1,\\
k&\text{if }k\leq 0
\end{cases}
\end{equation} 
with $a_n(k)$ defined in \eqref{an} and
\begin{align} \label{Vn'}
\tilde V_n&\coloneqq\eul^{\ii B_n}v_n(\Lambda)\eul^{-\ii B_n}\\
\label{Bn''}
B_n&\coloneqq\ii\left(a_n(\Lambda)S^{-1}-Sa_n(\Lambda)\right)=\begin{pmatrix}\BS{0}&\BS{0}\\\BS{0}&B_n^+\end{pmatrix}.
\end{align}
The restriction $B_n^+$ to $l^2(\D{N}^*)$ was already defined by \eqref{2B} in Section~\ref{sec:34}. Similarly,
\[
L_n=
\left(
\begin{array}{c@{}c}
\begin{array}{c@{}rrc|}
\ddots\,&\vdots&\vdots&\vdots\\
\cdots\,&-2&0&0\\
\cdots\,&0&-1&0\\
\cdots\,&0&0&0\\
\hline
\end{array}&
\begin{array}{ccc}
&&\\
&&\\
&\BS{0}&\\
&&\\[2mm]
\hline
\end{array}\\
\begin{array}{cccc}
&&&\\
&&\BS{0}&\\
&&&
\end{array}
&\begin{array}{|ccc}
&&\\
&L_n^+&\\
&&
\end{array}
\end{array}
\!\!\right)
\] 
The restriction $L_n^+$ to $l^2(\D{N}^*)$ is given by 
\begin{subequations} \label{lnplustvn}
\begin{align} \label{lnplus}
L_n^+&\coloneqq l_n(\Lambda^+)+\tilde V_n^+ \\
\label{tvn}
\tilde V_n^+&\coloneqq\eul^{\ii B_n^+}v_n(\Lambda^+)\eul^{-\ii B_n^+}. 
\end{align}
\end{subequations}
The spectrum of $L_n$ is clearly   
\[  
\spec(L_n)=\spec(L_n^+)\cup\accol{k\in\D{Z}:k\leq 0}. 
\]   
Further on, we write $\spec(L_n)=\accol{\lambda_k(L_n)}_{k\in\D{Z}}$ with 
\[ 
\lambda_k(L_n)=
\begin{cases} 
\lambda_k(L_n^+)&\text{if }k\geq 1,\\ 
k&\text{if }k\leq 0.
\end{cases}
\] 

\subsection{Properties of the spectrum of $\BS{L_n}$} \label{sec:53}

\begin{proposition}[estimates for $\lambda_k(L_n)$] \label{cor5} 
Let $L_n$ and $\accol{\lambda_k(L_n)}_{k\in\D{Z}}$ be as in Section~\ref{sec:52}.
\begin{enumerate}[\rm(a)] 
\item
Estimate \eqref{En} from Theorem \ref{thm:2} is equivalent to  
\begin{equation} \label{cor5a} 
\lambda_n(L_n)=l_n(n)+\ord(n^{-\gamma/2}\ln n). 
\end{equation} 
\item
If $C$ is large enough, then 
\begin{equation} \label{cor5b}   
\sup_{k\geq 1}\abs{\lambda_k(L_n)-l_n(k)}\leq\rho_N+Cn^{3\gamma-2}.  
\end{equation}
\item
If $n_1$ is large enough, then for $n\geq n_1$ one has
\begin{equation}  \label{cor5c}
\sup_{k\geq 1}\abs{\lambda_{k+N}(L_n)-\lambda_k(L_n)-N}=\ord(n^{\gamma-1}). 
\end{equation} 
\end{enumerate}
\end{proposition}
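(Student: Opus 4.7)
The plan is to reduce all three claims to the corresponding statements for $J_n$ already established in Sections~\ref{sec:3} and~\ref{sec:4}. The key observation is that on $l^2(\D{N}^*)$ the operator $L_n^+$ is a small perturbation of the unitary conjugate $\eul^{\ii B_n^+}J_n^+\eul^{-\ii B_n^+}$. Indeed, writing $J_n^+=J_{0,n}^++v_n(\Lambda^+)$ and invoking Step~3 of the proof of Proposition~\ref{prop:3}, which gives
\[
\eul^{\ii B_n^+}J_{0,n}^+\eul^{-\ii B_n^+}=l_n(\Lambda^+)+R_n^+,\quad\norm{R_n^+}_{\C{B}(l^2(\D{N}^*))}=\ord(n^{3\gamma-2}),
\]
together with the definition \eqref{tvn} of $\tilde V_n^+$, I would obtain $\eul^{\ii B_n^+}J_n^+\eul^{-\ii B_n^+}=L_n^++R_n^+$. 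Since conjugation by $\eul^{\ii B_n^+}$ preserves spectra, the min-max principle then yields
\begin{equation}\label{eq:cor5plan}
\sup_{k\geq 1}\abs{\lambda_k(L_n^+)-\lambda_k(J_n^+)}\leq\norm{R_n^+}_{\C{B}(l^2(\D{N}^*))}=\ord(n^{3\gamma-2}).
\end{equation}
Recalling that $\lambda_k(L_n)=\lambda_k(L_n^+)$ and $\lambda_k(J_n)=\lambda_k(J_n^+)$ for $k\geq 1$, this is the bridge between the ordered eigenvalues of $L_n$ and those of $J_n$.

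Once \eqref{eq:cor5plan} is in hand, each of the three claims follows at once. For (a), taking $k=n$ and using $3\gamma-2\leq-\gamma/2$ (valid since $\gamma\leq 1/2$) gives $\lambda_n(L_n)=\lambda_n(J_n)+\ord(n^{-\gamma/2}\ln n)$, while the identity $l_n(n)=l(n)$ coming from \eqref{lnj} and \eqref{lnk} identifies \eqref{cor5a} with \eqref{En}. For (b), combining \eqref{eq:cor5plan} with Proposition~\ref{prop:4}(a), namely $\abs{\lambda_k(J_n)-l_n(k)}\leq\rho_N+C_0n^{3\gamma-2}$, directly yields \eqref{cor5b} after absorbing constants. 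For (c), Proposition~\ref{prop:4}(b) gives $\abs{\lambda_{k+N}(J_n)-\lambda_k(J_n)-N}=\ord(n^{\gamma-1})$, and two applications of \eqref{eq:cor5plan} together with the inequality $3\gamma-2\leq\gamma-1$ (again equivalent to $\gamma\leq 1/2$) transfer this estimate from $J_n$ to $L_n$, which is \eqref{cor5c}.

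I do not expect any genuine obstacle at this stage: the argument is essentially bookkeeping once one recognises that $L_n^+$ and $J_n^+$ are unitarily equivalent up to the small error $R_n^+$ whose norm has already been controlled in Section~\ref{sec:3}. The only point to check is that the $\ord(n^{3\gamma-2})$ discrepancy is absorbed into each of the three target error terms in \eqref{cor5a}, \eqref{cor5b}, and \eqref{cor5c}, and all three absorptions follow from the assumption $0<\gamma\leq 1/2$. The real work in the proof of Theorem~\ref{thm:2} will lie in Proposition~\ref{prop:5}, which analyses $L_n$ itself, rather than in the present proposition.
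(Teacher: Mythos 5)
Your proposal is correct and follows essentially the same route as the paper: both establish the key estimate $\sup_{k\geq 1}\abs{\lambda_k(J_n)-\lambda_k(L_n)}=\ord(n^{3\gamma-2})$ by writing $\eul^{\ii B_n^+}J_n^+\eul^{-\ii B_n^+}=L_n^++R_n^+$ via Step~3 of Proposition~\ref{prop:3} and the definition of $\tilde V_n^+$, then invoke the min-max principle and transfer \eqref{En}, \eqref{E4a}, \eqref{E4b} to $L_n$ using $3\gamma-2\leq-\gamma/2$ and $3\gamma-2\leq\gamma-1$. No gaps.
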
  

\begin{proof}
This proposition translates estimates for $J_n$ into estimates for $L_n$ through the key estimate
\begin{equation} \label{E5}
\sup_{k\geq 1}\abs{\lambda_k(J_n)-\lambda_k(L_n)}=\ord(n^{3\gamma-2}).
\end{equation} 
Apply \eqref{E5} to translate each of the three estimates \eqref{En}, \eqref{E4a}, and \eqref{E4b}, the first one from Theorem \ref{thm:2} and the other two from Proposition \ref{prop:4}. Statements (b) and (c) are thus corollaries of Proposition \ref{prop:4}. 

It remains to prove \eqref{E5}. Let $k\geq 1$. We have $\lambda_k(L_n)=\lambda_k(L_n^+)$ and $\lambda_k(J_n)=\lambda_k(J_n^+)=\lambda_k(\eul^{\ii B_n^+}J_n^+\eul^{-\ii B_n^+})$. Moreover, using $J_n^+=J_{0,n}^++v_n(\Lambda^+)$ together with \eqref{327}, \eqref{tvn}, and \eqref{lnplus} that define $R_n^+$, $\tilde V_n^+$, and $L_n^+$ we find 
\[
\eul^{\ii B_n^+}J_n^+\eul^{-\ii B_n^+}=l_n(\Lambda^+)+R_n^++\tilde V_n^+=L_n^++R_n^+.
\]
Finally, by the min-max principle and estimate \eqref{327'} of $\norm{R_n^+}_{\C{B}(l^2(\D{N}^*))}$:
\[
\sup_{k\geq 1}\abs{\lambda_k(J_n)-\lambda_k(L_n)}=\sup_{k\geq 1}\abs{\lambda_k(L_n^++R_n^+)-\lambda_k(L_n^+)}\leq\norm{R_n^+}_{\C{B}(l^2(\D{N}^*))}=\ord(n^{3\gamma-2}).\qedhere
\] 
\end{proof}

\subsection{A trace estimate} \label{sec:54}   

We denote $L_{0,n}\coloneqq l_n(\Lambda)$ and we want to compare the spectrum of 
\begin{equation} \label{Ln=}
L_n\coloneqq L_{0,n}+\tilde V_n
\end{equation} 
with that of $L_{0,n}$ which is $\accol{l_n(k)}_{k\in\D{Z}}$ for $n\geq n_0$. For this purpose we consider the expression 
\begin{subequations} \label{G0ntr0}
\begin{equation} \label{G0n}
\C{G}_n^0\coloneqq\sum_{k\in\D{Z}}\bigl(\chi(\lambda_k(L_n)-l_n(n))-\chi(l_n(k)-l_n(n))\bigr), 
\end{equation} 
with $\chi\in\C{S}(\D{R})$, where $\C{S}(\D{R})$ denotes the Schwartz class of rapidly decreasing functions on $\D{R}$. Let us observe that $\C{G}_n^0$ can be written as a trace:
\begin{equation} \label{tr0'}
\C{G}_n^0=\tr\bigl(\chi(L_n-l(n))-\chi(L_{0,n}-l(n))\bigr), 
\end{equation}
\end{subequations} 
where, as already noted, $l(n)=l_n(n)=n+a_n(n-1)^2-a_n(n)^2$. 

\begin{proposition}[trace estimate] \label{prop:5}   
Let $\chi\in\C{S}(\D{R})$ be such that its Fourier transform 
\begin{equation} \label{F} 
\hat\chi(t)\coloneqq\int_{-\infty}^{\infty}\chi(\lambda)\eul^{-\ii t\lambda}\frac{\dd\lambda}{2\pi}
\end{equation} 
has compact support. If $\C{G}_n^0$ is given by \eqref{G0ntr0}, then one has the large $n$ estimate  
\begin{equation} \label{tr0}
\C{G}_n^0=\ord(n^{-\gamma/2}\ln n). 
\end{equation} 
\end{proposition}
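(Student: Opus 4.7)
The natural starting point is Fourier inversion of $\chi$: since $\supp\hat\chi$ is compact, say contained in $[-T,T]$,
\[
\chi(H - l(n)) = \int_{-T}^{T} \hat\chi(t)\,\eul^{-\ii tl(n)}\,\eul^{\ii tH}\,\dd t
\]
for any self-adjoint $H$. Applying this to $H=L_n$ and $H=L_{0,n}$ in \eqref{tr0'} and using linearity of the trace reduces the proof to showing
\[
\tr\bigl(\eul^{\ii tL_n} - \eul^{\ii tL_{0,n}}\bigr) = \ord(n^{-\gamma/2}\ln n)\quad\text{uniformly in }t\in[-T,T].
\]
The trace-class character of the difference follows from the (essentially) finite rank of $v_n(\Lambda)$ after conjugation by $\eul^{\pm\ii B_n}$; this will be the job of Lemmas~\ref{lem:61}--\ref{lem:63}.

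Next, Duhamel's formula gives
\[
\eul^{\ii tL_n} - \eul^{\ii tL_{0,n}} = \ii\int_0^t \eul^{\ii(t-s)L_{0,n}}\,\tilde V_n\,\eul^{\ii sL_n}\,\dd s,
\]
so cyclicity of the trace reduces the estimate to bounds on $\tr(\tilde V_n\,U_n(s,t))$, where $U_n(s,t)\coloneqq\eul^{\ii sL_n}\eul^{\ii(t-s)L_{0,n}}$ is unitary. Since $v$ has period $N$ and zero mean, its Fourier expansion
\[
v(k) = \sum_{1\le m\le N-1}\hat v(m)\,\eul^{2\pi\ii mk/N}
\]
contains no zeroth mode, and accordingly
\[
\tilde V_n = \sum_{1\le m\le N-1}\hat v(m)\,\eul^{\ii B_n}\,\theta_{n,n}^2(\Lambda)\,\eul^{\ii\alpha_m\Lambda}\,\eul^{-\ii B_n},\qquad \alpha_m\coloneqq\tfrac{2\pi m}{N}.
\]
It thus suffices to bound, for each nonzero mode $\alpha=\alpha_m$, the trace $\tr\bigl(\eul^{\ii B_n}\theta_{n,n}^2(\Lambda)\eul^{\ii\alpha\Lambda}\eul^{-\ii B_n}\,U_n(s,t)\bigr)$ by $\ord(n^{-\gamma/2}\ln n)$. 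This is precisely the reformulation carried out in Section~\ref{sec:6} via Lemmas~\ref{lem:61}--\ref{lem:63} and corresponds to the implication ``Prop.~\ref{prop:6}$\Rightarrow$Prop.~\ref{prop:5}'' announced in Section~\ref{sec:14}.

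The core input is then an approximation of $\eul^{\pm\ii B_n}$ by explicit Fourier integral operators, announced for Sections~\ref{sec:7}--\ref{sec:8}. The simple composition formula of Lemma~\ref{lem:74} — whose validity hinges on the linearization of $a_n$ near $k=n$ built into \eqref{an} — will make the composite $\eul^{\ii B_n}\eul^{\ii\alpha\Lambda}\eul^{-\ii B_n}$ tractable modulo controlled remainders. Inserting these representations, the diagonal matrix element at index $k$ of the operator appearing in the trace becomes an oscillatory integral whose phase carries a large parameter of order $n^\gamma$ (since $\norm{B_n}\asymp n^\gamma$). The stationary phase method then produces the $n^{-\gamma/2}$ decay, while the logarithmic loss should come both from summing over the $\ord(n)$ indices $k$ in the support of $\theta_{n,n}^2$ and from non-stationary contributions arising near the boundary of the region where the stationary phase asymptotics is uniformly valid.

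The principal technical obstacle will be the construction and control of the Fourier integral approximation of $\eul^{\ii B_n}$: its remainder must be small enough to survive multiplication by the $O(1)$ symbol $v_n$ and summation over $\ord(n)$ indices, while keeping the final loss down to $n^{-\gamma/2}\ln n$. A secondary delicacy is to verify non-degeneracy of the resulting phase in the stationary direction for each nonzero mode $\alpha_m$; this should follow from $\alpha_m\neq 0$ combined with the weak-dispersion hypothesis \eqref{H1} on $\rho_N$, which is precisely what prevents a pathological near-degeneracy between distinct residue classes modulo $N$.
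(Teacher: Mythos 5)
There is a genuine gap, and it occurs at the very first reduction. You claim that after Fourier inversion it suffices to prove $\tr\bigl(\eul^{\ii tL_n}-\eul^{\ii tL_{0,n}}\bigr)=\ord(n^{-\gamma/2}\ln n)$ \emph{uniformly in} $t\in[-T,T]$, and then (via Duhamel and cyclicity) to bound $\tr\bigl(\tilde V_n\,U_n(s,t)\bigr)$. This intermediate statement is far stronger than \eqref{tr0} and is not attainable by your method: $\tilde V_n$ is conjugate to $v_n(\Lambda)$, which is supported on $\ord(n)$ lattice sites, so the trace you propose to bound is a sum of $\ord(n)$ diagonal entries. The stationary phase mechanism gives only $\ord(n^{-\gamma/2})$ \emph{per diagonal entry}, so summing over the support of $\theta_{n,n}^2$ yields $\ord(n^{1-\gamma/2})$, which is useless. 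For fixed $t$ the function $\lambda\mapsto\eul^{\ii t\lambda}$ has no decay, so there is no way to localize the sum to $\abs{j-n}\lesssim n^{\gamma}$, and the eigenvalue shifts $\lambda_j(L_n)-l_n(j)$ for $n^{\gamma}\lesssim\abs{j-n}\lesssim n$ are of size $\rho_N=\ord(1)$, not small. The paper's proof keeps the $t$-integration against $\hat\chi$ throughout precisely for this reason: the rapid decay of $\chi$ first permits the insertion of the cut-off $\theta_{n^{\gamma},n}(L_{0,n})$ (Lemma~\ref{lem:61}), restricting to $\abs{j-n}\leq n^{\gamma}$, and then an integration by parts in $t$ against the oscillating factor $\eul^{\ii t(l_n(j)-l(n)-1/2)}$ (Lemma~\ref{lem:62}, relying on the spectral separation $\abs{l_n(j)-l(n)-\tfrac12}\geq c_0(1+\abs{j-n})$ from Lemma~\ref{lem:32}) produces the weight $(1+\abs{j-n})^{-1}$, which is what converts the per-index bound $\ord(n^{-\gamma/2})$ into the sum $\ord(n^{-\gamma/2}\ln n)$. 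Your plan discards both mechanisms, so the final summation over $j$ cannot close.

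Two secondary points. First, the Duhamel/cyclicity step only captures the first term of the Dyson expansion; the paper must control \emph{all} orders $\nu\leq n^{\varepsilon}$ of the Neumann series for $U_n(t)$ (Lemma~\ref{lem:63} and Proposition~\ref{prop:6}(ii)), with phases built inductively and a stationary point that migrates with $t_{\nu}$ — this is where the bulk of Sections~\ref{sec:9}--\ref{sec:10} lives, and it is absent from your outline. Second, the non-degeneracy of the phase for each nonzero mode does \emph{not} come from the weak-dispersion hypothesis \eqref{H1}: it comes from $a(n)\geq cn^{\gamma}$ in \eqref{H2ak} together with $\sin\frac{\omega}{2}\geq\sin\frac{\pi}{N}$ for $\omega\in\Omega^*$ (Lemma~\ref{lem:101}(c)). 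Hypothesis \eqref{H1} is used elsewhere entirely, namely to separate the localization intervals in Proposition~\ref{prop:4} and to invert the Vandermonde-type map in Proposition~\ref{prop:11}.
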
 

\begin{proof} 
See Section \ref{sec:10}. 
\end{proof}  

\section{Reformulation of Proposition \ref{prop:5}} \label{sec:6}
\subsection{Plan of Section \ref{sec:6}} \label{sec:61} 
Let $\C{G}_n^0$ be given by~\eqref{G0n}. In this section we show that the trace estimate $\C{G}_n^0=\ord(n^{-\gamma/2}\ln n)$ in Proposition~\ref{prop:5} is a consequence of Proposition~\ref{prop:6}, whose proof will be given in Sections~\ref{sec:7}-\ref{sec:10}. 

We explain now the idea of obtaining the trace estimate \eqref{tr0} from Proposition~\ref{prop:6}. To begin with, we observe that the trace formulation \eqref{tr0'} express $\C{G}_n^0$ as a function of $L_n$ and $L_{0,n}$, and such a function can be expressed by means of the evolutions $\eul^{\ii tL_n}$ and $\eul^{\ii tL_{0,n}}$ ($t\in\D{R}$) via the standard representation formula based on the Fourier transform. Next we write   
\[
\eul^{\ii tL_n}-\eul^{\ii tL_{0,n}}=\eul^{\ii tL_{0,n}}(U_n(t)-I)
\]
where   
\[
U_n(t)\coloneqq\eul^{-\ii tL_{0,n}}\eul^{\ii tL_n},\quad t\in\D{R}
\]
and use the Neumann series to express $U_n(t)-I$. Then to obtain information about traces it suffices to consider estimates of the diagonal entries for every term in the Neumann series. In Proposition~\ref{prop:6} we state estimates which ensure the estimate $\C{G}_n^0=\ord(n^{-\gamma/2}\ln n)$. The same approach was used in our previous paper \cite{BZ4} where we considered weaker remainder estimates and the stronger assumption $\gamma<1/2$. In the framework of \cite{BZ4} we show estimates similar to the estimates of Proposition~\ref{prop:6} in a very short way as all involved operators are functions of $S$ and their matrix elements can be directly expressed by means of oscillatory integrals.
 
In Section~\ref{sec:62} we prove Lemma~\ref{lem:61} which says that modulo $\ord(n^{-\gamma})$ we can modify the trace~\eqref{G0n} by using an auxiliary cut-off. In Section~\ref{sec:63} we prove Lemma~\ref{lem:62} which shows that the trace estimate~\eqref{tr0} follows from condition~\eqref{es0} on the evolution $U_n(t)$. In Section~\ref{sec:64} we prove Lemma~\ref{lem:63} which shows that this condition results from estimates \eqref{es} on the coefficients of the Neumann series for $U_n(t)$. In Section~\ref{sec:65} we state Proposition~\ref{prop:6} which shows that these estimates are valid. 

\subsection{An auxiliary cut-off} \label{sec:62} 

The aim of this section is to check that the trace estimate \eqref{tr0} in Proposition~\ref{prop:5} is equivalent to the estimate 
\begin{equation} \label{tr}
\C{G}_n=\ord(n^{-\gamma/2}\ln n)
\end{equation} 
where 
\begin{equation} \label{Gn}
\C{G}_n\coloneqq\tr\Bigl(\theta_{n^{\gamma},n}(L_{0,n})\bigl(\chi(L_n-l(n))-\chi(L_{0,n}-l(n))\bigr)\Bigr).
\end{equation} 
The cut-off $\theta_{n^{\gamma},n}$ is defined by \eqref{222}. 

\begin{lemma} \label{lem:61} 
If $\C{G}_n^0$ is given by \eqref{G0n} and $\C{G}_n$ by \eqref{Gn}, then 
\begin{equation} \label{Gn-Gn0}
\C{G}_n-\C{G}_n^0=\ord(n^{-\gamma}).
\end{equation}
\end{lemma}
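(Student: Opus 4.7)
The plan is to expand in the canonical basis $\{\vece_k\}_{k\in\D{Z}}$ and to split the difference into a ``diagonal'' and a ``spectral'' contribution. Since $v_n(k)=0$ and $B_n\vece_k=0$ for $k\leq 0$, the perturbation $\tilde V_n=\eul^{\ii B_n}v_n(\Lambda)\eul^{-\ii B_n}$ vanishes on $\{\vece_k:k\leq 0\}$, so only indices $k\geq 1$ contribute. Using that $L_{0,n}$ and $\theta_{n^\gamma,n}(L_{0,n})$ are diagonal in the canonical basis,
\[
\C{G}_n^0-\C{G}_n=\sum_{k\geq 1}\bigl(1-\theta_{n^\gamma,n}(l_n(k))\bigr)\bigl[\langle\vece_k,\chi(L_n-l(n))\vece_k\rangle-\chi(l_n(k)-l(n))\bigr].
\]
The cutoff factor restricts the sum to indices $k$ with $|l_n(k)-n|\geq n^\gamma/6$; by Lemma~\ref{lem:32} this forces $|k-n|\geq cn^\gamma$ for $n$ large enough.

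For the diagonal contribution involving $\chi(l_n(k)-l(n))$ I would just invoke Schwartz decay: $|\chi(x)|\leq C_M(1+|x|)^{-M}$ combined with $|l_n(k)-l(n)|\geq c|k-n|$ from Lemma~\ref{lem:32} gives $\sum_{|k-n|\geq cn^\gamma}(1+|k-n|)^{-M}=\ord(n^{-\gamma(M-1)})$, which is $\ord(n^{-\gamma})$ for $M=2$.

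The remaining contribution involves $\langle\vece_k,\chi(L_n-l(n))\vece_k\rangle$, which I would handle via the spectral decomposition of $L_n^+$. Let $\{\psi_r\}_{r\geq 1}$ be an orthonormal eigenbasis of $L_n^+$ with eigenvalues $\{\lambda_r\}$; then
\[
\langle\vece_k,\chi(L_n-l(n))\vece_k\rangle=\sum_r|\psi_r(k)|^2\,\chi(\lambda_r-l(n)),
\]
and swapping summation order leaves us to bound $\sum_r\chi(\lambda_r-l(n))\sum_{k:|l_n(k)-n|\geq n^\gamma/6}|\psi_r(k)|^2$. The crux is an eigenvector localization drawn from $(L_{0,n}^+-\lambda_r)\psi_r=-\tilde V_n^+\psi_r$, which gives $\psi_r(k)=-(\tilde V_n^+\psi_r)(k)/(l_n(k)-\lambda_r)$ whenever the denominator is non-zero. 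For $|\lambda_r-l(n)|\leq n^\gamma/12$ and $|l_n(k)-l(n)|\geq n^\gamma/6$ that denominator exceeds $n^\gamma/12$, so $\sum_{k:|l_n(k)-l(n)|\geq n^\gamma/6}|\psi_r(k)|^2\leq 144\rho_N^2\,n^{-2\gamma}$, using $\|\tilde V_n^+\psi_r\|\leq\rho_N$.

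To conclude, I would split the $r$-sum. The ``near'' window $|\lambda_r-l(n)|\leq n^\gamma/12$ contains $\ord(n^\gamma)$ eigenvalues (by the near-unit spacing from Proposition~\ref{cor5}(b)--(c)), contributing $\ord(n^\gamma)\cdot\|\chi\|_\infty\cdot\ord(n^{-2\gamma})=\ord(n^{-\gamma})$. The ``far'' window uses the trivial bound $\sum_k|\psi_r(k)|^2=1$ together with Schwartz decay $|\chi(\lambda_r-l(n))|\leq C_M(1+|\lambda_r-l(n)|)^{-M}$, giving a sum $\ord(n^{\gamma(1-M)})$, which is $\ord(n^{-\gamma})$ for $M=2$. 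The main obstacle is securing the eigenvector estimate uniformly in the relevant range, and this rests on the spectral gap $|l_n(k)-\lambda_r|\geq n^\gamma/12$ being enforced jointly by the cutoff on $k$ and the window chosen for $r$.
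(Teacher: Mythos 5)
Your proof is correct, but it follows a genuinely different route from the paper's. The paper works at the operator level: it shows $(I+(L_n-l(n))^2)^{-1}$ has uniformly bounded trace norm, observes that $(I-\theta_{n^{\gamma},n}(L_n))\chi(L_n-l(n))$ is $\ord(n^{-\mu})$ in trace norm for every $\mu$ (both factors being functions of the \emph{same} operator), and then trades $\theta_{n^{\gamma},n}(L_{0,n})$ for $\theta_{n^{\gamma},n}(L_n)$ via a Lipschitz functional-calculus bound $\norm{\theta(T+R)-\theta(T)}\leq C\norm{R}$ applied to $T=n^{-\gamma}(L_{0,n}-n)$, $R=n^{-\gamma}\tilde V_n$ — that last step is what fixes the rate at $\ord(n^{-\gamma})$. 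You instead expand the trace in the canonical basis and replace the functional-calculus step by an explicit eigenvector-localization estimate: from $(l_n(k)-\lambda_r)\psi_r(k)=-(\tilde V_n^+\psi_r)(k)$ and $\norm{\tilde V_n^+}\leq\rho_N$ you get $\sum_{k\,\mathrm{far}}\abs{\psi_r(k)}^2=\ord(n^{-2\gamma})$ for eigenvalues in the near window, and the eigenvalue counting from Proposition~\ref{cor5}(b) and Lemma~\ref{lem:32} gives $\ord(n^{\gamma})$ such eigenvalues, hence $\ord(n^{-\gamma})$; the far window and the diagonal term are $\ord(n^{-\gamma(M-1)})$ for any $M$ by Schwartz decay. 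Both arguments rely on the same spectral localization input and land on the same rate; yours is more hands-on and makes visible \emph{where} the $n^{-\gamma}$ loss comes from (one power of $n^{\gamma}$ from the eigenvalue count against two from the localization), while the paper's is shorter to write once the abstract trace-norm and Lipschitz estimates are granted. The only points to tidy are cosmetic: the cutoff condition gives $\abs{l_n(k)-n}\geq n^{\gamma}/6$ whereas your resolvent bound needs $\abs{l_n(k)-l(n)}$ bounded below, but $l(n)-n=a_{1,n}(n)=\ord(n^{2\gamma-1})=\ord(1)$ so the constants absorb this; and the interchange of the $k$- and $r$-sums is justified by absolute convergence.
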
 

\begin{proof} 
First of all we observe that there is a constant $C>0$ such that 
\[
\norm{(I+(L_{0,n}-l(n))^2)^{-1}}_{{\C{B}}_1(l^2(\D{Z}))}=\sum_{j\in\D{Z}}\frac{1}{1+(l_n(j)-l(n))^2}\leq C
\] 
and a similar estimate holds for $L_n$:  
\begin{equation} \label{p331}
\norm{(I+(L_n-l(n))^2)^{-1}}_{{\C{B}}_1(l^2(\D{Z}))}=\sum_{j\in\D{Z}} \frac{1}{1+(\lambda_j(L_n)-l(n))^2}\leq C.
\end{equation} 
Next we claim that for every $\mu>0$ we can estimate
\begin{equation} \label{p332}
\norm{(I-\theta_{n^{\gamma},n}(L_n))\chi(L_n-l(n))}_{{\C{B}}_1(l^2(\D{Z}))}=\ord(n^{-\mu}).
\end{equation} 
Indeed, if $\chi_0(s)\coloneqq (1+s^2)\chi(s)$ then for every $\mu >0$ we have  
\[ 
\sup_{s\in\D{R}}\abs{(1-\theta_{n^{\gamma},n}(s))\chi_0(s-l(n))}=\ord(n^{-\mu}). 
\]
Hence,  
\begin{equation} \label{p332'}
\norm{(I-\theta_{n^{\gamma},n}(L_n))\chi_0(L_n-l(n))}=\ord(n^{-\mu}).
\end{equation} 
Since the left-hand side of \eqref{p332} can be estimated by 
\[  
\norm{(I-\theta_{n^{\gamma},n}(L_n))\chi_0(L_n-l(n))}\times\norm{(1+(L_n-l(n))^2)^{-1}}_{{\C{B}}_1(l^2(\D{Z}))}  
\] 
we deduce \eqref{p332} from \eqref{p331} and  \eqref{p332'}. Reasoning similarly with $L_{0,n}$ instead of $L_n$ we obtain 
\begin{equation}  \label{l33b}
\norm{(I-\theta_{n^{\gamma},n}(L_{0,n}))\chi(L_{0,n}-l(n))}_{{\C{B}}_1(l^2(\D{Z}))}=\ord(n^{-\mu}).
\end{equation} 
If the operator $T$ is self-adjoint, the operator $R$ is bounded and $\theta\in C_0^{\infty}(\D{R})$, then there exists a constant $C=C(\theta)$ such that  
\begin{equation} \label{HS} 
\norm{\theta(T+R)-\theta(T)}\leq C\norm{R}. 
\end{equation}  
Thus, using \eqref{HS} with $T=n^{-\gamma}(L_{0,n}-n)$ and $R=n^{-\gamma}\tilde V_n$ we can estimate 
\[
\norm{\theta_{n^{\gamma},n}(L_n)-\theta_{n^{\gamma},n}(L_{0,n})}\leq   C_0\norm{n^{-\gamma}\tilde V_n}=\ord(n^{-\gamma})
\] 
and combining this last estimate with \eqref{p331} we obtain  
\begin{equation}  \label{l33a}
\norm{(\theta_{n^{\gamma},n}(L_n)-\theta_{n^{\gamma},n}(L_{0,n}))\chi(L_n-l(n))}_{{\C{B}}_1(l^2(\D{Z}))}=\ord(n^{-\gamma}).   
\end{equation}  
However, using \eqref{l33a} and \eqref{p332} with $\mu=\gamma$ we obtain
\begin{equation}  \label{l33c}
\norm{(I-\theta_{n^{\gamma},n}(L_{0,n}))\chi(L_n-l(n))}_{{\C{B}}_1(l^2(\D{Z}))}=\ord(n^{-\gamma}).
\end{equation} 
It is now clear that \eqref{Gn-Gn0} follows from \eqref{l33c} and \eqref{l33b} with $\mu=\gamma$. 
\end{proof}

\subsection{Use of the Fourier transform} \label{sec:63} 

For $t\in\D{R}$ we denote 
\begin{equation}  \label{unj}
u_{n,j}(t)\coloneqq\croch{U_n(t)}(j,j) 
\end{equation}
the diagonal entries of the evolution $U_n(t)=\eul^{-\ii tL_{0,n}}\eul^{\ii tL_n}$ introduced in Section \ref{sec:61}.

\begin{lemma}  \label{lem:62} 
If for every $t_0>0$ we have the estimate
\begin{equation}  \label{es0}
\sup_{\substack{\abs{t}\leq t_0\\\abs{j-n}\leq n^{\gamma}}}\abs{\partial_tu_{n,j}(t)}=\ord(n^{-\gamma/2}),
\end{equation} 
then we have the trace estimate \eqref{tr}, \emph{i.e.}, $\C{G}_n=\ord(n^{-\gamma/2}\ln n)$.
\end{lemma}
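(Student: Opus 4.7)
The plan is to apply Fourier inversion to express $\C{G}_n$ as a sum over $j$ of one-dimensional integrals in $t$, then combine the trivial bound from \eqref{es0} with an integration by parts in $t$ that supplies a factor $|l_n(j) - l(n)|^{-1}$.

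Using $\chi(\lambda) = \int \hat\chi(t)\,\eul^{\ii t\lambda}\,\dd t$ together with
\[
\eul^{\ii tL_n} - \eul^{\ii tL_{0,n}} = \eul^{\ii tL_{0,n}}\bigl(U_n(t) - I\bigr)
\]
and the diagonal action of $\theta_{n^\gamma,n}(L_{0,n})\,\eul^{\ii tL_{0,n}}$ on $\vece_j$, I would rewrite the trace as
\[
\C{G}_n = \sum_{j\in\D{Z}} \theta_{n^\gamma,n}(l_n(j)) \int \hat\chi(t)\, \eul^{\ii t(l_n(j) - l(n))}\bigl(u_{n,j}(t) - 1\bigr)\,\dd t,
\]
where I choose $t_0 > 0$ with $\supp\hat\chi \subset [-t_0, t_0]$. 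Since $\gamma \leq 1/2$, the computation preceding Lemma~\ref{lem:32} gives $l_n(j) - j = \ord(n^{2\gamma-1}) = \ord(1)$ for $j \geq 1$, and $l_n(j) = j$ for $j \leq 0$, so the support of $\theta_{n^\gamma,n}(l_n(\cdot))$ confines the sum to $|j - n| \leq Cn^\gamma$ for large $n$.

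Since $u_{n,j}(0) = 1$, the hypothesis \eqref{es0} yields $|u_{n,j}(t) - 1| \leq t_0 \cdot \ord(n^{-\gamma/2})$ uniformly on the relevant range, giving the trivial bound $\ord(n^{-\gamma/2})$ for the $j$-th integral. For $j \neq n$, integration by parts in $t$ (with no boundary terms because $\hat\chi$ is compactly supported) yields
\[
\int \hat\chi(t)\,\eul^{\ii t(l_n(j)-l(n))}\bigl(u_{n,j}(t) - 1\bigr)\,\dd t = \frac{\ii}{l_n(j) - l(n)} \int \bigl[\hat\chi'(t)\bigl(u_{n,j}(t) - 1\bigr) + \hat\chi(t)\,\partial_t u_{n,j}(t)\bigr]\,\eul^{\ii t(l_n(j) - l(n))}\,\dd t,
\]
whose integrand is $\ord(n^{-\gamma/2})$ uniformly on $[-t_0, t_0]$ by \eqref{es0}. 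Hence the $j$-th integral is also $\ord\bigl(n^{-\gamma/2}/|l_n(j) - l(n)|\bigr)$.

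It remains to sum. Applying Lemma~\ref{lem:32} with $\varepsilon = 1/2$ gives $|l_n(j) - l_n(n)| \geq |j-n|/2$ for $n$ large enough and $j \geq 1$, while $l(n) = l_n(n)$. Combining the trivial bound for $j = n$ with the integration-by-parts bound for $j \neq n$,
\[
|\C{G}_n| \leq C n^{-\gamma/2} + C n^{-\gamma/2} \sum_{1 \leq |j - n| \leq Cn^\gamma} \frac{1}{|j - n|} = \ord(n^{-\gamma/2} \ln n),
\]
which is \eqref{tr}. The main subtlety I anticipate is mostly bookkeeping: verifying that the cut-off $\theta_{n^\gamma,n}(L_{0,n})$ confines $j$ to a window of width $O(n^\gamma)$ around $n$ (this is where the standing assumption $\gamma \leq 1/2$ enters, via $l_n(j) - j = \ord(1)$), after which the harmonic-sum accounting produces exactly the $\ln n$ factor.
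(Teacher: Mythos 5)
Your proposal is correct and follows essentially the same route as the paper: Fourier inversion, reduction to the diagonal entries $u_{n,j}(t)$ on the window $\abs{j-n}\leq n^{\gamma}$ cut out by $\theta_{n^{\gamma},n}(L_{0,n})$, integration by parts in $t$ to gain a factor $\abs{l_n(j)-l(n)}^{-1}\lesssim\abs{j-n}^{-1}$ via Lemma~\ref{lem:32}, and a harmonic sum giving $\ln n$. The only cosmetic difference is that the paper shifts the phase to $l_n(j)-l(n)-\tfrac12$ so the integration by parts applies uniformly in $j$, whereas you treat the term $j=n$ separately by the trivial bound; both are fine.
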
  

\begin{proof} 
Let $\chi\in\C{S}(\D{R})$ be such that $\supp\hat\chi\subset\croch{-t_0,\,t_0}$ with $\hat\chi$ as in \eqref{F}. Hence,
\[ 
\chi(\lambda)=\int_{-\infty}^{\infty}\hat\chi(t)\eul^{\ii\lambda t}\dd t=\int_{-t_0}^{t_0}\hat\chi(t)\eul^{\ii\lambda t}\dd t. 
\] 
Using $L_n-l(n)$ and $L_{0,n}-l(n)$ in place of $\lambda$ we then obtain  
\[ 
\chi\bigl(L_n-l(n)\bigr)-\chi\bigl(L_{0,n}-l(n)\bigr)=\int\hat\chi(t)\eul^{-\ii tl(n)}\left(\eul^{\ii tL_n}-\eul^{\ii tL_{0,n}}\right)\dd t,  
\] 
hence  
\[ 
\C{G}_n=\int\hat\chi(t)\eul^{-\ii tl(n)}\tr\bigl(\theta_{n^{\gamma},n}(L_{0,n})\eul^{\ii tL_{0,n}}(U_n(t)-I)\bigr)\dd t. 
\] 
However, 
\[
\tr\bigl(\theta_{n^{\gamma},n}(L_{0,n})\eul^{\ii tL_{0,n}}(U_n(t)-I)\bigr)=\sum_{j\in\D{Z}}\,\scal{\eul^{-\ii tL_{0,n}}\theta_{n^{\gamma},n}(L_{0,n})\vece_j,(U_n(t)-I)\vece_j}
\] 
and  
\begin{equation} \label{335}
\theta_{n^{\gamma},n}(l_n(j))\neq 0\implies\abs{j-n}\leq n^{\gamma}.
\end{equation} 
Then, for any $j\in\D{Z}$,
\[ 
\eul^{-\ii tL_{0,n}}\theta_{n^{\gamma},n}(L_{0,n})\vece_j=\eul^{-\ii tl_n(j)}\theta_{n^{\gamma},n}(l_n(j))\vece_j
\]
and we can expand $\C{G}_n$ as
\[
\C{G}_n=\sum_{j\in\D{Z}}\C{G}_n(j)\text{ with } 
\C{G}_n(j)\coloneqq\int\hat\chi(t)\,\eul^{\ii t/2}\,\eul^{\ii t(l_n(j)-l(n)-1/2)}\theta_{n^{\gamma},n}(l_n(j)) (u_{n,j}(t)-1)\,\dd t.
\]
Due to Lemma \ref{lem:32} we can find $n_0,c_0>0$ such that 
\[
n\geq n_0\implies\abs{l_n(j)-l(n)-\tfrac{1}{2}}\geq c_0(1+\abs{j-n})
\]
and we can express 
\[
\eul^{\ii t(l_n(j)-l(n)-1/2)}=\frac{-\ii}{l_n(j)-l(n)-1/2}\,\partial_t\eul^{\ii t(l_n(j)-l(n)-1/2)}.
\]
Hence, integrating by parts we obtain $\C{G}_n(j)=\ii\C{G}_{1,n}(j)+\ii\C{G}_{2,n}(j)$ with 
\begin{align*}
\C{G}_{1,n}(j)&=\int\hat\chi(t)\,\eul^{\ii t(l_n(j)-l(n))}\frac{\theta_{n^{\gamma},n}(l_n(j))}{l_n(j)-l(n)-1/2}\,\partial_tu_{n,j}(t)\,\dd t,\\
\C{G}_{2,n}(j)&=\int\partial_t\bigl(\hat\chi(t)\,\eul^{\ii t/2}\bigr)\,\eul^{\ii t(l_n(j)-l(n)-1/2)}\frac{\theta_{n^{\gamma},n}(l_n(j))}{l_n(j)-l(n)-1/2}\,(u_{n,j}(t)-1)\,\dd t.
\end{align*}
Since $\supp\hat\chi\subset\croch{-t_0,\,t_0}$ we have the estimates 
\begin{align*}
\abs{\C{G}_{1,n}(j)}&\leq C\,\frac{\theta_{n^{\gamma},n}(l_n(j))}{1+\abs{j-n}}\,\sup_{\abs{t}\leq t_0}\abs{\partial_tu_{n,j}(t)},\\ 
\abs{\C{G}_{2,n}(j)}&\leq C\,\frac{\theta_{n^{\gamma},n}(l_n(j))}{1+\abs{j-n}}\,\sup_{\abs{t}\leq t_0}\abs{u_{n,j}(t)-1}.
\end{align*}
Combining \eqref{335} with $\sup_{\abs{t}\leq t_0}\abs{u_{n,j}(t)-1}\leq t_0\sup_{\abs{t}\leq t_0}\left\lvert\partial_tu_{n,j}(t)\right\rvert$ we find that the estimate 
\[
\abs{\C{G}_n}\leq\sum_{\abs{j-n}\leq n^{\gamma}}\frac{C_0}{1+\abs{j-n}}\,n^{-\gamma/2}  
\]
holds under assumption \eqref{es0}. To complete the proof we observe that 
\[ 
\sum_{\abs{k}\leq n^{\gamma}}\frac{1}{1+\abs{k}}\leq 1+2\ln n.\qedhere
\]
\end{proof} 

\subsection{Expansion of $\BS{U_n(t)}$} \label{sec:64} 

Since $-\ii\,\partial_tU_n(t)=\eul^{-\ii tL_{0,n}}(L_n -L_{0,n})\eul^{\ii tL_n}$ we can write 
\[ 
-\ii\,\partial_tU_n(t)=H_n(t)U_n(t), 
\] 
with 
\begin{equation}  \label{Hnt}
H_n(t)\coloneqq\eul^{-\ii tL_{0,n}}(L_n-L_{0,n})\eul^{\ii tL_{0,n}}.
\end{equation} 
Since $U_n(0)=I$ we then have the following expansion formula: 
\begin{equation}  \label{U=}
U_n(t)=I+\ii\int_0^tH_n(t_1)\dd t_1+\sum_{\nu=2}^{\infty}\,\ii^{\nu}\int_0^t\!\dd t_1\dots\int_0^{t_{\nu-1}}\!\!H_n(t_1)\dots H_n(t_{\nu})\,\dd t_{\nu}.
\end{equation} 
For $\nu\geq 1$ and $(t_1,\dots,t_{\nu})\in\D{R}^{\nu}$ we denote the diagonal entries of $\ii^{\nu}H_n(t_1)\dots H_n(t_{\nu})$ by
\begin{equation}  \label{gnu}
g_{\nu,n,j}(t_1,\dots,t_{\nu})=\ii^{\nu}\croch{H_n(t_1)\dots H_n(t_{\nu})}(j,j).
\end{equation}  

\begin{lemma}  \label{lem:63} 
We make the following two assumptions:
\begin{enumerate}[\rm(i)]
\item
For any $t_0>0$ we can find $C>0$ such that 
\begin{subequations} \label{es}
\begin{equation} \label{es1} 
\sup_{\substack{\abs{t_1}\leq t_0\\\abs{j-n}\leq n^{\gamma}}}\abs{g_{1,n,j}(t_1)}\leq Cn^{-\gamma/2}.
\end{equation} 
\item
For some $\varepsilon>0$ and for any $t_0>0$ we can find $C>0$ such that the estimates
\begin{equation} \label{es2} 
\sup_{\substack{\abs{t_1},\dots,\abs{t_{\nu-1}}\leq t_0\\\abs{j-n}\leq n^{\gamma}}}\int_{-t_0}^{t_0}\abs{g_{\nu,n,j}(t_1,\dots,t_{\nu})}\,\dd t_{\nu}\leq Cn^{-\gamma/2}
\end{equation}
\end{subequations} 
hold for $\nu\leq n^{\varepsilon}$. 
\end{enumerate}  
Then assumption \eqref{es0} of Lemma~\ref{lem:62} is satisfied, \emph{i.e.}, for any $t_0>0$
\[
\sup_{\substack{\abs{t}\leq t_0\\\abs{j-n}\leq n^{\gamma}}}\abs{\partial_tu_{n,j}(t)}=\ord(n^{-\gamma/2}).
\] 
\end{lemma}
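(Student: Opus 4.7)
The plan is to differentiate the Neumann series \eqref{U=} term by term to express $\partial_t u_{n,j}(t)$ as a series of diagonal matrix elements of products of $H_n$'s, then to bound the low-order terms ($\nu\leq n^\varepsilon$) using assumptions \eqref{es1}--\eqref{es2} together with the factorial volume of the time simplex, and to control the tail ($\nu>n^\varepsilon$) by the crude operator-norm bound on $H_n(t)$.

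Starting from $-\ii\,\partial_tU_n(t)=H_n(t)U_n(t)$ one gets $\partial_tu_{n,j}(t)=\ii\,[H_n(t)U_n(t)](j,j)$. Substituting \eqref{U=} and rewriting each term by means of \eqref{gnu} with $\mu=\nu+1$ gives the expansion
\[
\partial_tu_{n,j}(t)=g_{1,n,j}(t)+\sum_{\mu=2}^{\infty}\int_{0}^{t}\!\dd t_1\int_{0}^{t_1}\!\dd t_2\cdots\int_{0}^{t_{\mu-2}}g_{\mu,n,j}(t,t_1,\dots,t_{\mu-1})\,\dd t_{\mu-1},
\]
valid because $H_n(t)=\eul^{-\ii tL_{0,n}}\tilde V_n\eul^{\ii tL_{0,n}}$ is uniformly bounded in $t$, so the Neumann series converges in operator norm locally uniformly. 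The first term is $\ord(n^{-\gamma/2})$ on the range $\abs{t}\leq t_0$, $\abs{j-n}\leq n^{\gamma}$ by assumption \eqref{es1}.

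For $2\leq\mu\leq n^{\varepsilon}$, after relabelling $(t,t_1,\dots,t_{\mu-1})$ as the argument list of $g_{\mu,n,j}$ and freezing $t,t_1,\dots,t_{\mu-2}$ (each of modulus $\leq t_0$), assumption \eqref{es2} applied to the innermost variable $t_{\mu-1}$ gives
\[
\int_{0}^{t_{\mu-2}}\abs{g_{\mu,n,j}(t,t_1,\dots,t_{\mu-1})}\,\dd t_{\mu-1}\leq\int_{-t_0}^{t_0}\abs{g_{\mu,n,j}(t,t_1,\dots,t_{\mu-1})}\,\dd t_{\mu-1}\leq Cn^{-\gamma/2}.
\]
The remaining $\mu-2$ integrations range over the simplex $\{0\leq t_{\mu-2}\leq\dots\leq t_1\leq t\}$ of volume $\leq t_0^{\mu-2}/(\mu-2)!$. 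Summing over $2\leq\mu\leq n^{\varepsilon}$ yields a contribution bounded by $C\,\eul^{t_0}\,n^{-\gamma/2}$.

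For the tail $\mu>n^{\varepsilon}$ I would use the trivial estimate $\norm{H_n(t)}\leq\norm{\tilde V_n}=\norm{v_n(\Lambda)}\leq\sup_k\abs{v(k)}=:C_v$, which is uniform in $t$ and $n$. This gives $\abs{g_{\mu,n,j}(t,t_1,\dots,t_{\mu-1})}\leq C_v^{\mu}$, so the simplex of volume $\leq t_0^{\mu-1}/(\mu-1)!$ produces the tail bound
\[
\sum_{\mu>n^{\varepsilon}}\frac{(C_vt_0)^{\mu}}{C_v(\mu-1)!},
\]
which decays faster than any inverse power of $n$ by Stirling and is therefore $\osmall(n^{-\gamma/2})$. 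Combining the three contributions yields the sought uniform bound $\sup\abs{\partial_tu_{n,j}(t)}=\ord(n^{-\gamma/2})$. The main obstacle here is purely bookkeeping: making sure that the innermost integration really matches the hypothesis \eqref{es2} after relabelling variables and that the truncation at $\mu=n^{\varepsilon}$ is consistent with the range allowed by (ii); once that matching is made, the factorial gain from the simplex volume makes both the main sum and the tail estimate routine.
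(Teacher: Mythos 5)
Your proposal is correct and follows essentially the same route as the paper: term-by-term differentiation of the Neumann series, application of \eqref{es1} to the first term and of \eqref{es2} to the innermost integration variable of each term with $\nu\leq n^{\varepsilon}$ combined with the $1/(\nu-2)!$ simplex volume, and a crude operator-norm bound $\norm{H_n(t)}\leq\norm{\tilde V_n}$ for the tail, which decays super-polynomially by Stirling. The only difference is cosmetic (your index shift $\mu=\nu+1$ and a harmless constant in the tail bound).
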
 

\begin{proof} 
$g_{\nu,n,j}(t_1,\dots,t_{\nu})$ and $u_{n,j}(t)$ are the $(j,j)$ coefficients of $\ii^{\nu}H_n(t_1)\dots H_n(t_{\nu})$ and $U_n(t)$, see \eqref{unj} and \eqref{gnu}. Thus, the expansion \eqref{U=} of $U_n(t)$ gives for its $(j,j)$ coefficient
\[ 
u_{n,j}(t)=1+\sum_{\nu=1}^{\infty}\int_0^t\dd t_1\dots\int_0^{t_{\nu-1}}\!\!g_{\nu,n,j}(t_1,\dots,t_{\nu})\,\dd t_{\nu},
\] 
and for its derivative
\[
\partial_tu_{n,j}(t)=g_{1,n,j}(t)+\sum_{\nu=2}^{\infty}u_{\nu,n,j}(t),
\]
where the terms $u_{\nu,n,j}$ are defined by
\begin{align*}
u_{2,n,j}(t)&=\int_0^tg_{2,n,j}(t,t_2)\,\dd t_2,\\
u_{\nu,n,j}(t)&=\int_0^t\!\dd t_2\dots\int_0^{t_{\nu-1}}\!\!g_{\nu,n,j}(t,t_2,\dots,t_{\nu})\,\dd t_{\nu},\quad\nu\geq 3.
\end{align*}
In what follows, $\abs{t}\leq t_0$ and $\abs{j-n}\leq n^{\gamma}$.
The term $g_{1,n,j}(t)$ is $\ord(n^{-\gamma/2})$ by \eqref{es1}. The terms of index $\nu<n^{\varepsilon}$ in the sum are estimated using \eqref{es2}: 
\[
\sum_{2\leq\nu<n^{\varepsilon}}\abs{u_{\nu,n,j}(t)}\leq\sum_{2\leq\nu<n^{\varepsilon}}\frac{Ct_0^{\nu-1}n^{-\gamma/2}}{(\nu-1)!}\,\leq C\eul^{t_0}n^{-\gamma/2}=\ord(n^{-\gamma/2}).
\]
To complete the proof it remains to observe that 
\begin{equation} \label{4end} 
\sum_{\nu>n^{\varepsilon}}\abs{u_{\nu,n,j}(t)}\leq\sum_{\nu>n^{\varepsilon}}\frac{Ct_0^{\nu-1}}{(\nu-1)!}\leq\frac{C\eul^{t_0}}{(\lfloor n^{\varepsilon}\rfloor-1)!}\,,
\end{equation} 
where $\lfloor s\rfloor\coloneqq\max\accol{k\in\D{Z}:k\leq s}$. Since $k!\sim(k/\eul)^k$ it is clear that the right-hand side of \eqref{4end} is rapidly decreasing when $n\to\infty$. 
\end{proof} 

\subsection{Summary} \label{sec:65} 

Lemmas \ref{lem:61}, \ref{lem:62}, and \ref{lem:63} reduce Proposition \ref{prop:5} to the proof of assumptions \eqref{es1} and  \eqref{es2} of Lemma \ref{lem:63} for some $\varepsilon>0$.

\begin{proposition} \label{prop:6}  
\begin{enumerate}[\rm(i)]
\item
For any $t_0>0$ we can find $C>0$ such that \eqref{es1} holds.
\item
For any $t_0>0$ and any $0<\varepsilon\leq\gamma/16$ we can find $C>0$ such that estimates \eqref{es2} hold for $\nu\leq n^{\varepsilon}$.  
\end{enumerate}
\end{proposition}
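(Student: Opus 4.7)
The plan rests on exploiting the fact that $L_{0,n}=l_n(\Lambda)$ is diagonal in the canonical basis, combined with stationary-phase analysis of the oscillatory-integral representatives of $\eul^{\pm\ii B_n}$ that are to be constructed in Sections~\ref{sec:7}-\ref{sec:8}.

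\textbf{Part (i).} Since $L_{0,n}$ is diagonal in $(\vece_k)_{k\in\D{Z}}$, conjugation by $\eul^{\pm\ii tL_{0,n}}$ preserves every diagonal matrix element, so
\[
g_{1,n,j}(t)=\ii\,[\eul^{-\ii tL_{0,n}}\tilde V_n\eul^{\ii tL_{0,n}}](j,j)=\ii\,\tilde V_n(j,j)
\]
is independent of $t$. Thus (i) reduces to the static estimate $\abs{\tilde V_n(j,j)}=\ord(n^{-\gamma/2})$ uniformly for $\abs{j-n}\leq n^{\gamma}$. Since $\av{v}=0$, I would expand $v(k)=\sum_{0<\abs{m}\leq N/2}c_m\eul^{2\pi\ii mk/N}$, decompose
\[
\tilde V_n(j,j)=\sum_{m\neq 0}c_m[\eul^{\ii B_n}\eul^{2\pi\ii m\Lambda/N}\theta_{n,n}^2(\Lambda)\eul^{-\ii B_n}](j,j),
\]
and use the oscillatory representation of $\eul^{\pm\ii B_n}$ to write each summand as an integral over an angle variable with phase of size $\asymp a_n(n)\asymp n^{\gamma}$. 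Non-degenerate stationary phase then delivers one factor $n^{-\gamma/2}$ at each critical point, giving (i).

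\textbf{Part (ii).} Because $H_n(t)(j,k)=\tilde V_n(j,k)\,\eul^{\ii t(l_n(k)-l_n(j))}$, unfolding the product yields
\[
g_{\nu,n,j}(t_1,\dots,t_{\nu})=\ii^{\nu}\sum_{k_1,\dots,k_{\nu-1}}\prod_{r=1}^{\nu}\tilde V_n(k_{r-1},k_r)\,\eul^{\ii\Phi},
\]
with the convention $k_0=k_{\nu}=j$ and $\Phi=(t_{\nu}-t_1)l_n(j)+\sum_{r=1}^{\nu-1}(t_r-t_{r+1})l_n(k_r)$. Inserting the Fourier expansion of $v$ and the oscillatory representatives of $\eul^{\pm\ii B_n}$ into each factor $\tilde V_n$ (and combining via the composition formula of Lemma~\ref{lem:74}), the $t_{\nu}$-integration in \eqref{es2} supplies an extra phase linear in $t_{\nu}$ which, paired with the $k_{\nu-1}$-summation, allows one final stationary-phase application producing the required $n^{-\gamma/2}$. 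The remaining $\nu-1$ factors are controlled in operator norm by $\norm{\tilde V_n}^{\nu-1}\leq\rho_N^{\nu-1}$, which is uniformly bounded, and the restriction $\varepsilon\leq\gamma/16$ furnishes the slack needed to absorb accumulated Fourier-truncation and stationary-phase errors uniformly in $\nu\leq n^{\varepsilon}$.

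\textbf{Main obstacle.} The real work lies in Sections~\ref{sec:7}-\ref{sec:8}, where one must construct an oscillatory-integral approximation of $\eul^{\ii B_n}$ via truncated Fourier series that enjoys a sufficiently clean composition rule (the role of Lemma~\ref{lem:74}, made feasible by having linearized $a(k)$ around $k=n$ inside $a_n(k)$). The hard part will be tracking all remainders---both from the Fourier truncation and from the stationary-phase expansion---so they remain $\osmall(n^{-\gamma/2})$ uniformly in $\nu\leq n^{\varepsilon}$, ensuring that when summed into the Neumann series for $U_n(t)$ only a single factor $n^{-\gamma/2}$ survives in each $g_{\nu,n,j}$ bound.
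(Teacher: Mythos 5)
Your observation in part (i) — that $g_{1,n,j}(t)=\ii H_n(t)(j,j)=\ii\tilde V_n(j,j)$ is $t$-independent because $L_{0,n}$ is diagonal — is correct and is a genuine simplification that the paper does not make explicit (though one can check it is implicit in the fact that $Q_n^{\,\omega,t}(j,j)$ does not depend on $t$, by shifting the angle in \eqref{Q1}). The rest of your part (i), decomposing $\tilde V_n(j,j)$ via the Fourier coefficients of $v$ and the oscillatory representation of $\eul^{\pm\ii B_n}$, is precisely the route of Section~\ref{sec:9}, and the magnitude $n^\gamma$ of the phase frequency (through $a(n)\geq cn^\gamma$) does yield the factor $n^{-\gamma/2}$ by Lemma~\ref{lem:93}.

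Part (ii), however, contains a real gap, and the mechanism you propose is not the one that works. You assert that the $t_\nu$-integration, ``paired with the $k_{\nu-1}$-summation, allows one final stationary-phase application producing the required $n^{-\gamma/2}$,'' while the other $\nu-1$ factors are absorbed by the operator-norm bound $\norm{\tilde V_n}^{\nu-1}\leq\rho_N^{\nu-1}$. This does not fit together: the diagonal entry $[H_n(t_1)\cdots H_n(t_\nu)](j,j)=\sum_k A(j,k)H_n(t_\nu)(k,j)$ couples all the factors through the inner indices, and an operator-norm bound on $A=H_n(t_1)\cdots H_n(t_{\nu-1})$ cannot be combined with a stationary-phase estimate on a single factor to produce pointwise decay of the diagonal entry (e.g.\ by Cauchy--Schwarz you would need $\norm{H_n(t_\nu)^*\vece_j}=\ord(n^{-\gamma/2})$, which is false). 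Moreover, the $t_\nu$-integral is not a stationary-phase integral: it contributes a factor like $\sin(t_0\Delta)/\Delta$ in $\Delta=l_n(j)-l_n(k_{\nu-1})$, which localizes but never supplies a power of $n^{-\gamma}$. In the paper's argument the decisive $n^{-\gamma/2}$ comes from the \emph{angular} oscillatory integral representing $Q_n^{\,\uw,\ut}(j,j)$, whose phase after composition (Lemma~\ref{lem:74}, built up inductively through \eqref{psifn1uwut}) is of the form $2a(n)\abs{\Psi_{\uw,\ut}(1)}\cos(\cdot)$; the role of the $t_\nu$-integration is solely to handle the integrable singularity at the exceptional values $t_\nu=\tau_{\uw,\ut'}\ (\mathrm{mod}\ 2\pi)$ where $\abs{\Psi_{\uw,\ut}(1)}$ can vanish (Lemma~\ref{lem:101} (c)). Finally, uniformity in $\nu\leq n^{\varepsilon}$ is achieved not by a crude factorization but by showing that the composed phase amplitude, its remainders, and the approximation error of Lemma~\ref{lem:103} all grow only linearly in $\nu$; this is the content of Lemmas~\ref{lem:101}--\ref{lem:103}, and it is the step your sketch omits entirely. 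Without the inductive control of the composed phase and the explicit lower bound on $\abs{\Psi_{\uw,\ut}(1)}$ in terms of $\abs{t_\nu-\tau_{\uw,\ut'}}_{2\pi}$, the estimate \eqref{es2} cannot be obtained by the route you describe.
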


\begin{proof} 
See Sections \ref{sec:9} and \ref{sec:10}. 
\end{proof}

\begin{proof}[Proof of Proposition~\ref{prop:6}$\implies$Proposition~\ref{prop:5}]
By Proposition~\ref{prop:6}, both assumptions of Lemma \ref{lem:63} are satisfied for $0<\varepsilon\leq\gamma/16$. Hence Lemma \ref{lem:63} applies and assumption \eqref{es0} of Lemma \ref{lem:62} is satisfied. Thus, Lemma \ref{lem:62} also applies and estimate \eqref{tr} holds. Finally, estimates \eqref{tr} and \eqref{Gn-Gn0} from Lemma \ref{lem:61} imply estimate \eqref{tr0} in Proposition \ref{prop:5}.
\end{proof}

\section{The class of operators $q(\Lambda,S)$}  \label{sec:7}
\subsection{Plan of Section \ref{sec:7}} \label{sec:71} 

The aim of this section is to describe a class of operators in $l^2(\D{Z})$ which are needed in Sections \ref{sec:8}-\ref{sec:10}. These operators are denoted by $q(\Lambda,S)$ and defined in Section \ref{sec:72} by Fourier transform. In Section \ref{sec:72} we prove Lemma \ref{lem:71} which computes $q_1(\Lambda,S)q_2(\Lambda,S)^*$. In Section \ref{sec:73} we prove Lemma \ref{lem:72} which computes the conjugate $\eul^{-\ii s\Lambda}q(\Lambda,S)\eul^{\ii s\Lambda}$. In Section \ref{sec:75} we prove Lemma \ref{lem:74} which gives a specific composition formula. In Section \ref{sec:76} we prove Lemma \ref{lem:75} which gives a norm estimate used in Sections \ref{sec:8}-\ref{sec:10}. Finally, in Section \ref{sec:77} we prove Lemma \ref{lem:76} which estimates the norm of the commutator of $q(\Lambda,S)$ with diagonal operators.

\subsection{Notations} \label{sec:711}

Further on, we denote
\begin{enumerate}[\textbullet]
\item
$\cercle\coloneqq\accol{z\in\D{C}:\abs{z}=1}=\D{R}/2\pi\D{Z}$ the unit circle.
\item
$\R{L}^2(\cercle)$ the Hilbert space of classes of square integrable functions $f\colon\cercle\to\D{C}$ equipped with the scalar product $\scal{f,g}=\int_0^{2\pi}\overline{f(\eul^{\ii\xi})}g(\eul^{\ii\xi})\frac{\dd\xi}{2\pi}$.
\item
$\accol{f_j}_{j\in\D{Z}}$ the orthonormal basis defined by $f_j(\eul^{\ii\xi})=\eul^{\ii j\xi}$ for $\xi\in\D{R}$.
\item
$\C{F}_0\colon\R{L}^2(\cercle)\to l^2(\D{Z})$ the Fourier transform which is a unitary isomorphism such that $\C{F}_0f_n=\vece_n$:
\[
(\C{F}_0f)(j)=\scal{f_j,f}_{\R{L}^2(\cercle)}=\int_0^{2\pi}f(\eul^{\ii\xi})\,\eul^{-\ii j\xi}\,\frac{\dd\xi}{2\pi}.  
\] 
\item
$\norm{p}_{\class^m(\cercle)}\coloneqq\max_{0\leq i\leq m}\sup_{\xi\in\D{R}}\bigl\lvert\partial_{\xi}^ip(\eul^{\ii\xi})\bigr\rvert$ the $\class^m$-norm of $p\in\class^{\infty}(\cercle)$.
\item
$\tau_s\colon\cercle\to\cercle$, $s\in\D{R}$ the translation $\eul^{\ii\xi}\to\eul^{\ii(\xi-s)}$.
\item
$\tilde\tau_s\colon\D{Z}\times\cercle\to\D{Z}\times\cercle$ its extension $(j,\eul^{\ii\xi})\to(j,\eul^{\ii(\xi-s)})$.
\end{enumerate}
    
\subsection{Operators $\BS{p(S)}$ and $\BS{q(\Lambda,S)}$} \label{sec:72} 

\subsubsection{Operators $p(S)$}  \label{sec:721}

If $p\in\class^{\infty}(\cercle)$ we define $p(S)\in\C{B}(l^2(\D{Z}))$ by functional calculus. Since $\C{F}^{-1}_0S\C{F}_0f_n=f_{n+1}$ we have $(\C{F}^{-1}_0S\C{F}_0f)(\eul^{\ii\xi})=\eul^{\ii\xi}f(\eul^{\ii\xi})$. Thus, by Fourier transform $p(S)$ is the operator of multiplication by $p$, i.e., $(\C{F}^{-1}_0p(S)\C{F}_0f)(\eul^{\ii\xi})=p(\eul^{\ii\xi})f(\eul^{\ii\xi})$, so that
\begin{equation}  \label{pscoeff}
p(S)(j,k)\coloneqq\scal{\vece_j,\,p(S)\vece_k}=\scal{f_j,\,pf_k}_{\R{L}^2(\cercle)}=\int_0^{2\pi}p(\eul^{\ii\xi})\,\eul^{\ii(k-j)\xi}\,\frac{\dd\xi}{2\pi}.
\end{equation}

\begin{properties*}[of $p(S)$]
Let $p,p_1,p_2\in\class^{\infty}(\cercle)$.
\begin{enumerate}[1)]
\item \label{cfpstar}
$p(S)^*=\bar p(S)$.
\item  \label{cfp}
$(p_1p_2)(S)=p_1(S)p_2(S)$.
\end{enumerate}
\end{properties*}

\subsubsection{Operators $q(\Lambda,S)$}  \label{sec:722}

We consider two classes $\C{Q}^0$, $\C{Q}$ of functions $q^0,q\colon\D{Z}\times\cercle\to\D{C}$: 
\begin{enumerate}[a)]
\item
$q^0\in\C{Q}^0$ if $q^0(j,\,\cdot\,)\in\class^{\infty}(\cercle)$ for each $j\in\D{Z}$ and $q^0(j,\,\cdot\,)=0$ for large $\abs{j}$.
\item
$q\in\C{Q}$ if there exist $p\in\class^{\infty}(\cercle)$ and $q^0\in\C{Q}^0$ such that
\begin{equation}  \label{q}
q(j,\eul^{\ii\xi})=p(\eul^{\ii\xi})+q^0(j,\eul^{\ii\xi})=p(\eul^{\ii\xi})+p_j(\eul^{\ii\xi}),
\end{equation}
where $p_j(\eul^{\ii\xi})\coloneqq q^0(j,\eul^{\ii\xi})$. Let us note that $p_j\equiv 0$ for large $\abs{j}$. Moreover, $p$, $q^0$, and the $p_j$'s are uniquely determined by $q$ since $p(\eul^{\ii\xi})=q(j,\eul^{\ii\xi})$ for $\abs{j}\gg 0$.
\end{enumerate}

\begin{remark*}
If $q(j,\eul^{\ii\xi})=\eul^{\ii\tilde\psi(j,\eul^{\ii\xi})}$ with $\tilde\psi\in\C{Q}^0$, then $q-1\in\C{Q}^0$ and $q\in\C{Q}$.   
\end{remark*}

\begin{definition*}[of $q(\Lambda,S)$]
Let $q^0\in\C{Q}^0$. Let also $q\in\C{Q}$ be as in \eqref{q}.
\begin{enumerate}[(a)]
\item
The operator $q^0(\Lambda,S)\in\C{B}\bigl(l^2(\D{Z})\bigr)$ is the finite rank operator defined by
\begin{equation} \label{opq0}
q^0(\Lambda,S)=\sum_{j\in\D{Z}}\varPi_jp_j(S)
\end{equation}
where $\varPi_j=\scal{\vece_j,\,\cdot\,}\vece_j$ is the orthogonal projection on $\vece_j$ and $p_j(\eul^{\ii\xi})\coloneqq q^0(j,\eul^{\ii\xi})$.
\item
The operator $q(\Lambda,S)\in\C{B}\bigl(l^2(\D{Z})\bigr)$ is defined by
\begin{equation} \label{opq}
q(\Lambda,S)=p(S)+q^0(\Lambda,S).
\end{equation}
\end{enumerate}
\end{definition*}

\begin{properties*}[of $q(\Lambda,S)$]
We assume $q\in\C{Q}$.
\begin{enumerate}[(i)]
\item \label{cfq}
It follows from \eqref{pscoeff} using \eqref{opq} and \eqref{opq0} that the matrix elements of $q(\Lambda,S)$ are given by
\begin{equation} \label{qls}
q(\Lambda,S)(j,k)=\int_0^{2\pi}\,q(j,\eul^{\ii\xi})\,\eul^{\ii(k-j)\xi}\,\frac{\dd\xi}{2\pi}\,.
\end{equation}
\item  \label{cf0}
If $\tilde q(j,\eul^{\ii\xi})=q(j,\eul^{\ii\xi})\tilde p(\eul^{\ii\xi})$ with $\tilde p\in\class^{\infty}(\cercle)$, then $\tilde q(\Lambda,S)=q(\Lambda,S)\tilde p(S)$. Indeed, by \eqref{opq} and \eqref{opq0},
\begin{align*}
\tilde q(\Lambda,S)&=(p\tilde p)(S)+\sum_j\varPi_j(p_j\tilde p)(S)\\
&=\Bigl(p(S)+\sum_j\varPi_jp_j(S)\Bigr)\tilde p(S)=q(\Lambda,S)\tilde p(S).
\end{align*}
\item  \label{cf01}
Let $\theta\colon\D{Z}\to\D{C}$ be of finite support. If $\tilde q(j,\eul^{\ii\xi})=\theta(j)q(j,\eul^{\ii\xi})$, then $\tilde q(\Lambda,S)=\theta(\Lambda)q(\Lambda,S)$. In particular, if $\tilde q(j,\eul^{\ii\xi})=\theta(j)$, then $\tilde q(\Lambda,S)=\theta(\Lambda)$. By \eqref{cfq} we indeed have
\begin{align*}
\tilde q(\Lambda,S)(j,k)&=\theta(j)\int_0^{2\pi}q(j,\eul^{\ii\xi})\,\eul^{\ii(k-j)\xi}\,\frac{\dd\xi}{2\pi}\\
&=\theta(j)q(\Lambda,S)(j,k)=\bigl(\theta(\Lambda)q(\Lambda,S)\bigr)(j,k).
\end{align*}
\end{enumerate}
\end{properties*}

\begin{lemma} \label{lem:71} 
If $q_1,\,q_2\in\C{Q}^0$, then the matrix elements of $q_1(\Lambda,S){q_2(\Lambda,S)}^*$ are given by
\begin{equation} \label{cf}
\bigl(q_1(\Lambda,S)\,q_2(\Lambda,S)^*\bigr)(j,k)=\int_0^{2\pi}q_1(j,\eul^{\ii\xi})\overline{q_2(k,\eul^{\ii\xi})}\,\eul^{\ii(k-j)\xi}\,\frac{\dd\xi}{2\pi}\,. 
\end{equation} 
\end{lemma}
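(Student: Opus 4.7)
The plan is to reduce the lemma to the basic formula \eqref{pscoeff} for $p(S)$ via the representation \eqref{opq0} of $q^0(\Lambda,S)$ as a finite sum of rank-one compositions.

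First I would write out, using \eqref{opq0},
\[
q_1(\Lambda,S)=\sum_{j\in\D{Z}}\varPi_jp_{1,j}(S),\qquad q_2(\Lambda,S)=\sum_{k\in\D{Z}}\varPi_kp_{2,k}(S),
\]
where $p_{i,j}(\eul^{\ii\xi})\coloneqq q_i(j,\eul^{\ii\xi})$ for $i=1,2$. Since all sums are finite (because $q_i\in\C{Q}^0$), I can take the adjoint termwise, using Property \ref{cfpstar}) of Section~\ref{sec:721}, to obtain
\[
q_2(\Lambda,S)^*=\sum_{k\in\D{Z}}p_{2,k}(S)^*\varPi_k=\sum_{k\in\D{Z}}\bar p_{2,k}(S)\varPi_k.
\]

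Next I would multiply the two expressions and exploit $\varPi_k\varPi_{k'}=\delta_{k,k'}\varPi_k$ to kill cross terms. More precisely, for any $j,k\in\D{Z}$,
\[
\bigl(q_1(\Lambda,S)\,q_2(\Lambda,S)^*\bigr)(j,k)=\scal{\vece_j,\,\varPi_jp_{1,j}(S)\bar p_{2,k}(S)\varPi_k\vece_k}=\scal{\vece_j,\,p_{1,j}(S)\bar p_{2,k}(S)\,\vece_k},
\]
because $\varPi_j\vece_j=\vece_j$ and $\varPi_k\vece_k=\vece_k$. By Property \ref{cfp}) of Section~\ref{sec:721}, $p_{1,j}(S)\bar p_{2,k}(S)=(p_{1,j}\bar p_{2,k})(S)$, and \eqref{pscoeff} applied to this product in $\class^{\infty}(\cercle)$ gives
\[
\bigl(q_1(\Lambda,S)\,q_2(\Lambda,S)^*\bigr)(j,k)=\int_0^{2\pi}p_{1,j}(\eul^{\ii\xi})\,\overline{p_{2,k}(\eul^{\ii\xi})}\,\eul^{\ii(k-j)\xi}\,\frac{\dd\xi}{2\pi},
\]
which, upon substituting $p_{i,j}(\eul^{\ii\xi})=q_i(j,\eul^{\ii\xi})$, is exactly \eqref{cf}.

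There is no real obstacle here: the argument is essentially bookkeeping, the finiteness of the sums in $q^0\in\C{Q}^0$ avoids any convergence issues when taking adjoints and distributing products, and all Fourier-theoretic content is already packaged in \eqref{pscoeff}. The only minor point to check carefully is that the adjoint of a sum of finite rank operators distributes termwise and that $\bigl(\varPi_jp(S)\bigr)^*=p(S)^*\varPi_j=\bar p(S)\varPi_j$, both of which are immediate.
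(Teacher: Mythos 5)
Your proof is correct and follows the same route as the paper: expand both factors via \eqref{opq0}, take the adjoint termwise, and compute the $(j,k)$ matrix element using \eqref{pscoeff}. One small phrasing slip: the cross terms in the double sum vanish not because $\varPi_l\varPi_m=\delta_{lm}\varPi_l$ (the operators $p_{1,l}(S)\bar p_{2,m}(S)$ sit between the projectors, so this identity never applies), but because $\varPi_m\vece_k=\delta_{mk}\vece_k$ and $\scal{\vece_j,\varPi_l x}=\delta_{lj}\scal{\vece_j,x}$, which is exactly what your displayed computation actually uses.
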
 

\begin{proof} 
Let $p_{i,j}\coloneqq q_i(j,\,\cdot\,)$, $i=1,2$. By \eqref{opq0}, $q_i(\Lambda,S)=\sum_{j\in\D{Z}}\varPi_jp_{i,j}(S)$. Hence
\[
q_1(\Lambda,S)\,q_2(\Lambda,S)^*=\sum_{l,m\in\D{Z}}\varPi_lp_{1,l}(S)\bar p_{2,m}(S)\varPi_m.
\]
Using \eqref{pscoeff} at last line below we then have
\begin{align*}
\bigl(q_1(\Lambda,S)\,q_2(\Lambda,S)^*\bigr)(j,k)
&=\sum_{l,m\in\D{Z}}\scal{\varPi_l\vece_j,p_{1,l}(S)\bar p_{2,m}(S)\varPi_m\vece_k}\\
&=\scal{\vece_j,(p_{1,j}\bar p_{2,k})(S)\vece_k}\\
&=\int_0^{2\pi}p_{1,j}(\eul^{\ii\xi})\overline{p_{2,k}(\eul^{\ii\xi})}\,\eul^{\ii(k-j)\xi}\,\frac{\dd\xi}{2\pi}\,.\qedhere
\end{align*}
\end{proof} 

\subsection{Conjugation of $\BS{q(\Lambda,S)}$ by $\BS{\eul^{\ii s\Lambda}}$} \label{sec:73} 

For $s\in\D{R}$ and $p\in\class^{\infty}(\cercle)$ we have the formula
\[
\eul^{-\ii s\Lambda}p(S)\eul^{\ii s\Lambda}=(p\circ\tau_s)(S).
\]
Indeed, $\eul^{-\ii s\Lambda}S\eul^{\ii s\Lambda}=\eul^{-\ii s}S$, hence $\eul^{-\ii s\Lambda}p(S)\eul^{\ii s\Lambda}=p(\eul^{-\ii s}S)=(p\circ\tau_s)(S)$. More generally:

\begin{lemma}[unitary conjugation] \label{lem:72} 
If $q\in\C{Q}$ then for any $s\in\D{R}$,
\begin{equation} \label{eQe} 
\eul^{-\ii s\Lambda}q(\Lambda,S)\eul^{\ii s\Lambda}=(q\circ\tilde\tau_s)(\Lambda,S).
\end{equation} 
\end{lemma}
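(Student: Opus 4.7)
The plan is to reduce the lemma to the already-established identity for $p(S)$ by using the direct-sum decomposition $q(\Lambda,S) = p(S) + q^0(\Lambda,S)$ given by \eqref{opq}, and then handling the finite-rank part $q^0(\Lambda,S) = \sum_{j\in\D{Z}}\varPi_jp_j(S)$ term by term.

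First, I would recall the already-displayed identity preceding the lemma: for any $p\in\class^\infty(\cercle)$ and $s\in\D{R}$,
\[
\eul^{-\ii s\Lambda}p(S)\eul^{\ii s\Lambda} = (p\circ\tau_s)(S),
\]
which follows from $\eul^{-\ii s\Lambda}S\eul^{\ii s\Lambda} = \eul^{-\ii s}S$ via functional calculus. This handles the $p(S)$-contribution, and since $\tilde\tau_s$ acts as the identity on the $\D{Z}$-factor, applying it to $q(j,\eul^{\ii\xi}) = p(\eul^{\ii\xi}) + p_j(\eul^{\ii\xi})$ yields $(q\circ\tilde\tau_s)(j,\eul^{\ii\xi}) = (p\circ\tau_s)(\eul^{\ii\xi}) + (p_j\circ\tau_s)(\eul^{\ii\xi})$, so the decomposition is preserved under $\tilde\tau_s$.

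Next, for the finite-rank part, I would use that $\eul^{\pm\ii s\Lambda}$ is a function of $\Lambda$, hence commutes with each orthogonal projection $\varPi_j$. Consequently,
\[
\eul^{-\ii s\Lambda}\Bigl(\sum_{j\in\D{Z}}\varPi_jp_j(S)\Bigr)\eul^{\ii s\Lambda}
= \sum_{j\in\D{Z}}\varPi_j\,\eul^{-\ii s\Lambda}p_j(S)\eul^{\ii s\Lambda}
= \sum_{j\in\D{Z}}\varPi_j(p_j\circ\tau_s)(S).
\]
The sum is finite because $p_j\equiv 0$ for large $\abs{j}$, so all manipulations stay within bounded operators of finite rank. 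By the defining formula \eqref{opq0} applied to $q^0\circ\tilde\tau_s \in \C{Q}^0$, whose $j$-th ``slice'' is precisely $p_j\circ\tau_s$, the right-hand side equals $(q^0\circ\tilde\tau_s)(\Lambda,S)$.

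Finally, I would add the two contributions and invoke \eqref{opq} once more:
\[
\eul^{-\ii s\Lambda}q(\Lambda,S)\eul^{\ii s\Lambda} = (p\circ\tau_s)(S) + (q^0\circ\tilde\tau_s)(\Lambda,S) = (q\circ\tilde\tau_s)(\Lambda,S),
\]
which is the claimed formula. There is no real obstacle here: the computation is essentially bookkeeping, and the only point requiring care is the legitimacy of commuting $\eul^{\pm\ii s\Lambda}$ past the rank-one projections $\varPi_j$ and interchanging the sum with the conjugation, both of which are justified by the finite support of $j\mapsto p_j$.
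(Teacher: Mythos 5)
Your proof is correct, but it follows a genuinely different route than the paper's. The paper proves \eqref{eQe} directly at the level of matrix elements: it computes the $(j,k)$ coefficient of $\eul^{-\ii s\Lambda}q(\Lambda,S)\eul^{\ii s\Lambda}$ using $\eul^{\ii s\Lambda}\vece_m=\eul^{\ii sm}\vece_m$ together with the integral representation \eqref{qls}, then performs the change of variable $\eta=\xi+s$ and invokes $2\pi$-periodicity to recognize the $(j,k)$ coefficient of $(q\circ\tilde\tau_s)(\Lambda,S)$. You instead decompose $q(\Lambda,S)=p(S)+q^0(\Lambda,S)$ via \eqref{opq}, treat $p(S)$ with the already-displayed conjugation formula, and handle $q^0(\Lambda,S)=\sum_j\varPi_jp_j(S)$ by commuting $\eul^{\pm\ii s\Lambda}$ past the diagonal projections $\varPi_j$ and reassembling via \eqref{opq0}. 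Both arguments are short and correct; yours is essentially the operator-algebraic reformulation of the same fact, trading the paper's single change of variable in a Fourier integral for the commutation of $\eul^{\ii s\Lambda}$ with diagonal projections plus the preliminary result for $p(S)$. A minor advantage of the paper's route is that it never has to observe that the $\C{Q}=\class^\infty(\cercle)\oplus\C{Q}^0$ splitting is preserved by $\tilde\tau_s$, nor to justify the finiteness of the sum; your route makes the algebraic structure of $q(\Lambda,S)$ more visible, which is a reasonable choice.
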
 

\begin{proof} 
It suffices to check that both sides of \eqref{eQe} have the same matrix elements. Using that $\eul^{\ii s\Lambda}\vece_m=\eul^{\ii sm}\vece_m$ together with \eqref{qls}, we express the $(j,k)$ coefficient of the left-hand side as 
\begin{equation} \label{eQe'} 
\eul^{\ii s(k-j)}\scal{\vece_j,q(\Lambda,S)\vece_k}=\int_0^{2\pi}\!\eul^{\ii(k-j)(\xi+s)}q(j,\eul^{\ii\xi})\,\frac{\dd\xi}{2\pi}\,.
\end{equation}  
The change of variable $\eta=\xi+s$ allows us to express the right-hand side of \eqref{eQe'} as 
\[  
\int_s^{s +2\pi}\eul^{\ii(k-j)\eta}q(j,\eul^{\ii(\eta-s)})\frac{\dd\eta}{2\pi}=\int_0^{2\pi}\eul^{\ii(k-j)\eta}q(j,\eul^{\ii(\eta-s)})\frac{\dd\eta}{2\pi}\,,  
\] 
where the right-hand side is now the $(j,k)$ coefficient of the right-hand side of \eqref{eQe} and where we used that the integral of a $2\pi$-periodic function on $\croch{s,s+2\pi}$ is the same as on $\croch{0,2\pi}$. 
\end{proof}

\subsection{A composition formula} \label{sec:75} 

To state the composition formula we first describe preliminary constructions. 

\subsubsection{Framework}  \label{sec:751}

It involves a sequence of functions $\tilde\psi_n\in\C{Q}^0$, $n\geq 1$ with the following properties: 
\begin{subequations} \label{pf'Cmf}
\begin{equation} \label{pf'}
\tilde\psi_n(j,\eul^{\ii\xi})=\psi_n(\eul^{\ii\xi})+(j-n)\varphi_n(\eul^{\ii\xi})\text{ for }\abs{j-n}\leq n/3
\end{equation}  
with $\psi_n$, $\varphi_n\in\class^{\infty}(\cercle)$ real-valued and such that 
\begin{equation} \label{phin.asymptot}
\norm{\varphi_n}_{\class^m(\cercle)}=\ord(n^{\gamma-1}) 
\end{equation}
for every integer $m\geq 0$. In particular, for some $n_0$ depending on $\accol{\varphi_n}$ we have 
\begin{equation} \label{Cmf}
\sup_{n\geq n_0}\norm{\varphi_n}_{\class^1(\cercle)}\leq 1/2.
\end{equation} 
\end{subequations}
To such data we attach auxiliary functions $\eta_n$, $\xi_n$, $\tilde\xi_n$, $\F{p}_n$, $\vartheta_n$, and $\tilde\vartheta_n$. We define $\eta_n\colon\D{R}\to\D{R}$ by
\begin{equation} \label{eta} 
\eta_n(\xi)\coloneqq\xi-\varphi_n(\eul^{\ii\xi}). 
\end{equation}
Then $\eta_n(\xi+2\pi)=\eta_n(\xi)+2\pi$ and due to property \eqref{Cmf} its derivative satisfies
\[ 
\partial_{\xi}\eta_n(\xi)=1-\partial_{\xi}\varphi_n(\eul^{\ii\xi})\geq 1/2\ \text{ for }n\geq n_0.
\]  
Therefore $\eta_n\colon\D{R}\to\D{R}$ is bijective for $n\geq n_0$. Let $\xi_n\colon\D{R}\to\D{R}$ denote its inverse. It satisfies
\begin{equation} \label{xin}
\xi_n(\eta)-\varphi_n(\eul^{\ii\xi_n(\eta)})=\eta.
\end{equation}
Since $\eta\to\xi_n(\eta)-\eta$ is $2\pi$-periodic, we can then define $\tilde\xi_n\colon\cercle\to\D{R}$ by 
\[
\tilde\xi_n(\eul^{\ii\eta})=\xi_n(\eta)-\eta=\varphi_n(\eul^{\ii\xi_n(\eta)}). 
\] 
By derivation we also introduce $\F{p}_n\colon\cercle\to\D{R}$ defined by
\begin{equation} \label{compo}
\F{p}_n(\eul^{\ii\eta})\coloneqq 1+\partial_{\eta}\tilde\xi_n(\eul^{\ii\eta})=\partial_{\eta}\xi_n(\eta).
\end{equation}
Finally, we consider $\vartheta_n\colon\cercle\to\cercle$ and its extension $\tilde\vartheta_n\coloneqq\id_{\D{Z}}\times\vartheta_n\colon\D{Z}\times\cercle\to\D{Z}\times\cercle$ defined by
\begin{subequations} \label{tildevtt}
\begin{align} \label{vt}
\vartheta_n(\eul^{\ii\eta})&\coloneqq\eul^{\ii\eta}\eul^{\ii\tilde\xi_n(\eul^{\ii\eta})}=\eul^{\ii\xi_n(\eta)}\\
\label{tildevt}
\tilde\vartheta_n(j,\eul^{\ii\eta})&\coloneqq(j,\vartheta_n(\eul^{\ii\eta})).
\end{align}
\end{subequations}

\begin{lemma} \label{lem:73} 
Under assumption \eqref{phin.asymptot} we have the estimates 
\begin{equation} \label{normtildexin}
\norm{\tilde\xi_n}_{\R{C}^{m}(\cercle)}=\ord(n^{\gamma-1})
\end{equation} 
for any integer $m\geq 0$. Moreover,
\begin{equation} \label{xin'}
\norm{\F{p}_n-1}_{\class^0(\cercle)}=\ord(n^{\gamma-1}).
\end{equation}
\end{lemma}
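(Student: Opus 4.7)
The plan is to obtain everything from the defining identity $\tilde\xi_n(\eul^{\ii\eta})=\varphi_n(\eul^{\ii\xi_n(\eta)})$ in \eqref{xin} together with the inverse function theorem, with the higher-order estimates following by induction. The $m=0$ case of \eqref{normtildexin} is immediate: $\tilde\xi_n(\eul^{\ii\eta})=\varphi_n(\eul^{\ii\xi_n(\eta)})$ gives $\norm{\tilde\xi_n}_{\class^0(\cercle)}\leq\norm{\varphi_n}_{\class^0(\cercle)}=\ord(n^{\gamma-1})$ by \eqref{phin.asymptot}.

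Next, from \eqref{eta}, $\partial_\xi\eta_n(\xi)=1-\partial_\xi\varphi_n(\eul^{\ii\xi})$, which by \eqref{Cmf} lies in $[1/2,3/2]$ for $n\geq n_0$. The inverse function theorem then yields
\[
\F{p}_n(\eul^{\ii\eta})=\partial_\eta\xi_n(\eta)=\frac{1}{1-\partial_\xi\varphi_n(\eul^{\ii\xi_n(\eta)})}=1+\ord(n^{\gamma-1}),
\]
which gives \eqref{xin'}. Since $\partial_\eta\tilde\xi_n(\eul^{\ii\eta})=\F{p}_n(\eul^{\ii\eta})-1$, this also establishes the $m=1$ case of \eqref{normtildexin}.

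For $m\geq 2$, I would differentiate the identity $\eta_n(\xi_n(\eta))=\eta$ via the Faà di Bruno formula, obtaining
\[
0=\eta_n'(\xi_n(\eta))\,\xi_n^{(m)}(\eta)+\sum_{\substack{\pi\in\Pi_m\\ \abs{\pi}\geq 2}}C_\pi\,\eta_n^{(\abs{\pi})}(\xi_n(\eta))\prod_{B\in\pi}\xi_n^{(\abs{B})}(\eta),
\]
where $\Pi_m$ denotes the set of partitions of $\{1,\dots,m\}$. Solving for $\xi_n^{(m)}$ and using that $\eta_n^{(k)}=-\partial_\xi^k\varphi_n(\eul^{\ii\,\cdot\,})=\ord(n^{\gamma-1})$ for every $k\geq 2$ by \eqref{phin.asymptot}, together with the inductive hypothesis $\xi_n'=\ord(1)$ and $\xi_n^{(j)}=\ord(n^{\gamma-1})$ for $2\leq j\leq m-1$, one concludes $\xi_n^{(m)}(\eta)=\ord(n^{\gamma-1})$. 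Since $\partial_\eta^m\tilde\xi_n=\xi_n^{(m)}$ for $m\geq 2$, this closes the induction and proves \eqref{normtildexin} for all $m$.

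The only slightly delicate point is the bookkeeping in the Faà di Bruno sum, but the structural observation is straightforward: every term other than the distinguished $\eta_n'(\xi_n)\xi_n^{(m)}$ comes from a partition into at least two blocks, hence contains a factor $\eta_n^{(k)}(\xi_n)$ with $k\geq 2$, supplying the $\ord(n^{\gamma-1})$ smallness needed to propagate the estimate. No finer control on any individual derivative of $\xi_n$ than $\ord(1)$ or $\ord(n^{\gamma-1})$ is required, so the recursion closes cleanly.
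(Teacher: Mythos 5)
Your proof is correct and follows essentially the same route as the paper: the $m=0$ case from the identity $\tilde\xi_n(\eul^{\ii\eta})=\varphi_n(\eul^{\ii\xi_n(\eta)})$, the estimate \eqref{xin'} from the differentiated relation \eqref{xindif}, and the higher derivatives by induction after isolating the top-order term $\eta_n'(\xi_n)\xi_n^{(m)}$ from a combination of lower-order factors each carrying a factor $\varphi_n^{(k)}=\ord(n^{\gamma-1})$. Your use of Fa\`a di Bruno merely makes explicit the "successive differentiations" bookkeeping that the paper leaves implicit.
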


\begin{proof} 
For $m=0$ \eqref{normtildexin} follows from the relation $\tilde\xi_n(\eul^{\ii\eta})=\varphi_n(\eul^{\ii\xi_n(\eta)})$, using \eqref{phin.asymptot} for $m=0$. Let $\varphi_n^{(m)}(\eul^{\ii\eta})\coloneqq\partial_{\eta}^m\varphi_n(\eul^{\ii\eta})$ and $\xi_n^{(m)}(\eta)\coloneqq\partial_{\eta}^m\xi_n(\eta)$. For $m=1$, differentiating \eqref{xin} we obtain
\begin{equation} \label{xindif}
\xi_n^{(1)}(\eta)\bigl(1-\varphi_n^{(1)}(\eul^{\ii\xi_n(\eta)})\bigr)=1. 
\end{equation}
Using \eqref{phin.asymptot} for $m=1$ we get \eqref{xin'}:
\[
\sup_{\eta\in\D{R}}\,\abs{\xi_n^{(1)}(\eta)-1}=\sup_{\eta\in\D{R}}\left\lvert\frac{\varphi_n^{(1)}(\eul^{\ii\xi_n(\eta)})}{1-\varphi_n^{(1)}(\eul^{\ii\xi_n(\eta)})}\right\rvert=\ord(n^{\gamma-1}).
\]
For $m\geq 2$, the proof of \eqref{normtildexin} is by induction on $m$. By successive differentiations of \eqref{xindif} we can express $\xi_n^{(m)}(\eta)\bigl(1-\varphi_n^{(1)}(\eul^{\ii\xi_n(\eta)})\bigr)$ as a linear combination of products of factors of the form $\xi_n^{(m')}$, $m'<m$ with some factor $\varphi_n^{(m'')}$, $m''\leq m$, and we get \eqref{normtildexin}, again using \eqref{phin.asymptot}. 
\end{proof} 

\subsubsection{Composition formula}  \label{sec:752}

\begin{lemma}[composition formula] \label{lem:74} 
Let $\tilde\psi_n^0$, $\tilde\psi_n\in\C{Q}^0$. We assume $\tilde\psi_n$ satisfies \eqref{pf'Cmf} for some $n_0$. Let also $\theta_n^0,\,\theta_n\in\class_0^{\infty}(\D{R})$ be real-valued, vanishing outside the interval $\croch{2n/3,4n/3}$. 

If $Q_n^0=\bigl(\theta_n^0\eul^{\ii\tilde\psi_n^{\,0}}\bigr)(\Lambda,S)$ and $Q_n=\bigl(\theta_n\eul^{\ii\tilde\psi_n}\bigr)(\Lambda,S)$, then for $n\geq n_0$,
\begin{equation}  \label{com}
Q_n^0Q_n^*=\bigl(\theta_n^0\eul^{\ii(\tilde\psi_n^0-\tilde\psi_n)\circ\tilde\vartheta_n}\bigr)(\Lambda,S)\,\F{p}_n(S)\,\widetilde\Theta_n,
\end{equation} 
where $\tilde\vartheta_n$ is given by \eqref{tildevtt}, $\F{p}_n$ by \eqref{compo}, and $\widetilde\Theta_n\coloneqq\theta_n(\Lambda)$. 
\end{lemma}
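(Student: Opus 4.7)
The plan is to compute the $(j,k)$ matrix elements of both sides of \eqref{com} and check that they coincide. First, by Lemma~\ref{lem:71},
\[
(Q_n^0 Q_n^*)(j,k) = \theta_n^0(j)\theta_n(k) \int_0^{2\pi} \eul^{\ii\tilde\psi_n^0(j,\eul^{\ii\xi})}\, \eul^{-\ii\tilde\psi_n(k,\eul^{\ii\xi})}\, \eul^{\ii(k-j)\xi}\, \frac{\dd\xi}{2\pi}.
\]
Since $\theta_n^0$ and $\theta_n$ vanish outside $[2n/3, 4n/3]$, this coefficient is zero unless $\abs{j-n}\leq n/3$ and $\abs{k-n}\leq n/3$, in which case hypothesis \eqref{pf'} is applicable to both indices.

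I would then exploit the affine $k$-dependence: on the support of $\theta_n(k)$,
\[
-\tilde\psi_n(k,\eul^{\ii\xi})+(k-j)\xi = -\psi_n(\eul^{\ii\xi})+(k-n)\eta_n(\xi)+(n-j)\xi,
\]
with $\eta_n$ as in \eqref{eta}. For $n\geq n_0$ the map $\eta_n$ is a diffeomorphism (by \eqref{Cmf}), so I perform the change of variable $\xi=\xi_n(\eta)$, $\dd\xi=\F{p}_n(\eul^{\ii\eta})\,\dd\eta$. Using $\xi_n(\eta)=\eta+\tilde\xi_n(\eul^{\ii\eta})$ yields
\[
(k-n)\eta+(n-j)\xi_n(\eta) = (k-j)\eta+(n-j)\tilde\xi_n(\eul^{\ii\eta}).
\]

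Next, since $\abs{j-n}\leq n/3$ on the support of $\theta_n^0$, I invoke \eqref{pf'} a second time together with the identity $\tilde\xi_n(\eul^{\ii\eta})=\varphi_n(\eul^{\ii\xi_n(\eta)})$ from Section~\ref{sec:751} to recognize
\[
-\psi_n(\eul^{\ii\xi_n(\eta)})-(j-n)\tilde\xi_n(\eul^{\ii\eta}) = -\tilde\psi_n(j,\eul^{\ii\xi_n(\eta)}).
\]
Substituting back, the integral becomes
\[
\int_0^{2\pi}\eul^{\ii[\tilde\psi_n^0-\tilde\psi_n](j,\eul^{\ii\xi_n(\eta)})}\,\F{p}_n(\eul^{\ii\eta})\,\eul^{\ii(k-j)\eta}\,\frac{\dd\eta}{2\pi},
\]
which, together with the prefactor $\theta_n^0(j)\theta_n(k)$, is exactly the $(j,k)$ matrix element of the right-hand side of \eqref{com}: I apply formula \eqref{qls} and property~(ii) of the operators $q(\Lambda,S)$ to absorb $\F{p}_n(S)$ into the symbol, and observe that right multiplication by $\widetilde\Theta_n=\theta_n(\Lambda)$ contributes the factor $\theta_n(k)$.

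The main bookkeeping effort lies in arranging the phases so that the term linear in $k$ is absorbed exactly into the change of variable $\xi\mapsto\eta_n(\xi)$, while the residual $(n-j)\tilde\xi_n(\eul^{\ii\eta})$ recombines with $-\psi_n\circ\xi_n$ to reconstruct the linearization of $\tilde\psi_n(j,\cdot)$ at $\eul^{\ii\xi_n(\eta)}$. The fact that $\tilde\psi_n(j,\eul^{\ii\xi})$ is exactly affine in $j$ on $\abs{j-n}\leq n/3$ is crucial here; this is precisely the reason for the linearization of $a(k)$ around $n$ performed in Section~\ref{sec:22}, without which no such clean composition formula would be available.
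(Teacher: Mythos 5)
Your proof is correct and follows essentially the same route as the paper's: matrix elements via Lemma~\ref{lem:71}, the affine structure \eqref{pf'}, the change of variable $\xi=\xi_n(\eta)$, and properties \eqref{cf0}, \eqref{cf01} of $q(\Lambda,S)$ for the final identification, with your phase bookkeeping (splitting $\tilde\psi_n(k,\cdot)=\psi_n+(k-n)\varphi_n$ and tracking $(k-n)$, $(n-j)$ separately) differing only cosmetically from the paper's shortcut $\tilde\psi_n(j,\cdot)-\tilde\psi_n(k,\cdot)=(j-k)\varphi_n$. The one point you leave implicit is that after substitution the $\eta$-integral runs over an interval of length $2\pi$ shifted from $[0,2\pi]$, and one must invoke the $2\pi$-periodicity of the integrand to restore the limits to $[0,2\pi]$ before reading off the $(j,k)$ coefficient via \eqref{qls}.
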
 

\begin{proof} 
We have $Q_n^0=q_n^0(\Lambda,S)$ with $q_n^0(j,\eul^{\ii\xi})=\theta_n^0(j)\eul^{\ii\tilde\psi_n^0(j,\eul^{\ii\xi})}$, and similarly $Q_n=q_n(\Lambda,S)$ with $q_n(j,\eul^{\ii\xi})=\theta_n(j)\eul^{\ii\tilde\psi_n(j,\eul^{\ii\xi})}$. Hence $Q_n^0,\,Q_n\in\C{B}\bigl(l^2(\D{Z})\bigr)$ since $q_n^0,q_n\in\C{Q}^0$. To prove \eqref{com} it suffices to prove that both sides have the same $(j,k)$ coefficient. If $K_n\coloneqq Q_n^0Q_n^*$, then Lemma \ref{lem:71} gives
\[
K_n(j,k)=\theta_n^0(j)\theta_n(k)\int_0^{2\pi}\eul^{\ii\tilde\psi_n^0(j,\eul^{\ii\xi})-\ii\tilde\psi_n(k,\eul^{\ii\xi})+\ii(k-j)\xi}\,\frac{\dd\xi}{2\pi}\,.  
\]
Thus $K_n(j,k)=0$ either if $\abs{j-n}>n/3$ or if $\abs{k-n}>n/3$. Assume now that $\abs{j-n}\leq n/3$ and $\abs{k-n}\leq n/3$. By assumption \eqref{pf'}, $\tilde\psi_n(j,\eul^{\ii\xi})-\tilde\psi_n(k,\eul^{\ii\xi})=(j-k)\varphi_n(\eul^{\ii\xi})$ and we find 
\[ 
K_n(j,k)=\theta_n^0(j)\theta_n(k)\int_0^{2\pi}\eul^{\ii(\tilde\psi_n^0-\tilde\psi_n)(j,\eul^{\ii\xi})+\ii(k-j)(\xi-\varphi_n(\eul^{\ii\xi}))}\,\frac{\dd\xi}{2\pi}\,.
\]
As above, let $\xi_n\colon\D{R}\to\D{R}$ denote the inverse of $\eta_n$ for $n\geq n_0$ with $\eta_n$ defined by \eqref{eta}, i.e.\ $\eta_n(\xi)=\xi-\varphi_n(\eul^{\ii\xi})$. Due to \eqref{xin} and $\xi_n(\croch{0,\,2\pi})=\croch{\xi_n(0),\,\xi_n(0)+2\pi}$, the change of variable $\xi=\xi_n(\eta)$ gives
\[
K_n(j,k)=\theta_n^0(j)\theta_n(k)\int_{\xi_n(0)}^{\xi_n(0)+2\pi}\eul^{\ii(\tilde\psi_n^0-\tilde\psi_n)(j,\eul^{\ii\xi_n(\eta)})+\ii(k-j)\eta}\,\partial_{\eta}\xi_n(\eta)\,\frac{\dd\eta}{2\pi}\,.
\] 
Using \eqref{compo}, i.e.\ $\partial_{\eta}\xi_n(\eta)\eqqcolon\F{p}_n(\eul^{\ii\eta})$ and $(j,\eul^{\ii\xi_n(\eta)})=\tilde\vartheta_n(j,\eul^{\ii\eta})$ we get
\[
K_n(j,k)=\theta_n^0(j)\theta_n(k)\int_{\xi_n(0)}^{\xi_n(0)+2\pi}\eul^{\ii(\tilde\psi_n^0-\tilde\psi_n)\circ\tilde\vartheta_n(j,\eul^{\ii\eta})}\,\F{p}_n(\eul^{\ii\eta})\,\eul^{\ii(k-j)\eta}\,\frac{\dd\eta}{2\pi}\,.
\]
The function we integrate is $2\pi$-periodic, hence its integral above is the same as over $\croch{0,2\pi}$. Thus,
\[
K_n(j,k)=\theta_n(k)\int_0^{2\pi}\tilde q_n(j,\eul^{\ii\eta})\,\eul^{\ii(k-j)\eta}\,\frac{\dd\eta}{2\pi}=\theta_n(k)\,\tilde q_n(\Lambda,S)(j,k),
\]
where $\tilde q_n(j,\eul^{\ii\eta})\coloneqq\tilde q_n^0(j,\eul^{\ii\eta})\F{p}_n(\eul^{\ii\eta})$ with $\tilde q_n^0(j,\eul^{\ii\eta})\coloneqq\theta_n^0(j)\eul^{\ii(\tilde\psi_n^0-\tilde\psi_n)\circ\tilde\vartheta_n(j,\eul^{\ii\eta})}$. Let us observe that $\tilde q_n(\Lambda,S)=\tilde q_n^0(\Lambda,S)\,\F{p}_n(S)$ by property \eqref{cf0}. Moreover,
\[
\theta_n(k)\,\tilde q_n(\Lambda,S)(j,k)=\bigl(\tilde q_n(\Lambda,S)\,\theta_n(\Lambda)\bigr)(j,k).
\]
Thus, $K_n$ and $\tilde q_n^0(\Lambda,S)\,\F{p}_n(S)\,\widetilde\Theta_n$ have the same $(j,k)$ coefficient if $\abs{j-n}\leq n/3$ and $\abs{k-n}\leq n/3$. Otherwise, the $(j,k)$ coefficients of both sides of \eqref{com} vanish as multiples of $\theta_n^0(j)\theta_n(k)$. 
\end{proof}

\subsection{A norm estimate} \label{sec:76} 

\begin{lemma} \label{lem:75} 
Let $\widetilde Q_n\coloneqq\tilde q_n(\Lambda,S)$ be defined by $\tilde q_n\coloneqq\theta_n\eul^{\ii\tilde\psi_n}q_n$ with the following assumptions:
\begin{enumerate}[\rm(i)]
\item
$\theta_n\in\class_0^{\infty}(\D{R})$ is real-valued, $0\leq\theta_n\leq 1$, and $\theta_n(s)=0$ for $\abs{s-n}>n/3$.
\item
$\tilde\psi_n\colon\D{Z}\times\cercle\to\D{R}$ is of the form $\tilde\psi_n(j,\eul^{\ii\xi})=\psi_n(\eul^{\ii\xi})+(j-n)\varphi_n(\eul^{\ii\xi})$ for $\abs{j-n}\leq n/3$, with $\psi_n,\varphi_n\in\class^{\infty}(\cercle)$ real-valued; moreover, for some $n_0$:
\begin{equation} \label{assphin}
\sup_{n\geq n_0}\norm{\varphi_n}_{\class^2(\cercle)}\leq 1/2. 
\end{equation}
\item
$q_n\in\C{Q}$.
\end{enumerate}
Then, for $n\geq n_0$,  
\begin{equation} \label{eL42}
\norm{\widetilde Q_n}\leq 4\sqrt{\ln n}\,\sup_{\abs{j-n}\leq n/3}\norm{q_n(j,\,\cdot\,)}_{\class^1(\cercle)}.
\end{equation} 
\end{lemma}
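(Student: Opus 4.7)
My plan is to compute $\|\widetilde Q_n\|^2=\|\widetilde Q_n\widetilde Q_n^*\|$ and bound the right-hand side by Schur's test. The subtlety is that hypothesis (ii) places no bound on $\psi_n$, so integrating by parts in $\xi$ directly in $\widetilde Q_n(j,k)$ is useless: each such step would produce an uncontrolled factor $\partial_\xi\psi_n$. The decisive observation is that $\psi_n$ depends only on $\xi$ and not on $j$, so it will disappear in the product $\widetilde Q_n\widetilde Q_n^*$.

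Since $\theta_n$ is compactly supported, $\tilde q_n\in\C{Q}^0$ and Lemma~\ref{lem:71} gives
\[
(\widetilde Q_n\widetilde Q_n^*)(j,k)=\theta_n(j)\theta_n(k)\int_0^{2\pi}e^{i(\tilde\psi_n(j,e^{i\xi})-\tilde\psi_n(k,e^{i\xi}))}q_n(j,e^{i\xi})\overline{q_n(k,e^{i\xi})}e^{i(k-j)\xi}\,\frac{d\xi}{2\pi}.
\]
On the support of $\theta_n(j)\theta_n(k)$ the identity $\tilde\psi_n(j,\cdot)-\tilde\psi_n(k,\cdot)=(j-k)\varphi_n$ kills $\psi_n$, leaving the phase $e^{i(k-j)\eta_n(\xi)}$ with $\eta_n$ as in \eqref{eta}. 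The change of variables $\xi=\xi_n(\eta)$, available for $n\geq n_0$ by assumption \eqref{assphin}, then produces the clean integral
\[
(\widetilde Q_n\widetilde Q_n^*)(j,k)=\theta_n(j)\theta_n(k)\int_0^{2\pi}q_n(j,\vartheta_n(e^{i\eta}))\overline{q_n(k,\vartheta_n(e^{i\eta}))}\F{p}_n(e^{i\eta})e^{i(k-j)\eta}\,\frac{d\eta}{2\pi},
\]
where $\vartheta_n$ and $\F{p}_n$ come from \eqref{tildevtt}, \eqref{compo} and obey the uniform $C^1$ bounds provided by Lemma~\ref{lem:73}.

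At this point I would apply Schur's test. Setting $M:=\sup_{|j-n|\leq n/3}\|q_n(j,\cdot)\|_{C^1(\cercle)}$, the diagonal term is bounded by $\|\F{p}_n\|_{C^0}M^2\leq 2M^2$, and a single integration by parts in $\eta$ against $e^{i(k-j)\eta}$ produces, for $j\neq k$,
\[
|(\widetilde Q_n\widetilde Q_n^*)(j,k)|\leq\frac{C_1M^2}{|j-k|},
\]
where the bounds on $\F{p}_n$ and $\F{p}_n'$ come from Lemma~\ref{lem:73} and $\partial_\eta q_n(j,\vartheta_n(e^{i\eta}))$ is controlled by $M\,\|\F{p}_n\|_{C^0}$. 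The cutoff $\theta_n(k)$ confines the row sum to $|k-n|\leq n/3$; using $\sum_{m=1}^{2n/3}1/m\leq 1+\ln n$, the row sum at index $j$ is at most $2M^2+2C_1M^2(1+\ln n)$, which for $n$ large is bounded by $16M^2\ln n$ once the constants from Lemma~\ref{lem:73} are tracked carefully. Since $\widetilde Q_n\widetilde Q_n^*$ is self-adjoint, Schur's test gives $\|\widetilde Q_n\|^2=\|\widetilde Q_n\widetilde Q_n^*\|\leq 16M^2\ln n$, which is \eqref{eL42}.

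The main obstacle is the initial observation that $\psi_n$ cancels in $\widetilde Q_n\widetilde Q_n^*$; without it no Schur-type estimate can succeed under these hypotheses. Everything after that is a routine one-step integration-by-parts computation, but obtaining the explicit constant $4$ in \eqref{eL42} will require sharp bookkeeping of the bounds $\|\F{p}_n\|_{C^0}\leq 2$ and $\|\F{p}_n'\|_{C^0}=O(n^{\gamma-1})$ of Lemma~\ref{lem:73}, together with the factor $2$ coming from the two-sided range of $|j-k|$ in the logarithmic sum.
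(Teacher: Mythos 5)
Your proposal follows essentially the same strategy as the paper: Schur's test applied to $\widetilde Q_n\widetilde Q_n^*$, cancellation of $\psi_n$ in the off-diagonal phase via $\tilde\psi_n(j,\,\cdot\,)-\tilde\psi_n(k,\,\cdot\,)=(j-k)\varphi_n$ on the support of $\theta_n(j)\theta_n(k)$, and a single integration by parts producing the factor $1/|j-k|$. The one real difference is that you linearize the phase with the change of variable $\xi=\xi_n(\eta)$ \emph{before} integrating by parts, whereas the paper keeps the nonlinear phase $\eta_n(\xi)=\xi-\varphi_n(\eul^{\ii\xi})$, divides the amplitude by $\partial_\xi\eta_n$, and then integrates by parts against $\partial_\xi\eul^{\ii(k-j)\eta_n(\xi)}$. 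These are algebraically equivalent maneuvers — by the chain rule the two $b_n$'s and their derivatives are the same objects in different coordinates — so the proof is structurally the same.

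One wrinkle you should repair: you cite Lemma~\ref{lem:73} for the uniform $\class^0$- and $\class^1$-bounds on $\F{p}_n$, but Lemma~\ref{lem:73} is proved under the stronger hypothesis \eqref{phin.asymptot}, namely $\norm{\varphi_n}_{\class^m(\cercle)}=\ord(n^{\gamma-1})$ for all $m$, which is \emph{not} among the hypotheses of Lemma~\ref{lem:75}; here you only have \eqref{assphin}, a uniform $\class^2$-bound $\leq 1/2$. This is not a fatal gap, because the absolute bounds you actually need, $\norm{\F{p}_n}_{\class^0(\cercle)}\leq 2$ and a numerical bound on $\norm{\partial_\eta\F{p}_n}_{\class^0(\cercle)}$, follow directly from \eqref{assphin} via the identity $\xi_n^{(1)}(\eta)\bigl(1-\varphi_n^{(1)}(\eul^{\ii\xi_n(\eta)})\bigr)=1$ and one further differentiation, with no $n$-asymptotics whatsoever. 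You should state and use that elementary consequence rather than invoke Lemma~\ref{lem:73}. The paper avoids this issue entirely by never introducing $\F{p}_n$: it bounds $b_n\coloneqq(\text{amplitude})/\partial_\xi\eta_n$ and $\partial_\xi b_n$ directly from $\abs{\partial_\xi\eta_n}\geq 1/2$ and $\abs{\partial_\xi^2\eta_n}\leq 1/2$, both immediate from \eqref{assphin}.
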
 

\begin{remark*} 
This lemma will be applied for $\theta_n=\theta_{n,n}$ or $\theta_{3n/2,n}$ defined according to \eqref{222}.
\end{remark*}

\begin{proof} 
Further on, we assume $n\geq n_0$. By assumption (i), $\tilde q_n\in\C{Q}^0$, hence $\widetilde Q_n\in\C{B}\bigl(l^2(\D{Z})\bigr)$. By the Schur test of boundedness in $l^2(\D{Z})$ applied to $K_n\coloneqq\widetilde Q_n\widetilde Q_n^*$ we get 
\begin{equation} \label{schur}
\norm{\widetilde Q_n}^2=\norm{K_n}\leq\sup_{j\in\D{Z}}\,\sum_{k\in\D{Z}}\abs{K_n(j,k)}.
\end{equation}
We first observe that $K_n(j,k)=0$ if $\abs{j-n}>n/3$, and also if $\abs{k-n}>n/3$. It is a consequence of \eqref{cf} since $\tilde q_n(j,\eul^{\ii\xi})=0$ for $\abs{j-n}>n/3$. Thus we can assume $\abs{j-n}\leq n/3$ and $\abs{k-n}\leq n/3$. Let $\eta_n(\xi)=\xi-\varphi_n(\eul^{\ii\xi})$ be as in \eqref{eta}. Using $\tilde\psi_n(j,\eul^{\ii\xi})-\tilde\psi_n(k,\eul^{\ii\xi})=(j-k)\varphi_n(\eul^{\ii\xi})$, Lemma \ref{lem:71} gives
\[
K_n(j,k)=\int_0^{2\pi}\eul^{\ii(k-j)\eta_n(\xi)}\theta_n(j)q_n(j,\eul^{\ii\xi})\,\overline{q_n(k,\eul^{\ii\xi})}\,\theta_n(k)\,\frac{\dd\xi}{2\pi}\,.
\] 
Moreover,
\begin{equation} \label{knjj}
\abs{K_n(j,j)}\leq\norm{q_n(j,\,\cdot\,)}_{\class^1(\cercle)}^2.
\end{equation}
Since $\abs{\partial_{\xi}\varphi_n(\eul^{\ii\xi})}\leq 1/2$ by \eqref{assphin}, then $\abs{\partial_{\xi}\eta_n(\xi)}=\abs{1-\partial_{\xi}\varphi_n(\eul^{\ii\xi})}\geq 1/2$, and we can introduce 
\[
b_n(j,\eul^{\ii\xi},k)\coloneqq\frac{\theta_n(j)q_n(j,\eul^{\ii\xi})\,\overline{q_n(k,\eul^{\ii\xi})}\,\theta_n(k)}{\partial_{\xi}\eta_n(\xi)}\,. 
\]
Thus, for $k\neq j$, 
\[
K_n(j,k)=-\frac{\ii}{k-j}\int_0^{2\pi}\partial_{\xi}\bigl(\eul^{\ii(k-j)\eta_n(\xi)}\bigr)b_n(j,\eul^{\ii\xi},k)\,\frac{\dd\xi}{2\pi}\,,  
\]
then by integration by parts 
\[
K_n(j,k)=\frac{\ii}{k-j}\int_0^{2\pi}\eul^{\ii(k-j)\eta_n(\xi)}\partial_{\xi}b_n(j,\eul^{\ii\xi},k)\,\frac{\dd\xi}{2\pi}\,,   
\]
which gives the estimate
\begin{equation} \label{knjk}
\abs{K_n(j,k)}\leq\frac{\norm{b_n(j,\,\cdot\,,k)}_{\class^1(\cercle)}}{\abs{k-j}}\,.
\end{equation}
Let us note that $b_n(j,\,\cdot\,,k)\neq 0$ implies $\abs{j-n}\leq n/3$ and $\abs{k-n}\leq n/3$. We then denote
\[
M\coloneqq\sup_{\abs{j-n}\leq n/3}\norm{q_n(j,\,\cdot\,)}_{\class^1(\cercle)}.
\]
By assumption \eqref{assphin} we have $\abs{\partial_{\xi}\eta_n(\xi)}\geq 1/2$ and $\abs{\partial_{\xi}^2\eta_n(\xi)}=\abs{\partial_{\xi}^2\varphi_n(\eul^{\ii\xi}))}\leq 1/2$, hence we get
\[ 
\sup_{j,k\in\D{Z}}\norm{b_n(j,\,\cdot\,,k)}_{\class^1(\cercle)}=\sup_{\substack{\abs{j-n}\leq n/3\\\abs{k-n}\leq n/3}}\norm{b_n(j,\,\cdot\,,k)}_{\class^1(\cercle)}\leq 2M^2+2M^2+4M^2\times\frac{1}{2}=6M^2.
\] 
Thus, using \eqref{knjj} and \eqref{knjk},
\begin{align*}
\sup_{j\in\D{Z}}\sum_{k\in\D{Z}}\abs{K_n(j,k)}&=\sup_{\abs{j-n}\leq n/3}\Bigl(\abs{K_n(j,j)}+\sum_{\substack{\abs{k-n}\leq n/3\\k\neq j}}\abs{K_n(j,k)}\Bigr)\\
&\leq\Bigl(1+12\sum_{1\leq m\leq n/3}\frac{1}{m}\Bigr)M^2\\
&\leq 16M^2\ln n,
\end{align*}
with $n>1$ for the last inequality. The proof is completed due to \eqref{schur}.  
\end{proof}

\begin{remark*}
The norm estimate of Lemma \ref{lem:75} is not optimal. The logarithmic factor in the right-hand side of \eqref{eL42} can be replaced by a suitable estimate of $q_n(j+1,\,\cdot\,)-q_n(j,\,\cdot\,)$. Since the presence of logarithmic factors makes no difference for the remainder estimates we consider, our choice is to use the simplest assumptions and a non-optimal norm estimate. 
\end{remark*}

\subsection{A commutator estimate} \label{sec:77} 

Further on, $\Theta_n$ is the operator defined by
\begin{equation}   \label{op.theta.n}
\Theta_n\coloneqq\theta_{n,n}(\Lambda)=\theta_0\Bigl(\tfrac{1}{n}\Lambda-I\Bigr)
\end{equation}
where $\theta_{n,n}$ and $\theta_0$ are as in \eqref{222}.

\begin{lemma} \label{lem:76} 
Let $Q_n\coloneqq q_n(\Lambda,S)$ be defined by $q_n\coloneqq\theta_{n,n}\eul^{\ii\tilde\psi_n}$ with the following assumptions: 
\begin{enumerate}[\rm(i)]
\item
$\tilde\psi_n(j,\eul^{\ii\xi})=\psi_n(\eul^{\ii\xi})+(j-n)\varphi_n(\eul^{\ii\xi})$ for $\abs{j-n}\leq n/3$, with $\psi_n,\varphi_n\in\class^{\infty}(\cercle)$ real-valued,
\item
$\sup_{n\geq n_0}\norm{\varphi_n}_{\class^2(\cercle)}\leq 1/2$. 
\end{enumerate}
We then have the estimate
\begin{equation}  \label{commutator.estimate}
\norm{\croch{\Theta_n,Q_n}}\leq C\,\frac{\sqrt{\ln n}}{n}\sup_{\abs{j-n}\leq n/3}\norm{\tilde\psi_n(j,\,\cdot\,)}_{\class^2(\cercle)}
\end{equation}
where $C$ is some positive constant.
\end{lemma}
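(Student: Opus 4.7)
The plan is to reduce $[\Theta_n,Q_n]$ to a commutator with $\Lambda$ alone via a Fourier/Duhamel argument, and then to bound the resulting symbol operator by applying Lemma~\ref{lem:75}.

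\textbf{Step 1 (Fourier--Duhamel).} Set $A_n\coloneqq n^{-1}(\Lambda-nI)$, so that $\Theta_n=\theta_0(A_n)$ by the very definition~\eqref{op.theta.n}. Since $\theta_0\in\class_0^{\infty}(\D{R})$, its Fourier transform $\hat\theta_0$ is Schwartz, and by the functional calculus for the self-adjoint operator $A_n$,
\[
\Theta_n=\int_{\D{R}}\hat\theta_0(t)\,\eul^{\ii tA_n}\,\dd t,
\]
the Bochner integral converging in operator norm since $\hat\theta_0\in\R{L}^1(\D{R})$ and $\norm{\eul^{\ii tA_n}}=1$. Duhamel's identity gives $[\eul^{\ii tA_n},Q_n]=\ii\int_0^t\eul^{\ii sA_n}[A_n,Q_n]\,\eul^{\ii(t-s)A_n}\,\dd s$, whence $\norm{[\eul^{\ii tA_n},Q_n]}\leq\abs{t}\,\norm{[A_n,Q_n]}$. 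Exchanging integrals by Fubini yields
\[
\norm{[\Theta_n,Q_n]}\leq C_{\theta_0}\,\norm{[A_n,Q_n]},\qquad C_{\theta_0}\coloneqq\int_{\D{R}}\abs{t\,\hat\theta_0(t)}\,\dd t<\infty.
\]

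\textbf{Step 2 (Identify $[A_n,Q_n]$ as a symbol operator).} Clearly $[A_n,Q_n]=n^{-1}[\Lambda,Q_n]$, and the $(j,k)$ matrix entry of $[\Lambda,Q_n]$ is $(j-k)Q_n(j,k)$. Integration by parts in $\xi$ using $(k-j)\eul^{\ii(k-j)\xi}=-\ii\partial_\xi\eul^{\ii(k-j)\xi}$, applied to the integral formula~\eqref{qls}, gives $[\Lambda,Q_n]=-\ii(\partial_\xi q_n)(\Lambda,S)$, where
\[
(\partial_\xi q_n)(j,\eul^{\ii\xi})=\ii\,\theta_{n,n}(j)\,\partial_\xi\tilde\psi_n(j,\eul^{\ii\xi})\,\eul^{\ii\tilde\psi_n(j,\eul^{\ii\xi})}.
\]

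\textbf{Step 3 (Apply Lemma~\ref{lem:75}).} The symbol $\partial_\xi q_n$ has exactly the product structure $\theta_n(j)\,\eul^{\ii\tilde\psi_n(j,\eul^{\ii\xi})}\,q_n^{\sharp}(j,\eul^{\ii\xi})$ required by Lemma~\ref{lem:75}, with cutoff $\theta_n=\theta_{n,n}$ (supported in $\abs{s-n}\leq n/5\subset n/3$, $0\leq\theta_n\leq 1$), the same phase $\tilde\psi_n$ as in the hypothesis of Lemma~\ref{lem:76}, and outer symbol $q_n^{\sharp}(j,\eul^{\ii\xi})\coloneqq\ii\,\partial_\xi\tilde\psi_n(j,\eul^{\ii\xi})$. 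Since $\tilde\psi_n\in\C{Q}^0$, also $q_n^{\sharp}\in\C{Q}^0\subset\C{Q}$; moreover $\norm{q_n^{\sharp}(j,\cdot)}_{\class^1(\cercle)}\leq\norm{\tilde\psi_n(j,\cdot)}_{\class^2(\cercle)}$. Lemma~\ref{lem:75} therefore delivers
\[
\norm{(\partial_\xi q_n)(\Lambda,S)}\leq 4\sqrt{\ln n}\,\sup_{\abs{j-n}\leq n/3}\norm{\tilde\psi_n(j,\cdot)}_{\class^2(\cercle)}.
\]

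Combining the three steps yields the desired estimate~\eqref{commutator.estimate} with $C=4C_{\theta_0}$. The main technical point is in Step~1: one must verify that the functional calculus, Fubini and Duhamel manipulations are valid for the unbounded self-adjoint operator $A_n$. This is routine once one observes, by Step~2, that $[A_n,Q_n]$ extends to a bounded operator, so the Duhamel inequality $\norm{[\eul^{\ii tA_n},Q_n]}\leq\abs{t}\,\norm{[A_n,Q_n]}$ is a genuine bound between bounded operators. The crux of the argument, conceptually, is the observation that differentiating the phase $\eul^{\ii\tilde\psi_n}$ in $\xi$ yields a symbol of exactly the template of Lemma~\ref{lem:75}, with outer symbol involving only one $\xi$-derivative of $\tilde\psi_n$---so the bound only uses the $\class^2(\cercle)$-norm of $\tilde\psi_n$, in keeping with the hypotheses.
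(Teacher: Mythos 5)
Your proof is correct and is essentially the paper's own argument: the paper likewise writes $\Theta_n$ as a Fourier integral of $\eul^{\ii t\Lambda/n}$, reduces the commutator to (the conjugates of) $\partial_s\bigl(\eul^{\ii s\Lambda}Q_n\eul^{-\ii s\Lambda}\bigr)=\eul^{\ii s\Lambda}[\ii\Lambda,Q_n]\eul^{-\ii s\Lambda}$ — which is exactly your symbol operator $(\partial_\xi q_n)(\Lambda,S)$ up to unitaries — and then invokes Lemma~\ref{lem:75} on the resulting symbol $\ii\,\theta_{n,n}\,\eul^{\ii\tilde\psi_n}\partial_\xi\tilde\psi_n$ to get the factor $4\sqrt{\ln n}\,\sup_j\norm{\tilde\psi_n(j,\,\cdot\,)}_{\class^2(\cercle)}$. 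The only (immaterial) difference is that you evaluate the derivative at $s=0$ via Duhamel while the paper keeps the parameter $s$ and takes a supremum.
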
 

\begin{proof} 
The inverse Fourier formula allows us to express
\[
\Theta_n=\int_{-\infty}^{\infty}\hat\theta_0(t)\eul^{-\ii t}\eul^{\ii t\Lambda/n}\dd t,
\]
where $\hat\theta_0\in\C{S}(\D{R})$. Introducing $P_n^s\coloneqq\eul^{\ii s\Lambda}Q_n\eul^{-\ii s\Lambda}$ we observe that $P_n^0=Q_n$. Then we can write
\begin{equation}  \label{comm.thetan.qn}
\croch{\Theta_n,Q_n}=\int_{-\infty}^{\infty}\hat\theta_0(t)\eul^{-\ii t}\bigl(P_n^{t/n}-P_n^0\bigr)\eul^{\ii t\Lambda/n}\dd t.
\end{equation}
To estimate the norm of this commutator we use the estimate 
\begin{equation}  \label{estim.pntn}
\norm{P_n^{t/n}-P_n^0}\leq\frac{\abs{t}}{n}\sup_{s\in\D{R}}\,\norm{\partial_sP_n^s}
\end{equation}
and now estimate $\norm{\partial_sP_n^s}$. By \eqref{eQe} from Lemma \ref{lem:72} applied to $Q_n=q_n(\Lambda,S)$ with $q_n=\theta_{n,n}\eul^{\ii\tilde\psi_n}$ we get $P_n^s=(q_n\circ\tilde\tau_{-s})(\Lambda,S)=(\theta_{n,n}\eul^{\ii\tilde\psi_n\circ\tilde\tau_{-s}})(\Lambda,S)$, hence
\[
\partial_sP_n^s=q_n^s(\Lambda,S)
\]
with
\[
q_n^s(j,\eul^{\ii\xi})\coloneqq\ii\,\theta_{n,n}(j)\eul^{\ii\tilde\psi_n\circ\tilde\tau_{-s}}\partial_{\xi}\tilde\psi_n(j,\eul^{\ii(\xi+s)}).
\]
By assumptions (i) and (ii), Lemma \ref{lem:75} applies to $q_n^s(\Lambda,S)$. By estimate \eqref{eL42} we get
\begin{equation}  \label{estim.pns.der}
\sup_{s\in\D{R}}\,\bigl\lVert\partial_sP_n^s\bigr\rVert\leq 4\sqrt{\ln n}\,\sup_{\abs{j-n}\leq n/3}\norm{\tilde\psi_n(j,\,\cdot\,)}_{\class^2(\cercle)}.
\end{equation}
It suffices now to apply estimates \eqref{estim.pntn} and \eqref{estim.pns.der} in the integral representation \eqref{comm.thetan.qn}.
\end{proof}

\section{Approximation of $\eul^{\ii B_n}$} \label{sec:8}
\subsection{Plan of Section~\ref{sec:8}}  \label{sec:81}

Proposition~\ref{prop:8} shows one can construct a good approximation of $\eul^{\ii B_n}\Theta_n$ by an operator of the form $q_n(\Lambda,S)$. This proposition is stated in Section~\ref{sec:82}. Its proof is given in Section~\ref{sec:84} and uses an auxiliary computation developed in Section~\ref{sec:83}.
 
\subsection{Main result}  \label{sec:82}

Let $n\geq 1$. Recall that $B_n=\ii\left({a_n(\Lambda)S^{-1}-Sa_n(\Lambda)}\right)\in\C{B}(l^2(\D{Z}))$, see \eqref{Bn''} and \eqref{2B}. $\Theta_n\coloneqq\theta_{n,n}(\Lambda)$ is still as in \eqref{op.theta.n}. Then we introduce $Q_n\in\C{B}(l^2(\D{Z}))$ and $\tilde\psi_n\in\C{Q}^0$ defined by
\begin{subequations} \label{pr41}
\begin{align} \label{pr41a} 
Q_n&\coloneqq\bigl(\theta_{n,n}\eul^{\ii\tilde\psi_n}\bigr)(\Lambda,S)=\Theta_n\,\eul^{\ii\tilde\psi_n}(\Lambda,S)\\
\label{tpf0}
\tilde\psi_n(j,\eul^{\ii\xi})&\coloneqq 2a_n(j)\sin\xi\,\bigl(1-\delta a(n)\cos\xi\bigr).
\end{align}
\end{subequations} 
The operators $Q_n$ are of finite rank. Moreover, by \eqref{an+}, \begin{equation} \label{41an}
a_n(j)=a(n)+(j-n)\delta a(n) 
\end{equation} 
for $\abs{j-n}\leq n/3$. Then we can write
\begin{subequations} \label{tpff'}
\begin{equation} \label{tpf}
\tilde\psi_n(j,\eul^{\ii\xi})\coloneqq\psi_n(\eul^{\ii\xi})+(j-n)\varphi_n(\eul^{\ii\xi})\text{ for }\abs{j-n}\leq n/3
\end{equation} 
with 
\begin{equation}  \label{tpf'}
\begin{cases} 
\psi_n(\eul^{\ii\xi})\coloneqq 2a(n)\sin\xi\,\bigl( 1-\delta a(n)\cos\xi\bigr),&\\
\varphi_n(\eul^{\ii\xi})\coloneqq 2\delta a(n)\sin\xi\,\bigl( 1-\delta a(n)\cos\xi\bigr).&  
\end{cases} 
\end{equation}
\end{subequations} 
By (H2), $a(n)=\ord(n^{\gamma})$ and $\delta a(n)=\ord(n^{\gamma-1})$ with $0<\gamma\leq 1/2$. Thus, for any $m\in\D{N}$ we have
\begin{subequations}  \label{tpsin.phin}
\begin{align}   \label{psin.estim}
\norm{\psi_n}_{\class^m(\cercle)}&=\ord(n^{\gamma})\\
\label{phin.estim}
\norm{\varphi_n}_{\class^m(\cercle)}&=\ord(n^{\gamma-1})\\
\label{tpsin.estim}
\sup_{\abs{j-n}\leq n/3}\norm{\tilde\psi_n(j,\,\cdot\,)}_{\class^m(\cercle)}&=\ord(n^{\gamma}).
\end{align}
\end{subequations}
Let us note that these $\tilde\psi_n$ satisfy properties \eqref{pf'Cmf} from Section~\ref{sec:751}: see indeed \eqref{tpff'} and \eqref{phin.estim}.

\begin{proposition}[approximation of $\eul^{\ii B_n}\Theta_n$ by $Q_n$]\label{prop:8} 
Let $B_n$ be given by \eqref{Bn''} and let $Q_n$ be defined by \eqref{pr41}. Then the difference $R_n\coloneqq\eul^{\ii B_n}\Theta_n-Q_n$ satisfies
\[ 
\norm{R_n}=\ord(n^{\gamma-1}\sqrt{\ln n}).
\] 
\end{proposition}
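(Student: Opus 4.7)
The plan is to interpolate $e^{iB_n}\Theta_n$ via a Duhamel argument. I would introduce the family
\[
Q_n(t) := \bigl(\theta_{n,n}\,e^{i\Phi_n(t)}\bigr)(\Lambda,S),\qquad \Phi_n(t,j,e^{i\xi}) := 2ta_n(j)\sin\xi - t^2 a_n(j)\,\delta a(n)\sin(2\xi),
\]
so that $Q_n(0)=\Theta_n$ and, using $2\sin\xi\cos\xi=\sin(2\xi)$, $\Phi_n(1,j,e^{i\xi}) = 2a_n(j)\sin\xi\bigl(1-\delta a(n)\cos\xi\bigr) = \tilde\psi_n(j,e^{i\xi})$, whence $Q_n(1)=Q_n$. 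Writing $\partial_t\bigl(e^{itB_n}\Theta_n - Q_n(t)\bigr) = iB_n\bigl(e^{itB_n}\Theta_n - Q_n(t)\bigr) + \bigl(iB_nQ_n(t) - \partial_tQ_n(t)\bigr)$, the unitarity of $e^{itB_n}$ together with the matching initial condition $Q_n(0)=\Theta_n$ reduces the proof to
\[
\norm{R_n} \le \int_0^1 \Bigl\lVert B_nQ_n(t) - \tfrac{1}{i}\partial_t Q_n(t)\Bigr\rVert\,\dd t.
\]

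Next I would compute $B_nQ_n(t)$ starting from $(B_nQ)(j,k) = ia_n(j)Q(j+1,k) - ia_n(j-1)Q(j-1,k)$. The crucial algebraic point is that $a_n$ is \emph{affine} in $j$ on $|j-n|\le n/3$ (by \eqref{an+} and \eqref{41an}), and the same is therefore true of $\Phi_n(t,\cdot,e^{i\xi})$: writing $\Phi_n(t,j,e^{i\xi}) = \psi_n^{(t)}(e^{i\xi}) + (j-n)\varphi_n^{(t)}(e^{i\xi})$ with $\varphi_n^{(t)}(e^{i\xi}) = \delta a(n)\bigl(2t\sin\xi - t^2\delta a(n)\sin(2\xi)\bigr)$, the phase shift $\Phi_n(t,j\pm 1,e^{i\xi}) - \Phi_n(t,j,e^{i\xi}) = \pm\varphi_n^{(t)}(e^{i\xi})$ is exact. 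An elementary manipulation (analogous to the one behind Lemma~\ref{lem:71}) then shows that the symbol of $B_nQ_n(t)$ equals
\[
\theta_{n,n}(j)\,e^{i\Phi_n(t,j,e^{i\xi})}\Bigl[2a_n(j)\sin(\xi-\varphi_n^{(t)}(e^{i\xi})) + i\,\delta a(n)\,e^{i(\xi-\varphi_n^{(t)}(e^{i\xi}))}\Bigr]
\]
plus a remainder localized to the transition zone $n/6<|j-n|<n/5$, of order $\ord(n^{\gamma-1})$ and coming from $\theta_{n,n}(j\pm 1) - \theta_{n,n}(j) = \ord(n^{-1})$ multiplied by $a_n(j)=\ord(n^\gamma)$. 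Meanwhile $\frac{1}{i}\partial_tQ_n(t)$ has symbol $\theta_{n,n}(j)\,\partial_t\Phi_n(t,j,e^{i\xi})\,e^{i\Phi_n(t,j,e^{i\xi})}$.

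The phase $\Phi_n(t)$ was designed precisely to solve the corresponding Hamilton--Jacobi equation to the required order. Indeed, since $\varphi_n^{(t)} = 2t\,\delta a(n)\sin\xi + \ord((\delta a(n))^2)$, Taylor expansion gives $\sin(\xi-\varphi_n^{(t)}) = \sin\xi - t\,\delta a(n)\sin(2\xi) + \ord((\delta a(n))^2)$, so that
\[
2a_n(j)\sin(\xi-\varphi_n^{(t)}(e^{i\xi})) - \partial_t\Phi_n(t,j,e^{i\xi}) = a_n(j)\cdot\ord((\delta a(n))^2) = \ord(n^{3\gamma-2}),
\]
while the extra $i\delta a(n)e^{i(\xi-\varphi_n^{(t)})}$ and the transition correction are $\ord(n^{\gamma-1})$. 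Since $\gamma\le 1/2$ gives $3\gamma-2\le\gamma-1$, the $\class^1(\cercle)$-norm of the full amplitude error (at fixed $j$) is $\ord(n^{\gamma-1})$ uniformly in $j$ with $|j-n|\le n/3$ and in $t\in[0,1]$. Writing $B_nQ_n(t) - \frac{1}{i}\partial_tQ_n(t) = \bigl(\theta_{n,n}\,e^{i\Phi_n(t)}\,q_n^{(t)}\bigr)(\Lambda,S)$ and applying Lemma~\ref{lem:75} --- whose hypotheses are satisfied since $\norm{\varphi_n^{(t)}}_{\class^2(\cercle)} = \ord(n^{\gamma-1}) \le 1/2$ for $n$ large --- yields $\bigl\lVert B_nQ_n(t) - \tfrac{1}{i}\partial_tQ_n(t)\bigr\rVert = \ord(n^{\gamma-1}\sqrt{\ln n})$, and integration over $t\in[0,1]$ gives the claim. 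The main obstacle I expect is the careful bookkeeping of the remainder produced by $\theta_{n,n}(j\pm 1) - \theta_{n,n}(j)$ in the transition region and the verification that $Q_n(t)$ stays within the framework of Lemma~\ref{lem:75} uniformly in $t$; both are controlled thanks to the exact affine structure of $a_n$ (and hence $\Phi_n(t)$) on $|j-n|\le n/3$.
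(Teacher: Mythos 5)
Your proposal is correct and follows essentially the same route as the paper: the same interpolating phase $\tilde\psi_n^{\,t}(j,\eul^{\ii\xi})=2a_n(j)t\sin\xi\,(1-t\delta a(n)\cos\xi)$, the same Duhamel reduction to $\int_0^1\norm{(\partial_t-\ii B_n)Q_n^{\,t}}\dd t$, the same use of the affine structure of $a_n$ to compute the shifted symbol exactly, the same $\ord(n^{3\gamma-2})$ Taylor cancellation (Lemma~\ref{lem:83}), and the same appeal to Lemma~\ref{lem:75} for the $\sqrt{\ln n}$ operator-norm bound. The only cosmetic difference is that the paper first replaces $B_n$ by $B_n'=\ii a_n(\Lambda)(S^{-1}-S)$ and introduces a wider cutoff $\theta_{3n/2,n}$ to isolate the transition-zone terms, whereas you keep $B_n$ and absorb the resulting $\ii\,\delta a(n)\eul^{\ii(\xi-\varphi_n^t)}$ discrepancy into the amplitude error; both are $\ord(n^{\gamma-1})$ and lead to the same conclusion.
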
 

\begin{proof} 
See Section \ref{sec:84}.     
\end{proof}

\subsection{An auxiliary computation}  \label{sec:83}

For $0\leq t\leq 1$ we define $\tilde\psi_n^t$ by
\begin{subequations} \label{pf12}
\begin{equation} \label{tpft}
\tilde\psi_n^t(j,\eul^{\ii\xi})\coloneqq 2a_n(j)t\sin\xi\,\bigl(1-t\delta a(n)\cos\xi\bigr).
\end{equation} 
By \eqref{41an}, for $\abs{j-n}\leq\frac{n}{3}$, we can also write\begin{equation} \label{pf1}
\tilde\psi_n^{\,t}(j,\eul^{\ii\xi})=\psi_n^{\,t}(\eul^{\ii\xi})+(j-n)\varphi_n^t(\eul^{\ii\xi})
\end{equation} 
with 
\begin{align} \label{pf2a}  
\psi_n^{\,t}(\eul^{\ii\xi})&\coloneqq 2t\,a(n)\sin\xi\,(1-t\delta a(n)\cos\xi),\\ 
\label{pf2b}
\varphi_n^t(\eul^{\ii\xi})&\coloneqq 2t\,\delta a(n)\sin\xi\,(1-t\delta a(n)\cos\xi).
\end{align}
\end{subequations} 
Thus, if $j,\,j+1\in\croch{2n/3,4n/3}$ we have the relation
\begin{equation} \label{pf3} 
\varphi_n^t(\eul^{\ii\xi})=\tilde\psi_n^{\,t}(j+1,\eul^{\ii\xi})-\tilde\psi_n^{\,t}(j,\eul^{\ii\xi}).
\end{equation} 
Using $a(n)=\ord(n^{\gamma})$ and $\delta a(n)=\ord(n^{\gamma-1})$ we find that for $m\in\D{N}$ there exists $C_m>0$ such that 
\begin{subequations} \label{Cmm'}
\begin{align} \label{Cm} 
&\sup_{0\leq t\leq 1}\norm{\psi_n^t}_{\class^m(\cercle)}\leq C_mn^{\gamma},\\
\label{Cm'} 
&\sup_{0\leq t\leq 1}\norm{\varphi_n^t}_{\class^m(\cercle)}\leq C_mn^{\gamma-1}.
\end{align}
\end{subequations}

\begin{lemma} \label{lem:83} 
Let $\tilde\psi_n^t$ and $\varphi_n^t$ be as in \eqref{pf12} for $0\leq t\leq 1$. Then we can write 
\begin{equation}  \label{eq:rntd}
a_n(j)\Im\bigl(2\eul^{\ii\varphi_n^t(\eul^{\ii\xi})-\ii\xi}\bigr)+\partial_t\tilde\psi_n^{\,t}(j,\eul^{\ii\xi})=a_n(j)\,r_n^{\,t}(\eul^{\ii\xi})
\end{equation} 
with $r_n^{\,t}\colon\cercle\to\D{R}$ satisfying $\sup_{0\leq t\leq 1}\norm{r_n^{\,t}}_{\class^0(\cercle)}=\ord(n^{2(\gamma-1)})$.
\end{lemma}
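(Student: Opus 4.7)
\medskip

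\noindent\textbf{Plan of proof.} The strategy is a direct computation: both sides of \eqref{eq:rntd} are explicit elementary functions of $\xi$, and what needs to be verified is that the leading-order contributions (linear in $\delta a(n)$) cancel, leaving a remainder of size $\delta a(n)^2 = \ord(n^{2(\gamma-1)})$.

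\medskip

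\noindent\textbf{Step 1.} I would first differentiate \eqref{tpft} in $t$ to get the clean expression
\[
\partial_t\tilde\psi_n^{\,t}(j,\eul^{\ii\xi})=2a_n(j)\sin\xi\,\bigl(1-2t\,\delta a(n)\cos\xi\bigr).
\]
Then, by expanding $\eul^{\ii\varphi_n^t(\eul^{\ii\xi})-\ii\xi}=(\cos\varphi_n^t+\ii\sin\varphi_n^t)(\cos\xi-\ii\sin\xi)$ and taking imaginary parts,
\[
\Im\bigl(2\eul^{\ii\varphi_n^t(\eul^{\ii\xi})-\ii\xi}\bigr)=-2\cos\varphi_n^t(\eul^{\ii\xi})\sin\xi+2\sin\varphi_n^t(\eul^{\ii\xi})\cos\xi.
\]
Substituting both into the left-hand side of \eqref{eq:rntd} and factoring out $a_n(j)$ gives
\[
r_n^{\,t}(\eul^{\ii\xi})=2\sin\xi\bigl(1-\cos\varphi_n^t(\eul^{\ii\xi})-2t\,\delta a(n)\cos\xi\bigr)+2\sin\varphi_n^t(\eul^{\ii\xi})\cos\xi,
\]
which in particular is independent of $j$, as required.

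\medskip

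\noindent\textbf{Step 2.} I would then use that $\norm{\varphi_n^t}_{\R{C}^0(\cercle)}=\ord(n^{\gamma-1})$, uniformly in $t\in[0,1]$, by \eqref{Cm'}. Taylor expansion yields
\[
\cos\varphi_n^t=1+\ord(n^{2(\gamma-1)}),\qquad \sin\varphi_n^t=\varphi_n^t+\ord(n^{3(\gamma-1)}),
\]
uniformly in $\xi$ and $t$. Substituting the explicit formula \eqref{pf2b} for $\varphi_n^t$ into $2\sin\varphi_n^t\cos\xi$ gives
\[
2\sin\varphi_n^t(\eul^{\ii\xi})\cos\xi=4t\,\delta a(n)\sin\xi\cos\xi+\ord(n^{2(\gamma-1)}),
\]
while the first parenthesis in the expression for $r_n^{\,t}$ equals $-2t\,\delta a(n)\cos\xi+\ord(n^{2(\gamma-1)})$, so that
\[
2\sin\xi\bigl(1-\cos\varphi_n^t-2t\,\delta a(n)\cos\xi\bigr)=-4t\,\delta a(n)\sin\xi\cos\xi+\ord(n^{2(\gamma-1)}).
\]
The two $4t\,\delta a(n)\sin\xi\cos\xi$ contributions cancel and $r_n^{\,t}(\eul^{\ii\xi})=\ord(n^{2(\gamma-1)})$, uniformly in $\xi\in\D{R}$ and $t\in[0,1]$.

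\medskip

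\noindent\textbf{Main obstacle.} There is no analytical difficulty; the whole content of the lemma is the algebraic cancellation of the terms linear in $\delta a(n)$. The only thing one has to be careful about is bookkeeping the remainders: making sure the $\cos\varphi_n^t$ expansion is carried to second order and the $\sin\varphi_n^t$ expansion only to first order (since $\sin\varphi_n^t$ gets multiplied by a bounded factor, whereas $1-\cos\varphi_n^t$ is itself quadratically small), so that all surviving terms are genuinely of order $\delta a(n)^2=\ord(n^{2(\gamma-1)})$.
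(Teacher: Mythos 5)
Your proof is correct and takes essentially the same approach as the paper's: both differentiate $\tilde\psi_n^{\,t}$ in $t$, identify the $j$-independent remainder $r_n^{\,t}$, and then Taylor-expand around $\varphi_n^t=0$ to exhibit the cancellation of the terms linear in $\delta a(n)$. The only cosmetic difference is that you rewrite $\Im\bigl(2\eul^{\ii\varphi_n^t-\ii\xi}\bigr)$ in real trigonometric form via $\cos\varphi_n^t$ and $\sin\varphi_n^t$, whereas the paper keeps the complex exponential $\eul^{\ii\varphi_n^t}$ and expands it directly.
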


\begin{proof} 
By differentiation of \eqref{tpft} we get 
\[ 
\partial_t\tilde\psi_n^{\,t}(j,\eul^{\ii\xi})=2a_n(j)\sin\xi\left(1-2t\delta a(n)\cos\xi\right)
\] 
for $j\in\D{Z}$ and $\xi\in\D{R}$. So we can actually write
\[
a_n(j)\Im\bigl(2\eul^{\ii\varphi_n^t(\eul^{\ii\xi})-\ii\xi}\bigr)+\partial_t\tilde\psi_n^{\,t}(j,\eul^{\ii\xi})=a_n(j)r_n^{\,t}(\eul^{\ii\xi})
\] 
with
\begin{equation} \label{rnt}
r_n^t(\eul^{\ii\xi})\coloneqq\Im\bigl(2\eul^{\ii\varphi_n^t(\eul^{\ii\xi})}\eul^{-\ii\xi}\bigr)+2\sin\xi\bigl(1-2t\delta a(n)\cos\xi\bigr).
\end{equation}
It remains to estimate $\norm{r_n^t}_{\class^0(\cercle)}$. Using \eqref{Cm'} for $m=0$, we have
\[ 
\left\lvert\eul^{\ii\varphi_n^t(\eul^{\ii\xi})}-1-\ii\varphi_n^t(\eul^{\ii\xi})\right\rvert\leq 2\bigl\lvert\varphi_n^t(\eul^{\ii\xi})\bigr\rvert^2=\ord(n^{2(\gamma-1)}),
\]
uniformly in $t\in\croch{0,1}$ and $\xi\in\D{R}$. Hence,
\[
\eul^{\ii\varphi_n^t(\eul^{\ii\xi})}\eul^{-\ii\xi}=(1+\ii\varphi_n^t(\eul^{\ii\xi}))\eul^{-\ii\xi}+\ord(n^{2(\gamma-1)}). 
\]
By \eqref{pf2b} and assumption $\delta a(n)=\ord(n^{\gamma-1})$ from (H2) we have
\[
\varphi_n^t(\eul^{\ii\xi})=2t\delta a(n)\sin\xi+\ord(n^{2(\gamma-1)}),
\]
hence $\eul^{\ii\varphi_n^t(\eul^{\ii\xi})}\eul^{-\ii\xi}=(1+2\ii t\delta a(n)\sin\xi)\eul^{-\ii\xi}+\ord(n^{2(\gamma-1)})$. Thus, 
\[
\Im\bigl(2\eul^{\ii\varphi_n^t(\eul^{\ii\xi})}\eul^{-\ii\xi}\bigr)=-2\sin\xi\bigl(1-2t\delta a(n)\cos\xi\bigr)+\ord(n^{2(\gamma-1)}),
\]
i.e., $r_n^t(\eul^{\ii\xi})=\ord(n^{2(\gamma-1)})$.
\end{proof} 

\subsection{Proof of Proposition \ref{prop:8}} \label{sec:84}

We consider the operators $Q_n^{\,t}\in\C{B}\bigl(l^2(\D{Z})\bigr)$ defined by
\[ 
Q_n^{\,t}\coloneqq q_n^{\,t}(\Lambda,S),
\] 
where $q_n^{\,t}(j,\eul^{\ii\xi})\coloneqq\theta_{n,n}(j)\eul^{\ii\tilde\psi_n^{\,t}(j,\eul^{\ii\xi})}$ with $\tilde\psi_n^{\,t}$ as in \eqref{tpft}. The matrix coefficients of $Q_n^{\,t}$ are given by
\[
Q_n^{\,t}(j,k)=\theta_{n,n}(j)\int_0^{2\pi}x_n^{\,t,\xi}(j)\,\eul^{\ii k\xi}\,\frac{\dd\xi}{2\pi}
\]
with
\begin{equation}  \label{xntxi}
x_n^{\,t,\xi}(j)\coloneqq\eul^{\ii\tilde\psi_n^{\,t}(j,\eul^{\ii\xi})-\ii j\xi}.
\end{equation}
By \eqref{pf12} and \eqref{Cm'} for $m=2$, Lemma \ref{lem:75} applies and gives  
\begin{equation}  \label{normQnt}
\sup_{0\leq t\leq 1}\norm{Q_n^{\,t}}=\ord(\sqrt{\ln n}).
\end{equation}    
Since $Q_n^0=\Theta_n$, we can express
\[ 
Q_n^1-\eul^{\ii B_n}\Theta_n=\int_0^1\partial_t\bigl(\eul^{\ii(1-t)B_n}Q_n^{\,t}\bigr)\,\dd t=\int_0^1\eul^{\ii(1-t)B_n}\left(\partial_t-\ii B_n\right)Q_n^{\,t}\,\dd t,
\]
and it remains to prove  
\begin{equation} \label{normQn}
\sup_{0\leq t\leq 1}\norm{\left(\partial_t-\ii B_n\right)Q_n^{\,t}}=\ord(n^{\gamma-1}\sqrt{\ln n}).
\end{equation}
To prove \eqref{normQn} we first show that $B_n\coloneqq\ii\bigl(a_n(\Lambda)S^{-1}-Sa_n(\Lambda)\bigr)$ can be replaced by 
\[ 
B_n'\coloneqq\ii\,a_n(\Lambda)(S^{-1}-S)=B_n+\ii\croch{S,\,a_n(\Lambda)}. 
\] 
For this purpose we observe that the estimates 
\begin{align*}
&\norm{\croch{S,\,a_n(\Lambda)}}=\norm{\delta a_n(\Lambda)}=\ord(n^{\gamma-1})\\
&\norm{\croch{S^{\pm 1},\,\Theta_n}a_n(\Lambda)}=\ord(n^{\gamma-1})
\end{align*}
imply 
\begin{equation} \label{Bn'}
\norm{B_n\Theta_n-\Theta_nB_n'}=\ord(n^{\gamma-1}).
\end{equation}  
We introduce the operators $\hat Q_n^{\,t}\in\C{B}\bigl(l^2(\D{Z})\bigr)$ defined by   
\[
\hat Q_n^{\,t}=\hat q_n^{\,t}(\Lambda,S)
\]
with $\hat q_n^{\,t}(j,\eul^{\ii\xi})\coloneqq\theta_{3n/2,n}(j)\,\eul^{\ii\hat\psi_n^{\,t}(j,\eul^{\ii\xi})}$. The matrix coefficients of $\hat Q_n^{\,t}$ are given by
\[
\hat Q_n^{\,t}(j,k)=\theta_{3n/2,n}(j)\int_0^{2\pi}x_n^{\,t,\xi}(j)\,\eul^{\ii k\xi}\,\frac{\dd\xi}{2\pi}\,,
\]
with $x_n^{\,t,\xi}(j)$ still given by \eqref{xntxi}. If $\theta_{n,n}(j)\neq 0$ then $\theta_{3n/2,n}(j)=1$, and thus $\theta_{n,n}\theta_{3n/2,n}=\theta_{n,n}$, hence $Q_n^{\,t}=\Theta_n\hat Q_n^{\,t}$ and $B_nQ_n^{\,t}-\Theta_nB_n'\hat Q_n^{\,t}=(B_n\Theta_n-\Theta_nB_n')\hat Q_n^{\,t}$. Lemma \ref{lem:75} applies and gives
\[
\sup_{0\leq t\leq 1}\norm{\hat Q_n^{\,t}}=\ord(\sqrt{\ln n}).
\]
Using this estimate and \eqref{Bn'} we get 
\[
\sup_{0\leq t\leq 1}\norm{B_nQ_n^{\,t}-\Theta_nB_n'\hat Q_n^{\,t}}=\ord(n^{\gamma-1}\sqrt{\ln n}).
\] 
Let us denote $P_n^{\,t}\coloneqq\Theta_nB_n'\hat Q_n^{\,t}$. Thus, instead of \eqref{normQn} it suffices to show the estimate 
\begin{equation} \label{normQn'} 
\sup_{0\leq t\leq 1}\norm{\partial_tQ_n^{\,t}-\ii P_n^{\,t}}=\ord(n^{\gamma-1}\sqrt{\ln n}).
\end{equation}  
Since $P_n^{\,t}=b_n(\Lambda)(S^{-1}-S)\hat Q_n^{\,t}$ with $b_n(j)\coloneqq\ii\,\theta_{n,n}(j)a_n(j)$, we have
\begin{align*}
&P_n^{\,t}(j,k)=b_n(j)\bigl(\hat Q_n^{\,t}(j+1,k)-\hat Q_n^{\,t}(j-1,k)\bigr)\\
&=\ii\,\theta_{n,n}(j)a_n(j)\int_0^{2\pi}\bigl(\theta_{3n/2,n}(j+1)x_n^{\,t,\xi}(j+1)-\theta_{3n/2,n}(j-1)x_n^{\,t,\xi}(j-1)\bigr)\,\eul^{\ii k\xi}\,\frac{\dd\xi}{2\pi}.
\end{align*}
Further on, we assume $n\geq 20$. We then have $\frac{n}{4}-1\geq\frac{n}{5}$, hence $\theta_{3n/2,n}(j\pm 1)=1$ if $\theta_{n,n}(j)\neq 0$. Thus, $\theta_{n,n}(j)\theta_{3n/2,n}(j\pm1)=\theta_{n,n}(j)$ and we can write 
\begin{equation}  \label{Pn}
P_n^{\,t}(j,k)=\ii\,\theta_{n,n}(j)a_n(j)\int_0^{2\pi}\bigl(x_n^{\,t,\xi}(j+1)-x_n^{\,t,\xi}(j-1)\bigr)\,\eul^{\ii k\xi}\,\frac{\dd\xi}{2\pi}\,.
\end{equation}
For $\abs{j-n}\leq\frac{n}{5}$ we have $\abs{j\pm1-n}\leq\frac{n}{3}$, and \eqref{pf3} applies, $x_n^{\,t,\xi}(j\pm1)=x_n^{\,t,\xi}(j)\,\eul^{\pm\ii\varphi_n^t(\eul^{\ii\xi})\mp\ii\xi}$ and
\[
x_n^{\,t,\xi}(j+1)-x_n^{\,t,\xi}(j-1)=\ii\,x_n^{\,t,\xi}(j)\,\Im\bigl(2\eul^{\ii\varphi_n^t(\eul^{\ii\xi})-\ii\xi}\bigr).
\]
Thus, for $\abs{j-n}\leq n/5$, using \eqref{Pn} we can express  
\begin{equation} \label{reste}
(\partial_tQ_n^{\,t}-\ii P_n^{\,t})(j,k)=\theta_{n,n}(j)\int_0^{2\pi}y_n^{\,t,\xi}(j)\,\eul^{\ii k\xi}\,\frac{\dd\xi}{2\pi}
\end{equation} 
with 
\[
y_n^{\,t,\xi}(j)\coloneqq\partial_tx_n^{\,t,\xi}(j)+\ii\,a_n(j)x_n^{\,t,\xi}(j)\,\Im\bigl(2\eul^{\ii\varphi_n^t(\eul^{\ii\xi})-\ii\xi}\bigr).
\]
Using \eqref{eq:rntd} from Lemma~\ref{lem:83} and
\[ 
\partial_tx_n^{\,t,\xi}(j)=\ii\,x_n^{\,t,\xi}(j)\,\partial_t\tilde\psi_n^{\,t}(j,\eul^{\ii\xi})
\] 
we obtain
\[
y_n^{\,t,\xi}(j)=\ii\,x_n^{\,t,\xi}(j)a_n(j)r_n^t(\eul^{\ii\xi})
\]
with $r_n^t$ given by \eqref{rnt}. Let us note that both sides of \eqref{reste} vanish for $\abs{j-n}\geq\frac{n}{5}$. Thus, \eqref{reste} is valid for any $j,k$ and can be written
\begin{align*}
(\partial_tQ_n^{\,t}-\ii P_n^{\,t})(j,k)&=\theta_{n,n}(j)\int_0^{2\pi}\ii\,x_n^{\,t,\xi}(j)a_n(j)\,r_n^t(\eul^{\ii\xi})\,\eul^{\ii k\xi}\,\frac{\dd\xi}{2\pi}\\
&=\ii\,a_n(j)\int_0^{2\pi}q_n^{\,t}(j,\eul^{\ii\xi})\,r_n^t(\eul^{\ii\xi})\,\eul^{\ii(k-j)\xi}\,\frac{\dd\xi}{2\pi}\,.
\end{align*}
By properties \eqref{cf0} and \eqref{cf01} from Section \ref{sec:72} these relations mean that
\begin{equation} \label{fin4}
\partial_tQ_n^{\,t}-\ii P_n^{\,t}=\ii\,a_n(\Lambda)\,Q_n^{\,t}\,r_n^{\,t}(S). 
\end{equation}
Since Lemma \ref{lem:83} ensures $\norm{r_n^{\,t}(S)}=\ord(n^{2(\gamma-1)})$, uniformly in $t$, using \eqref{normQnt} and $\norm{a_n(\Lambda)}=\ord(n^{\gamma})$ we conclude that the norm of  \eqref{fin4} is $\ord(n^{3\gamma-2}\sqrt{\ln n})$, uniformly in $t$. We thus get \eqref{normQn'} since $\gamma\leq\frac{1}{2}$ implies $3\gamma-2\leq\gamma-1$. The proof of Proposition \ref{prop:8} is completed.

\section{Proof of Proposition \ref{prop:6} \textup{(i)}}  \label{sec:9} 
\subsection{Plan of Section~\ref{sec:9}} \label{sec:91} 

To prove the estimate $H_n(t)(j,j)=\ord(n^{-\gamma/2})$, uniformly for $\abs{t}\leq t_0$ and $\abs{j-n}\leq n^{\gamma}$ we first decompose $H_n(t)$ into a sum of components $H_n^{\,\omega,t}$ and prove Lemma~\ref{lem:91} allowing us to replace them by simpler operators $Q_n^{\,\omega,t}$. Then each diagonal entry $Q_n^{\,\omega,t}(j,j)$ can be expressed by an oscillatory integral \eqref{Q1} whose phase $\tilde\psi_n^{\,\omega,t}$ is investigated in Section~\ref{sec:93}. In Section~\ref{sec:94} we estimate this integral through a suitable version of the method of stationary phase. The proof of Proposition \ref{prop:6} (i) is completed in Section~\ref{sec:95}.

\subsection{Approximation of $\BS{H_n(t)}$}  \label{sec:92}
\subsubsection{Decomposition of $H_n(t)$ into components $H_n^{\,\omega,t}$}

Since $v$ is periodic of period $N$ we have 
\[ 
v(k)=\sum_{\omega\in\Omega}c_{\omega}\eul^{\ii\omega k}  
\] 
where $c_{\omega}\in\D{C}$ are constants, $\Omega=\frac{1}{N}2\pi\D{Z}/2\pi\D{Z}=\accol{2\pi j/N:j=0,1,\dots,N-1}$. Since $\av{v}=0$ implies $c_0=0$ we have the decomposition 
\[ 
v(\Lambda)=\sum_{\omega\in\Omega^*}c_{\omega}\eul^{\ii\omega\Lambda}    
\] 
where $\Omega^*=\Omega\setminus\accol{0}$. Let us note that $\eul^{2\pi\ii\Lambda}=I$. Let $v_n=v\,\theta_{n,n}^2$ be as in \eqref{vn}. Thus, $v_n(\Lambda)=(\theta_{n,n}^2v)(\Lambda)=\Theta_n^2\,v(\Lambda)$. Recall that by \eqref{Hnt} we have $H_n(t)\coloneqq\eul^{-\ii tL_{0,n}}(L_n-L_{0,n})\eul^{\ii tL_{0,n}}$. By \eqref{Ln=} and \eqref{Vn'},
\[
L_n-L_{0,n}=\tilde V_n=\eul^{\ii B_n}v_n(\Lambda)\eul^{-\ii B_n},
\]
so we can write and expand $H_n(t)$ as follows: 
\[ 
H_n(t)=\eul^{-\ii tL_{0,n}}\eul^{\ii B_n}\,\Theta_n^2v(\Lambda)\,\eul^{-\ii B_n}\eul^{\ii tL_{0,n}}=\sum_{\omega\in\Omega^*}c_{\omega} H_n^{\,\omega,t}     
\] 
with 
\begin{equation}  \label{hnomt}
H_n^{\,\omega,t}\coloneqq\eul^{-\ii tL_{0,n}}\eul^{\ii B_n}\,\Theta_n^2\eul^{\ii\omega\Lambda}\,\eul^{-\ii B_n}\eul^{\ii tL_{0,n}}. 
\end{equation}

\subsubsection{Approximants $Q_n^{\,\omega,t}$}

We approximate $H_n^{\,\omega,t}$ for large $n$ by
\begin{subequations}  \label{Qtpsiwt}
\begin{equation} \label{Qwt}
Q_n^{\,\omega,t}\coloneqq\eul^{\ii\omega\Lambda}\,\bigl(\theta_{n,n}^2\eul^{\ii\tilde\psi_n^{\,\omega,t}}\bigr)(\Lambda,S),
\end{equation}
where the phase $\tilde\psi_n^{\,\omega,t}\in\C{Q}^0$ is chosen as follows:
\begin{equation} \label{tpsiwt}
\tilde\psi_n^{\,\omega,t}\coloneqq(\tilde\psi_n\circ\tilde\tau_{\omega}-\tilde\psi_n)\circ\tilde\vartheta_n\circ\tilde\tau_t   
\end{equation} 
\end{subequations}
with $\tilde\psi_n$ as in \eqref{tpf0}. We noticed in Section~\ref{sec:82} that these $\tilde\psi_n$ satisfy \eqref{pf'Cmf}. Thus, all constructions and results of Section~\ref{sec:751} apply. In particular $\tilde\vartheta_n$ is defined by \eqref{tildevt} for $n\geq n_0=n_0(\accol{\varphi_n})$.

\subsubsection{Approximation of $H_n^{\,\omega,t}$}

\begin{lemma}[approximation of $H_n^{\,\omega,t}$ by $Q_n^{\,\omega,t}$]\label{lem:91} 
Let $H_n^{\,\omega,t}$ be as in \eqref{hnomt}. If $Q_n^{\,\omega,t}$ is defined for large $n$ by \eqref{Qtpsiwt}, then the difference $R_n^{\,\omega,t}\coloneqq H_n^{\,\omega,t}-Q_n^{\,\omega,t}$ satisfies
\[
\sup_{\abs{t}\leq t_0}\norm{R_n^{\,\omega,t}}=\ord(n^{\gamma-1}\ln n).
\]
\end{lemma}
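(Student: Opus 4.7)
My plan follows the strategy of Section \ref{sec:8} at one more level of refinement. Since $\Theta_n$ is diagonal I factor
\[
H_n^{\,\omega,t}=\eul^{-\ii tL_{0,n}}(\eul^{\ii B_n}\Theta_n)\,\eul^{\ii\omega\Lambda}\,(\Theta_n\eul^{-\ii B_n})\eul^{\ii tL_{0,n}}.
\]
Proposition \ref{prop:8} replaces $\eul^{\ii B_n}\Theta_n$ by $Q_n$ with norm error $\ord(n^{\gamma-1}\sqrt{\ln n})$, and by taking adjoints, $\Theta_n\eul^{-\ii B_n}$ by $Q_n^*$ with the same error. Since $\tilde\psi_n$ satisfies the hypotheses of Lemma \ref{lem:75} via \eqref{tpsin.estim}, $\|Q_n\|=\ord(\sqrt{\ln n})$; multiplying these bounded factors yields a total replacement error of $\ord(n^{\gamma-1}\ln n)$, reducing matters to controlling
\[
\widetilde H_n^{\,\omega,t}\coloneqq\eul^{-\ii tL_{0,n}}Q_n\eul^{\ii\omega\Lambda}Q_n^*\eul^{\ii tL_{0,n}}.
\]

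To put $\widetilde H_n^{\,\omega,t}$ in closed $q(\Lambda,S)$ form I use Lemma \ref{lem:72}, which gives $Q_n\eul^{\ii\omega\Lambda}=\eul^{\ii\omega\Lambda}(\theta_{n,n}\eul^{\ii\tilde\psi_n\circ\tilde\tau_{\omega}})(\Lambda,S)$. The composition formula Lemma \ref{lem:74} then produces
\[
(\theta_{n,n}\eul^{\ii\tilde\psi_n\circ\tilde\tau_{\omega}})(\Lambda,S)\,Q_n^*=(\theta_{n,n}\eul^{\ii\phi_n^{\,\omega}})(\Lambda,S)\,\F{p}_n(S)\,\Theta_n,
\]
with $\phi_n^{\,\omega}\coloneqq(\tilde\psi_n\circ\tilde\tau_{\omega}-\tilde\psi_n)\circ\tilde\vartheta_n$; its hypotheses hold since $\tilde\psi_n$ satisfies \eqref{pf'Cmf}, as observed after \eqref{tpff'}. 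Because $\eul^{\ii\omega\Lambda}$ commutes with the diagonal $\eul^{\pm\ii tL_{0,n}}$, I arrive at
\[
\widetilde H_n^{\,\omega,t}=\eul^{\ii\omega\Lambda}\,\eul^{-\ii tL_{0,n}}(\theta_{n,n}\eul^{\ii\phi_n^{\,\omega}})(\Lambda,S)\,\F{p}_n(S)\,\Theta_n\,\eul^{\ii tL_{0,n}}.
\]

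The crux is swapping $\eul^{\pm\ii tL_{0,n}}$ for $\eul^{\pm\ii t\Lambda}$, which produces the $\tilde\tau_t$ built into the definition \eqref{tpsiwt} of $\tilde\psi_n^{\,\omega,t}$. Writing $L_{0,n}=\Lambda+a_{1,n}(\Lambda)$ with both summands diagonal and commuting, $\eul^{-\ii tL_{0,n}}=\eul^{-\ii ta_{1,n}(\Lambda)}\eul^{-\ii t\Lambda}$; the $\eul^{\pm\ii t\Lambda}$ part acts as $\tilde\tau_{\pm t}$-composition on symbols (Lemma \ref{lem:72}), while the $\eul^{\pm\ii ta_{1,n}(\Lambda)}$ part contributes a correction $-\ii\int_0^t\eul^{-\ii sa_{1,n}(\Lambda)}\croch{a_{1,n}(\Lambda),X}\eul^{\ii sa_{1,n}(\Lambda)}\dd s$, where $X$ is the intermediate symbol operator. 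For $X=(\theta_{n,n}\eul^{\ii\phi_n^{\,\omega}})(\Lambda,S)$, the decomposition $a_{1,n}(j)-a_{1,n}(k)=(j-k)\,\delta a_{1,n}(n)+R(j,k)$ with $|R(j,k)|\leq C|j-k|^2\|\delta^2 a_{1,n}\|_\infty$, combined with $\|\delta a_{1,n}\|_\infty=\ord(n^{2\gamma-2})$, $\|\delta^2 a_{1,n}\|_\infty=\ord(n^{2\gamma-3})$ (proof of Lemma \ref{lem:32}), and the identification $(j-k)^mX(j,k)\leftrightarrow\ii^m\partial_\xi^mq(\Lambda,S)$ via integration by parts (whose norms are $\ord(n^{m\gamma}\sqrt{\ln n})$ by Lemma \ref{lem:75}), yields $\|\croch{a_{1,n}(\Lambda),X}\|=\ord(n^{3\gamma-2}\sqrt{\ln n})\subset\ord(n^{\gamma-1}\sqrt{\ln n})$ since $\gamma\leq 1/2$; the same bound holds for $X=\F{p}_n(S)$. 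Applying Lemma \ref{lem:72} to the remaining $\eul^{\pm\ii t\Lambda}$-conjugation then produces $\tilde\psi_n^{\,\omega,t}=\phi_n^{\,\omega}\circ\tilde\tau_t$ and $(\F{p}_n\circ\tau_t)(S)$ in the right places, so $\widetilde H_n^{\,\omega,t}\approx\eul^{\ii\omega\Lambda}(\theta_{n,n}\eul^{\ii\tilde\psi_n^{\,\omega,t}})(\Lambda,S)(\F{p}_n\circ\tau_t)(S)\Theta_n$.

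It remains to clean up the two trailing factors. Lemma \ref{lem:73} gives $\|\F{p}_n-1\|_{\class^0(\cercle)}=\ord(n^{\gamma-1})$, so $(\F{p}_n\circ\tau_t)(S)$ can be replaced by $I$ at a cost of $\ord(n^{\gamma-1}\sqrt{\ln n})$ after absorbing the neighboring symbol-operator norm. The phase $\tilde\psi_n^{\,\omega,t}$ still satisfies the structural hypothesis of Lemma \ref{lem:76} with $\class^2$-norm $\ord(n^\gamma)$ (preserved through composition with $\tilde\vartheta_n$ via \eqref{normtildexin}), so $\|\croch{\Theta_n,(\theta_{n,n}\eul^{\ii\tilde\psi_n^{\,\omega,t}})(\Lambda,S)}\|=\ord(n^{\gamma-1}\sqrt{\ln n})$; moving $\Theta_n$ to the left turns the residual product into $(\theta_{n,n}^2\eul^{\ii\tilde\psi_n^{\,\omega,t}})(\Lambda,S)=\eul^{-\ii\omega\Lambda}Q_n^{\,\omega,t}$ by \eqref{Qwt}. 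The cumulative error totals $\ord(n^{\gamma-1}\ln n)$, proving the lemma. The main obstacle is the $\eul^{\pm\ii tL_{0,n}}$-conjugation step, where the bound requires second-order control of $a_{1,n}$ combined with the pseudodifferential identification of $(j-k)^m$-weighted matrix elements to simultaneously leverage $\|\delta a_{1,n}\|_\infty=\ord(n^{2\gamma-2})$ and $\|\delta^2 a_{1,n}\|_\infty=\ord(n^{2\gamma-3})$ and absorb the $\ord(n^{m\gamma})$ growth coming from Lemma \ref{lem:75}.
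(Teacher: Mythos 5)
Your overall architecture coincides with the paper's: Proposition~\ref{prop:8} to replace $\eul^{\ii B_n}\Theta_n$ by $Q_n$, Lemmas~\ref{lem:72} and~\ref{lem:74} for the composition, then removal of $\F{p}_n(S)$ via \eqref{xin'} and of the commutator with $\Theta_n$ via Lemma~\ref{lem:76}. The difference is that you perform the symbol approximation \emph{first} and only then conjugate by $\eul^{\pm\ii tL_{0,n}}$, whereas the paper (Step~4 of its proof) first replaces $\eul^{\pm\ii tL_{0,n}}$ by $\eul^{\pm\ii t\Lambda}$ while the conjugated operator is still $H_n^{\,\omega,0}=\eul^{\ii B_n}\Theta_n^2\eul^{\ii\omega\Lambda}\eul^{-\ii B_n}$. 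There the diagonal middle factor commutes exactly with $a_{1,n}(\Lambda)$, so the error reduces to $\norm{\croch{B_n,a_{1,n}(\Lambda)}}=\ord(n^{3\gamma-2})$ --- cheap because $B_n$ contains only the single shifts $S^{\pm1}$, hence exactly one factor $\delta a_{1,n}=\ord(n^{2\gamma-2})$ against one factor $a_n=\ord(n^{\gamma})$. Your ordering instead requires $\norm{\croch{a_{1,n}(\Lambda),X}}$ for the symbol operator $X=(\theta_{n,n}\eul^{\ii\phi_n^{\,\omega}})(\Lambda,S)$, whose symbol oscillates with amplitude $\asymp n^{\gamma}$, i.e.\ whose matrix elements are spread over $\abs{j-k}\lesssim n^{\gamma}$. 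This is where your argument has a genuine gap.

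Your linear term is fine: $(j-k)\,\delta a_{1,n}(n)\,X(j,k)$ is $\delta a_{1,n}(n)$ times the symbol operator with amplitude $\partial_\xi\phi_n^{\,\omega}$, and Lemma~\ref{lem:75} gives $\ord(n^{2\gamma-2}\cdot n^{\gamma}\sqrt{\ln n})=\ord(n^{3\gamma-2}\sqrt{\ln n})$. But the remainder is not under control as written. First, the bound $\abs{R(j,k)}\leq C\abs{j-k}^2\norm{\delta^2a_{1,n}}_\infty$ is false for an expansion based at $n$ (the correct bound is $\abs{j-k}\,(\abs{j-n}+\abs{k-n})\norm{\delta^2a_{1,n}}_\infty$), and $R(j,k)$ is only \emph{bounded by}, not equal to, a multiple of $(j-k)^2$, so the dictionary $(j-k)^mX(j,k)\leftrightarrow\ii^m\partial_\xi^m q$ does not apply to it; also $\norm{\delta^2a_{1,n}}_\infty=\ord(n^{2\gamma-3})$ is not established in the paper (the proof of Lemma~\ref{lem:32} only treats $\delta a_{1,n}$). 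More seriously, even the corrected estimate based on global difference bounds of $a_{1,n}$ fails at the endpoint: one gets at best $\ord\bigl(n^{2\gamma-2}\sum_m\abs{m}\,\abs{\hat p(m)}\bigr)$ with $\sum_m\abs{m}\,\abs{\hat p(m)}\asymp n^{3\gamma/2}$ for $p=\eul^{\ii\mu\sin\xi}$, $\mu\asymp n^{\gamma}$, i.e.\ $\ord(n^{7\gamma/2-2})$, which exceeds the target $\ord(n^{\gamma-1}\ln n)$ whenever $\gamma>2/5$, in particular for the Jaynes--Cummings case $\gamma=1/2$. The step can be repaired by observing that on $\abs{k-n}\leq n/3$ the function $a_{1,n}(k)=a_n(k-1)^2-a_n(k)^2$ is \emph{exactly affine} (since $a_n$ is there exactly linear), so $R(j,k)$ vanishes unless $\abs{k-n}>n/3-2$, which on the support of $\theta_{n,n}(j)$ forces $\abs{j-k}\gtrsim n$, where $\abs{X(j,k)}=\ord(n^{-\infty})$ by repeated integration by parts. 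Without this observation (or without reverting to the paper's ordering), the conjugation step does not close.
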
  

\begin{proof} 
We first treat the case $t=0$ in Steps 1--3. The general case is treated in Step 4. 

\begin{stepone*}\sl
Estimate of $R_{n,1}^{\,\omega,0}\coloneqq H_n^{\omega,0}-Q_n\eul^{\ii\omega\Lambda}Q_n^*$.  
\end{stepone*}

Let $Q_n=\bigl(\theta_{n,n}\eul^{\ii\tilde\psi_n}\bigr)(\Lambda,S)$ and $\eul^{\ii B_n}\Theta_n=Q_n+R_n$ be as in Proposition \ref{prop:8}. By \eqref{hnomt},
\begin{align*}
H_n^{\omega,0}&=\eul^{\ii B_n}\Theta_n\eul^{\ii\omega\Lambda}\Theta_n\eul^{-\ii B_n}\\
&=\eul^{\ii B_n}\Theta_n\eul^{\ii\omega\Lambda}(Q_n^*+R_n^*) \\ 
&=(Q_n+R_n)\eul^{\ii\omega\Lambda}Q_n^*+\eul^{\ii B_n}\Theta_n\eul^{\ii\omega\Lambda}R_n^*.
\end{align*}
Thus the difference $R_{n,1}^{\,\omega,0}\coloneqq H_n^{\omega,0}-Q_n\eul^{\ii\omega\Lambda}Q_n^*$ can be written
\[
R_{n,1}^{\,\omega,0}=R_n\eul^{\ii\omega\Lambda}Q_n^*+\eul^{\ii B_n}\Theta_n\eul^{\ii\omega\Lambda}R_n^*.
\] 
Using estimates $\norm{R_n}=\ord(n^{\gamma-1}\sqrt{\ln n})$ from Proposition \ref{prop:8}, $\norm{Q_n}=\ord(\sqrt{\ln n})$ from Lemma \ref{lem:75}, and $\norm{\Theta_n}\leq 1$, we finally get
\[
\norm{R_{n,1}^{\,\omega,0}}\leq\norm{R_n}\,(\norm{Q_n}+1)=\ord(n^{\gamma-1}\ln n). 
\] 

\begin{steptwo*}\sl
Estimate of the difference $R_{n,2}^{\,\omega,0}\coloneqq Q_n\eul^{\ii\omega\Lambda}Q_n^*-\widetilde Q_n^{\,\omega,0}$ where $\widetilde Q_n^{\,\omega,0}\coloneqq\eul^{\ii\omega\Lambda}\,\bigl(\theta_{n,n}\eul^{\ii\tilde\psi_n^{\,\omega,0}}\bigr)(\Lambda,S)\,\Theta_n$.
\end{steptwo*}
 
By Lemma \ref{lem:72},
\[
\eul^{-\ii\omega\Lambda}Q_n\eul^{\ii\omega\Lambda}=\eul^{-\ii\omega\Lambda}\,\bigl(\theta_{n,n}\eul^{\ii\tilde\psi_n}\bigr)(\Lambda,S)\,\eul^{\ii\omega\Lambda}=\bigl(\theta_{n,n}\eul^{\ii\tilde\psi_n\circ\tilde\tau_{\omega}}\bigr)(\Lambda,S).
\] 
Hence,
\[
Q_n\eul^{\ii\omega\Lambda}Q_n^*=\eul^{\ii\omega\Lambda}\,\bigl(\theta_{n,n}\eul^{\ii\tilde\psi_n\circ\tilde\tau_{\omega}}\bigr)(\Lambda,S)\,\bigl((\theta_{n,n}\eul^{\ii\tilde\psi_n})(\Lambda,S)\bigr)^*.  
\] 
Then the composition formula \eqref{com} from Lemma \ref{lem:74} gives 
\begin{equation} \label{com.appl}
Q_n\eul^{\ii\omega\Lambda}Q_n^*=\eul^{\ii\omega\Lambda}\,\bigl(\theta_{n,n}\eul^{\ii\tilde\psi_n^{\,\omega,0}}\bigr)(\Lambda,S)\,\F{p}_n(S)\,\Theta_n
\end{equation}
with $\F{p}_n$ as in \eqref{compo} and $\tilde\psi_n^{\,\omega,0}=(\tilde\psi_n\circ\tilde\tau_{\omega}-\tilde\psi_n)\circ\tilde\vartheta_n$. Using \eqref{com.appl} we find that
\[
R_{n,2}^{\,\omega,0}\coloneqq Q_n\eul^{\ii\omega\Lambda}Q_n^*-\widetilde Q_n^{\,\omega,0}=\eul^{\ii\omega\Lambda}\,\bigl(\theta_{n,n}\eul^{\ii\tilde\psi_n^{\,\omega,0}}\bigr)(\Lambda,S)(\F{p}_n(S)-I)\Theta_n.
\]
Lemma \ref{lem:75} gives the estimate $\norm{\bigl(\theta_{n,n}\eul^{\ii\tilde\psi_n^{\,\omega,0}}\bigr)(\Lambda,S)}=\ord(\sqrt{\ln n})$. Moreover, $\F{p}_n-1=\ord(n^{\gamma-1})$ by \eqref{xin'}. Using also $\norm{\Theta_n}\leq 1$, we finally get 
\[
\norm{R_{n,2}^{\,\omega,0}}=\ord(n^{\gamma-1}\sqrt{\ln n}).
\] 
 
\begin{stepthree*}\sl
Estimate of $R_{n,3}^{\,\omega,0}\coloneqq\widetilde Q_n^{\,\omega,0}-Q_n^{\,\omega,0}$. End of proof of Lemma \ref{lem:91} for $t=0$.  
\end{stepthree*}

We have $R_n^{\,\omega,0}=R_{n,1}^{\,\omega,0}+R_{n,2}^{\,\omega,0}+R_{n,3}^{\,\omega,0}$. To prove Lemma \ref{lem:91} for $t=0$ it remains to estimate
\[
R_{n,3}^{\,\omega,0}\coloneqq\widetilde Q_n^{\,\omega,0}-Q_n^{\,\omega,0}=\eul^{\ii\omega\Lambda}\bigl\lbrack\bigl(\theta_{n,n}\eul^{\ii\tilde\psi_n^{\,\omega,0}}\bigr)(\Lambda,S),\,\Theta_n\bigr\rbrack.
\]
To estimate the commutator we can apply Lemma \ref{lem:76} since
\[
\sup_{\abs{j-n}\leq n/3}\norm{\tilde\psi_n^{\omega,0}(j,\,\cdot\,)}_{\class^2(\cercle)}=\ord(n^{\gamma}).
\]   
Hence, estimate \eqref{commutator.estimate} gives $\norm{R_{n,3}^{\,\omega,0}}=\bigl\lVert\bigl\lbrack\bigl(\theta_{n,n}\eul^{\ii\tilde\psi_n^{\,\omega,0}}\bigr)(\Lambda,S),\,\Theta_n\bigr\rbrack\bigr\rVert=\ord(n^{\gamma-1}\ln n)$.

\begin{stepfour*}\sl
End of proof of Lemma \ref{lem:91} for arbitrary $t$.  
\end{stepfour*}

For this purpose we introduce for $s,t\in\D{R}$
\[
\tilde H_n^{\,\omega,t}(s)\coloneqq\eul^{-\ii t\Lambda}\eul^{-\ii sta_{1,n}(\Lambda)}H_n^{\omega,0}\eul^{\ii sta_{1,n}(\Lambda)}\eul^{\ii t\Lambda}.  
\] 
Since $H_n^{\,\omega,t}=\eul^{-\ii tL_{0,n}}H_n^{\omega,0}\eul^{\ii tL_{0,n}}$ with $L_{0,n}=l_n(\Lambda)=\Lambda+a_{1,n}(\Lambda)$ we find that 
\[
H_n^{\,\omega,t}=\tilde H_n^{\,\omega,t}(1)=\tilde H_n^{\,\omega,t}(0)+\tilde R_n^{\,\omega,t}.
\] 
We first claim that the remainder $\tilde R_n^{\,\omega,t}$ satisfies
\begin{equation} \label{Hn-Hn0}
\sup_{\abs{t}\leq t_0}\norm{\tilde R_n^{\,\omega,t}}=\ord(n^{3\gamma-2}).
\end{equation} 
Indeed, since  
\[
\partial_s\tilde H_n^{\,\omega,t}(s)=\eul^{-\ii t\Lambda}\eul^{-\ii sta_{1,n}(\Lambda)}\croch{\ii H_n^{\omega,0},\,ta_{1,n}(\Lambda)}\,\eul^{\ii sta_{1,n}(\Lambda)}\eul^{\ii t\Lambda},
\] 
it suffices to show 
\[ 
\norm{\croch{H_n^{\omega,0},\,a_{1,n}(\Lambda)}}=\ord(n^{3\gamma-2}). 
\] 
However, $\norm{\croch{S,\,a_{1,n}(\Lambda)}}=\ord(n^{2\gamma-2})$ implies $\norm{\croch{B_n,\,a_{1,n}(\Lambda)}}=\ord(n^{3\gamma-2})$, hence the norm of  
\[
\croch{\eul^{\ii B_n},\,a_{1,n}(\Lambda)}=\int_0^1\eul^{\ii tB_n}\,\croch{\ii B_n,\,a_{1,n}(\Lambda)}\,\eul^{\ii(1-t)B_n}\,\dd t
\] 
is $\ord(n^{3\gamma-2})$ and \eqref{Hn-Hn0} follows. To complete the proof we express 
\[
\tilde H_n^{\,\omega,t}(0)=\eul^{-\ii t\Lambda}Q_n^{\,\omega,0} \eul^{-\ii t\Lambda}+\tilde R_{0,n}^{\,\omega,t}=\eul^{-\ii t\Lambda}\eul^{\ii\omega\Lambda}\bigl(\theta_{n,n}^2\,\eul^{\ii\tilde\psi_n^{\,\omega,0}}\bigr)(\Lambda,S)\,\eul^{-\ii t\Lambda}+\tilde R_{0,n}^{\,\omega,t}
\]
where the norm of $\tilde R_{0,n}^{\,\omega,t}\coloneqq\eul^{-\ii t\Lambda}R_n^{\,\omega,0}\eul^{\ii t\Lambda}$ is $\ord(n^{\gamma-1}\ln n)$ by Steps 1-3. It suffices to note that 
\[
\eul^{-\ii t\Lambda}\,\eul^{\ii\tilde\psi_n^{\,\omega,0}}(\Lambda,S)\,\eul^{\ii t\Lambda}=\eul^{\ii\tilde\psi_n^{\,\omega,0}\circ\tilde\tau_t}(\Lambda,S)
\] 
by Lemma \ref{lem:72} and that $\tilde\psi_n^{\,\omega,0}\circ\tilde\tau_t=\tilde\psi_n^{\,\omega,t}$ in view of the definition \eqref{tpsiwt}.
\end{proof} 

\subsection{Decomposition of the phase $\BS{\tilde\psi_n^{\,\omega,t}}$}\label{sec:93}

If $\abs{j-n}\leq n/3$ then combining \eqref{tpsiwt} with \eqref{tpf} we can write
\begin{subequations} \label{psiphiomt} 
\begin{equation} \label{psinfwt} 
\tilde\psi_n^{\,\omega,t}(j,\eul^{\ii\eta})=\psi_n^{\,\omega,t}(\eul^{\ii\eta})+(j-n)\varphi_n^{\,\omega,t}(\eul^{\ii\eta})
\end{equation}
with 
\begin{align} \label{psiomt} 
\psi_n^{\,\omega,t}&\coloneqq (\psi_n\circ\tau_{\omega}-\psi_n)\circ\vartheta_n\circ\tau_t=\psi_n^{\,\omega,0}\circ\tau_t\\ 
\label{phiomt}
\varphi_n^{\,\omega,t}&\coloneqq(\varphi_n\circ\tau_{\omega}-\varphi_n)\circ\vartheta_n\circ\tau_t=\varphi_n^{\,\omega,0}\circ\tau_t
\end{align}
\end{subequations} 
where $\psi_n$ and $\varphi_n$ are given by \eqref{tpf'}. In order to estimate more easily the terms
\begin{equation} \label{Q1}
Q_n^{\,\omega,t}(j,j)=\eul^{\ii\omega j}\theta_{n,n}(j)^2\int_0^{2\pi}\eul^{\ii\tilde\psi_n^{\,\omega,t}(j,\eul^{\ii\eta})}\,\frac{\dd\eta}{2\pi} 
\end{equation} 
we consider a special decomposition $\psi_n^{\,\omega,t}=\psi_{n,1}^{\,\omega,t}+\psi_{n,2}^{\,\omega,t}$ whose description is given below.

\subsubsection{Decomposition of $\tilde\psi_n$}

Further on $\psi_n$ and $\varphi_n$ are given by \eqref{tpf'}. We have
\begin{subequations}  \label{psin12}
\begin{align}  \label{psin} 
\psi_n&=\psi_{n,1}+\psi_{n,2}\\
\label{psin1}
\psi_{n,1}(\eul^{\ii\xi})&\coloneqq 2a(n)\sin\xi\\ 
\label{psin2}
\psi_{n,2}(\eul^{\ii\xi})&\coloneqq-2a(n)\delta a(n)\sin\xi\cos\xi.
\end{align}
\end{subequations}        
This decomposition allows us to write \eqref{tpf} as
\begin{equation}  \label{tpsi} 
\tilde\psi_n(j,\eul^{\ii\xi})=\psi_{n,1}(\eul^{\ii\xi})+\psi_{n,2}(\eul^{\ii\xi})+(j-n)\varphi_n(\eul^{\ii\xi})\text{ for }\abs{j-n}\leq\frac{n}{3}\,.
\end{equation} 
Let us note that $\tilde\psi_n$ reduces to $\psi_{n,1}$ if $\delta a(n)=0$. Hence we call $\psi_{n,1}$ the ``principal part'' of $\tilde\psi_n$. Moreover, assumptions $a(n)=\ord(n^{\gamma})$ and $\delta a(n)=\ord(n^{\gamma-1})$ imply that for every $m\in\D{N}$ we have
\begin{subequations} \label{53ab}
\begin{align} \label{53a}
\norm{\psi_{n,1}}_{\class^m(\cercle)}&=\ord(n^{\gamma})\\
\label{53b}
\norm{\psi_{n,2}}_{\class^m(\cercle)}&=\ord(n^{2\gamma-1}).
\end{align}
\end{subequations} 

\subsubsection{Decomposition of $\psi_n^{\,\omega,t}$}

We define the ``principal part'' $\psi_{n,1}^{\,\omega,t}$ of the phase $\tilde\psi_n^{\,\omega,t}$ by
\begin{subequations} \label{psi1wtt'} 
\begin{equation} \label{psi1wt}  
\psi_{n,1}^{\,\omega,t}\coloneqq(\psi_{n,1}\circ\tau_{\omega}-\psi_{n,1})\circ\tau_t.
\end{equation} 
Using \eqref{psi1wt} with \eqref{psin1} we find
\begin{equation} \label{psi1wt'}   
\psi_{n,1}^{\,\omega,t}(\eul^{\ii\xi})=2a(n)\bigl(\sin(\xi-t-\omega)-\sin(\xi-t)\bigr)=-4a(n)\sin\tfrac{\omega}{2}\cos(\xi-t-\tfrac{\omega}{2}).
\end{equation}
\end{subequations} 
If $\delta a(n)=0$ then $\tilde\vartheta=\id_{\D{Z}\times\cercle}$ and $\tilde\psi_n^{\,\omega,t}$ reduces to its principal part $\psi_{n,1}^{\,\omega,t}$. To estimate more easily $\psi_{n,2}^{\,\omega,t}=\psi_n^{\,\omega,t}-\psi_{n,1}^{\,\omega,t}$ we write $\psi_{n,1}^{\,\omega,t}=\psi_{n,1}^{\,\omega,0}\circ\tau_t$ with 
\[
\psi_{n,1}^{\omega,0}\coloneqq\psi_{n,1}\circ\tau_{\omega}-\psi_{n,1}.
\] 
Thus we can decompose $\psi_n^{\omega,0}$ as 
\begin{equation} \label{psiwt} 
\psi_n^{\omega,0}=\psi_{n,1}^{\omega,0}+\psi_{n,2}^{\omega,0}
\end{equation}
where the remaining part is
\begin{equation} \label{psiwt2} 
\psi_{n,2}^{\omega,0}=\psi_{n,1}^{\omega,0}\circ\vartheta_n-\psi_{n,1}^{\omega,0}+(\psi_{n,2}\circ\tau_{\omega}-\psi_{n,2})\circ\vartheta_n. 
\end{equation} 
This decomposition \eqref{psiwt} combined with \eqref{psiomt} allows us to decompose $\psi_n^{\,\omega,t}$ for arbitrary $t$: 
\begin{equation} \label{psinwt} 
\psi_n^{\,\omega,t}=(\psi_{n,1}^{\omega,0}+\psi_{n,2}^{\omega,0})\circ\tau_t=\psi_{n,1}^{\,\omega,t}+\psi_{n,2}^{\,\omega,t}.    
\end{equation}
The principal part $\psi_{n,1}^{\,\omega,t}$ is given by \eqref{psi1wtt'} and $\psi_{n,2}^{\,\omega,t}\coloneqq\psi_{n,2}^{\omega,0}\circ\tau_t$ with $\psi_{n,2}^{\omega,0}$ as in \eqref{psiwt2}. By \eqref{psinfwt} and \eqref{psinwt} we finally get the decomposition
\begin{equation}   \label{tpsinomt}
\tilde\psi_n^{\,\omega,t}(j,\eul^{\ii\xi})=\psi_{n,1}^{\,\omega,t}(\eul^{\ii\xi})+\psi_{n,2}^{\,\omega,t}(\eul^{\ii\xi})+(j-n)\varphi_n^{\,\omega,t}(\eul^{\ii\xi}).
\end{equation} 

\subsubsection{Estimates of $\psi_{n,1}^{\,\omega,t}$, $\psi_{n,2}^{\,\omega,t}$, and $\varphi_n^{\,\omega,t}$}

\begin{lemma} \label{lem:92} 
\emph{(a)}
For every integer $m\geq 1$ there exists a constant $C_m'$ such that 
\begin{equation} \label{dcirc}
\norm{f\circ\vartheta_n-f}_{\class^{m-1}(\cercle)}\leq C_m' n^{\gamma-1}\norm{f}_{\class^m(\cercle)}
\end{equation} 
holds for any $f\in\class^m(\cercle)$. 

\emph{(b)} 
For every $m\geq 0$ there is a constant $C_m$ such that  
\begin{subequations}  \label{L53}
\begin{align} \label{L53a}
&\norm{\psi^{\,\omega,t}_{n,1}}_{\class^m(\cercle)}=\norm{\psi^{\,\omega,0}_{n,1}}_{\class^m(\cercle)}\leq C_mn^{\gamma}\\
\label{L53b}
&\norm{\psi^{\,\omega,t}_{n,2}}_{\class^m(\cercle)}=\norm{\psi^{\,\omega,0}_{n,2}}_{\class^m(\cercle)}\leq C_mn^{2\gamma-1}\leq C_m\\
\label{L53c}
&\norm{\varphi^{\,\omega,t}_n}_{\class^m(\cercle)}=\norm{\varphi^{\,\omega,0}_n}_{\class^m(\cercle)}\leq C_mn^{\gamma-1}.
\end{align} 
\end{subequations} 
\end{lemma}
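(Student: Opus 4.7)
The plan is to handle the two parts separately, proving part (a) by induction on $m$ and then using it as the main tool for part (b).

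For part (a), I would identify $\class^m(\cercle)$ with $2\pi$-periodic $\class^m$ functions on $\D{R}$ via $F(\eta)\coloneqq f(\eul^{\ii\eta})$, so that $(f\circ\vartheta_n)(\eul^{\ii\eta})=F(\xi_n(\eta))$ with $\xi_n(\eta)=\eta+\tilde\xi_n(\eul^{\ii\eta})$. The base case $m=1$ follows by writing
\[
F(\xi_n(\eta))-F(\eta)=\tilde\xi_n(\eul^{\ii\eta})\int_0^1 F'\bigl(\eta+s\,\tilde\xi_n(\eul^{\ii\eta})\bigr)\,\dd s
\]
and invoking the bound $\norm{\tilde\xi_n}_{\class^0(\cercle)}=\ord(n^{\gamma-1})$ from Lemma~\ref{lem:73}. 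For the inductive step from $m-1$ to $m$ (with $m\geq 2$), I split
\[
\partial_\eta\bigl(F\circ\xi_n-F\bigr)(\eta)=\bigl[F'(\xi_n(\eta))-F'(\eta)\bigr]+(\partial_\eta\tilde\xi_n)(\eul^{\ii\eta})\,F'(\xi_n(\eta)),
\]
apply the inductive hypothesis to $F'=\partial_\xi f$ for the first bracket, and bound the second summand by $\ord(n^{\gamma-1})\norm{f}_{\class^m(\cercle)}$ using $\norm{\tilde\xi_n}_{\class^{m-1}(\cercle)}=\ord(n^{\gamma-1})$ together with the uniform $\class^{m-2}$-boundedness of $\xi_n$, both supplied by Lemma~\ref{lem:73}.

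For part (b), the equalities $\norm{g\circ\tau_t}_{\class^m(\cercle)}=\norm{g}_{\class^m(\cercle)}$ are immediate from rigidity of translations, so it suffices to bound the $t=0$ quantities. Estimate \eqref{L53a} is read off from the explicit formula \eqref{psi1wt'}, where $a(n)=\ord(n^\gamma)$ and all $\xi$-derivatives of $\sin(\omega/2)\cos(\xi-\omega/2)$ are bounded uniformly in $n$. For \eqref{L53c}, $\varphi_n^{\omega,0}=(\varphi_n\circ\tau_\omega-\varphi_n)\circ\vartheta_n$: by \eqref{phin.estim} the bracketed difference is $\ord(n^{\gamma-1})$ in $\class^m$, and composing with $\vartheta_n$ only costs a multiplicative constant depending on $m$ since $\vartheta_n(\eul^{\ii\eta})=\eul^{\ii\eta}\,\eul^{\ii\tilde\xi_n(\eul^{\ii\eta})}$ has $\class^m$-norm uniformly bounded in $n$ (Lemma~\ref{lem:73} again).

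Finally, for \eqref{L53b} I would use the decomposition \eqref{psiwt2} of $\psi_{n,2}^{\omega,0}$. Its first summand $\psi_{n,1}^{\omega,0}\circ\vartheta_n-\psi_{n,1}^{\omega,0}$ is estimated by part (a) with $f=\psi_{n,1}^{\omega,0}$ and $m$ replaced by $m+1$, yielding $\ord(n^{\gamma-1})\cdot\ord(n^\gamma)=\ord(n^{2\gamma-1})$ thanks to \eqref{L53a}. The second summand $(\psi_{n,2}\circ\tau_\omega-\psi_{n,2})\circ\vartheta_n$ has $\class^m$-norm $\ord(n^{2\gamma-1})$ by \eqref{53b} and the same composition-preserves-$\class^m$-norms principle. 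Adding and using $\gamma\leq 1/2$ to turn $n^{2\gamma-1}$ into a constant gives \eqref{L53b}. The only mildly delicate step is the chain-rule bookkeeping inside the inductive step of part (a); everything else reduces to the composition principle and to the size estimates already collected in Lemma~\ref{lem:73} and in \eqref{psin.estim}--\eqref{phin.estim}.
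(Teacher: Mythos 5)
Your proposal is correct and follows essentially the same approach as the paper: part (b) is proved identically, and part (a) rests on the same two ingredients (Lemma~\ref{lem:73}'s control of $\tilde\xi_n$ in every $\class^m$-norm, and the fact that composing with $\vartheta_n$ preserves $\class^m$-norms up to a constant). The only difference in (a) is organizational — the paper interpolates via $\vartheta_n^s(\eul^{\ii\eta})=\eul^{\ii\eta}\eul^{\ii s\tilde\xi_n(\eul^{\ii\eta})}$ and writes $f\circ\vartheta_n-f$ as a single $\dd s$-integral whose integrand is estimated once for all $m$, whereas you prove the base case by a $\dd s$-integral and then run an explicit induction on $m$ via the Leibniz-type splitting $\partial_\eta(F\circ\xi_n-F)=\bigl(F'\circ\xi_n-F'\bigr)+(\partial_\eta\tilde\xi_n)\,F'\circ\xi_n$ — both close the loop by invoking the same chain-rule bookkeeping.
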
 

\begin{proof} 
(a) 
For $s\in\D{R}$ we define $\vartheta_n^s(\eul^{\ii\eta})\coloneqq\eul^{\ii\eta}\eul^{\ii s\tilde\xi_n(\eul^{\ii\eta})}$, so that $f\circ\vartheta_n-f=f\circ\vartheta_n^1-f\circ\vartheta_n^0$. If $m\geq 1$, there exists a constant $C_m'$ such that, for every $g\in\class^{m-1}(\cercle)$,
\begin{equation} \label{circ}
\sup_{0\leq s\leq 1}\norm{g\circ\vartheta_n^s}_{\class^{m-1}(\cercle)}\leq\tilde C_m\norm{g}_{\class^{m-1}(\cercle)}.
\end{equation}  
Using the chain rule we easily get \eqref{circ} by induction with respect to $m$. Next we introduce $g(\eul^{\ii\eta})\coloneqq\partial_\eta f(\eul^{\ii\eta})$ and observe that 
\begin{equation} \label{ddcirc}
\partial_sf\bigl(\eul^{\ii\eta}\eul^{\ii s\tilde\xi_n(\eul^{\ii\eta})}\bigr)=\tilde\xi_n(\eul^{\ii\eta})g(\eul^{\ii\eta}\eul^{\ii s\tilde\xi_n(\eul^{\ii\eta})}\bigr). 
\end{equation} 
The $\norm{\,\cdot\,}_{\class^{m-1}(\cercle)}$-norm of \eqref{ddcirc} can be estimated by $C_m''\norm{\tilde\xi_n}_{\class^{m-1}(\cercle)}\norm{g}_{\class^{m-1}(\cercle)}$, as follows from \eqref{circ}. The proof is completed using the estimate $\norm{\tilde\xi_n}_{\class^{m-1}(\cercle)}=\ord(n^{\gamma-1})$, which is proven in Lemma~\ref{lem:73} under assumption \eqref{phin.asymptot}.

(b) 
It is clear that \eqref{L53a} follows from the estimate \eqref{53a} of $\psi_{n,1}$ and \eqref{L53c} follows from estimates \eqref{phin.estim} and \eqref{circ}. Then using $a(n)\delta a(n)=\ord(n^{2\gamma-1})$ and the definition of $\psi_{n,2}$ we obtain $\norm{\psi_{n,2}}_{\class^m(\cercle)}=\ord(n^{2\gamma-1})$ and \eqref{circ} ensures 
\begin{equation} \label{L53d} 
\norm{(\psi_{n,2}\circ\tau_{\omega}-\psi_{n,2})\circ\vartheta_n}_{\class^m(\cercle)}\leq C_m'n^{2\gamma-1}. 
\end{equation} 
Moreover, estimate \eqref{dcirc} from Lemma \ref{lem:92} (a) gives 
\begin{equation} \label{L53e}
\norm{\psi_{n,1}^{\omega,0}\circ\vartheta_n-\psi_{n,1}^{\omega,0}}_{\class^{m-1}(\cercle)}\leq C_m'n^{\gamma-1}\norm{\psi_{n,1}^{\omega,0}}_{\class^m(\cercle)}.
\end{equation} 
Hence, to complete the proof of \eqref{L53b} it remains to use \eqref{L53d} and to observe that the right-hand side of \eqref{L53e} can be estimated by $C_m''n^{2\gamma-1}$ due to \eqref{L53a}. 
\end{proof}

\subsection{A stationary phase estimate} \label{sec:94}

\begin{lemma} \label{lem:93} 
For $b\in\class^2(\cercle)$ and $\mu\in\D{R}$, $\mu\neq 0$ denote 
\[ 
\C{J}(b,\mu)\coloneqq\int_{-\pi}^{\pi}\eul^{\ii\mu\cos\eta}b(\eul^{\ii\eta})\,\dd\eta.  
\] 
Then there is a constant $C_0$ such that 
\[ 
\abs{\C{J}(b,\mu)}\leq\frac{C_0}{\abs{\mu}^{1/2}}\Bigl(\norm{b}_{\class^0(\cercle)}+\frac{1}{\abs{\mu}^{1/2}}\norm{b}_{\class^2(\cercle)}\Bigr). 
\] 
\end{lemma}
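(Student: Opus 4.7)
The integrand oscillates with phase $\phi(\eta) = \cos\eta$, which on the circle $\cercle$ has exactly two non-degenerate stationary points, at $\eta = 0$ and $\eta = \pi$, with $\phi''(0) = -1$ and $\phi''(\pi) = 1$. The bound we want is the textbook stationary phase estimate: the leading contribution of size $\abs{\mu}^{-1/2}\norm{b}_{\class^0(\cercle)}$ comes from the Fresnel integrals at these two points, and the correction of size $\abs{\mu}^{-1}\norm{b}_{\class^2(\cercle)}$ comes from integration by parts, both in the non-stationary region and in the Taylor remainder at the stationary points. The classical route is: partition of unity, integration by parts away from stationary points, Morse-type change of variables near them, and the explicit Fresnel identity.

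\textbf{Step 1 (partition and non-stationary piece).} Fix $r \in (0, \pi/4)$ and a bump $\chi \in \class^\infty_0(\D{R})$ with $\chi \equiv 1$ on $[-1/2, 1/2]$ and $\supp\chi \subset [-1,1]$. Decompose $\C{J}(b,\mu) = \C{J}_0 + \C{J}_\pi + \C{J}_\infty$ where $\C{J}_0$ carries the factor $\chi(\eta/r)$, $\C{J}_\pi$ carries $\chi((\eta - \pi)/r)$, and $\C{J}_\infty$ carries the complement. On the support of the complementary cut-off one has $\abs{\sin\eta} \geq c_r > 0$, so applying the identity $e^{\ii\mu\cos\eta} = (-\ii\mu\sin\eta)^{-1}\partial_\eta e^{\ii\mu\cos\eta}$ and integrating by parts twice (with no boundary terms thanks to the $2\pi$-periodicity) gives $\abs{\C{J}_\infty} \leq C_r\abs{\mu}^{-2}\norm{b}_{\class^2(\cercle)}$.

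\textbf{Step 2 (stationary pieces).} The two contributions are symmetric; we treat $\C{J}_0$. Apply the diffeomorphism $u = 2\sin(\eta/2)$ from $(-\pi, \pi)$ onto $(-2, 2)$: this turns $\cos\eta$ into $1 - u^2/2$ and $\dd\eta$ into $\dd u/\sqrt{1 - u^2/4}$, yielding
\[
\C{J}_0 = \eul^{\ii\mu}\int_{-2}^{2} \eul^{-\ii\mu u^2/2}\,B(u)\,\dd u,\quad B(u) = \chi\bigl(\eta(u)/r\bigr)\,\frac{b(\eul^{\ii\eta(u)})}{\sqrt{1 - u^2/4}},
\]
where $B$ is $\class^2$, compactly supported in $(-2r, 2r)$, and satisfies $\norm{B}_{\class^k(\D{R})} \leq C_{r,k}\norm{b}_{\class^k(\cercle)}$ for $k = 0, 1, 2$. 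Use Taylor's formula to split $B(u) = B(0) + u\tilde B(u)$ with $\norm{\tilde B}_{\class^1(\D{R})} \leq C\norm{B}_{\class^2(\D{R})}$. For the $u\tilde B(u)$ piece, use $u\,\eul^{-\ii\mu u^2/2} = (\ii/\mu)\,\partial_u \eul^{-\ii\mu u^2/2}$ and integrate by parts once to obtain a bound of size $\norm{B}_{\class^2(\D{R})}/\abs{\mu}$. For the $B(0)$ piece, invoke the exact Fresnel identity $\int_\D{R} \eul^{-\ii\mu u^2/2}\,\dd u = \sqrt{2\pi/\abs{\mu}}\,\eul^{-\ii\operatorname{sgn}(\mu)\pi/4}$, which bounds the full Gaussian integral by $C/\sqrt{\abs{\mu}}$; the truncation error (coming from removing the cut-off outside a neighborhood of $0$) is handled by one integration by parts on the region $\abs{u} \geq r/2$, where $\partial_u(u^2/2) = u$ is bounded below, giving an $\ord(\abs{\mu}^{-1})$ correction.

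\textbf{Assembly and main obstacle.} Summing the three contributions yields the bound stated in the lemma. The subtle point — and the only real obstacle — is that the leading $\abs{\mu}^{-1/2}$ term must be controlled by $\norm{b}_{\class^0(\cercle)}$ \emph{alone}, not by a higher norm. A crude uniform estimate on the Fresnel integral with a cut-off bleeds in a $\norm{B}_{\class^1}$ factor via derivatives of the cut-off picked up during integration by parts. This is avoided by first peeling off the constant $B(0)$, invoking the exact Gaussian integral for the clean leading term, and treating the cut-off remainder separately as an $\ord(\abs{\mu}^{-1})$ contribution. Everything else is bookkeeping: the change of variables is explicit, the norms $\norm{B}_{\class^k(\D{R})}$ are controlled by $\norm{b}_{\class^k(\cercle)}$ on the fixed-scale support of the cut-off, and $\C{J}_\pi$ is handled by the symmetric coordinate change $u = 2\sin((\eta - \pi)/2)$.
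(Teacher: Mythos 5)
Your proof is correct and rests on the same key idea as the paper: after localizing near the two stationary points of $\cos\eta$, peel off the constant value $b$ at the critical point, bound the resulting pure-phase integral by $C\abs{\mu}^{-1/2}$, and dispose of the remainder (which vanishes at the critical point) by a single integration by parts that costs one $\abs{\mu}^{-1}$ and two derivatives of $b$ — this is precisely how the $\norm{b}_{\class^0}$ term is decoupled from the $\norm{b}_{\class^2}$ term. The routes differ only in bookkeeping. You perform an explicit Morse change of variables $u=2\sin(\eta/2)$ to flatten the phase to $u^2/2$, split $B(u)=B(0)+u\tilde B(u)$, and use the exact Fresnel integral for the leading piece. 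The paper avoids the change of variables by the algebraic observation that $b_\pm(\xi)-b_\pm(0)=\tilde q_\pm(\xi)\sin\xi$ with $\tilde q_\pm(\xi)=q_\pm(\xi)\xi/\sin\xi$ still $\class^1$ on the cut-off's support: since $\sin\xi=-\partial_\xi\cos\xi$, this gives the same single integration by parts directly in the $\cos\xi$ phase, and the leading term is estimated by invoking standard stationary phase for the fixed, $b$-independent amplitude $\chi_\pm$. The paper also uses only two cut-off pieces, building the second one so that after a $\pm\pi$ translation it is re-centered at the origin, whereas you carry a separate non-stationary piece $\C{J}_\infty$ handled by repeated integration by parts; this is a harmless extra step (and your bound $\abs{\mu}^{-2}$ there is better than needed). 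Two small points you gloss over: (i) in the integration by parts for the $u\tilde B(u)$ piece, $\tilde B$ is not compactly supported (it decays like $1/u$ outside $\supp B$), so boundary contributions at $u=\pm 2$ arise, but they are $\ord(\norm{B}_{\class^0}/\abs{\mu})$ and hence harmless; and (ii) your cut-off $\chi((\eta-\pi)/r)$ must be read periodically on $\cercle$ (you do say so) so that the point $\eta=-\pi\equiv\pi$ is actually covered, otherwise $\C{J}_\infty$ would contain a stationary point.
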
 

\begin{proof}  
Let $\chi_+\in\class^{\infty}(\D{R})$ be real-valued with $\chi_+\equiv 1$ on $\croch{-\pi/2,\pi/2}$ and $\supp\chi_+\subset(-3\pi/4,\,3\pi/4)$. Let $\chi_-\in\class^{\infty}(\D{R})$ be such that $\chi_-(\xi)=1-\chi_+(\xi\mp\pi)$ if $0\leq\pm\xi\leq\pi$ and $0$ otherwise, so that $\supp\chi_-\subset(-\pi/2,\pi/2)$. Let $b_{\pm}(\xi)\coloneqq b(\pm\eul^{\ii\xi})$. Thus, 
\begin{align*}
\C{J}(b,\mu)&=\int_{-\pi}^{\pi}\eul^{\ii\mu\cos\eta}b_+(\eta)\chi_+(\eta)\,\dd\eta+\int_{-\pi}^{\pi}\eul^{\ii\mu\cos\eta}b_+(\eta)\croch{1-\chi_+(\eta)}\,\dd\eta\\
&=\int_{-\pi}^{\pi}\eul^{\ii\mu\cos\eta}b_+(\eta)\chi_+(\eta)\,\dd\eta+\int_{-\pi}^{\pi}\eul^{-\ii\mu\cos\xi}b_-(\xi)\chi_-(\xi)\,\dd\xi,
\end{align*}
where we perform the change of variable $\eta=\xi\pm\pi$ for $0\leq\pm\eta\leq\pi$ to get the last integral. We are thus reduced to the estimate
\[
\left\lvert\int_{-\infty}^{\infty}b_{\pm}(\xi)\eul^{\pm\ii\mu\cos\xi}\chi_{\pm}(\xi)\dd\xi\right\rvert\leq\frac{C}{\abs{\mu}^{1/2}}\norm{b_{\pm}}_{\class^0(\D{R})}+\frac{C}{\abs{\mu}}\norm{b_{\pm}}_{\class^2(\D{R})}
\]
with $b_{\pm}\in\class^2(\D{R})$. If $\abs{\xi}\leq 3\pi/4$ 
then we can write 
\[
b_{\pm}(\xi)=b_{\pm}(0)+q_{\pm}(\xi)\xi=b_{\pm}(0)+\tilde q_{\pm}(\xi)\sin\xi
\]
with $\tilde q_{\pm}(\xi)\coloneqq q_{\pm}(\xi)\xi/\sin\xi$. However the standard stationary phase method ensures 
\[
\left\lvert b_{\pm}(0)\int_{-\infty}^{\infty}\eul^{\pm\ii\mu\cos\xi}\chi_{\pm}(\xi)\,\dd\xi\right\rvert\leq\abs{b_{\pm}(0)}\,C_0\abs{\mu}^{-1/2}. 
\]
Writing $\eul^{\pm\ii\mu\cos\xi}\sin\xi=\pm\frac{\ii}{\mu}\partial_{\xi}\eul^{\pm\ii\mu\cos\xi}$, integration by parts gives
\begin{equation} \label{53}
\int_{-\infty}^{\infty}\tilde q_{\pm}(\xi)\sin\xi\eul^{\pm\ii\mu\cos\xi} \chi_{\pm}(\xi)\,\dd\xi=\pm\frac{\ii}{\mu}\int_{-\infty}^{\infty}\eul^{\pm\ii\mu\cos\xi}\partial_{\xi}(\tilde q_{\pm}\chi_{\pm})(\xi)\,\dd\xi.
\end{equation} 
We finally observe that the R.H.S.\ of \eqref{53} can be estimated by $\frac{C_1}{\abs{\mu}}\norm{b_{\pm}}_{\class^2(\D{R})}$.  
\end{proof} 

\subsection{End of proof of Proposition \ref{prop:6} (i)} \label{sec:95}

We observe that Lemma \ref{lem:91} ensures
\[ 
g_{1,n,j}(t)=\ii\!\sum_{\omega\in\Omega^*}\!c_{\omega}H_n^{\,\omega,t}(j,j)=\ii\!\sum_{\omega\in\Omega^*}\!c_{\omega}Q_n^{\,\omega,t}(j,j)+\ord(n^{\gamma-1}\ln n),
\] 
with $Q_n^{\,\omega,t}(j,j)$ given by \eqref{Q1}. It remains to show 
\[ 
\sup_{\substack{\abs{t}\leq t_0\\\abs{j-n}\leq n^{\gamma}}}\abs{Q_n^{\,\omega,t}(j,j)}\leq Cn^{-\gamma/2}.  
\] 
Using the decomposition \eqref{tpsinomt} of $\tilde\psi_n^{\,\omega,t}(j,\eul^{\ii\xi})$ and the value \eqref{psi1wt'} of $\psi_{n,1}^{\,\omega,t}(\eul^{\ii\xi})$ we can write 
\[ 
Q_n^{\,\omega,t}(j,j)=\eul^{\ii j\omega}\int_0^{2\pi}\eul^{\ii\mu_n^{\omega}\cos(\eta-t-\omega/2)}b_n^{\,\omega,t}(j,\eul^{\ii\eta})\,\frac{\dd\eta}{2\pi}
\] 
with $\mu_n^{\omega}\coloneqq -4a(n)\sin\frac{\omega}{2}$ and $b_n^{\,\omega,t}(j,\eul^{\ii\eta})\coloneqq\theta_{n,n}(j)^2\,\eul^{\ii\psi_{n,2}^{\,\omega,t}(\eul^{\ii\eta})+\ii(j-n)\varphi_n^{\,\omega,t}(\eul^{\ii\eta})}$. By \eqref{L53c} we have 
\begin{equation} \label{L53f}
\sup_{\abs{j-n}\leq n^{\gamma}}\norm{(j-n)\varphi_n^{\,\omega,t}}_{\class^2(\cercle)}\leq C_2n^{2\gamma-1}\leq C_2. 
\end{equation} 
Combining \eqref{L53f} with \eqref{L53b} we obtain 
\[ 
\sup_{\abs{j-n}\leq n^{\gamma}}\norm{b_n^{\,\omega,t}(j,\,\cdot\,)}_{\class^2(\cercle)}\leq C'.
\] 
Performing the change of variable $\xi=\eta-t-\omega/2$ and using Lemma \ref{lem:93} we find 
\[ 
Q_n^{\,\omega,t}(j,j)=\C{J}(b_n^{\,\omega,t}\circ\tilde\tau_{t+\omega/2}(j,\,\cdot\,),\,\mu_n^{\omega})=\ord(\abs{\mu_n^{\omega}}^{-1/2}),
\] 
uniformly with respect to $j\in\croch{n-n^{\gamma},\,n+n^{\gamma}}$. To complete the proof we observe that, using the assumption $a(n)\geq cn^{\gamma}$, $c>0$ from \eqref{H2ak} we can find $c_0>0$ such that 
\[ 
\abs{\mu_n^{\omega}}\geq c_0\,n^{\gamma} 
\] 
holds for any $\omega\in\Omega^*$. Hence $\abs{\mu_n^{\omega}}^{-1/2}=\ord(n^{-\gamma/2})$ and 
\[
Q_n^{\,\omega,t}(j,j)=\ord(n^{-\gamma/2}),
\]
uniformly with respect to $t\in\croch{-t_0,t_0}$ and $j\in\croch{n-n^{\gamma},n+n^{\gamma}}$.

\section{Proof of Proposition \ref{prop:6} \textup{(ii)}}\label{sec:10} 
\subsection{Plan of Section~\ref{sec:10}} \label{sec:101} 

We denote $H_n(\ut)\coloneqq H_n(t_1)\dots H_n(t_{\nu})$ where $\ut=(t_1,\dots,t_{\nu})\in\D{R}^{\nu}$. To prove the estimate 
\[
\int_{-t_0}^{t_0}\abs{H_n(\ut)(j,j)}\,\dd t_{\nu}=\ord(n^{-\gamma/2}),
\]
uniformly for $\abs{t_1},\dots,\abs{t_{\nu-1}}\leq t_0$ and $\abs{j-n}\leq n^{\gamma}$ we proceed as in Section \ref{sec:9}. In Section~\ref{sec:102} we first decompose $H_n(\ut)$ into a sum of components $H_n^{\,\uw,\ut}$, where $\uw=(\omega_1,\dots,\omega_{\nu})\in(\Omega^*)^{\nu}$, then we consider an approximation of $H_n(\ut)$ by operators $Q_n(\ut)$ whose diagonal entries $Q_n(\ut)(j,j)$ can be expressed by means of oscillatory integrals. Their phase functions are constructed in Section~\ref{sec:103} by induction on the number $\nu$ of factors. In Section~\ref{sec:104} we prove that we thus obtain good approximants $Q_n(\ut)$ of $H_n(\ut)$. Finally we complete the proof of Proposition \ref{prop:6} (ii) in Section~\ref{sec:105} estimating $Q_n^{\,\uw,\ut}(j,j)$ by the method of stationary phase.

\subsection{Approximation of $H_n(\ut)$} \label{sec:102} 
\subsubsection{Decomposition of $H_n(\ut)$ into components $H_n^{\,\uw,\ut}$}

For $\nu\geq 1$ and $\ut\in\D{R}^{\nu}$ we can write 
\[
H_n(\ut)=\sum_{\uw\in(\Omega^*)^{\nu}}\!c_{\omega_1}\dots c_{\omega_{\nu}}H_n^{\,\uw,\ut}
\] 
with  
\[
H_n^{\,\uw,\ut}\coloneqq H_n^{\,\omega_1,t_1}\dots H_n^{\,\omega_{\nu},t_{\nu}}.
\] 

\subsubsection{Approximants $Q_n^{\,\uw,\ut}$}

In Section \ref{sec:103} we approximate $H_n^{\,\uw,\ut}$ for large $n$ by
\begin{equation} \label{Qnuwut}
Q_n^{\,\uw,\ut}\coloneqq\,\eul^{\ii\abs{\uw}_1\Lambda}\bigl(\theta_{n,n}^{2\nu}\,\eul^{\ii\tilde\psi_n^{\,\uw,\ut}}\bigr)(\Lambda,S), 
\end{equation} 
where $\abs{\uw}_1\coloneqq\omega_1+\dots+\omega_{\nu}$. The phase $\tilde\psi_n^{\,\uw,\ut}$ will be defined below. Let us note that it suffices to know its values for $\abs{j-n}\leq n/3$.  

We prove Lemma \ref{lem:103} which gives the estimate $\norm{H_n^{\,\uw,\ut}-Q_n^{\,\uw,\ut}}\leq\nu\,n^{\gamma-1+3\varepsilon}$ for $0<\varepsilon\leq 1/8$, $\nu\leq n^{\varepsilon}$ and $n\geq\hat n$. Then it remains to prove that for $\nu\leq n^{\varepsilon}$ one has
\begin{equation} \label{1010} 
\sup_{\substack{\abs{t_1},\dots,\abs{t_{\nu-1}}\leq t_0\\\abs{j-n}\leq n^{\gamma}}}\int_{-t_0}^{t_0}\abs{Q_n^{\,\uw,\ut}(j,j)}\,\dd t_{\nu}\leq Cn^{-\gamma/2}
\end{equation} 
where  
\[ 
Q_n^{\,\uw,\ut}(j,j)=\eul^{\ii j\abs{\uw}_1}\theta_{n,n}(j)^{2\nu}\int_0^{2\pi}\eul^{\ii\tilde\psi_n^{\,\uw,\ut}(j,\eul^{\ii\eta})}\,\frac{\dd\eta}{2\pi}\,.
\] 
As in Section \ref{sec:9} we obtain \eqref{1010} using a stationary phase estimate.
 
\subsubsection{Construction of $Q_n^{\,\uw,\ut}$ by induction on $\nu$}

For $\nu=1$ the operators $Q_n^{\,\omega,t}$ are defined for large $n$ by \eqref{Qtpsiwt} as in Lemma~\ref{lem:91}, with $\tilde\psi_n^{\,\omega,t}$ is given by \eqref{tpsiwt}. For $\nu\geq 2$, writing 
\begin{subequations} \label{induction}
\begin{equation}  \label{inductom}
\uw=(\uw',\omega)\in(\Omega^*)^{\nu-1}\times\Omega^*,\quad\ut=(\ut'\!,t)\in\D{R}^{\nu-1}\times\D{R},
\end{equation}
we have the corresponding factorization
\begin{equation}  \label{inducth}
H_n^{\,\uw,\ut}=H_n^{\,\uw'\!,\ut'}(H_n^{-\omega,t})^*, 
\end{equation}
\end{subequations}
where $H_n^{-\omega,t}=H_n^{2\pi-\omega,t}$.

By \eqref{611'} from Lemma \ref{lem:101} below we have $\norm{\varphi_n^{-\omega,t}}_{\class^1(\cercle)}=\ord(n^{\gamma-1})$. Section~\ref{sec:751} then applies with $\accol{\varphi_n^{-\omega,t}}$ in place of $\accol{\varphi_n}$. We denote $\eta_n^{\,\omega,t}$, $\xi_n^{\,\omega,t}$, $\tilde\xi_n^{\,\omega,t}$, $\F{p}_n^{\,\omega,t}$, $\vartheta_n^{\,\omega,t}$, and $\tilde\vartheta_n^{\,\omega,t}$ the corresponding auxiliary functions. We have $\varphi_n^{-\omega,t}=\varphi_n^{-\omega,0}\circ\tau_t$. Then it is easy to see that $\tilde\xi_n^{\omega,t}=\tilde\xi_n^{\omega,0}\circ\tau_t$, $\F{p}_n^{\omega,t}=\F{p}_n^{\omega,0}\circ\tau_t$, and $\vartheta_n^{\omega,t}=\eul^{\ii t}\vartheta_n^{\omega,0}\circ\tau_t$. With $\vartheta_n^{\omega,t}$ in place of $\vartheta_n$ estimate \eqref{dcirc} from Lemma~\ref{lem:92} reads as
\begin{equation} \label{61circ}
\norm{f\circ\vartheta_n^{\,\omega,t}-f}_{\class^{m-1}(\cercle)}\leq C_m'n^{\gamma-1}\norm{f}_{\class^m(\cercle)}
\end{equation} 
for any $m\geq 0$, $f\in\class^m(\cercle)$, and with $C_m'\leq C_{m+1}'$.

The phase $\tilde\psi_n^{\,\uw,\ut}$ is chosen in $\C{Q}^0$ and such that for $\abs{j-n}\leq n/3$,
\begin{subequations}  \label{psi012}
\begin{equation} \label{psi}
\tilde\psi_n^{\,\uw,\ut}(j,\eul^{\ii\eta})=\psi_n^{\,\uw,\ut}(\eul^{\ii\eta})+(j-n)\varphi_n^{\,\uw,\ut}(\eul^{\ii\eta}),   
\end{equation} 
each component $\psi_n^{\,\uw,\ut}$ and $\varphi_n^{\,\uw,\ut}$ being the sum of two parts
\begin{align} \label{psi1+2}
\psi_n^{\,\uw,\ut}&=\psi_{n,1}^{\,\uw,\ut}+\psi_{n,2}^{\,\uw,\ut},\\ 
\label{f1+2}
\varphi_n^{\,\uw,\ut}&=\varphi_{n,1}^{\,\uw,\ut}+\varphi_{n,2}^{\,\uw,\ut}.   
\end{align}
\end{subequations} 
$\psi_{n,1}^{\,\uw,\ut}$ and $\varphi_{n,1}^{\,\uw,\ut}$ are the ``principal parts''. They are defined in Section~\ref{sec:103} by induction on $\nu$.  

\subsection{Construction of $\BS{\psi_n^{\,\uw,\ut}}$ and $\BS{\varphi_n^{\,\uw,\ut}}$} \label{sec:103}
\subsubsection{The case $\nu=1$}

For $\nu=1$, the principal part $\psi_{n,1}^{\,\omega,t}$ is defined by \eqref{psi1wt} and we choose $\varphi_{n,1}^{\,\omega,t}\coloneqq\varphi_n^{\,\omega,t}$ which is defined by \eqref{phiomt}, so that
\begin{alignat*}{2}
&\psi_n^{\,\omega,t}=(\psi_n\circ\tau_{\omega}-\psi_n)\circ\vartheta_n\circ\tau_t,&\qquad&\varphi_n^{\,\omega,t}=(\varphi_n\circ\tau_{\omega}-\varphi_n)\circ\vartheta_n\circ\tau_t,\\
&\psi_{n,1}^{\,\omega,t}=(\psi_{n,1}\circ\tau_{\omega}-\psi_{n,1})\circ\tau_t,&&\varphi_{n,1}^{\,\omega,t}=\varphi_n^{\,\omega,t},\\
&\psi_{n,2}^{\,\omega,t}=\psi_n^{\,\omega,t}-\psi_{n,1}^{\,\omega,t},&&\varphi_{n,2}^{\,\omega,t}=0,
\end{alignat*}
with $\psi_n$ and $\varphi_n$ given by \eqref{tpf'}, and $\psi_{n,1}(\eul^{\ii\xi})=2a(n)\sin\xi$.

\subsubsection{Principal parts $\psi_{n,1}^{\,\uw,\ut}$ and $\varphi_{n,1}^{\,\uw,\ut}$ for $\nu\geq 2$}

The relation \eqref{inducth} and the composition formula \eqref{com} from Lemma~\ref{lem:74} suggest the induction formulas  
\begin{equation} \label{psifn1uwut} 
\begin{split} 
\psi_{n,1}^{\,\uw,\ut}\coloneqq\psi_{0,n,1}^{\,\uw,\ut}\circ\tau_{\omega}&\quad\text{with}\quad\psi_{0,n,1}^{\,\uw,\ut}\coloneqq\psi_{n,1}^{\,\uw'\!,\ut'}-\psi_{n,1}^{\,-\omega,t}\\ 
\varphi_{n,1}^{\,\uw,\ut}\coloneqq\varphi_{0,n,1}^{\,\uw,\ut}\circ\tau_{\omega}&\quad\text{with}\quad\varphi_{0,n,1}^{\,\uw,\ut}\coloneqq\varphi_{n,1}^{\,\uw'\!,\ut'}-\varphi_{n,1}^{\,-\omega,t}.
\end{split}
\end{equation}  
Using \eqref{psi1wt'} we find that 
\[
-\psi_{n,1}^{\,-\omega,t}(\eul^{\ii(\xi-\omega)})=4a(n)\sin(\tfrac{-\omega}{2})\cos(\xi-\omega-t+\tfrac{\omega}{2}).
\] 
Hence $-\psi_{n,1}^{\,-\omega,t}\circ\tau_{\omega}=\psi_{n,1}^{\,\omega,t}$ and we can also write
\begin{equation} \label{psin1uwut} 
\psi_{n,1}^{\,\uw,\ut}=\psi_{n,1}^{\,\uw'\!,\ut'}\circ\tau_{\omega}+\psi_{n,1}^{\,\omega,t}.
\end{equation} 

\begin{lemma} \label{lem:101} 
Let $\nu\geq 1$ be fixed. Let $\psi_{n,1}^{\,\uw,\ut}$ and $\varphi_{n,1}^{\,\uw,\ut}$ be defined by \eqref{psifn1uwut}. Then: 
\begin{enumerate}[\rm(a)]
\item
For every $m\in\D{N}$ and any $t_0>0$ there exists a constant $C_m$ (independent of $\uw,\,\ut,\,n$) such that   
\begin{subequations} \label{6111'}
\begin{align} \label{611}
\norm{\psi_{n,1}^{\,\uw,\ut}}_{\class^m(\cercle)}&\leq C_m\nu\,n^{\gamma},\\ 
\label{611'}
\norm{\varphi_{n,1}^{\,\uw,\ut}}_{\class^m(\cercle)}&\leq C_m\nu\,n^{\gamma-1} 
\end{align}
\end{subequations}
hold for $\uw\in(\Omega^*)^{\nu}$, $\ut\in\croch{-t_0,\,t_0}^{\nu}$. Moreover, we can assume that $C_m\leq C_{m+1}$ for any $m\in\D{N}$. 
\item
For every $\uw\in(\Omega^*)^{\nu}$, $\ut\in\D{R}^{\nu}$ there exists $\Psi_{\uw,\ut}\in\class^{\infty}(\cercle)$ such that 
\begin{equation} \label{Psi}
\begin{split}
\psi_{n,1}^{\,\uw,\ut}(\eul^{\ii\xi})&=2a(n)\Im\Psi_{\uw,\ut}(\eul^{\ii\xi}),\\
\Psi_{\uw,\ut}(\eul^{\ii\xi})&=\Psi_{\uw,\ut}(1)\eul^{\ii\xi}.
\end{split} 
\end{equation}
\item
If $\nu\geq 2$ there exist real-valued Lebesgue measurable functions $\ut'\to\tau_{\uw,\ut'}$ defined on $\D{R}^{\nu-1}$ and such that for $\ut=(\ut'\!,t)\in\D{R}^{\nu-1}\times\D{R}$ one has  
\[
\abs{\Psi_{\uw,\ut}(1)}\geq\frac{2}{\pi}\,\sin\frac{\pi}{N}\times\abs{t-\tau_{\uw,\ut'}}_{2\pi},  
\] 
where $\abs{s}_{2\pi}=\dist(s,\,2\pi\D{Z})$. Moreover, for $\nu=1$ we have $\abs{\Psi_{\,\omega,t}(1)}\geq 2\sin\frac{\pi}{N}$.  
\end{enumerate}
\end{lemma}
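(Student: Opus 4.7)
The plan is to prove the three assertions simultaneously by induction on $\nu\ge1$, exploiting the algebraic structure of the principal parts.

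For the base case $\nu=1$, everything follows from the explicit formulas. Part (a) is immediate: $\psi_{n,1}^{\,\omega,t}=(\psi_{n,1}\circ\tau_\omega-\psi_{n,1})\circ\tau_t$ by \eqref{psi1wt}, so its $\class^m$-norm is $\ord(n^\gamma)$ by \eqref{53a}; similarly $\varphi_{n,1}^{\,\omega,t}=\varphi_n^{\,\omega,t}$ is controlled by \eqref{L53c}. For (b), rewriting \eqref{psi1wt'} as
\[
\psi_{n,1}^{\,\omega,t}(\eul^{\ii\xi})=-4a(n)\sin\tfrac{\omega}{2}\cos\bigl(\xi-t-\tfrac{\omega}{2}\bigr)=2a(n)\,\Im\bigl(\Psi_{\omega,t}(1)\,\eul^{\ii\xi}\bigr),
\]
we read off $\Psi_{\omega,t}(1)=-2\sin\tfrac{\omega}{2}\,\eul^{\ii(\pi/2-t-\omega/2)}$, giving (c) for $\nu=1$ because $\omega=2\pi j/N$ with $j\in\{1,\dots,N-1\}$ forces $|\sin(\omega/2)|\ge\sin(\pi/N)$.

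For the inductive step, write $\uw=(\uw',\omega)$, $\ut=(\ut'\!,t)$ as in \eqref{inductom}. The recursion \eqref{psifn1uwut} together with the identity $-\psi_{n,1}^{\,-\omega,t}\!\circ\tau_\omega=\psi_{n,1}^{\,\omega,t}$ (and the analogous one for $\varphi_{n,1}$, established by direct computation from \eqref{psi1wt'}) yields
\[
\psi_{n,1}^{\,\uw,\ut}=\psi_{n,1}^{\,\uw',\ut'}\!\circ\tau_\omega+\psi_{n,1}^{\,\omega,t},\qquad \varphi_{n,1}^{\,\uw,\ut}=\varphi_{n,1}^{\,\uw',\ut'}\!\circ\tau_\omega-\varphi_{n,1}^{\,-\omega,t}\!\circ\tau_\omega.
\]
Since $\tau_\omega$ is an isometry of $\class^m(\cercle)$, the triangle inequality and the induction hypothesis give $\nu C_mn^\gamma$ and $\nu C_mn^{\gamma-1}$ respectively, proving (a). Substituting the inductive form of $\psi_{n,1}^{\,\uw',\ut'}$ into the additive decomposition proves (b) with the propagation rule
\[
\Psi_{\uw,\ut}(1)=\Psi_{\uw',\ut'}(1)\,\eul^{-\ii\omega}+\Psi_{\omega,t}(1).
\]

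The main content is part (c). Set $A:=\Psi_{\uw',\ut'}(1)\eul^{-\ii\omega}$ and $B':=-2\sin(\omega/2)\eul^{\ii(\pi/2-\omega/2)}$, so that $\Psi_{\uw,\ut}(1)=A+B'\eul^{-\ii t}$ with $|B'|\ge 2\sin(\pi/N)$. Choose $\tau_{\uw,\ut'}\coloneqq\pi+\arg(B')-\arg(A)\pmod{2\pi}$ if $A\neq 0$, and $\tau_{\uw,\ut'}\coloneqq 0$ otherwise; this is measurable (indeed, continuous away from $A=0$) in $\ut'$. Then $A$ and $B'\eul^{-\ii\tau}$ point in opposite directions, so
\[
|\Psi_{\uw,\ut}(1)|^2=(|A|-|B'|)^2+4|A||B'|\sin^2\bigl((t-\tau)/2\bigr).
\]
The decisive step is the purely algebraic inequality
\[
(|A|-|B'|)^2+4|A||B'|\sin^2 u\;\ge\;|B'|^2\sin^2 u,
\]
whose verification reduces, after expansion, to $|A|(|A|+2|B'|)(1-\sin^2 u)\ge 0$. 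Combined with $|\sin(v/2)|\ge(1/\pi)|v|_{2\pi}$ this gives
\[
|\Psi_{\uw,\ut}(1)|\ge|B'|\,|\sin((t-\tau)/2)|\ge\tfrac{|B'|}{\pi}|t-\tau_{\uw,\ut'}|_{2\pi}\ge\tfrac{2}{\pi}\sin\tfrac{\pi}{N}\,|t-\tau_{\uw,\ut'}|_{2\pi}.
\]
The main obstacle I anticipate is exactly this algebraic inequality: without it one would lose control when $|A|$ is much smaller than $|B'|$, since the naive bound $|\Psi|\ge||A|-|B'||$ is independent of $t$ and does not couple to the vanishing of $\sin((t-\tau)/2)$ as required.
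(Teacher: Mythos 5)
Your argument has the same skeleton as the paper's proof: the identical recursions for parts (a) and (b), the propagation rule $\Psi_{\uw,\ut}(1)=\Psi_{\uw'\!,\ut'}(1)\,\eul^{-\ii\omega}+\Psi_{\omega,t}(1)$, and for (c) the same idea of locating the resonant value $\tau_{\uw,\ut'}$ of $t$ (where the two summands anti-align) and bounding $\abs{\Psi_{\uw,\ut}(1)}$ below by a multiple of $\abs{t-\tau_{\uw,\ut'}}_{2\pi}$. The only real difference is the elementary step: the paper factors out $z_{\omega}=1-\eul^{-\ii\omega}$ and proves $\abs{\rho^2-\eul^{\ii\sigma}}\geq\frac{1}{\pi}\abs{\sigma}_{2\pi}$ by splitting on the sign of $\cos\sigma$, whereas you use the law of cosines and a quadratic inequality in $\abs{A}$. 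Both routes work and neither is simpler than the other.

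There is, however, a computational error in your ``decisive step''. The difference of the two sides of $(|A|-|B'|)^2+4|A||B'|\sin^2u\ge|B'|^2\sin^2u$ is \emph{not} $|A|(|A|+2|B'|)(1-\sin^2u)$: at $\sin^2u=1$ that identity would force $(|A|-|B'|)^2+4|A||B'|=|B'|^2$, i.e.\ $|A|=0$, for every $A$, and at $|A|=|B'|$, $\sin u=0$ the two sides differ by $3|B'|^2$. The correct expansion is
\[
(|A|-|B'|)^2+4|A||B'|\sin^2u-|B'|^2\sin^2u=(|A|-|B'|)^2\cos^2u+|A|\bigl(|A|+2|B'|\bigr)\sin^2u,
\]
which is manifestly nonnegative, so the inequality you need is true and your proof of (c) goes through after this one-line repair. (Your closing motivational remark is also slightly off: the delicate regime is $|A|\approx|B'|$ with $t$ near $\tau_{\uw,\ut'}$, not $|A|\ll|B'|$, where the crude bound $\bigl||A|-|B'|\bigr|\approx|B'|\geq 2\sin\frac{\pi}{N}$ already suffices.)
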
  

\begin{proof}  
(a) 
Estimates \eqref{6111'} hold clearly for $\nu=1$, see Section \ref{sec:9}. Let us make the induction assumption that 
\begin{align*}
\norm{\psi_{n,1}^{\,\uw'\!,\ut'}}_{\class^m(\cercle)}&\leq C_m(\nu-1)n^{\gamma}\\ 
\norm{\varphi_{n,1}^{\,\uw'\!,\ut'}}_{\class^m(\cercle)}&\leq C_m(\nu-1)n^{\gamma-1} 
\end{align*} 
hold for a certain $\nu\geq 2$. Then estimates \eqref{6111'} 
follow from 
\begin{align*}
\norm{\psi_{n,1}^{\,\uw,\ut}}_{\class^m(\cercle)}&\leq\norm{\psi_{n,1}^{\,\uw'\!,\ut'}}_{\class^m(\cercle)}+\norm{\psi_{n,1}^{-\omega,t}}_{\class^m(\cercle)},\\ 
\norm{\varphi_{n,1}^{\,\uw,\ut}}_{\class^m(\cercle)}&\leq\norm{\varphi_{n,1}^{\,\uw'\!,\ut'}}_{\class^m(\cercle)}+\norm{\varphi_{n,1}^{-\omega,t}}_{\class^m(\cercle)}. 
\end{align*} 

\smallskip\noindent
(b)
For $\nu=1$, if $\omega\in\Omega$, $t\in\D{R}$, then $\psi_{n,1}^{\,\omega,t}(\eul^{\ii\xi})=2a_n(j)\Im\Psi_{\,\omega,t}(\eul^{\ii\xi})$ holds with
\[
\Psi_{\,\omega,t}(\eul^{\ii\xi})\coloneqq\bigl(\eul^{-\ii\omega}-1\bigr)\eul^{\ii(\xi-t)}.
\]
Let now $\nu\geq 2$. By induction with respect to $\nu$, we assume that for any $\uw'\in(\Omega^*)^{\nu-1}$ and $\ut'\in\D{R}^{\nu-1}$ there exists $\Psi_{\,\uw'\!,\ut'}\in\class^{\infty}(\cercle)$ such that
\begin{equation}  \label{Psi0'}
\begin{split}
\psi_{n,1}^{\,\uw'\!,\ut'}(\eul^{\ii\xi})&=2a(n)\Im\Psi_{\,\uw'\!,\ut'}(\eul^{\ii\xi}),\\
\Psi_{\,\uw'\!,\ut'}(\eul^{\ii\xi})&=\Psi_{\,\uw'\!,\ut'}(1)\eul^{\ii\xi}.
\end{split} 
\end{equation}
If $\uw=(\uw',\omega)\in(\Omega^*)^{\nu-1}\times\Omega^*$ and $\ut=(\ut'\!,t)\in\D{R}^{\nu-1}\times\D{R}$, then \eqref{psin1uwut} ensures 
\[
\psi_{n,1}^{\,\uw,\ut}(\eul^{\ii\xi})=\psi_{n,1}^{\,\uw'\!,\ut'}(\eul^{\ii(\xi-\omega)})+\psi_{n,1}^{\,\omega,t}(\eul^{\ii\xi}),
\]  
and it is clear that relations \eqref{Psi} follow from \eqref{Psi0'} if we define $\Psi_{\uw,\ut}$ by   
\begin{equation} \label{recPsi}
\Psi_{\uw,\ut}(\eul^{\ii\xi})=\Psi_{\,\uw'\!,\ut'}(\eul^{\ii(\xi-\omega)})+\Psi_{\,\omega,t}(\eul^{\ii\xi}). 
\end{equation}

\smallskip\noindent  
(c)
Let $\omega\in\Omega^*$ and $z_{\omega}\coloneqq 1-\eul^{-\ii\omega}\neq 0$. Then $\abs{z_{\omega}}=2\sin\frac{\omega}{2}$. For $\nu=1$, using $\Psi_{\,\omega,t}(1)=-z_{\omega}\eul^{-\ii t}$ and $\omega\in\Omega^*$, we have the lower bound  
\[
\abs{\Psi_{\,\omega,t}(1)}=2\sin\tfrac{\omega}{2}\geq 2\sin\tfrac{\pi}{N}\,.
\]
For $\nu\geq 2$, if $\uw=(\uw',\omega)\in(\Omega^*)^{\nu-1}\times\Omega^*$ and $\ut=(\ut'\!,t)\in\D{R}^{\nu-1}\times\D{R}$, then \eqref{recPsi} ensures
\[
\Psi_{\uw,\ut}(1)=\Psi_{\,\uw'\!,\ut'}(1)\eul^{-\ii\omega}-z_{\omega}\eul^{-\ii t}.   
\]  
Let $\rho_{\uw,\ut'}\coloneqq\abs{\Psi_{\,\uw'\!,\ut'}(1)}^{1/2}\abs{z_{\omega}}^{-1/2}$ and $\tau_{\uw,\ut'}\in\lbrack 0,\,2\pi)$ be such that 
\[
z_{\omega}^{-1}\Psi_{\,\uw'\!,\ut'}(1)\eul^{-\ii\omega}=\rho_{\uw,\ut'}^2\eul^{\ii\tau_{\uw,\ut'}}.
\] 
Using $\Psi_{\uw,\ut}(1)=z_{\omega} \bigl(\rho_{\uw,\ut'}^2\eul^{\ii\tau_{\uw,\ut'}}-\eul^{-\ii t}\bigr)$ and $\abs{z_{\omega}}=2\sin\frac{\omega}{2}$ we can express 
\[
\abs{\Psi_{\uw,\ut}(1)}=2\sin\tfrac{\omega}{2}\times\abs{\rho_{\uw,\ut'}^2-\eul^{-\ii(t-\tau_{\uw,\ut'})}}.   
\]
Since $\sin\frac{\omega}{2}\geq\sin\frac{\pi}{N}$ for $\omega\in\Omega^*$, it remains to prove that 
\[
\abs{\rho^2-\eul^{\ii\tau}}\geq\tfrac{1}{\pi}\,\abs{\tau}_{2\pi}
\]
holds for any $\tau,\rho\in\D{R}$. We distinguish two cases. If $\cos\tau\leq 0$, then $\abs{\rho^2-\eul^{\ii\tau}}\geq 1\geq\abs{\tau}_{2\pi}/\pi$. If $\cos\tau\geq 0$, then $\abs{\rho^2-\eul^{\ii\tau}}\geq\abs{\sin\tau}\geq 2\,\abs{\tau}_{2\pi}/\pi$.
\end{proof}

\subsubsection{Remaining parts}

To define the remaining parts $\psi_{n,2}^{\,\uw,\ut}$ and $\varphi_{n,2}^{\,\uw,\ut}$ we proceed by induction on $\nu$ as in Section \ref{sec:102}. 

\smallskip\noindent
1) 
For $\nu=1$ we already defined $\psi_{n,2}^{\,\omega,t}\coloneqq\psi_n^{\,\omega,t}-\psi_{n,1}^{\,\omega,t}$ and $\varphi_{n,2}^{\,\omega,t}\coloneqq 0$. 

\smallskip\noindent
2) 
For $\nu\geq 2$ we still write $\uw=(\uw',\omega)\in(\Omega^*)^{\nu-1}\times\Omega^*$, $\ut=(\ut'\!,t)\in\D{R}^{\nu-1}\times\D{R}$, and define $\psi_{n,i}^{\,\uw,\ut},\varphi_{n,i}^{\,\uw,\ut}$, $i=1,2$ through $\psi_{0,n,i}^{\,\uw,\ut},\varphi_{0,n,i}^{\,\uw,\ut}$, according to the rules
\begin{subequations}  \label{psifn12uwut}
\begin{equation} \label{psifn11uwut} 
\begin{split} 
\psi_{n,i}^{\,\uw,\ut}&\coloneqq\psi_{0,n,i}^{\,\uw,\ut}\circ\tau_{\omega},\\ 
\varphi_{n,i}^{\,\uw,\ut}&\coloneqq\varphi_{0,n,i}^{\,\uw,\ut}\circ\tau_{\omega}.
\end{split}
\end{equation}
We define the principal parts $\psi_{n,1}^{\,\uw,\ut}$, $\psi_{n,1}^{\,\uw,\ut}$ using \eqref{psifn11uwut} with $i=1$ and 
\begin{equation} \label{psifn01uwut} 
\begin{split} 
\psi_{0,n,1}^{\,\uw,\ut}&\coloneqq\psi_{n,1}^{\,\uw'\!,\ut'}-\psi_{n,1}^{-\omega,t},\\ 
\varphi_{0,n,1}^{\,\uw,\ut}&\coloneqq\varphi_{n,1}^{\,\uw'\!,\ut'}-\varphi_{n,1}^{-\omega,t}.
\end{split}
\end{equation}
We define the remaining parts $\psi_{n,2}^{\,\uw,\ut}$, $\varphi_{n,2}^{\,\uw,\ut}$ using \eqref{psifn11uwut} with $i=2$ and
\begin{equation} \label{psifn2uwut} 
\begin{split} 
\psi_{0,n,2}^{\,\uw,\ut}&\coloneqq(\psi_{0,n,1}^{\,\uw,\ut}\circ\vartheta_n^{\,\omega,t}-\psi_{0,n,1}^{\,\uw,\ut})+(\psi_{n,2}^{\,\uw'\!,\ut'}-\psi_{n,2}^{-\omega,t}),\\ 
\varphi_{0,n,2}^{\,\uw,\ut}&\coloneqq(\varphi_{0,n,1}^{\,\uw,\ut}\circ\vartheta_n^{\,\omega,t}-\varphi_{0,n,1}^{\,\uw,\ut})+\varphi_{n,2}^{\,\uw'\!,\ut'}.
\end{split}
\end{equation}
\end{subequations}
The phase $\tilde\psi_n^{\,\uw,\ut}(j,\eul^{\ii\eta})$ is now defined for $\abs{j-n}\leq n/3$, according to \eqref{psi012}. For those values of $j$,
\[
\tilde\psi_n^{\,\uw,\ut}(j,\eul^{\ii\eta})\coloneqq\psi_n^{\,\uw,\ut}(\eul^{\ii\eta})+(j-n)\varphi_n^{\,\uw,\ut}(\eul^{\ii\eta})
\] 
with $\psi_n^{\,\uw,\ut}\coloneqq\psi_{n,1}^{\,\uw,\ut}+\psi_{n,2}^{\,\uw,\ut}$ and $\varphi_n^{\,\uw,\ut}\coloneqq\varphi_{n,1}^{\,\uw,\ut}+\varphi_{n,2}^{\,\uw,\ut}$.

\begin{lemma} \label{lem:102} 
Let $\psi_{n,i}^{\,\uw,\ut}$ and $\varphi_{n,i}^{\,\uw,\ut}$, $i=1,2$ be defined by \eqref{psifn12uwut} and let $0<\varepsilon\leq 1/8$ be fixed. Then there exist constants $\hat C$ and $\hat n$ (independent of $\uw,\,\ut,n$) such that the estimates  
\begin{subequations}  \label{psiphi2}
\begin{align}  \label{psi2}
\norm{\psi_{n,2}^{\,\uw,\ut}}_{\class^3(\cercle)}&\leq\hat C\nu\,n^{\varepsilon},\\
\label{vphi2}
\norm{\varphi_{n,2}^{\,\uw,\ut}}_{\class^3(\cercle)}&\leq\hat C\nu\,n^{2(\gamma-1)+\varepsilon}
\end{align} 
\end{subequations}
hold for $n\geq\max(\nu^{1/\varepsilon},\hat n)$. 
\end{lemma}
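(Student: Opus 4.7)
The plan is to establish \eqref{psi2} and \eqref{vphi2} simultaneously by induction on $\nu\geq 1$. Since $\norm{f\circ\tau_\omega}_{\class^3(\cercle)}=\norm{f}_{\class^3(\cercle)}$ for every $f\in\class^3(\cercle)$, the relations \eqref{psifn11uwut} reduce both estimates to bounding $\norm{\psi_{0,n,2}^{\,\uw,\ut}}_{\class^3(\cercle)}$ and $\norm{\varphi_{0,n,2}^{\,\uw,\ut}}_{\class^3(\cercle)}$. The base case $\nu=1$ is immediate from the definitions recalled just before the statement: the uniform bound $\norm{\psi_{n,2}^{\,\omega,t}}_{\class^3(\cercle)}\leq C_3$ (a consequence of \eqref{L53b} in Lemma~\ref{lem:92}(b), since $2\gamma\leq 1$) handles \eqref{psi2}, while $\varphi_{n,2}^{\,\omega,t}\equiv 0$ makes \eqref{vphi2} trivial.

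For the inductive step, fix $\nu\geq 2$ and write $\uw=(\uw'\!,\omega)$, $\ut=(\ut'\!,t)$. Applying the triangle inequality to the recursive definition \eqref{psifn2uwut} of $\psi_{0,n,2}^{\,\uw,\ut}$ yields three contributions. For the shift-difference $\psi_{0,n,1}^{\,\uw,\ut}\circ\vartheta_n^{\,\omega,t}-\psi_{0,n,1}^{\,\uw,\ut}$, I would invoke \eqref{61circ} with $m=4$ to bound it by $C_4'n^{\gamma-1}\norm{\psi_{0,n,1}^{\,\uw,\ut}}_{\class^4(\cercle)}$, and then use Lemma~\ref{lem:101}(a) on the two summands of $\psi_{0,n,1}^{\,\uw,\ut}=\psi_{n,1}^{\,\uw'\!,\ut'}-\psi_{n,1}^{-\omega,t}$ to estimate $\norm{\psi_{0,n,1}^{\,\uw,\ut}}_{\class^4(\cercle)}\leq C_4\nu\,n^\gamma$; since $2\gamma-1\leq 0$ the shift-difference is then bounded by $C_4C_4'\nu$. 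The remaining two summands $\psi_{n,2}^{\,\uw'\!,\ut'}$ and $\psi_{n,2}^{-\omega,t}$ are controlled by the induction hypothesis (at most $\hat C(\nu-1)n^\varepsilon$) and by the base case (at most $C_3$), respectively.

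The $\varphi$-recursion in \eqref{psifn2uwut} is treated in exactly the same way but more simply, because the $\nu=1$ base produces no analogue of the third term. Here the shift-difference is bounded by $C_4'n^{\gamma-1}\norm{\varphi_{0,n,1}^{\,\uw,\ut}}_{\class^4(\cercle)}\leq C_4C_4'\nu\,n^{2(\gamma-1)}$, again by combining \eqref{61circ} with Lemma~\ref{lem:101}(a). Adding the induction term $\hat C(\nu-1)n^{2(\gamma-1)+\varepsilon}$ is all that is needed to close the second estimate.

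The main technical obstacle is that the shift-difference contribution carries an extra factor $\nu$ (inherited from the linear growth of the principal parts in Lemma~\ref{lem:101}(a)), which must be matched against the $\hat C\nu\,n^\varepsilon$ ansatz. This is precisely what the hypothesis $\nu\leq n^\varepsilon$ provides: the bound $C_4C_4'\nu+C_3\leq(C_4C_4'+C_3)n^\varepsilon$ is valid for $n\geq\nu^{1/\varepsilon}\geq 1$, and analogously $C_4C_4'\nu\leq C_4C_4'\,n^\varepsilon$ for the $\varphi$-estimate. Taking $\hat C\geq C_4C_4'+C_3$ and choosing $\hat n$ large enough that \eqref{61circ} (hence Lemma~\ref{lem:101}) and Lemma~\ref{lem:92} apply simultaneously then closes both inductions.
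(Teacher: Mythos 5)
Your proof is correct and follows essentially the same induction as the paper: split $\psi_{0,n,2}^{\,\uw,\ut}$ (resp.\ $\varphi_{0,n,2}^{\,\uw,\ut}$) via the recursion \eqref{psifn2uwut} into the shift-difference term controlled by \eqref{61circ} together with \eqref{611} (resp.\ \eqref{611'}), the inductive term, and (for $\psi$) the base-case term, then absorb the $\nu$-factor using $\nu\leq n^\varepsilon$. The only difference from the paper is bookkeeping: you bound $\norm{\psi_{0,n,1}^{\,\uw,\ut}}_{\class^4}$ by applying Lemma~\ref{lem:101}(a) to the two summands rather than via the identity $\norm{\psi_{0,n,1}^{\,\uw,\ut}}_{\class^m}=\norm{\psi_{n,1}^{\,\uw,\ut}}_{\class^m}$, and you collect constants as $\hat C\geq C_4C_4'+C_3$ instead of the paper's $\hat C\geq 2\max(C_3,C_4C_4')$; both yield the same conclusion.
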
 

\begin{proof} 
The proof is by induction on $\nu$. 

\smallskip\noindent
1) 
For $\nu=1$, the first estimate \eqref{psi2} follows from \eqref{L53b} in Lemma \ref{lem:92}. The second \eqref{vphi2} is straightforward since $\varphi_{n,2}^{\,\omega,t}=0$.

\smallskip\noindent
2) 
Let now $\nu\geq 2$ and $\uw=(\uw',\omega)\in(\Omega^*)^{\nu-1}\times\Omega^*$, $\ut=(\ut'\!,t)\in\D{R}^{\nu-1}\times\D{R}$. By induction assumption, 
\begin{equation} \label{62c}
\begin{split}
\norm{\psi_{n,2}^{\,\uw'\!,\ut'}}_{\class^3(\cercle)}&\leq\hat C(\nu-1)n^{\varepsilon},\\
\norm{\varphi_{n,2}^{\,\uw'\!,\ut'}}_{\class^3(\cercle)}&\leq\hat C(\nu-1)n^{2(\gamma-1)+\varepsilon}.
\end{split}
\end{equation}  
By \eqref{611} we have the estimate 
\begin{equation} \label{6s2} 
\norm{\psi_{0,n,1}^{\,\uw,\ut}}_{\class^m(\cercle)}=\norm{\psi_{n,1}^{\,\uw,\ut}}_{\class^m(\cercle)}\leq C_m\nu\,n^{\gamma}.
\end{equation} 
Further on, we assume $\hat C\geq 2C_3$. Therefore  
\[ 
\norm{\psi_{n,2}^{-\omega,t}}_{\class^3(\cercle)}\leq C_3\leq 
\tfrac{1}{2}\hat C\leq\tfrac{1}{2}\hat C\,n^{\varepsilon}
\]
and combining the last estimate with induction assumption \eqref{62c} we get 
\[ 
\norm{\psi_{n,2}^{\uw'\!,t'}-\psi_{n,2}^{-\omega,t}}_{\class^3(\cercle)}\leq\hat C(\nu-\tfrac{1}{2})n^{\varepsilon}. 
\] 
However, using \eqref{61circ}, \eqref{6s2}, and $\nu\leq n^{\varepsilon}$, we obtain \begin{align*}
\norm{\psi_{0,n,1}^{\,\uw,\ut}\circ\vartheta_n^{\,\omega,t}-\psi_{0,n,1}^{\,\uw,\ut}}_{\class^3(\cercle)}&\leq C_4'n^{\gamma-1}\norm{\psi_{0,n,1}^{\,\uw,\ut}}_{\class^4(\cercle)}\\ 
&\leq C_4'C_4\nu\,n^{2\gamma-1}\\
&\leq C_4'C_4n^{\varepsilon}\leq\tfrac{1}{2}\hat Cn^{\varepsilon}
\end{align*}
provided that $\hat C\geq 2C_4'C_4$. Thus $\hat C\geq 2C_4'C_4$ ensures 
\[   
\norm{\psi_{0,n,2}^{\,\uw,\ut}}_{\class^3(\cercle)}\leq\hat C\nu\,n^{\varepsilon}.
\]   
We proceed similarly to obtain the estimates 
\begin{align*} 
\norm{\varphi_{0,n,1}^{\,\uw,\ut}}_{\class^m(\cercle)}&\leq C_m\nu\,n^{\gamma-1},\\
\norm{\varphi_{0,n,2}^{\,\uw,\ut}}_{\class^3(\cercle)}&\leq\hat C\nu\,n^{\varepsilon+2(\gamma-1)}.\qedhere  
\end{align*}   
\end{proof}

In Steps 2 and 3 of the proof of Lemma~\ref{lem:103} below we need to use the following phase functions
\begin{align*}
\tilde\psi_{0,n}^{\,\uw,\ut}&\coloneqq(\tilde\psi_n^{\,\uw'\!,\ut'}-\tilde\psi_n^{-\omega,t})\circ\tilde\vartheta_n^{\,\omega,t},\\
\psi_{0,n}^{\,\uw,\ut}&\coloneqq(\psi_n^{\uw',\ut'}-\psi_n^{-\omega,t})\circ\vartheta_n^{\,\omega,t},\\
\varphi_{0,n}^{\,\uw,\ut}&\coloneqq(\varphi_n^{\uw',\ut'}-\varphi_n^{-\omega,t})\circ\vartheta_n^{\,\omega,t}.
\end{align*}
For $\abs{j-n}\leq n/3$, we clearly have by induction 
\[
\tilde\psi_{0,n}^{\,\uw,\ut}(j,\eul^{\ii\eta})=\psi_{0,n}^{\,\uw,\ut}(\eul^{\ii\eta})+(j-n)\varphi_{0,n}^{\,\uw,\ut}(\eul^{\ii\eta}).
\]
To compare $\tilde\psi_n^{\,\uw,\ut}$ with $\tilde\psi_{0,n}^{\,\uw,\ut}\circ\tilde\tau_{\omega}$ we define extra terms  
\begin{equation} \label{psifn3uwut} 
\begin{split} 
\psi_{0,n,3}^{\,\uw,\ut}&\coloneqq(\psi_{n,2}^{\,\uw'\!,\ut'}-\psi_{n,2}^{-\omega,t})\circ\vartheta_n^{\,\omega,t}-(\psi_{n,2}^{\,\uw'\!,\ut'}-\psi_{n,2}^{-\omega,t}),\\ 
\varphi_{0,n,3}^{\,\uw,\ut}&\coloneqq\varphi_{n,2}^{\,\uw'\!,\ut'}\circ\vartheta_n^{\,\omega,t}-\varphi_{n,2}^{\,\uw'\!,\ut'}.
\end{split}
\end{equation}
From this definition it is clear that \eqref{61circ} and \eqref{psiphi2} ensure
\begin{subequations}  \label{psiphi3}
\begin{align}  \label{psi3}
\norm{\psi_{0,n,3}^{\,\uw,\ut}}_{\class^2(\cercle)}&\leq C_3'\hat C\nu\,n^{\varepsilon+ \gamma-1},\\
\label{phi3}
\norm{\varphi_{0,n,3}^{\,\uw,\ut}}_{\class^2(\cercle)}&\leq C_3'\hat C\nu\,n^{\varepsilon+3(\gamma-1)}
\end{align}
\end{subequations} 

Bringing together the expressions \eqref{psifn01uwut}, \eqref{psifn2uwut} and \eqref{psifn3uwut}, and using the relations $\psi_n^{\,\uw'\!,\ut'}=\psi_{n,1}^{\,\uw'\!,\ut'}+\psi_{n,2}^{\,\uw'\!,\ut'}$, $\psi_n^{-\omega,t}=\psi_{n,1}^{-\omega,t}+\psi_{n,2}^{-\omega,t}$ and $\varphi_n^{\,\uw'\!,\ut'}=\varphi_{n,1}^{\,\uw'\!,\ut'}+\varphi_{n,2}^{\,\uw'\!,\ut'}$, we get
\begin{align*}
\psi_{0,n}^{\,\uw,\ut}=\psi_{0,n,1}^{\,\uw,\ut}+\psi_{0,n,2}^{\,\uw,\ut}+\psi_{0,n,3}^{\,\uw,\ut},\\
\varphi_{0,n}^{\,\uw,\ut}=\varphi_{0,n,1}^{\,\uw,\ut}+\varphi_{0,n,2}^{\,\uw,\ut}+\varphi_{0,n,3}^{\,\uw,\ut}.
\end{align*}
Thus, if $\tilde\psi_{n,3}^{\,\uw,\ut}\coloneqq\tilde\psi_{0,n,3}^{\,\uw,\ut}\circ\tilde\tau_{\omega}$ with $\tilde\psi_{0,n,3}^{\,\uw,\ut}(j,\eul^{\ii\eta})\coloneqq\psi_{0,n,3}^{\,\uw,\ut}(\eul^{\ii\eta})+(j-n)\varphi_{0,n,3}^{\,\uw,\ut}(\eul^{\ii\eta})$, then
\begin{equation}  \label{default}
\tilde\psi_{0,n}^{\,\uw,\ut}\circ\tilde\tau_{\omega}=\tilde\psi_n^{\,\uw,\ut}+\tilde\psi_{n,3}^{\,\uw,\ut}.
\end{equation}

\subsection{Estimate of $\BS{H_n^{\,\uw,\ut}-Q_n^{\,\uw,\ut}}$}  \label{sec:104}

\begin{lemma}[estimate of $H_n^{\,\uw,\ut}-Q_n^{\,\uw,\ut}$] \label{lem:103} 
Let $Q_n^{\,\uw,\ut}$ be the operator defined by \eqref{Qnuwut} for some $\nu\geq 1$, with $\tilde\psi_n^{\,\uw,\ut}$ defined by \eqref{psi012} and \eqref{psifn12uwut}. Let $0<\varepsilon\leq 1/8$ be fixed. The difference $R_n^{\,\uw,\ut}\coloneqq H_n^{\,\uw,\ut}-Q_n^{\,\uw,\ut}$ satisfies the estimate    
\begin{equation} \label{rn}
\norm{R_n^{\,\uw,\ut}}\leq\nu\,n^{\gamma-1+3\varepsilon} 
\end{equation}    
for $n\geq\max(\nu^{1/\varepsilon},\hat n)$. 
\end{lemma}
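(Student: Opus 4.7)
The plan is to prove Lemma~\ref{lem:103} by induction on $\nu$, closely mirroring the structure of Lemma~\ref{lem:91} (which supplies the base case) and exploiting the factorization \eqref{inducth} of $H_n^{\,\uw,\ut}$.

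\textbf{Base case.} For $\nu=1$, Lemma~\ref{lem:91} gives $\norm{R_n^{\,\omega,t}}=\ord(n^{\gamma-1}\ln n)$ uniformly for $\omega\in\Omega^*$ and $\abs{t}\le t_0$. Since $\ln n\le n^{3\varepsilon}$ for $n$ large enough, this is bounded by $n^{\gamma-1+3\varepsilon}$, so \eqref{rn} holds when $\nu=1$.

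\textbf{Induction step.} Fix $\nu\ge 2$ and assume \eqref{rn} has been established for $\nu-1$. Using \eqref{inducth} we write
\[
R_n^{\,\uw,\ut}=\bigl(H_n^{\,\uw'\!,\ut'}-Q_n^{\,\uw'\!,\ut'}\bigr)(H_n^{-\omega,t})^*+Q_n^{\,\uw'\!,\ut'}\bigl(H_n^{-\omega,t}-Q_n^{-\omega,t}\bigr)^*+\bigl(Q_n^{\,\uw'\!,\ut'}(Q_n^{-\omega,t})^*-Q_n^{\,\uw,\ut}\bigr).
\]
The operator norms of the approximants $Q_n^{\,\uw'\!,\ut'}$ and $Q_n^{-\omega,t}$ are controlled by Lemma~\ref{lem:75}: combining the $\class^1$ estimates of the phase that follow from \eqref{6111'} and \eqref{psiphi2} yields $\norm{Q_n^{\,\uw'\!,\ut'}}=\ord(\nu\sqrt{\ln n})$ and $\norm{Q_n^{-\omega,t}}=\ord(\sqrt{\ln n})$. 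The induction hypothesis applied to the first factor and Lemma~\ref{lem:91} applied to the second then bound the first two summands by $(\nu-1)\,n^{\gamma-1+3\varepsilon}\cdot\ord(\sqrt{\ln n})+\ord(\nu\sqrt{\ln n})\cdot\ord(n^{\gamma-1}\ln n)$. Under the running assumptions $\nu\le n^{\varepsilon}$ and $0<\varepsilon\le 1/8$, both contributions fit comfortably inside $\tfrac12\nu\,n^{\gamma-1+3\varepsilon}$ for $n$ large enough.

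\textbf{The main term.} The core of the argument is the third difference. The composition formula Lemma~\ref{lem:74} applies with $\varphi_n^{-\omega,t}$ in the role of $\varphi_n$: its $\class^1$ norm is $\ord(n^{\gamma-1})\le 1/2$ by \eqref{611'}. Writing $Q_n^{\,\uw'\!,\ut'}$ and $(Q_n^{-\omega,t})^*$ as operators of the form treated in Lemma~\ref{lem:74}, after migrating the outer factor $\eul^{\ii\omega\Lambda}$ to the left via Lemma~\ref{lem:72}, gives
\[
Q_n^{\,\uw'\!,\ut'}(Q_n^{-\omega,t})^*=\eul^{\ii\abs{\uw}_1\Lambda}\,\bigl(\theta_{n,n}^{2\nu}\,\eul^{\ii\tilde\psi_{0,n}^{\,\uw,\ut}\circ\tilde\tau_\omega}\bigr)(\Lambda,S)\,\F{p}_n^{\,\omega,t}(S)\,\Theta_n,
\]
up to corrections coming from the commutation of the cut-offs with $Q_n^{-\omega,t}$, which are $\ord(n^{\gamma-1}\sqrt{\ln n})$ by Lemma~\ref{lem:76} and the $\class^2$ bound \eqref{tpsin.estim}. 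Identity \eqref{default} rewrites the phase in the exponent as $\tilde\psi_n^{\,\uw,\ut}+\tilde\psi_{n,3}^{\,\uw,\ut}$; expanding $\eul^{\ii\tilde\psi_{n,3}^{\,\uw,\ut}}=1+\ord(\tilde\psi_{n,3}^{\,\uw,\ut})$ and invoking the $\class^2$ bounds \eqref{psiphi3} together with Lemma~\ref{lem:75} produces a residual of order $\nu\,n^{\gamma-1+\varepsilon}\sqrt{\ln n}$. Absorbing the factor $\F{p}_n^{\,\omega,t}-1=\ord(n^{\gamma-1})$ from \eqref{xin'} by the same mechanism, one concludes that the third summand is $\ord(\nu\,n^{\gamma-1+2\varepsilon})$, and the induction closes.

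\textbf{Main obstacle.} The delicate point is the bookkeeping of the many residuals generated in the induction step: the norms of the $Q_n$'s carry a $\sqrt{\ln n}$ factor from Lemma~\ref{lem:75}, the composition formula produces the extra prefactor $\F{p}_n^{\,\omega,t}\simeq 1+\ord(n^{\gamma-1})$ and the phase correction $\tilde\psi_{n,3}^{\,\uw,\ut}$ which is only controlled in $\class^2$, and the polynomial-in-$\nu$ growth of the $\class^m$-norms of the phases (via \eqref{6111'} and \eqref{psiphi2}) makes each step add a factor of $\nu$. The margin $3\varepsilon$ in the exponent of the statement, together with the restriction $\nu\le n^{\varepsilon}$ and the choice $\varepsilon\le 1/8$, is tailored to absorb precisely this combination of losses --- distributing the $\varepsilon$ budget between the three sources of error is what makes the induction go through.
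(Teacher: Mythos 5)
Your overall strategy is the paper's: induction on $\nu$ with base case Lemma~\ref{lem:91}, the three-term decomposition of $R_n^{\,\uw,\ut}$ via the factorization \eqref{inducth}, the composition formula of Lemma~\ref{lem:74} for the main term, the identity \eqref{default} to isolate the phase defect $\tilde\psi_{n,3}^{\,\uw,\ut}$, and Lemmas~\ref{lem:75}--\ref{lem:76} for the norm and commutator bounds. However, there is a concrete gap in the induction step that, as written, prevents the induction from closing. You bound the first summand $\bigl(H_n^{\,\uw'\!,\ut'}-Q_n^{\,\uw'\!,\ut'}\bigr)(H_n^{-\omega,t})^*$ by $(\nu-1)\,n^{\gamma-1+3\varepsilon}\cdot\ord(\sqrt{\ln n})$ and claim this fits inside $\tfrac12\nu\,n^{\gamma-1+3\varepsilon}$. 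It does not: already for $\nu=2$ you would need $C\sqrt{\ln n}\leq 1$. More importantly, any multiplicative loss on the term carrying the induction hypothesis is fatal, since iterating it produces a factor $(\,C\sqrt{\ln n}\,)^{\nu}$ after $\nu$ steps. The correct (and necessary) observation is that $\lVert H_n^{-\omega,t}\rVert\leq 1$ (it is a product of unitaries and the contraction $\Theta_n^2$), so the first summand is bounded by exactly $(\nu-1)\,n^{\gamma-1+3\varepsilon}$ with no loss, and \emph{all} the remaining contributions of the step (the term $Q_n^{\,\uw'\!,\ut'}(R_n^{-\omega,t})^*$, the $\F{p}_n^{\,\omega,t}-1$ and $\tilde\psi_{n,3}^{\,\uw,\ut}$ corrections, and the final commutator with $\Theta_n^2$) must be squeezed into the single increment $n^{\gamma-1+3\varepsilon}$ allowed per step --- which is precisely what the $3\varepsilon$ margin is calibrated for.

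A secondary inaccuracy: you assert $\lVert Q_n^{\,\uw'\!,\ut'}\rVert=\ord(\nu\sqrt{\ln n})$ ``from the $\class^1$ estimates of the phase''. Lemma~\ref{lem:75} does not multiply its bound by the $\class^1$ norm of the phase; the phase enters only through the hypothesis $\norm{\varphi_n}_{\class^2(\cercle)}\leq 1/2$ (which must be checked for $\varphi_n^{\,\uw'\!,\ut'}$ using \eqref{611'}, \eqref{vphi2} and $\nu\leq n^{\varepsilon}$), while the prefactor involves only the amplitude, here the cutoff $\theta_{n,n}^{2(\nu-1)}$, so the correct bound is $4\sqrt{\ln n}$ uniformly in $\nu$. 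This slip is harmless for the second summand but it is symptomatic of the same bookkeeping issue that breaks the first one.
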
 

\begin{proof} 
The proof is by induction on $\nu$. 

\smallskip\noindent
1) For $\nu=1$, Lemma \ref{lem:91} ensures 
\begin{equation} \label{62}
\norm{R_n^{\,\omega,t}}=\ord(n^{\gamma-1}\ln n).
\end{equation}  
Moreover, Lemma \ref{lem:75} ensures $\norm{Q_n^{\,\omega,t}}=\ord(\sqrt{\ln n})$.

\smallskip\noindent
2) 
Let now $\nu\geq 2$ and $\uw=(\uw'\!,\omega)\in(\Omega^*)^{\nu-1}\times\Omega^*$, $\ut=(\ut'\!,t)\in\D{R}^{\nu-1}\times\D{R}$. By induction assumption, the difference $R_n^{\,\uw'\!,\ut'}\coloneqq H_n^{\,\uw'\!,\ut'}-Q_n^{\,\uw'\!,\ut'}$ satisfies the estimate    
\begin{equation} \label{62r}
\norm{R_n^{\,\uw'\!,\ut'}}\leq(\nu-1)n^{\gamma-1+3\varepsilon}.
\end{equation}    
By \eqref{Qnuwut},
\begin{equation} \label{63}
Q_n^{\,\uw'\!,\ut'}=\eul^{\ii\abs{\uw'}_1\Lambda}\bigl(\theta_{n,n}^{2(\nu-1)}\,\eul^{\ii\tilde\psi_n^{\,\uw'\!,\ut'}}\bigr)(\Lambda,S),
\end{equation}
where $\tilde\psi_n^{\,\uw'\!,\ut'}(j,\eul^{\ii\eta})=(\psi_{n,1}^{\,\uw'\!,\ut'}+\psi_{n,2}^{\,\uw'\!,\ut'})(\eul^{\ii\eta})+(j-n)(\varphi_{n,1}^{\,\uw'\!,\ut'}+\varphi_{n,2}^{\,\uw'\!,\ut'})(\eul^{\ii\eta})$ for $\abs{j-n}\leq n/3$. Then using $\nu\leq n^{\varepsilon}$ and $\gamma-1+2\varepsilon\leq-1/4$ we can estimate 
\[ 
\norm{\varphi_{n,1}^{\,\uw'\!,\ut'}}_{\class^3(\cercle)}+\norm{\varphi_{n,2}^{\,\uw'\!,\ut'}}_{\class^3(\cercle)}\leq(C_3+\hat C)n^{\gamma-1+2\varepsilon}\leq 1/2
\]
for $n\geq\hat n$ so that Lemma \ref{lem:75} applies:
\begin{equation} \label{62q}
\norm{Q_n^{\,\uw'\!,\ut'}}\leq 4\sqrt{\ln n}.
\end{equation}
We will estimate the difference $H_n^{\,\uw,\ut}-Q_n^{\,\uw,\ut}$ assuming $\nu\leq n^{\varepsilon}$ and $n\geq\hat n$:
\begin{enumerate}[\textbullet]
\item
We first compare $H_n^{\,\uw,\ut}=H_n^{\,\uw'\!,\ut'}(H_n^{\,-\omega,t})^*$ with $Q_n^{\,\uw'\!,\ut'}(Q_n^{-\omega,t})^*$.
\item
Then we compare $Q_n^{\,\uw'\!,\ut'}(Q_n^{-\omega,t})^*$ with $Q_n^{\,\uw,\ut}$.
\end{enumerate}
 
\begin{stepone*}\sl
The difference $R_{n,1}^{\,\uw,\ut}\coloneqq H_n^{\,\uw,\ut}-Q_n^{\,\uw'\!,\ut'}(Q_n^{-\omega,t})^*$ satisfies 
\begin{equation} \label{6s1'}
\norm{R_{n,1}^{\,\uw,\ut}}\leq\norm{R_n^{\,\uw'\!,\ut'}}+\tilde C_1 n^{\gamma-1}\ln^{3/2}\!n, 
\end{equation} 
where $\tilde C_1$ is a constant independent of $\uw$, $\ut$, and $n$.  
\end{stepone*}

Using the factorization \eqref{inducth} and the definitions of $R_n^{\,\uw'\!,\ut'}$ and $R_n^{-\omega,t}$ we find that
\[ 
R_{n,1}^{\,\uw,\ut}=R_n^{\,\uw'\!,\ut'}(H_n^{-\omega,t})^*+Q_n^{\,\uw'\!,\ut'}(R_n^{-\omega,t})^*.  
\] 
Hence, using \eqref{62q} and $\norm{H_n^{-\omega,t}}=1$ we get the estimate 
\[ 
\norm{R_{n,1}^{\,\uw,\ut}}\leq\norm{R_n^{\,\uw'\!,\ut'}}+4\sqrt{\ln n}\,\norm{R_n^{-\omega,t}}. 
\] 
It is clear that \eqref{6s1'} follows since $\norm{R_n^{-\omega,t}}=\ord(n^{\gamma-1}\ln n)$ by Lemma~\ref{lem:91}. 
                                    
\begin{steptwo*}\sl
Computation of $Q_n^{\,\uw'\!,\ut'}(Q_n^{-\omega,t})^*$. We claim that
\[
Q_n^{\,\uw'\!,\ut'}(Q_n^{-\omega,t})^*=\eul^{\ii\abs{\uw}_1\Lambda}P_n^{\,\uw,\ut}\,\Theta_n^2\text{ with }P_n^{\,\uw,\ut}\coloneqq\bigl(\theta_{n,n}^{2(\nu-1)}\,\eul^{\ii\tilde\psi_{0,n}^{\,\uw,\ut}\circ\tilde\tau_{\omega}}\bigr)(\Lambda,S)\,\F{p}_n^{\,\omega,t}(S).  
\] 
Here, $\tilde\psi_{0,n}^{\,\uw,\ut}\coloneqq(\tilde\psi_n^{\,\uw'\!,\ut'}-\tilde\psi_n^{-\omega,t})\circ\tilde\vartheta_n^{\,\omega,t}$ as above, $\F{p}_n^{\,\omega,t}\coloneqq\F{p}_{0,n}^{\omega,t}\circ\tau_{\omega}$ and $\F{p}_{0,n}^{\omega,t}(\eul^{\ii\eta})\coloneqq 1+\partial_{\eta}\tilde\xi_n^{\,\omega,t}(\eul^{\ii\eta})$.
\end{steptwo*}

Using \eqref{63}, Lemma \ref{lem:74} at line two below, and Lemma \ref{lem:72} at line three, we indeed have
\begin{align*}
Q_n^{\,\uw'\!,\ut'}(Q_n^{-\omega,t})^*&=\eul^{\ii\abs{\uw'}_1\Lambda}\,\bigl(\theta_{n,n}^{2(\nu-1)}\,\eul^{\ii\tilde\psi_n^{\,\uw'\!,\ut'}}\bigr)(\Lambda,S)\,\bigl((\theta_{n,n}^2\,\eul^{\ii\tilde\psi_n^{-\omega,t}})(\Lambda,S)\bigr)^*\,\eul^{\ii\omega\Lambda}\\
&=\eul^{\ii\abs{\uw'}_1\Lambda}\,\bigl(\theta_{n,n}^{2(\nu-1)}\,\eul^{\ii\tilde\psi_{0,n}^{\,\uw,\ut}}\bigr)(\Lambda,S)\,\F{p}_{0,n}^{\omega,t}(S)\,\eul^{\ii\omega\Lambda}\,\Theta_n^2\\
&=\eul^{\ii\abs{\uw}_1\Lambda}\,\bigl(\theta_{n,n}^{2(\nu-1)}\,\eul^{\ii\tilde\psi_{0,n}^{\,\uw,\ut}\circ\tilde\tau_{\omega}}\bigr)(\Lambda,S)\,\F{p}_n^{\,\omega,t}(S)\,\Theta_n^2.
\end{align*}
Let us note that $\F{r}_{0,n}^{\,\omega,t}\coloneqq\F{p}_{0,n}^{\omega,t}-1$ satisfies
\begin{equation} \label{r0}
\abs{\F{r}_{0,n}^{\,\omega,t}(\eul^{\ii\xi})}\leq\tilde C_0n^{\gamma-1}.
\end{equation}  

\begin{stepthree*}\sl
Approximation of $P_n^{\,\uw,\ut}$ by $\tilde P_n^{\,\uw,\ut}\coloneqq\bigl(\theta_{n,n}^{2(\nu-1)}\,\eul^{\ii\tilde\psi_n^{\,\uw,\ut}}\bigr)(\Lambda,S)$. Estimate of $R_{n,3}^{\,\uw,\ut}\coloneqq P_n^{\,\uw,\ut}-\tilde P_n^{\,\uw,\ut}$.
\end{stepthree*}
 
Denote $r_{n,3}^{\,\uw,\ut}\coloneqq\eul^{\ii\tilde\psi_{n,3}^{\,\uw,\ut}}-1$ and $\F{r}_n^{\,\omega,t}\coloneqq\F{p}_n^{\,\omega,t}-1=\F{r}_{0,n}^{\,\omega,t}\circ\tau_{\omega}$. Using $\tilde\psi_{0,n}^{\,\uw,\ut}\circ\tilde\tau_{\omega}=\tilde\psi_n^{\,\uw,\ut}+\tilde\psi_{n,3}^{\,\uw,\ut}$ (see \eqref{default}), we get $\eul^{\ii\tilde\psi_{0,n}^{\,\uw,\ut}\circ\tilde\tau_{\omega}}=\eul^{\ii\tilde\psi_n^{\,\uw,\ut}}+\eul^{\ii\tilde\psi_n^{\,\uw,\ut}}r_{n,3}^{\,\uw,\ut}$. Thus,
\begin{align*} 
P_n^{\,\uw,\ut}&\coloneqq\bigl(\theta_{n,n}^{2(\nu-1)}\,\eul^{\ii\tilde\psi_{0,n}^{\,\uw,\ut}\circ\tilde\tau_{\omega}}\bigr)(\Lambda,S)\,\F{p}_n^{\,\omega,t}(S)\\
&=(\tilde P_n^{\,\uw,\ut}+\tilde R_n^{\,\uw,\ut})\bigl(I+\F{r}_n^{\,\omega,t}(S)\bigr)
\intertext{with}
\tilde R_n^{\,\uw,\ut}&\coloneqq\bigl(\theta_{n,n}^{2(\nu-1)}\,\eul^{\ii\tilde\psi_n^{\,\uw,\ut}}r_{n,3}^{\,\uw,\ut}\bigr)(\Lambda,S).
\end{align*}
However, using estimates \eqref{psiphi3} of $\norm{\psi_{0,n,3}^{\,\uw,\ut}}_{\class^2(\cercle)}$ and $\norm{\varphi_{0,n,3}^{\,\uw,\ut}}_{\class^2(\cercle)}$ in
\[
\tilde\psi_{n,3}^{\,\uw,\ut}(j,\,\cdot\,)=\psi_{0,n,3}^{\,\uw,\ut}\circ\tau_{\omega}+(j-n)\varphi_{0,n,3}^{\,\uw,\ut}\circ\tau_{\omega} 
\] 
we get 
\[
\norm{r_{n,3}^{\,\uw,\ut}(j,\,\cdot\,)}_{\class^1(\cercle)}\leq\norm{\tilde\psi_{n,3}^{\,\uw,\ut}(j,\,\cdot\,)}_{\class^1(\cercle)}\leq 2C_3'\hat C\nu\,n^{\varepsilon+\gamma-1} 
\] 
for any $j\in\D{Z}$ such that $\abs{j-n}\leq n/3$. Hence 
\[ 
\norm{\tilde R_n^{\,\uw,\ut}}\leq 8\sqrt{\ln n}\,C_3'\hat C\nu\,n^{\varepsilon+\gamma-1}.
\]  
Using moreover $\norm{\tilde P_n^{\,\uw,\ut}}\leq 4\sqrt{\ln n}$ and $\abs{\F{r}_n^{\,\omega,t}(\eul^{\ii\xi})}\leq\tilde C_0n^{\gamma-1}$, we obtain 
\[  
P_n^{\,\uw,\ut}=\tilde P_n^{\,\uw,\ut}+R_{n,3}^{\,\uw,\ut}
\] 
with 
\[
\norm{R_{n,3}^{\,\uw,\ut}}\leq 4\sqrt{\ln n}\,n^{\gamma-1}(\tilde C_0+2C_3'\hat C(1+\tilde C_0)n^{2\varepsilon}).
\] 
 
\begin{stepfour*}\sl
Estimate of $R_{n,4}^{\,\uw,\ut}\coloneqq H_n^{\,\uw,\ut}-\eul^{\ii\abs{\uw}_1\Lambda}\tilde P_n^{\,\uw,\ut}\,\Theta_n^2$.
\end{stepfour*}
 
Using Step 2 and definitions we have $R_{n,4}^{\,\uw,\ut}=R_{n,1}^{\,\uw,\ut}+\eul^{\ii\abs{\uw}_1\Lambda}R_{n,3}^{\,\uw,\ut}\,\Theta_n^2$. By Steps 1 and 3 we get 
\begin{align*} 
\norm{R_{n,4}^{\,\uw,\ut}}&\leq\norm{R_{n,1}^{\,\uw,\ut}}+\norm{R_{n,3}^{\,\uw,\ut}}\\ 
&\leq\norm{R_{n,1}^{\,\uw'\!,\ut'}}+\tilde C_2\sqrt{\ln n}\, n^{\gamma-1}(\ln n+n^{2\varepsilon})  
\end{align*}  
where the constant $\tilde C_2$ depends on the constants $\tilde C_0$ in \eqref{r0}, $\tilde C_1$ in \eqref{6s1'}, $C_4$, $C_4'$ and $\hat C$. Recall that $C_m\leq C_{m+1}$ and $C_m'\leq C_{m+1}'$ for all $m\geq 0$. 
 
\begin{stepfive*}\sl
End of proof of Lemma \ref{lem:103}.
\end{stepfive*}

We have $R_n^{\,\uw,\ut}=H_n^{\,\uw,\ut}-Q_n^{\,\uw,\ut}=R_{n,4}^{\,\uw,\ut}+\eul^{\ii\abs{\uw}_1\Lambda}\tilde P_n^{\,\uw,\ut}\,\Theta_n^2-\eul^{\ii\abs{\uw}_1\Lambda}\Theta_n^2\tilde P_n^{\,\uw,\ut}=R_{n,4}^{\,\uw,\ut}+\eul^{\ii\abs{\uw}_1\Lambda}\croch{\tilde P_n^{\,\uw,\ut},\Theta_n^2}$, then 
\[
\norm{R_n^{\,\uw,\ut}}\leq\norm{R_{n,4}^{\,\uw,\ut}}+\norm{\croch{\tilde P_n^{\,\uw,\ut},\Theta_n^2}} 
\]
Applying Lemma~\ref{lem:76} we obtain  
\[ 
\norm{\croch{\tilde P_n^{\,\uw,\ut},\Theta_n^2}}\leq\tilde C_3\sqrt{\ln n}\,n^{\gamma-1+2\varepsilon}  
\]  
with $\tilde C_3$ depending only on $C_2$ and $\hat C$. We can choose $\hat n$ depending on $\tilde C_2$, $\tilde C_3$, and $\varepsilon$ so that 
\begin{equation} \label{fin51} 
n\geq\hat n\implies\norm{R_n^{\,\uw,\ut}}\leq\norm{R_n^{\,\uw'\!,\ut'}}+n^{3\varepsilon+\gamma-1}. 
\end{equation} 
We complete the proof of \eqref{rn} using \eqref{fin51} and the induction assumption \eqref{62r}.
\end{proof}

\subsection{End of proof of Proposition \ref{prop:6} (ii)} \label{sec:105} 

Using Lemma \ref{lem:101} and taking $\eta_{\uw,\ut}\in[0,\,2\pi)$ such that 
\[
\Psi_{\uw,\ut}(1)=\abs{\Psi_{\uw,\ut}(1)}\,\eul^{\ii\eta_{\uw,\ut}}
\]
we can write 
\[ 
\psi_{n,1}^{\,\uw,\ut}(j,\eul^{\ii\eta})=2a(n)\abs{\Psi_{\uw,\ut}(1)}\,\sin(\eta+\eta_{\uw,\ut}). 
\]
Then the change of variable $\xi=\eta+\pi/2-\eta_{\uw,\ut}$ gives  
\[ 
\int_0^{2\pi}\eul^{\ii\tilde\psi_n^{\,\uw,\ut}(j,\eul^{\ii\eta})}\,\dd\eta=\C{J}(b_n^{\,\uw,\ut}(j,\,\cdot\,),\mu_n^{\,\uw,\ut}) 
\]
where $\C{J}$ is as in Lemma~\ref{lem:93} with
\begin{align*}
b_n^{\,\uw,\ut}&\coloneqq\bigl(\eul^{\ii\psi_{n,2}^{\,\uw,\ut}+\ii\tilde\varphi_n^{\,\uw,\ut}}\bigr)\circ\tilde\tau_{\eta_{\uw,\ut}-\pi/2},\\ 
\mu_n^{\,\uw,\ut}&\coloneqq 2a(n)\,\abs{\Psi_{\uw,\ut}(1)},   
\end{align*}
where $\tilde\varphi_n^{\,\uw,\ut}(j,\,\cdot\,)\coloneqq(j-n)\varphi_n^{\,\uw,\ut}(\,\cdot\,)$. However Lemma \ref{lem:102} ensures  
\[ 
\sup_{\abs{j-n}\leq n^{\gamma}}\norm{b_n^{\,\uw,\ut}(j,\,\cdot\,)}_{\class^2(\cercle)}\leq C'n^{4\varepsilon}  
\] 
and due to Lemma \ref{lem:101} there exists $c_0>0$ such that 
\[ 
\mu_n^{\,\uw,\ut}\geq c_0 n^{\gamma}\abs{t-\tau_{\uw,\ut'}}_{2\pi}\,. 
\]      
Further on, we abbreviate $\C{J}_{n,j}^{\,\uw,\ut}\coloneqq\C{J}(b_n^{\,\uw,\ut}(j,\,\cdot\,),\mu_n^{\,\uw,\ut})$. Since $\abs{\C{J}_{n,j}^{\,\uw,\ut}}\leq 2\pi$ we get 
\[
\left\lvert\int_{2k\pi+\tau_{\uw,\ut'}-n^{-\gamma/2}}^{2k\pi+\tau_{\uw,\ut'}+n^{-\gamma/2}}\C{J}_{n,j}^{\,\uw,\ut}\,\dd t\,\right\rvert\leq 4\pi n^{-\gamma/2}
\]
and it remains to integrate $\C{J}_{n,j}^{\,\uw,\ut}$ over 
\[
\Delta_n\coloneqq\croch{-t_0,t_0}\setminus\,\bigcup_{k\in\D{Z}}\,\croch{2k\pi+\tau_{\uw,\ut'}-n^{-\gamma/2},\,2k\pi+\tau_{\uw,\ut'}+n^{-\gamma/2}}.
\]
However combining Lemma \ref{lem:92} and Lemma \ref{lem:101} we find the estimate 
\[
\sup_{\abs{j-n}\leq n^{\gamma}}\abs{\C{J}_{n,j}^{\,\uw,\ut}}\leq\frac{C}{n^{\gamma/2}\abs{t-\tau_{\uw,\ut'}}_{2\pi}^{1/2}}\left(1+\frac{C'n^{4\varepsilon}}{n^{\gamma/2}\abs{t-\tau_{\uw,\ut'}}_{2\pi}^{1/2}}\right)
\]
and due to $4\varepsilon\leq\gamma/4$ we can estimate 
\[
t\in\Delta_n\implies\frac{n^{4\varepsilon}}{n^{\gamma/2}\abs{t-\tau_{\uw,\ut'}}_{2\pi}^{1/2}}\leq n^{4\varepsilon-\gamma/4}\leq 1. 
\]
Since $t\to\abs{t}^{-1/2}$ is locally integrable on $\D{R}$ we 
complete the proof writing
\[
\sup_{\abs{j-n}\leq n^{\gamma}}\int_{\Delta_n}\abs{\C{J}_{n,j}^{\,\uw,\ut}}\,\dd t\leq\frac{C(1+C')}{n^{\gamma/2}}\int_{-t_0}^{t_0} \frac{\dd t}{\abs{t-\tau_{\uw,\ut'}}_{2\pi}^{1/2}}\leq\frac{C''}{n^{\gamma/2}}\,.
\] 

\section{Proof of Theorem \ref{thm:2}}  \label{sec:11}
\subsection{Plan of Section~\ref{sec:11}} \label{sec:110} 

In Section \ref{sec:5} we introduced operators $L_n$ and explained that Proposition \ref{prop:4} implies Proposition \ref{cor5} (b) \& (c) whereas property (a) is still unproven.

In Section \ref{sec:111} we will prove Proposition \ref{prop:11} which is the basic tool to deduce the asymptotic estimate of Theorem \ref{thm:2} from the trace estimate of Proposition~\ref{prop:5}. More precisely, Proposition \ref{prop:11} allows us to deduce \eqref{cor5a} from \eqref{cor5b}, \eqref{cor5c} and from the trace estimate \eqref{tr0}.

We observe that writing $k=n+j$ in \eqref{G0n} we find 
\begin{equation} \label{G0n'}
\C{G}_n^0=\sum_{j\in\D{Z}}\bigl(\chi(\lambda_{n+j}(L_n)-l(n))-\chi(l_n(n+j)-l(n))\bigr)  
\end{equation}   
with $l(n)\coloneqq l_n(n)$ and in Section \ref{sec:111} we consider expressions of the form \eqref{G0n'} with $\lambda_{n+j}(L_n)$ replaced by $l_n(n+j)+r_n(j)$. 

The proof of Theorem \ref{thm:2} is completed in Section \ref{sec:113}.

\subsection{Comparison of two sequences} \label{sec:111}

In this section we consider two sequences $(l_n(n+j))_{j\in\D{Z}}$ and $(l_n(n+j)+r_n(j))_{j\in\D{Z}}$ where $l_n$ is defined in \eqref{lnj} and where $r_n\colon\D{Z}\to\D{R}$ has the following two properties:
\begin{subequations} \label{11abc}
\begin{equation} \label{11a}
\sup_{{j\in\D{Z}}}\abs{r_n(j+N)-r_n(j)}\leq Cn^{\gamma-1}  
\end{equation}
and 
\begin{equation} \label{11b}
\sup_{j\in\D{Z}}\abs{r_n(j)}\leq\rho_N'  
\end{equation} 
with
\begin{equation} \label{11c}
\begin{split}
\rho_2'&<\frac{1}{2}\,,\\
\rho_N'&<\frac{1}{\pi\sqrt{N}}\text{ when }N\geq 3. 
\end{split}   
\end{equation}
\end{subequations} 
For $\chi\in\C{S}(\D{R})$ we denote
\begin{equation} \label{11d}
\C{G}_n^{\chi}\coloneqq\sum_{j\in\D{Z}}\bigl(\chi(l_n(n+j)+r_n(j)-l(n))-\chi(l_n(n+j)-l(n))\bigr) 
\end{equation} 
where $l(n)=l_n(n)$.

\begin{proposition}\label{prop:11} 
Assume that $r_n\colon\D{Z}\to\D{R}$ satisfies \eqref{11abc} and that 
\begin{equation} \label{11e}
\C{G}_n^{\chi}=\ord(n^{-\gamma/2}\ln n)
\end{equation}   
holds for any $\chi\in\C{S}(\D{R})$ whose Fourier transform has compact support. Then 
\[
r_n(0)=\ord(n^{-\gamma/2}\ln n). 
\] 
\end{proposition}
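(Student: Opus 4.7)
The strategy is to test the hypothesis \eqref{11e} against a well-chosen family of admissible $\chi\in\C{S}(\D{R})$ (those with $\hat\chi$ compactly supported) and extract $r_n(0)$ from the resulting linear relations. I would first linearize each summand in \eqref{11d} by writing
\[
\chi(l_n(n+j)+r_n(j)-l(n))-\chi(l_n(n+j)-l(n))=\int_0^{r_n(j)}\chi'\bigl(l_n(n+j)+s-l(n)\bigr)\dd s,
\]
then use the spacing estimate of Lemma~\ref{lem:32}, namely $l_n(n+j)-l(n)=j+\ord(n^{2\gamma-2}|j|)$ uniformly on bounded ranges of $j$, together with the rapid decay of $\chi$, to reduce $\C{G}_n^\chi$ to the discrete sum $\sum_{j\in\D{Z}}r_n(j)\chi'(j)$ modulo quadratic-in-$r_n$ Taylor remainders and tail contributions of size $\ord(n^{-\gamma/2}\ln n)$.

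I would then use the approximate $N$-periodicity hypothesis \eqref{11a} to collect indices by residue classes modulo $N$: writing $r_n(kN+m)=r_n(m)+\ord(|k|n^{\gamma-1})$ and applying Poisson summation to $\sum_k\chi'(kN+m)$, the expression $\C{G}_n^\chi$ becomes (up to manageable errors) a linear combination of the $N$ values $r_n(0),\dots,r_n(N-1)$ with coefficients determined by $\hat\chi$ evaluated at the lattice frequencies $\frac{2\pi}{N}\D{Z}\cap\supp\hat\chi$. Varying $\chi$ over admissible choices whose Fourier support is concentrated near each of the frequencies $\frac{2\pi l}{N}$, $l=0,\dots,N-1$, yields $N$ independent linear relations for the vector $(r_n(m))_{0\leq m<N}$ whose right-hand sides are all $\ord(n^{-\gamma/2}\ln n)$; solving this $N\times N$ Fourier/Vandermonde system expresses $r_n(0)$ as a bounded linear combination of these right-hand sides plus Taylor remainders.

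The role of the dispersion condition \eqref{11c} is then to ensure that the Taylor remainders, which a priori are only $\ord(1)$ (they are $\leq(\rho_N')^2$ times an absolutely convergent sum), can be absorbed into the main terms after the inversion: the thresholds $\rho_2'<1/2$ and $\rho_N'<1/(\pi\sqrt N)$ for $N\geq3$ are exactly calibrated so that the inverse of the discrete Fourier matrix acting on the quadratic remainder produces a factor strictly smaller than $1$, closing a bootstrap and yielding $|r_n(0)|=\ord(n^{-\gamma/2}\ln n)$.

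The main technical obstacle lies in this last step: one has to match the Paley--Wiener constraint that $\supp\hat\chi$ be compact with the need to concentrate $\hat\chi$ near the $N$ lattice frequencies $\frac{2\pi l}{N}$, and verify that the inversion together with the dispersion condition \eqref{11c} produces a genuine contraction on the quadratic Taylor tail. Checking that the numerical constant $1/(\pi\sqrt N)$ is sharp for this scheme, so that the bootstrap does not degrade as $n\to\infty$, is the delicate quantitative point of the argument.
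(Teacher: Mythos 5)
Your high-level plan --- residue classes mod $N$, Poisson summation, inversion of a Fourier/Vandermonde matrix, dispersion condition as a contraction --- is the right structure, but the order of operations conceals a genuine gap. You Taylor-expand $\chi$ in $r_n(j)$ first and Poisson-sum only the linear piece, relegating the quadratic remainder to a ``bootstrap.'' That remainder, $\sum_j\int_0^{r_n(j)}(r_n(j)-s)\,\chi_j''(\lambda_j+s)\,\dd s$, is controlled by $\chi_j''$, not by any universal constant: the constraint $\hat\chi_j(2\pi m/N)=N\delta_{m,j}$ fixes $\hat\chi_j$ only at lattice points and leaves the off-lattice size of $\chi_j$ (hence of $\chi_j''$) free. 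The contraction constant of your proposed bootstrap therefore depends on the chosen $\chi_j$, and the thresholds in \eqref{11c} cannot be ``exactly calibrated'' for a truncated Taylor expansion of an arbitrary admissible $\chi_j$.

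The paper never linearizes $\chi$. After the replacement $\lambda_{n,i}^{m,k}=l_n(n+k+mN)+r_n^i(k+mN)\mapsto\tilde\lambda_{n,i}^{m,k}=l(n)+k+mN+r_n^i(k)$ --- which costs only $\ord(n^{\gamma-1+\varepsilon})$ via \eqref{lnk1} and \eqref{11a} (step \eqref{1116}) --- it Poisson-sums the entire $m$-sum exactly. The only $r_n$-dependence then sits in the phase $\eul^{\ii t(k+r_n^i(k))/N}$ of \eqref{608}, and since $\hat\chi_j$ is nonzero at a single lattice frequency, the $\chi_j$-dependence drops out, yielding the exact identity $\tilde{\C{G}}_{n,1}^{\chi_j}-\tilde{\C{G}}_{n,0}^{\chi_j}=\sum_k\bigl(z_{k+1}(n)^j-w_{k+1}^j\bigr)$ with $z_{k+1}(n)=\eul^{2\pi\ii(k+r_n(k))/N}$ and $w_{k+1}=\eul^{2\pi\ii k/N}$. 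The nonlinearity is now the explicit exponential, and what remains is a single Neumann-series estimate (not an iterative bootstrap) for the power-sum map $F(z)=\bigl(\sum_l z_l^j/j\bigr)_{j=1}^N$ near the roots of unity $w$: its Jacobian $M=F'(w)$ is the DFT matrix with $\norm{M^{-1}}=1/\sqrt{N}$, the nonlinear remainder satisfies $\norm{G(z(n))}\leq\pi N\rho_N'$, and $\mu_N=\pi\sqrt{N}\rho_N'<1$ is exactly \eqref{11c} for $N\geq3$, with a separate short computation for $N=2$ since $1/2>1/(\pi\sqrt{2})$. This $\chi_j$-independent algebraic reduction is the ingredient your plan misses; incidentally, if you Poisson-summed your quadratic Taylor remainder as well you would find it reassembles into the exponential's own Taylor remainder, recovering precisely this system --- which is why Poisson summation must come before, not after, any expansion in $r_n$.
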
 

\begin{proof} 
Further on, $i=0,\,1$ and we denote 
\[
r_n^i(k)=
\begin{cases} 
r_n(k)&\text{if }i=1,\\
0&\text{if }i=0.
\end{cases} 
\] 
For $m\in\D{Z}$ and $k=0,\dots,N-1$ we denote 
\[ 
\lambda_{n,i}^{m,k}\coloneqq l_n(n+k+mN)+r_n^i(k+mN)  
\] 
and observe that writing $\D{Z}=\accol{k+mN:k=0,\dots,N-1,\,m\in\D{Z}}$ we can express
\[
\C{G}_n^{\chi}=\C{G}_{n,1}^{\chi}-\C{G}_{n,0}^{\chi}
\] 
with
\[
\C{G}_{n,i}^{\chi}\coloneqq\sum_{k=0}^{N-1}\sum_{m\in\D{Z}}\chi(\lambda_{n,i}^{m,k}-l(n)).    
\] 
Next we denote 
\[
\tilde{\C{G}}_{n,i}^{\chi}\coloneqq\sum_{k=0}^{N-1}\sum_{m\in\D{Z}}\chi(\tilde\lambda_{n,i}^{m,k}-l(n)). 
\] 
with
\[
\tilde\lambda_{n,i}^{m,k}\coloneqq l(n)+k+mN+r^{i}_n(k)  
\] 
and claim that for any $\varepsilon>0$ one can estimate 
\begin{equation} \label{1116}
\C{G}_{n,i}^{\chi}-\tilde{\C{G}}_{n,i}^{\chi}=\ord(n^{\gamma-1+\varepsilon}).
\end{equation} 
Indeed, we observe that using \eqref{lnk1} and \eqref{11a} we obtain
\[
\abs{l_n(n+k+mN)-l_n(n)-k-mN}\leq Cn^{\gamma-1}\abs{k+mN}
\]
and 
\[
\abs{r_n^i(k+mN)-r_n^i(k)}\leq Cn^{\gamma-1}\abs{m}
\] 
with 
\[
\lambda_{n,i}^{m,k}-\tilde\lambda_{n,i}^{m,k}=\bigl(l_n(n+k+mN)-l_n(n)-k-mN\bigr)+\bigl(r_n^i(k+mN)-r_n^i(k)\bigr) 
\] 
we obtain the estimate 
\begin{equation} \label{1117}
\sup_{\abs{m}\leq n^{\varepsilon/2}}\abs{\tilde\lambda_{n,i}^{m,k}-\lambda_{n,i}^{m,k}}=\ord(n^{\gamma-1+\varepsilon/2}). 
\end{equation}   
Since $\lambda_{n,i}^{m,k}-l(n)\sim mN$ as $\abs{m}\to\infty$, the fast decay of $\chi$ implies
\begin{align} \label{1118}
\sum_{\abs{m}\geq n^{\varepsilon/2}}\chi(\lambda_{n,i}^{m,k}-l(n))&=\ord(n^{-\infty})\\
\label{1118'}
\sum_{\abs{m}\geq n^{\varepsilon/2}}\chi(\tilde\lambda_{n,i}^{m,k}-l(n))&=\ord(n^{-\infty})  
\end{align}
and it is clear that \eqref{1116} follows from \eqref{1117}, 
\eqref{1118}, and \eqref{1118'}. 
  
For $j=0,\dots,N-1$ let $\chi_j\in\C{S}(\D{R})$ be a function whose Fourier transform has compact support and satisfies
\begin{equation} \label{1119}
\hat\chi_j(2\pi m/N)=N\delta_{m,j}\text{ for }m\in\D{Z}. 
\end{equation} 
Then we can express
\[ 
\tilde{\C{G}}_{n,i}^{\chi_j}=\sum_{k=0}^{N-1}\sum_{m\in\D{Z}}\chi_{n,i}^{j,k}(m)   
\] 
with $\chi_{n,i}^{j,k}(\lambda)\coloneqq\chi_j(\lambda N+k+r_n^i(k))$ and the Poisson summation formula gives
\[ 
\tilde{\C{G}}_{n,i}^{\chi_j}=\sum_{k=0}^{N-1}\sum_{m\in\D{Z}}\hat\chi_{n,i}^{j,k}(2\pi m) 
\]   
with 
\begin{equation}\label{608}
\hat\chi_{n,i}^{j,k}(t)=(2\pi)^{-1}\int_\D{R}\eul^{-\ii t\lambda}\,\chi_{n,i}^{j,k}(\lambda)\,\dd\lambda=\eul^{\ii(k+r_n^i(k))t/N}\,\hat\chi_j(t/N)/N.
\end{equation}  
Due to \eqref{608} and \eqref{1119} we have 
\begin{equation}\label{609}
\sum_{0\leq k\leq N-1}\sum_{m\in\D{Z}}\left(\hat\chi^{j,k}_{n,1}(2\pi m)-\hat\chi^{j,k}_{n,0}(2\pi m)\right)=\sum_{0\leq k\leq N-1}(z_{k+1}(n)^j-w_{k+1}^j) 
\end{equation} 
with $z_{k+1}(n)\coloneqq\eul^{2\pi\ii(k+r^1_n(k))/N}$ and $w_{k+1}(n)\coloneqq\eul^{2\pi\ii k/N}$. We observe that \eqref{11b} ensures 
\begin{equation}\label{69}
\abs{z_{k+1}(n)-w_{k+1}}\leq\frac{2\pi}{N}\abs{r_n(k)}\leq\frac{2\pi}{N}\rho_N'. 
\end{equation} 
Next we introduce $F_j\colon\D{C}^N\to\D{C}$ defined by $F_j(z)\coloneqq(z_1^j+\dots+z_N^j)/j$ where $z=(z_1,\dots,z_N)\in\D{C}^N$ and $j=1,\dots,N$. If $z(n)\coloneqq(z_1(n),\dots,z_N(n))$ and $w\coloneqq(w_1,\dots,w_N)$, combining \eqref{609} and \eqref{1116} with assumption \eqref{11e} we obtain
\begin{equation} \label{610}
j\bigl(F_j(z(n))-F_j(w)\bigr)=\C{G}_n^{\chi_j}+\ord(n^{\varepsilon+\gamma-1})=\ord\bigl(n^{-\gamma/2}\ln n\bigr). 
\end{equation} 
If $F(z)\coloneqq(F_1(z),\dots,F_N(z))\in\D{C}^N$ then $F'(z)=(z_l^{j-1})_{j,l=1}^N$. Introducing     
\[ 
G(z)\coloneqq\int_0^1\bigl(F'(w+t(z-w))-M\bigr)\,\dd t
\] 
with $M\coloneqq F'(w)$ we find $F(z)-F(w)-M(z-w)=G(z)(z-w)$ and
\[ 
z(n)-w=M^{-1}(F(z(n))-F(w))-M^{-1}G(z(n))(z(n)-w). 
\]
We denote $z(n,t)\coloneqq w+t(z(n)-w)$ and we want to estimate 
\[
F'(z(n,t))-M=\bigl(z_l(n,t)^{j-1}-w_l^{j-1}\bigr)_{j,l=1}^N\quad(0\leq t\leq 1).
\] 
However \eqref{69} ensures $\abs{z_l(n,t)^{j-1}-w_l^{j-1}}\leq N\abs{z_l(n,t)-w_l}\leq Nt\abs{z_l(n)-w_l}\leq 2\pi\rho_N't$ and $\norm{F'(z(n,t))-M}\leq 2\pi N\rho_N't$. Thus $\norm{G\bigl(z(n)\bigr)}\leq\pi N\rho_N'$ and
\[
\abs{z(n)-w}\leq\abs{M^{-1}(F(z(n))-F(w))}+\mu_N\abs{z(n)-w}
\] 
holds with $\mu_N\coloneqq\pi N\rho_N'\norm{M^{-1}}$. Since $M^*M=NI$ we find $\norm{M}=\sqrt{N}$ and $M^{-1}=M^*/N$, i.e.\ $\norm{M^{-1}}=1/\sqrt{N}$ and $\mu_N=\pi\sqrt{N}\rho_N'$. Therefore we can estimate
\begin{equation}\label{614}
(1-\mu_N)\abs{z(n)-w}\leq\abs{M^{-1}(F(z(n))-F(w))}\leq C\abs{F(z(n))-F(w)}
\end{equation}
and our choice of $\rho_N'$ ensures $\mu_N<1$ if $N\geq 3$. Hence, for $k=0,\dots,N-1$ we have
\begin{equation}\label{615}
r_n(k)=\ord(\abs{F(z(n))-F(w)}).
\end{equation}
Thus \eqref{615} and \eqref{610} complete the proof when $N\geq 3$. 

If $N=2$ then $(w_1,w_2)=(1,-1)$, 
\[
M=\begin{pmatrix} 1 & 1\\ 1 &-1 \end{pmatrix},\qquad G(z)=\begin{pmatrix} 0 & 0\\ (z_1-1)/2 & (z_2+1)/2 \end{pmatrix},  
\]   
and $\max\accol{\abs{r_n(0)},\abs{r_n(1)}}<\frac{1}{2}\implies\norm{G(z(n))}=\frac{1}{2}\left(\abs{z_1(n)-1}^2+\abs{z_2(n)+1}^2\right)^{1/2}<1$ ensures $\norm{M^{-1}G(z(n))}<\norm{M^{-1}}=1/\sqrt{2}$, i.e.\ \eqref{614} still holds for $N=2$ with $\mu_2<1$. 
\end{proof} 

\subsection{End of the proof of Theorem \ref{thm:2}} \label{sec:113} 

It remains to check that Proposition \ref{prop:11} allows us to deduce \eqref{cor5a} from \eqref{cor5b} and \eqref{cor5c}. More precisely it suffices to check that Proposition \ref{prop:11} applies with 
\begin{equation}  \label{rnj}
r_n(j)\coloneqq\lambda_{n+j}(L_n)-l_n(n+j). 
\end{equation}  
The assumption on $\rho_N$ allows us to choose $\rho_N'>\rho_N$ satisfying \eqref{11c} and \eqref{cor5b} implies 
\[
\sup_{j\in\D{Z}}\,\abs{r_n(j)}\leq\rho_N+Cn^{3\gamma-2},   
\]
hence \eqref{11b} holds for $n\geq n_0$ if $n_0$ is chosen such that  $Cn_0^{3\gamma-2}<\rho'_N-\rho_N$ and \eqref{cor5c} together with \eqref{lnk1} ensures the estimate \eqref{11a}. It remains to observe that in Sections \ref{sec:9} and \ref{sec:10} we proved Proposition \ref{prop:6} which implies Proposition \ref{prop:5}, hence \eqref{11e} holds if $r_n(j)$ is given by \eqref{rnj} and the Fourier transform of $\chi$ has compact support.
     
\section{Proof of Theorem \ref{thm:12}} \label{sec:120}
\subsection{Plan of Section \ref{sec:120}} \label{sec:121} 

In Section~\ref{sec:23} we gave an uncompleted proof that Theorem~\ref{thm:12} follows from Theorem \ref{thm:2}. It remains to complete parts (ii) and (iii) of this proof. In Section~\ref{sec:223} we prove Lemma \ref{lem:23} that states estimates for $a_n(k)$ and $a_n(k)-a(k)$ we used in part (iii) to get estimate \eqref{eq:121}:
\[
a_n(n-1)^2-a_n(n)^2=a(n-1)^2-a(n)^2+\ord(n^{2\gamma-2}). 
\]
Part (ii) of the proof given in Section~\ref{sec:23} is based on

\begin{proposition}[estimate of $\lambda_n(J)-\lambda_n(J_n)$] \label{prop:2}
Let $J$ be as in Theorem \ref{thm:12} with $\scal{v}=0$ and $J_n$ as in Theorem~\ref{thm:2}. Then one has the large $n$ estimate
\[
\lambda_n(J)=\lambda_n(J_n)+\ord(n^{3\gamma-2}).
\]  
\end{proposition}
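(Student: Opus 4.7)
The proof combines the min-max principle with an Agmon-type localization of eigenvectors corresponding to eigenvalues near $n$. The key quantitative observation is that the Taylor expansion
\[
a(k)-a_n(k)=a(k)-a(n)-(k-n)\,\delta a(n)=\ord\bigl((k-n)^2n^{\gamma-2}\bigr)
\]
for $|k-n|\leq n/3$ yields precisely $\ord(n^{3\gamma-2})$ when evaluated on the natural decay scale $|k-n|\lesssim n^\gamma$ of the relevant eigenvectors, which is where the exponent $3\gamma-2$ of the proposition originates.

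First I would show that eigenvectors $\psi$ of $J_n^+$ (and likewise of $J$) with eigenvalue $\lambda$ in a bounded neighborhood of $n$ satisfy an Agmon-type estimate of the form
\[
\sum_{k}\eul^{2\alpha(|k-n|-Cn^\gamma)_+/n^\gamma}\,|\psi(k)|^2\leq C_0
\]
for suitable constants $\alpha, C, C_0>0$. This is a standard weighted energy (IMS-type) argument exploiting that the diagonal $d_n(k)-\lambda\asymp|k-n|$ dominates the off-diagonal $a_n(k)=\ord(n^\gamma)$ once $|k-n|\gg n^\gamma$; consequently $|\psi(k)|^2$ is super-polynomially small outside the window $|k-n|\leq Mn^\gamma$.

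Next I would compare $J$ and $J_n^+$ on localized vectors. The diagonal entries differ only on $|k-n|\geq n/6$, with $|d(k)-d_n(k)|\leq\rho_N$; the off-diagonal entries satisfy $|a(k)-a_n(k)|=\ord((k-n)^2n^{\gamma-2})$ on $|k-n|\leq n/3$ and $\ord(n^\gamma)$ otherwise. Combined with the Agmon localization, a direct computation gives
\[
\|(J-J_n^+)\psi\|=\ord(n^{3\gamma-2})\quad\text{and}\quad\bigl\lvert\langle(J-J_n^+)\psi,\psi\rangle\bigr\rvert=\ord(n^{3\gamma-2})
\]
for any $\psi$ in the range of the spectral projector of $J_n^+$ on a fixed bounded interval around $n$, the contribution from $|k-n|\geq n/6$ being $\ord(n^{-\infty})$.

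Finally, I would apply the min-max principle symmetrically. For the upper bound $\lambda_n(J)\leq\lambda_n(J_n^+)+\ord(n^{3\gamma-2})$ I use the test subspace $F=\mathrm{span}(\phi_1^J,\dots,\phi_{n-1}^J,\psi_n)$ of dimension $n$, where the $\phi_j^J$ are eigenvectors of $J$ and $\psi_n$ is the eigenvector of $J_n^+$ for $\lambda_n(J_n^+)$ (orthogonalized against the preceding ones); the Rayleigh quotient of $J$ on $F$ is then controlled by $\lambda_n(J_n^+)+\ord(n^{3\gamma-2})$ using the previous step. The reverse inequality is obtained symmetrically with the roles of $J$ and $J_n^+$ swapped. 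The main obstacle is the non-degeneracy check $\dim F=n$: one must verify that $\psi_n$ has non-trivial projection onto the orthogonal complement of $\mathrm{span}(\phi_j^J)_{j\leq n-1}$. This requires preliminary gap estimates showing that the spectra of both operators near $n$ are separated with gap $\sim 1$ (the analogue of Proposition~\ref{cor5}~(b)--(c) for $J$ itself, bootstrapped from a crude bound $\lambda_n(J)=n+\ord(1)$), preventing $\psi_n$ from being absorbed into the lower eigenspaces of $J$.
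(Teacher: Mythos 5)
Your proposal is a genuinely different route from the paper's. The paper does \emph{not} use Agmon localization; instead it introduces an interpolating Jacobi operator $\tilde J_n^+$ (Section~\ref{sec:123}) whose off-diagonal equals $a(k)$ on the window $\abs{k-n}\leq C_1n^{\gamma}$ and equals the linearization $a_n(k)$ outside. Your Taylor estimate then gives $\norm{\tilde J_n^+-J_n^+}=\ord(n^{3\gamma-2})$ directly in operator norm (no weights needed), so the min-max step is immediate. The remaining comparison $\lambda_n(J)=\lambda_n(\tilde J_n^+)+\ord(n^{-\nu})$ is handled by a black-box spectral counting theorem from \cite{BZ3} (restated as Proposition~\ref{prop:approx}), which encapsulates the localization physics without appeal to eigenvector decay. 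The index $k(n)=n$ is then pinned down by invoking the asymptotics $\lambda_n(J^0)=l(n)+\ord(n^{3\gamma-2})$ from \cite{BZ1} for the unmodulated operator, together with $\abs{\lambda_n(J)-\lambda_n(J^0)}\leq\rho_N<1/2$. Your approach trades these imported results for an Agmon-type decay estimate, which is physically the same content but re-derived from scratch.

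The genuine gap in your proposal is the final min-max/index step, and it is more severe than the ``non-degeneracy'' remark suggests. Orthogonalizing $\psi_n$ against $\phi_1^J,\dots,\phi_{n-1}^J$ does not merely risk making the projection zero; it may destroy the localization of $\psi_n$, because nothing a priori prevents some $\phi_j^J$ with $j$ close to $n-1$ from having a large overlap with $\psi_n$, in which case the orthogonalized residue could be spread over a wide range of sites, and the bound $\scal{(J-J_n^+)\psi,\psi}=\ord(n^{3\gamma-2})$ no longer applies. Ruling this out requires knowing both that the eigenvalues of $J$ near $n$ are separated by gaps $\asymp 1$ and that $\lambda_n(J)$ sits in the correct gap (so that $\phi_j^J$, $j<n$, is localized at $k\approx j$ rather than $k\approx n$). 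Your ``crude bound $\lambda_n(J)=n+\ord(1)$'' with gap estimates is not readily available: it is essentially equivalent to the statement being proved, and the paper breaks the circularity by importing $\lambda_n(J^0)=l(n)+\ord(n^{3\gamma-2})$ from \cite{BZ1}. Without either that citation or an independent self-contained proof of a coarse version of the proposition, the bootstrap you describe is not closed. This is the precise point at which the paper's use of the counting-function comparison from \cite{BZ3} together with \cite{BZ1} does real work that your sketch leaves open.
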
  

Its proof is given in the last three sections. Section~\ref{sec:123} introduces auxiliary operators $\tilde J_n^+$. Section~\ref{sec:124} states a simple form of the approximation result (\cite{BZ3}, Theorem 2.3). The proof is completed in Section~\ref{sec:125}.

\subsection{Estimates} \label{sec:223}

We prove large $n$ estimates of $a_n(k)$ and $a(k)$ for $k=k(n)$, e.g., $k=n-1$. 

\begin{lemma} \label{lem:21}
Under assumptions \emph{(H2)} on $\accol{a(k)}_{k=1}^{\infty}$ there exists a constant $\tilde C>0$ such that
\begin{equation} \label{21de}
\sup_{k\in\D{Z}}\,\abs{\delta^ma_n(k)}\leq\tilde Cn^{\gamma-m},\quad m=0,1,2.
\end{equation} 
\end{lemma}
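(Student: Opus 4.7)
The plan is to reduce the estimate to a discrete Leibniz expansion applied to the product $a_n = f_n \cdot \theta_{2n,n}$, where $f_n(k) \coloneqq a(n) + (k-n)\delta a(n)$ is the linear part and $\theta_{2n,n}$ is the cutoff. Outside $\supp\theta_{2n,n} \subset \{|k-n| \leq 2n/5\}$ all terms $\delta^m a_n(k)$ vanish, so I only need to control $|\delta^m a_n(k)|$ for $|k-n| \leq 2n/5 + 2$. On this set, hypothesis (H2) gives $|f_n(k)| \leq a(n) + (2n/5)|\delta a(n)| \leq C n^\gamma + (2n/5)C'n^{\gamma-1} = \ord(n^\gamma)$, together with $\delta f_n(k) = \delta a(n) = \ord(n^{\gamma-1})$ and $\delta^2 f_n \equiv 0$ (since $f_n$ is affine in $k$).

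Next I would record the elementary cutoff bounds
\[
\sup_{k\in\D{Z}}|\delta^m \theta_{2n,n}(k)| \leq C_m n^{-m},\qquad m=0,1,2,
\]
which follow from $\theta_{2n,n}(k)=\theta_0((k-n)/(2n))$ and the mean value theorem applied to $\theta_0\in\class^\infty(\D{R})$ (each discrete difference brings out a factor of $1/(2n)$). This already settles the case $m=0$: $|a_n(k)| \leq |f_n(k)| \leq \tilde C n^\gamma$ on the support of $\theta_{2n,n}$.

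For $m=1$ and $m=2$ I would apply the discrete Leibniz rule $\delta(hg)(k)=(\delta h)(k) g(k+1)+h(k)(\delta g)(k)$ to $a_n = f_n \cdot \theta_{2n,n}$. Iterating it gives
\[
\delta^2(f_n\theta_{2n,n})(k) = (\delta^2 f_n)(k)\theta_{2n,n}(k+2) + 2(\delta f_n)(k)(\delta\theta_{2n,n})(k+1) + f_n(k)(\delta^2\theta_{2n,n})(k),
\]
and the first term vanishes. Plugging in the bounds above yields $|\delta a_n(k)| = \ord(n^{\gamma-1})$ and $|\delta^2 a_n(k)| = \ord(n^{\gamma-2})$, finishing the proof.

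There is no real obstacle here; the only point that requires a line of care is verifying that the contribution of the linear factor $f_n$ when differentiated against the cutoff, namely $f_n(k)\delta^m\theta_{2n,n}(k) = \ord(n^{\gamma})\cdot\ord(n^{-m}) = \ord(n^{\gamma-m})$, matches the required order exactly — this is where the choice of cutoff width proportional to $n$ (rather than, say, $n^{1/2}$) is essential.
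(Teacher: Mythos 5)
Your proof is correct and is essentially the paper's own argument: the same factorization $a_n=(a(n)+(k-n)\delta a(n))\,\theta_{2n,n}$, the same bounds $\abs{\delta^m\theta_{2n,n}(k)}\leq C_mn^{-m}$ (the paper obtains them from an integral representation of $\delta^m$ rather than the mean value theorem, which is immaterial), and the same discrete Leibniz expansion with $\delta^2$ of the affine factor vanishing. Nothing further is needed.
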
  

\begin{proof}  
By definition \eqref{an} we can write $a_n(k)=a_n^1(k)\theta_{2n,n}(k)$ with
\[
a_n^1(k)\coloneqq a(n)+(k-n)\delta a(n).
\]
Since $a_n(k)=0$ for $\abs{k-n}\geq 2n/5$ we can replace $\sup_{k\in\D{Z}}$ by $\sup_{\abs{k-n}\leq n/2}$. By assumptions (H2), more precisely, by $\abs{a(k)}\leq Ck^{\gamma}$ from \eqref{H2ak} and by \eqref{H2dak}, we get, for $\abs{k-n}\leq n/2$,
\[
\abs{a_n(k)}\leq\abs{a_n^1(k)}\leq\abs{a(n)}+\abs{k-n}\abs{\delta a(n)}\leq Cn^{\gamma}+nC'n^{\gamma-1}/2=(C+C'/2)n^{\gamma}.
\]
That proves \eqref{21de} for $m=0$. For $m=1,2$ we first observe that, for $\vartheta\in\class^2(\D{R})$ we have
\begin{subequations} \label{eq:XXX}
\begin{align} \label{eq:X}
\delta\vartheta(s)&\coloneqq\vartheta(s+1)-\vartheta(s)=\int_0^1\vartheta'(s+s_1)\dd s_1\\
\label{eq:XX}
\delta^2\vartheta(s)&\coloneqq\delta\vartheta(s+1)-\delta\vartheta(s)=\int_0^1\int_0^1\vartheta''(s+s_1+s_2)\dd s_1\dd s_2
\end{align}
\end{subequations}
For $\vartheta(s)=\theta_{2n,n}(s)=\theta_0\left(\frac{s}{2n}-\frac{1}{2}\right)$ we have $\vartheta^{(m)}(s)=(2n)^{-m}\theta_0^{(m)}\bigl(\frac{s}{2n}-\frac{1}{2}\bigr)$. Thus \eqref{eq:XXX} imply
\[
\abs{\delta^m\theta_{2n,n}(k)}\leq C_mn^{-m}
\]
for $m=1,2$, with $C_m\coloneqq 2^{-m}\norm{\theta_0^{(m)}}_{\infty}$. By using $\delta a_n^1(k)=\delta a(n)$ we get
\begin{align*}
\abs{\delta a_n(k)}&\leq\abs{\delta a(n)}+\abs{a_n^1(k)}\,\abs{\delta\theta_{2n,n}(k)}\\
&\leq C'n^{\gamma-1}+(C+C'/2)n^{\gamma}\hat C'n^{-1}\\
&=\bigl(C'+\hat C'(C+C'/2)\bigr)n^{\gamma-1}.
\end{align*}
Using $\delta^2a_n^1(k)=0$ we get
\begin{align*}
\abs{\delta^2a_n(k)}&\leq\abs{2\delta a(n)}\,\abs{\delta\theta_{2n,n}(k+1)}+\abs{a_n^1(k)}\,\abs{\delta^2\theta_{2n,n}(k)}\\
&\leq 2C'n^{\gamma-1}\hat C'n^{-1}+(C+C'/2)n^{\gamma}\hat C''n^{-2}\\
&=\bigl(2C'\hat C'+\hat C''(C+C'/2)\bigr)n^{\gamma-2}.\qedhere
\end{align*}
\end{proof}    

\begin{lemma} \label{lem:22}
Under assumptions \emph{(H2)} we have the estimates
\begin{subequations}  \label{eq:2112}
\begin{align} \label{eq:211}
\sup_{\abs{j}\leq n/2}\abs{\delta a(n+j)}&=\ord(n^{\gamma-1})\\ \label{eq:212}
\sup_{\abs{j}\leq n/2}\abs{\delta^2a(n+j)}&=\ord(n^{\gamma-2}).
\end{align}
\end{subequations}
\end{lemma}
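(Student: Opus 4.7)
\medskip

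The proof will be essentially immediate from hypothesis (H2). The strategy is simply to observe that, for $|j|\leq n/2$, the argument $n+j$ stays comparable to $n$, specifically $n/2 \leq n+j \leq 3n/2$, so that the pointwise bounds in (H2) automatically give uniform estimates in $j$ with the same power of $n$.

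More precisely, I will argue as follows. Since $0<\gamma\leq 1/2$, the exponents $\gamma-1$ and $\gamma-2$ are negative. Therefore the function $k\mapsto k^{\gamma-m}$ is decreasing on $\D{N}^*$ for $m=1,2$, and for $|j|\leq n/2$ we have the lower bound $n+j\geq n/2$, hence
\[
(n+j)^{\gamma-m}\leq (n/2)^{\gamma-m}=2^{m-\gamma}\,n^{\gamma-m}.
\]
Combining this with the bounds $|\delta a(k)|\leq C'k^{\gamma-1}$ from \eqref{H2dak} and $|\delta^2 a(k)|\leq C''k^{\gamma-2}$ from \eqref{H2ddak}, applied at $k=n+j$, immediately yields \eqref{eq:211} and \eqref{eq:212} with explicit constants $2^{1-\gamma}C'$ and $2^{2-\gamma}C''$ respectively.

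There is no real obstacle here; the only thing to note is that we should take $n\geq 2$ to ensure $n+j\geq 1$ whenever $|j|\leq n/2$, so that the hypotheses of (H2) are applicable at the point $n+j$. The content of the lemma is simply that pointwise $k^{\gamma-m}$ decay, when restricted to the window $|k-n|\leq n/2$, is equivalent to uniform $n^{\gamma-m}$ decay, a fact that will subsequently be used (together with Lemma~\ref{lem:21}) in controlling remainder terms of order $n^{\gamma-m}$ in the approximation arguments of the paper.
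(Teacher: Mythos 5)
Your proof is correct and takes essentially the same approach as the paper: apply the pointwise bounds \eqref{H2dak} and \eqref{H2ddak} at $k=n+j$ and use $n+j\geq n/2$ together with the negativity of $\gamma-m$ to convert them into uniform $n^{\gamma-m}$ bounds with constants $2^{m-\gamma}C^{(m)}$. The paper's own proof writes out exactly the same chain of inequalities.
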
  

\begin{proof}  
Let $j\in\D{Z}$ be such that $\abs{j}\leq n/2$. By using \eqref{H2dak} and \eqref{H2ddak}, i.e., $\abs{\delta a(k)}\leq C'k^{\gamma-1}$ and $\abs{\delta^2a(k)}\leq C''k^{\gamma-2}$, respectively, we get 
\begin{align*}
\abs{\delta a(n+j)}&\leq C'(n+j)^{\gamma-1}\leq C'(n-n/2)^{\gamma-1}=C'n^{\gamma-1}/2^{\gamma-1}=\tilde C'n^{\gamma-1},\\
\abs{\delta^2a(n+j)}&\leq C''(n+j)^{\gamma-2}\leq C''(n-n/2)^{\gamma-2}=C''n^{\gamma-2}/2^{\gamma-2}=\tilde C''n^{\gamma-2}.\qedhere
\end{align*}
\end{proof}    

\begin{lemma} \label{lem:23}
Under assumptions \emph{(H2)} we have the estimates
\begin{subequations}  \label{eq:2167}
\begin{align} \label{eq:216}
\abs{k-n}\leq n/2&\implies\abs{a(k)-a(n)}\leq\tilde C\abs{k-n}n^{\gamma-1}\\ \label{eq:217}
\abs{k-n}\leq n/3&\implies\abs{a(k)-a_n(k)}\leq\tilde C\abs{k-n}^2n^{\gamma-2}.
\end{align}
\end{subequations}
\end{lemma}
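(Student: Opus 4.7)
The plan is to prove both estimates as discrete analogues of first- and second-order Taylor remainder bounds, using Lemma~\ref{lem:22} to control $\delta a$ and $\delta^2 a$ uniformly on $\abs{j-n}\leq n/2$.

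For \eqref{eq:216}, I would start from the telescoping identity
\[
a(k)-a(n)=\operatorname{sgn}(k-n)\!\!\sum_{j\in I(k,n)}\!\!\delta a(j),
\]
where $I(k,n)$ is the set of $\abs{k-n}$ consecutive integers between $\min(k,n)$ and $\max(k,n)-1$ (for $k>n$) or $\min(k,n)$ and $\max(k,n)-1$ with an appropriate sign adjustment for $k<n$. For $\abs{k-n}\leq n/2$ every index $j\in I(k,n)$ satisfies $\abs{j-n}\leq n/2$, so \eqref{eq:211} from Lemma~\ref{lem:22} gives $\abs{\delta a(j)}\leq\tilde C'n^{\gamma-1}$, and summing yields $\abs{a(k)-a(n)}\leq\tilde C'\abs{k-n}n^{\gamma-1}$.

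For \eqref{eq:217}, the first step is the key observation that when $\abs{k-n}\leq n/3$ we have $\abs{(k-2n)/(2n)-1/2}=\abs{k-n}/(2n)\leq 1/6$ (after rewriting, noting that $\theta_{2n,n}(s)=\theta_0((s-n)/(2n))$), hence $\theta_{2n,n}(k)=1$, so by \eqref{an+},
\[
a_n(k)=a(n)+(k-n)\,\delta a(n).
\]
Thus $a(k)-a_n(k)=a(k)-a(n)-(k-n)\,\delta a(n)$, which is the first-order Taylor remainder. I would then write this as a double telescoping sum:
\[
a(k)-a(n)-(k-n)\,\delta a(n)=\!\!\sum_{j\in I(k,n)}\!\!(\delta a(j)-\delta a(n))=\!\!\sum_{j\in I(k,n)}\,\sum_{l\in I(j,n)}\!\delta^2 a(l),
\]
with the obvious sign/orientation adjustments when $k<n$. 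Since $\abs{k-n}\leq n/3\leq n/2$ all indices appearing satisfy $\abs{l-n}\leq n/2$, so by \eqref{eq:212} from Lemma~\ref{lem:22} each $\abs{\delta^2 a(l)}\leq\tilde C''n^{\gamma-2}$. Counting the number of pairs $(j,l)$ in the double sum (which is bounded by $\abs{k-n}^2$) gives the estimate $\abs{a(k)-a_n(k)}\leq\tilde C''\abs{k-n}^2 n^{\gamma-2}$.

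No step is a serious obstacle; the only care required is the bookkeeping of signs and index ranges when $k<n$ versus $k>n$, and checking that $\theta_{2n,n}\equiv 1$ on the specified range. Taking $\tilde C\coloneqq\max(\tilde C',\tilde C'')$ yields the lemma as stated.
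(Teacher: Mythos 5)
Your proposal is correct and follows essentially the same route as the paper: the paper invokes exactly the two discrete Taylor remainder bounds $\abs{a(k)-a(n)}\leq\abs{k-n}\sup_{\abs{j}\leq\abs{k-n}}\abs{\delta a(n+j)}$ and $\abs{a(k)-a_n^1(k)}\leq\abs{k-n}^2\sup_{\abs{j}\leq\abs{k-n}}\abs{\delta^2a(n+j)}$ (which you establish by single and double telescoping), combines them with Lemma~\ref{lem:22}, and uses $\theta_{2n,n}(k)=1$ for $\abs{k-n}\leq n/3$ to identify $a_n(k)$ with its linearization. Only a cosmetic slip: the argument of $\theta_0$ is $(k-n)/(2n)$, not $(k-2n)/(2n)-1/2$, though your conclusion $\abs{k-n}/(2n)\leq 1/6$ is the right one.
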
  

\begin{proof} 
It uses Lemma \ref{lem:22} together with the following two estimates:
\begin{subequations}  \label{eq:2189}
\begin{align} \label{eq:218}
\abs{a(k)-a(n)}&\leq\abs{k-n}\sup_{\abs{j}\leq\abs{k-n}}\abs{\delta a(n+j)}\\  \label{eq:219}
\abs{a(k)-a_n^1(k)}&\leq\abs{k-n}^2\sup_{\abs{j}\leq\abs{k-n}}\abs{\delta^2a(n+j)}.
\end{align} 
\end{subequations}
We get \eqref{eq:216} by using \eqref{eq:211} in \eqref{eq:218} for $\abs{k-n}\leq n/2$:
\begin{align*}
\abs{a(k)-a(n)}&\leq\abs{k-n}\sup_{\abs{j}\leq\abs{k-n}}\abs{\delta a(n+j)}\\
&\leq\abs{k-n}\sup_{\abs{j}\leq n/2}\abs{\delta a(n+j)}\\
&\leq\tilde C\abs{k-n}n^{\gamma-1}
\end{align*}
We get \eqref{eq:217} similarly, using \eqref{eq:212} in \eqref{eq:219} for $\abs{k-n}\leq n/3$. We then have $\theta_{2n,n}(k)=1$, hence $a_n(k)=a_n^1(k)$, and
\begin{align*}
\abs{a(k)-a_n(k)}&=\abs{a(k)-a_n^1(k)}\\
&\leq\abs{k-n}^2\sup_{\abs{j}\leq\abs{k-n}}\abs{\delta^2a(n+j)}\\
&\leq\abs{k-n}^2\sup_{\abs{j}\leq n/2}\abs{\delta^2a(n+j)}\\
&\leq\tilde C\abs{k-n}^2n^{\gamma-2}.\qedhere
\end{align*}
\end{proof}    

\subsection{Operators $\BS{\tilde J_n^+}$} \label{sec:123}

These auxiliary operators act on $l^2(\D{N}^*)$ by
\[  
(\tilde J_n^+x)(k)=d_n(k)x(k)+\tilde a_n(k)x(k+1)+\tilde a_n(k-1)x(k-1).  
\] 
for $x\in\C{D}$ and $k\geq 1$ with off-diagonal entries
\begin{equation} \label{240} 
\tilde a_n(k)\coloneqq  
\begin{cases} 
a(k)&\text{if }n-C_1n^{\gamma}\leq k\leq n+C_1n^{\gamma}\\ 
a_n(k)&\text{otherwise}
\end{cases}  
\end{equation}  
where $C_1$ is fixed large enough. We claim that 
\begin{equation}   \label{241}
\norm{\tilde J_n^+-J_n^+}_{\C{B}(l^2(\D{N}^*))}=\ord(n^{3\gamma-2}). 
\end{equation}
Indeed, $\abs{j}\leq n/3$ ensures $a_n(n+j)=a(n)+\delta a(n)j$ and $\abs{\delta^2a(n+j)}\leq\tilde Cn^{\gamma-2}$, hence we can estimate 
\begin{align*}
\sup_{k\in\D{N}^*}\abs{\tilde a_n(k)-a_n(k)}
&=\sup_{\abs{j}\leq C_1n^{\gamma}}\abs{a(n+j)-a(n)-\delta a(n)j}\\
&\leq\sup_{\abs{j}\leq C_1n^{\gamma}}j^2\,\tilde Cn^{\gamma-2}=\ord(n^{3\gamma-2}).
\end{align*}
However, \eqref{241} and the min-max principle give 
\begin{equation}  \label{eq:1232}
\sup_{k\in\D{N}^*}\abs{\lambda_k(\tilde J_n^+)-\lambda_k(J_n^+)}=\ord(n^{3\gamma-2}).
\end{equation}

\subsection{An approximation result of the spectrum of Jacobi matrices} \label{sec:124}

In the next section we apply Theorem 2.3 from \cite{BZ3} which is an approximation result of the spectrum of the operator $J$ defined by \eqref{1J} with real entries $\accol{d(k)}_{k=1}^{\infty}$ and $\accol{a(k)}_{k=1}^{\infty}$ such that
\begin{equation}  \label{eq:1240}
\begin{split}
d(k)&=ck^{\alpha}+\ord(k^{\beta}),\quad c>0,\\
a(k)&=\ord(k^{\beta}),\quad 0\leq\beta<\alpha<1+\beta.
\end{split}
\end{equation}
For simplicity we state this result assuming that \eqref{eq:1240} holds with $c=1$, $\alpha=1$, and $\beta=\gamma$, with $0<\gamma\leq 1/2$. These conditions are satisfied by the operator $J$ from Theorem~\ref{thm:12}. 

For $\lambda\geq 1$ and $\lambda'\leq\lambda$ we denote
\[
\C{N}(\lambda',\lambda,J)=\card\{n\in\D{N}^*:\lambda'<\lambda_n(J)\leq\lambda\}=\card\left(\spec(J)\cap(\lambda',\lambda\rbrack\right)
\] 
and we consider Jacobi operators $J_{\lambda',\lambda}$ defined like $J$ by
\begin{equation}  \label{eq:approx}
(J_{\lambda',\lambda}x)(k)=d(k)x(k)+a_{\lambda',\lambda}(k)x(k+1)+a_{\lambda',\lambda}(k-1)x(k-1)
\end{equation}
for $x\in\C{D}$, $k\geq 1$, with real off-diagonal entries $(a_{\lambda',\lambda}(k))_{k=1}^{\infty}$ satisfying $\abs{a_{\lambda',\lambda}(k)}\leq\abs{a(k)}$. Then Theorem 2.3 from \cite{BZ3} takes the form:

\begin{proposition}[\cite{BZ3}, Theorem 2.3] \label{prop:approx}
Let $J$ be given by \eqref{1J}. Its entries are assumed to satisfy \eqref{eq:1240} with $c=1$, $\alpha=1$, and $\beta=\gamma$. Let $C_0>0$ be large enough. For $\lambda\geq 1$ and $\lambda'\leq\lambda$ we denote
\begin{align*}
\kappa(\lambda)&\coloneqq\lambda+C_0\lambda^{\gamma},\\
\kappa(\lambda',\lambda)&\coloneqq\lambda'-C_0\lambda^{\gamma},
\end{align*}
and $J_{\lambda',\lambda}$ an operator as in \eqref{eq:approx} with $a_{\lambda',\lambda}(k)=a(k)$ if $\kappa(\lambda',\lambda)\leq k\leq\kappa(\lambda)$.

Then for any $\nu>0$ there exists $\lambda(\nu)>0$ such that 
\[
\C{N}(\lambda'+\lambda^{-\nu},\lambda-\lambda^{-\nu},J_{\lambda',\lambda})\leq\C{N}(\lambda',\lambda,J)\leq\C{N}(\lambda'-\lambda^{-\nu},\lambda+\lambda^{-\nu},J_{\lambda',\lambda})
\]
for any $\lambda\geq\lambda(\nu)$ and any $\lambda'$ such that $(C_0+1)\lambda^{\gamma}\leq\lambda'\leq\lambda$.
\end{proposition}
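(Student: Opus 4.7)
\emph{Approach.} Proposition~\ref{prop:approx} is a standard eigenvector-localization result for Jacobi matrices whose linearly growing diagonal dominates a sublinear off-diagonal. I would prove that every eigenvector of $J$ or $J_{\lambda',\lambda}$ with eigenvalue in $[\lambda',\lambda]$ is super-polynomially concentrated on the window $W\coloneqq[\kappa(\lambda',\lambda),\kappa(\lambda)]$ where the two operators coincide, and then extract the counting-function inequality by a min--max comparison. The condition $\lambda'\geq(C_0+1)\lambda^{\gamma}$ guarantees that $W$ sits inside $\D{N}^*$ with margin.

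\emph{Step 1: discrete Agmon decay.} For $k\notin W$ the hypotheses $d(k)=k+\ord(k^{\gamma})$ and $|a(k)|,\,|a_{\lambda',\lambda}(k)|\leq Ck^{\gamma}$ give $|d(k)-\mu|\geq(C_0-C')\lambda^{\gamma}$ uniformly in $\mu\in[\lambda',\lambda]$, which dominates the off-diagonal once $C_0$ is chosen large enough; this yields the coercive bound $\scal{\psi,(J-\mu)\psi}\geq c(d(\cdot)-\mu)\norm{\psi}^2$ on vectors $\psi$ supported outside $W$, and the same bound for $J_{\lambda',\lambda}$ by $|a_{\lambda',\lambda}|\leq|a|$. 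I would then test the eigenvalue equation $(J-\mu)\phi=0$ against $e^{2\rho(k)}\phi(k)$ with a weight $\rho$ that vanishes on $W$ and grows outside with small per-step slope, up to a plateau $\rho_{\max}=M\log\lambda$. The discrete integration by parts on the three-term recursion produces a commutator $[e^{\rho},J]$ of size $|\delta\rho|\cdot a(k)$, which is absorbed by the coercive term as soon as $C_0$ is sufficiently large. This gives $\sum_{k}e^{2\rho(k)}|\phi(k)|^2\leq C\norm{\phi}^2$, and hence
\[
\sum_{k\notin W}|\phi(k)|^2\leq\lambda^{-2M}\norm{\phi}^2
\]
for any prescribed $M$ and all $\lambda\geq\lambda(M)$.

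\emph{Step 2: min--max comparison.} Let $P_W$ be the orthogonal projection onto sequences supported in $W$. For each eigenvector $\phi$ of $J$ with eigenvalue $\mu\in[\lambda',\lambda]$, Step~1 gives $\norm{P_W\phi}^2=\norm{\phi}^2(1+\ord(\lambda^{-2M}))$, and the boundary commutator $[P_W,J_{\lambda',\lambda}]\phi$ is bounded by $\lambda^{\gamma}$ times the tail of $|\phi|$ near $\partial W$, hence by $\ord(\lambda^{\gamma-M})$. Since $J_{\lambda',\lambda}=J$ on $W$, this yields $\scal{P_W\phi,J_{\lambda',\lambda}P_W\phi}=\mu\norm{P_W\phi}^2+\ord(\lambda^{\gamma-M})$. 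Applying the min--max principle to the (nearly orthonormal) finite family of such truncations produces at least $\C{N}(\lambda',\lambda,J)$ eigenvalues of $J_{\lambda',\lambda}$ in $[\lambda'-\lambda^{-\nu},\lambda+\lambda^{-\nu}]$ for any $\nu<M-\gamma$, which is the right-hand inequality. The left-hand inequality follows by exchanging the roles of $J$ and $J_{\lambda',\lambda}$; the Agmon step is symmetric because $|a_{\lambda',\lambda}|\leq|a|$, so the coercivity bound of Step~1 is at least as strong for $J_{\lambda',\lambda}$ as for $J$.

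\emph{Main obstacle.} The delicate ingredient is the discrete Agmon estimate: one must verify that the commutators produced by the three-term recursion are absorbed by the coercive term $(d-\mu)$ uniformly in $\mu$ and in the position of $k$ beyond $\partial W$, while retaining enough budget to lift $\rho$ up to the logarithmically large plateau $M\log\lambda$. The effective semiclassical ratio $a(k)/(d(k)-\mu)$ at $\partial W$ is of order $1/C_0$, so taking $C_0$ sufficiently large yields per-step gain by a factor $C_0^{-1}$ and hence $\lambda^{-M}$ tails after $\ord(\log\lambda)$ steps; the margin $\lambda'\geq(C_0+1)\lambda^{\gamma}$ is exactly what makes this coercivity uniform.
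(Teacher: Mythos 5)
The paper does not prove Proposition~\ref{prop:approx} at all: it is quoted verbatim as Theorem~2.3 of \cite{BZ3} and used as a black box. So there is no internal proof to compare your proposal against; I can only assess it on its own terms.

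Your overall strategy (discrete Agmon decay plus truncation and min--max) is a reasonable route to this kind of result, but there is a genuine gap in the bookkeeping of the weight function, and it propagates into Step~2. You take the Agmon weight $\rho$ to vanish on $W=[\kappa(\lambda',\lambda),\kappa(\lambda)]$ and to grow only \emph{outside} $W$, with per-step slope bounded by the coercivity budget, reaching the plateau $M\log\lambda$ after $\ord(\log\lambda)$ steps. This gives $|\phi(k)|^2\lesssim\lambda^{-2M}\norm{\phi}^2$ only for $k$ at distance $\gtrsim M\log\lambda$ beyond $\partial W$. It does \emph{not} give the claimed bound $\sum_{k\notin W}|\phi(k)|^2\leq\lambda^{-2M}\norm{\phi}^2$ (the sum is dominated by the $\ord(\log\lambda)$ points just outside $\partial W$ where $\rho$ has barely started climbing, and there the estimate yields only $\ord(1)$), and it does not give smallness of $\phi$ at $\partial W$, which is exactly what Step~2 needs to control the commutator $\croch{P_W,J_{\lambda',\lambda}}\phi$. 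As written, Steps~1 and~2 both fail.

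The repair is to have $\rho$ start climbing \emph{inside} $W$, in the buffer between the classically-allowed zone (a neighborhood of $[\lambda',\lambda]$ of width $\ord(\lambda^{\gamma})$, where $|d(k)-\mu|$ can vanish and no coercivity is available) and $\partial W$ (at distance $C_0\lambda^{\gamma}$). That buffer has width of order $(C_0-C)\lambda^{\gamma}\to\infty$, so $\rho$ can reach $M\log\lambda$ before $\partial W$ with per-step slope of order $\log\lambda/\lambda^{\gamma}\to 0$, far below the coercivity budget. One then obtains $|\phi(k)|=\ord(\lambda^{-M})$ for every $k$ with $\dist(k,[\lambda',\lambda])\geq C_0\lambda^{\gamma}$, which is what makes $P_W\phi$ an admissible test vector. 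This also corrects the role you assign to $C_0$: the hypothesis that $C_0$ is large is needed to make the buffer wide (so that a logarithmically tall weight can fit inside $W$), not to improve the per-step contraction factor. You should also make explicit the sign convention in the coercivity bound (it is $\pm(d(k)-\mu)\geq c\lambda^{\gamma}$ on the two components of the complement of $W$, not a single two-sided quadratic-form inequality), and recall that $\lambda'\geq(C_0+1)\lambda^{\gamma}$ is only there to keep $\kappa(\lambda',\lambda)\geq 1$, not to make the coercivity uniform.
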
  

\subsection{Proof of Proposition~\ref{prop:2}} \label{sec:125}

Due to \eqref{eq:1232} it suffices to show that 
\[
\lambda_n(J)=\lambda_n(\tilde J_n^+)+\ord(n^{-\nu}) 
\] 
holds for any $\nu>0$ provided that $C_1$ is chosen large enough in \eqref{240}.
 
Let $\accol{\lambda_n}_{n=1}^\infty$ be a real sequence satisfying $\lambda_n=n+\ord(n^{\gamma})$. If $\kappa(\lambda)$ and $\kappa(\lambda',\lambda)$ are as in Proposition~\ref{prop:approx} then choosing $C_1$ large enough in \eqref{240} we get that 
\[ 
\kappa(\lambda_n,\lambda_n-\lambda_n^{-\nu})\leq k\leq\kappa(\lambda_n)\implies J\vece_k=\tilde J_n^+\vece_k   
\] 
for $n\geq n_0$. Proposition~\ref{prop:approx} applied with $\lambda=\lambda_n$, $\lambda'=\lambda_n-\lambda_n^{-\nu}$, $J_{\lambda,\lambda'}=\tilde J_n^+$:  
\begin{equation} \label{692} 
\card\left(\spec(J)\cap(\lambda_n-\lambda_n^{-\nu},\lambda_n]\right)\leq\card\left(\spec(\tilde J_n^+)\cap\bigl(\lambda_n-2\lambda_n^{-\nu},\,\lambda_n+\lambda_n^{-\nu}\bigr]\right)
\end{equation} 
for $n\geq n_0$. However, $\lambda_n(\tilde J_n^+)=\lambda_n(J_n^+)+\ord(n^{3\gamma-2})$ implies 
\begin{equation} \label{693} 
n\geq n_0\implies\abs{\lambda_n(\tilde J_n^+)-l(n)}\leq\rho', 
\end{equation}  
where $\rho'\in(\rho_N,\frac{1}{2})$, hence the cardinal in the right-hand side of \eqref{692} is at most $1$. If now $\lambda_n=\lambda_n(J)$ then both cardinals are equal to $1$ and there is an eigenvalue $\lambda_{k(n)}(\tilde J_n^+)$ such that 
\begin{equation} \label{694} 
\lambda_{k(n)}(\tilde J_n^+)\in\bigl(\lambda_n(J)-2\lambda_n(J)^{-\nu},\lambda_n(J)+\lambda_n(J)^{-\nu}\bigr].
\end{equation} 
It remains to check that $k(n)=n$. Due to \eqref{693} and \eqref{694} it suffices to know that 
\begin{equation} \label{695} 
n\geq n_0\implies\lambda_n(J)\in(l(n)-\rho'',\,l(n)+\rho'') 
\end{equation} 
for some $\rho''<1/2$. However, the operator $J^0 \coloneqq\Lambda^++2\Re\left(S^+a(\Lambda^+)\right)$ was investigated in \cite{BZ1} where we proved the large $n$ asymptotic formula 
\begin{equation} \label{696} 
\lambda_n(J^0)=l(n)+\ord(n^{3\gamma-2}). 
\end{equation} 
Since \eqref{695} follows from \eqref{696} and $\abs{\lambda_n(J)-\lambda_n(J^0)}\leq\rho_N$, the proof of Proposition \ref{prop:2} is complete.
   
\begin{bibdiv}
\begin{biblist}
\bib{BNS}{article}{
   author={Boutet de Monvel, Anne},
   author={Naboko, Serguei},
   author={Silva, Luis O.},
   title={The asymptotic behavior of eigenvalues of a modified
   Jaynes-Cummings model},
   journal={Asymptot. Anal.},
   volume={47},
   date={2006},
   number={3-4},
   pages={291--315},
} 
\bib{BZ1}{article}{
   author={Boutet de Monvel, Anne},
   author={Zielinski, Lech},
   title={Eigenvalue asymptotics for Jaynes--Cummings type models without modulations}, 
   journal={BiBoS preprint},
   number={08-03-278},
   eprint={http://www.physik.uni-bielefeld.de/bibos/},
   year={2008},
}
\bib{BZ2}{article}{
   author={Boutet de Monvel, Anne},
   author={Zielinski, Lech},
   title={Explicit error estimates for eigenvalues of some unbounded Jacobi matrices},
   conference={
   title={Spectral Theory, Mathematical System Theory, Evolution Equations,
   Differential and Difference Equations: IWOTA10},},
   book={
      series={Oper. Theory Adv. Appl.},
      volume={221},
      publisher={Birkh\"auser Verlag},
      place={Basel},},
   date={2012},
   pages={187--215},
}
\bib{BZ3}{article}{
   author={Boutet de Monvel, Anne},
   author={Zielinski, Lech},
   title={Approximation of eigenvalues for unbounded Jacobi matrices 
   using finite submatrices},  
   journal={Cent. Eur. J. Math.}, 
   volume={12},
   date={2014},
   number={3},
   pages={445--463},
} 
\bib{BZ4}{article}{
   author={Boutet de Monvel, Anne},
   author={Zielinski, Lech},
   title={Asymptotic behavior of large eigenvalues of a modified  
   Jaynes--Cummings model}, 
   conference={
      title={Spectral Theory and Differential Equations},
   },
   book={
      series={Amer. Math. Soc. Transl. Ser. 2},
      volume={233},
      publisher={Amer. Math. Soc., Providence, RI},
   },
   date={2014},
   pages={77--93},
}  
\bib{CJ}{article}{
   author={Cojuhari, Petru A.},
   author={Janas, Jan},
   title={Discreteness of the spectrum for some unbounded Jacobi matrices},
   journal={Acta Sci. Math. (Szeged)},
   volume={73},
   date={2007},
   number={3-4},
   pages={649--667},
}
\bib{JN}{article}{
   author={Janas, Jan},
   author={Naboko, Serguei},
   title={Infinite Jacobi matrices with unbounded entries: asymptotics of
   eigenvalues and the transformation operator approach},
   journal={SIAM J. Math. Anal.},
   volume={36},
   date={2004},
   number={2},
   pages={643--658},
}
\bib{Ma}{article}{
   author={Malejki, Maria},
   title={Asymptotics of large eigenvalues for some discrete unbounded
   Jacobi matrices},
   journal={Linear Algebra Appl.},
   volume={431},
   date={2009},
   number={10},
   pages={1952--1970},
}
\bib{Tur}{article}{
   author={Tur, {\`E}. A.},
   title={Jaynes--Cummings model: solution without rotating wave approximation},
   journal={Optics and Spectroscopy},
   volume={89},
   date={2000},
   number={4},
   pages={574--588},
}
\bib{Yan}{article}{
   author={Yanovich, Eduard A.},
   title={Asymptotics of eigenvalues of an energy operator in a problem of quantum physics},
   conference={
      title={Operator Methods in Mathematical Physics},
   },
   book={
      series={Oper. Theory Adv. Appl.},
      volume={227},
      publisher={Birkh\"auser/Springer Basel AG, Basel},
   },
   date={2013},
   pages={165--177},
}
\end{biblist}
\end{bibdiv}
\end{document}